\DeclareMathOperator{\poly}{poly}
\def\showauthornotes{0}
\newcommand{\defeq}{\stackrel{\textup{def}}{=}}
\newtheorem{theorem}{Theorem}[section]
\newtheorem{lemma}[theorem]{Lemma}
\newtheorem{corollary}[theorem]{Corollary}
\newtheorem{fact}[theorem]{Fact}
\theoremstyle{definition}
\newtheorem{definition}[theorem]{Definition}
\newtheorem{remark}[theorem]{Remark}
\DeclareMathOperator*{\nnz}{nnz}
\newcommand{\eps}{\varepsilon}
\renewcommand{\epsilon}{\varepsilon}
\newcommand{\nfrac}[2]{\nicefrac{#1}{#2}}
\def\abs#1{\left| #1 \right|}
\newcommand\rea{\mathbb R}
\newcommand{\marginlabel}[1]%
{\mbox{}\marginpar{\it{\raggedleft\hspace{0pt}#1}}}
\newcommand\map[2]{\mathcal{M}_{#1 \rightarrow #2}}
\newcommand\obj{\calE\xspace}
\newcommand{\circapprox}{\preceq^{\text{cycle}}}
\DeclareMathOperator{\polylog}{polylog}
\newcommand{\cost}{\texttt{cost}}
\newcommand\calE{\mathcal{E}}
\newcommand\calG{\mathcal{G}}
\newcommand\calH{\mathcal{H}}
\newcommand\calM{\mathcal{M}}
\definecolor{Mygray}{gray}{0.8}
\let\csname ifcommentflag\expandafter\endcsname
\newcommand{\todo}[1]{{\bf \color{red} TODO: #1}}
\newcommand{\todo}[1]{}
\newcommand{\Authornote}[2]{{\sf\small\color{red}{[#1: #2]}}}
\newcommand{\Authoredit}[2]{{\sf\small\color{red}{[#1]}\color{blue}{#2}}}
\newcommand{\Authorcomment}[2]{{\sf \small\color{gray}{[#1: #2]}}}
\newcommand{\Authorfnote}[2]{\footnote{\color{red}{#1: #2}}}
\newcommand{\Authorfixme}[1]{\Authornote{#1}{\textbf{??}}}
\newcommand{\Authormarginmark}[1]{\marginpar{\textcolor{red}{\fbox{
#1:!}}}}
\newcommand{\Authornote}[2]{}
\newcommand{\Authoredit}[2]{}
\newcommand{\Authorcomment}[2]{}
\newcommand{\Authorfnote}[2]{}
\newcommand{\Authorfixme}[1]{}
\newcommand{\Authormarginmark}[1]{}
\newlength{\pgmtab}  
\let\originalleft\left
\let\originalright\right
\renewcommand{\left}{\mathopen{}\mathclose\bgroup\originalleft}
  \renewcommand{\right}{\aftergroup\egroup\originalright}
\def\defeq{\stackrel{\mathrm{def}}{=}}
\def\setof#1{\left\{#1  \right\}}
\def\diag#1{\textsc{Diag}\left( #1 \right)}
\newcommand{\union}{\cup}
\def\abs#1{\left|#1  \right|}
\def\norm#1{\left\| #1 \right\|}
\newcommand\ppsi{\boldsymbol{\mathit{\psi}}}
\newcommand\ddelta{\boldsymbol{\delta}}
\newcommand\ttau{\boldsymbol{\tau}}
\def\aa{\pmb{\mathit{a}}}
\newcommand\bb{\boldsymbol{\mathit{b}}}
\newcommand\cc{\boldsymbol{\mathit{c}}}
\newcommand\dd{\boldsymbol{\mathit{d}}}
\newcommand\ff{\boldsymbol{\mathit{f}}}
\renewcommand\gg{\boldsymbol{\mathit{g}}}
\newcommand\hh{\boldsymbol{\mathit{h}}}
\renewcommand\ll{\boldsymbol{\mathit{l}}}
\newcommand\rr{\boldsymbol{\mathit{r}}}
\renewcommand\ss{\boldsymbol{\mathit{s}}}
\def\tt{\boldsymbol{\mathit{t}}}
\newcommand\uu{\boldsymbol{\mathit{u}}}
\newcommand\vv{\boldsymbol{\mathit{v}}}
\newcommand\ww{\boldsymbol{\mathit{w}}}
\newcommand\yy{\boldsymbol{\mathit{y}}}
\newcommand\xx{\boldsymbol{\mathit{x}}}
\renewcommand\AA{\boldsymbol{\mathit{A}}}
\newcommand\BB{\boldsymbol{\mathit{B}}}
\newcommand\CC{\boldsymbol{\mathit{C}}}
\newcommand\DD{\boldsymbol{\mathit{D}}}
\newcommand\II{\boldsymbol{\mathit{I}}}
\newcommand\NN{\boldsymbol{\mathit{N}}}
\newcommand\MM{\boldsymbol{\mathit{M}}}
\newcommand\LL{\boldsymbol{\mathit{L}}}
\newcommand\RR{\boldsymbol{\mathit{R}}}
\renewcommand\SS{\boldsymbol{\mathit{S}}}
\newcommand\TT{\boldsymbol{\mathit{T}}}
\newcommand\UU{\boldsymbol{\mathit{U}}}
\newcommand\WW{\boldsymbol{\mathit{W}}}
\newcommand\VV{\boldsymbol{\mathit{V}}}
\newcommand\XX{\boldsymbol{\mathit{X}}}
\newcommand{\gghat}{\boldsymbol{\widehat{{g}}}}
\newcommand\MMtil{\boldsymbol{\widetilde{{M}}}}
\newcommand\MMhat{\boldsymbol{\widehat{M}}}
\newcommand\NNtil{\boldsymbol{\widetilde{{N}}}}
\newcommand\NNhat{\boldsymbol{\widehat{N}}}
\newcommand\LSS{\textsc{LSS}}
\newcommand\Otil{\widetilde{O}}
\newcommand{\opt}{\textsc{Opt}}
\DeclareMathOperator{\argmax}{argmax}
\newenvironment{tight_enumerate}{
\begin{enumerate}
 \setlength{\itemsep}{2pt}
 \setlength{\parskip}{1pt}
}{\end{enumerate}}
\newenvironment{tight_itemize}{
\begin{itemize}
 \setlength{\itemsep}{2pt}
 \setlength{\parskip}{1pt}
}{\end{itemize}}
\newcommand{\vone}{\boldsymbol{\mathbf{1}}}
\newcommand{\vzero}{\boldsymbol{\mathbf{0}}}
\newcommand{\etal}{\emph{et al.}}
\newcommand\Dtil{{\widetilde{{\Delta}}}}
\newcommand\Dbar{{\bar{{\Delta}}}}
\newcommand\Dopt{{{{\Delta^{\star}}}}}
\newcommand\res{{{{res}}}}
\newlength{\tpush}
\newcommand{\handout}[5]{
   \noindent
   \begin{center}
   \framebox{ \vbox{ \hbox to \textwidth { {\bf \coursenum\ :\  \coursename} \hfill #5 }
       \vspace{3mm}
       \hbox to \textwidth { {\Large \hfill #2  \hfill} }
       \vspace{1mm}
       \hbox to \textwidth { {\it #3 \hfill #4} }
     }
   }
   \end{center}
   \vspace*{4mm}
   \newcommand{\lecturenum}{#1}
   \addcontentsline{toc}{chapter}{Lecture #1 -- #2}
}
\newcommand{\deeksha}{\Authornote{Deeksha}}
\newcommand{\bab}{\Authornote{Brian}}
\newcommand{\sushant}{\Authornote{Sushant}}
\newcommand{\todolow}[1]{{\bf \color{orange} TODOLOW: #1}}
\newcommand{\rasmus}[1]{{\bf \color{olive} Rasmus: #1}}
\newcommand{\todolow}[1]{}
\newcommand{\rasmus}[1]{}
\begin{document}

\title{Almost-linear-time Weighted $\ell_p$-norm Solvers in Slightly
  Dense Graphs via Sparsification}
  

\author{Deeksha Adil$ ^1$ }
  \address{$^{1,4}$University of Toronto, $^2$TTI Chicago, $^3$ETH Zurich.} 
  \email[1,4]{ $\{$deeksha$\mid$sachdeva$\}$@cs.toronto.edu}
\author{Brian Bullins$^2$}
\email[2]{bbullins@ttic.edu}
\author{Rasmus Kyng$^3$}
\email[3]{kyng@inf.ethz.ch}
 \author{Sushant Sachdeva$^4$}

\maketitle
\thispagestyle{empty}
\begin{abstract}
  We give almost-linear-time algorithms for constructing sparsifiers
  with $n \poly(\log n)$ edges that approximately preserve weighted
  $(\ell^{2}_2 + \ell^{p}_p)$ flow or voltage objectives on graphs. For
  flow objectives, this is the first sparsifier construction for such
  mixed objectives beyond unit $\ell_p$ weights, and is based on
  expander decompositions. For voltage objectives, we give the first
  sparsifier construction for these objectives, which we build using
  graph spanners and leverage score sampling. Together with the
  iterative refinement framework of [Adil et al, SODA 2019], and a new
  multiplicative-weights based constant-approximation algorithm for
  mixed-objective flows or voltages, we show how to find
  $(1+2^{-\text{poly}(\log n)})$ approximations for weighted
  $\ell_p$-norm minimizing flows or voltages in
  $p(m^{1+o(1)} + n^{4/3 + o(1)})$ time for $p=\omega(1),$ which is
  almost-linear for graphs that are slightly dense
  ($m \ge n^{4/3 + o(1)}$).
\end{abstract}
\newpage

\setcounter{page}{1}
\section{Introduction}

Network flow problems are some of the most extensively studied
problems in optimization (e.g. see~\cite{AhujaMO93, Schrijver02,
  GoldbergT14}). A general network flow problem on a graph $G(V, E)$
with $n$ vertices and $m$ edges can be formulated as
\begin{equation*}
\min_{\BB^{\top}\ff = \dd} \cost(\ff),
\end{equation*}
where $\ff \in \rea^{E}$ is a flow vector on edges satisfying net
vertex demands $\dd \in \rea^{V},$ $\BB \in \rea^{E \times V}$ is the
signed edge-vertex incidence matrix of the graph, and $\cost(\ff)$ is
a cost measure on flows.
The weighted $\ell_{\infty}$-minimizing flow problem, i.e.,
$\cost(\ff) = \|\SS^{-1} \ff\|_{\infty},$ captures the celebrated
maximum-flow problem with capacities $\SS;$ the weighted
$\ell_{1}$-minimizing flow problem, $\cost(\ff) = \norm{\SS \ff}_{1}$
captures the transshipment problem generalizing shortest paths with
lengths $\SS$; and
$\cost(\ff) = \ff^{\top}\RR\ff = \|\RR^{\frac{1}{2}}\ff\|_{2}^{2}$
captures the electrical flow problem~\cite{SpielmanT04}.

Dual to flow problems are voltage problems, which can be formulated as
\[\min_{\dd^{\top}\vv = 1 } \cost'(\BB\vv),\]
Analogous to the flow problems, picking
$\cost'(\BB\vv) = \norm{\SS\BB\vv}_{1}$ captures the capacitated
min-cut problem, $\cost'(\BB\vv) = \|\SS^{-1}\BB\vv \|_{\infty}$
captures vertex-labeling~\cite{KyngRSS15}, and
$\cost'(\BB\vv) = (\BB\vv)^{\top} \RR^{-1}\BB\vv =
\|\RR^{-\frac{1}{2}} \BB\vv \|_{2}^{2}$ captures the electrical
voltages problem.

The seminal work of Spielman and Teng~\cite{SpielmanT04} gave the
first nearly-linear-time algorithm for computing
($1+1/\poly(n)$)-approximate solutions to electrical (weighted
$\ell_2$-minimizing) flow/voltage problems. This work spurred the
``Laplacian Paradigm'' for designing faster algorithms for several
classic graph optimization problems including maximum flow
~\cite{ChristianoKMST10, Sherman13, KelnerLOS14}, multi-commodity flow
\cite{KelnerLOS14}, bipartite matching~\cite{Madry13},
transshipment~\cite{Sherman17a}, and graph
partitioning~\cite{OrecchiaSV12}; culminating in almost-linear-time or
nearly-linear-time low-accuracy algorithms (i.e. $1+\eps$
approximations with $\poly(\frac{1}{\eps})$ running time dependence)
for many of these problems.

Progress on high-accuracy algorithms (i.e. algorithms that return
$(1+1/\poly(n))$-approximate solutions with only a $\poly(\log n)$
factor overhead in time) for solving these problems has been harder to
come by, and for many flow problems has been based on interior point
methods \cite{DaitchS08}.  E.g. the best running time for maximum flow
stands at
$\Otil(\min(m\sqrt{n}, n^{\omega}+ n^{2+1/6}))$~\cite{LeeS14,CLS19}
and $\Otil(m^{4/3})$ \deeksha{Does this have $o(1)$ in exponent?}for unit-capacity graphs~\cite{Madry13, LiuS20a,
  LiuS20b}.  Other results making progress in this direction include
works on shortest paths with small range negative
weights~\cite{CohenMSV17}, and
matrix-scaling~\cite{CohenMTV17, AllenLOW17}. \sushant{Rasmus: what other
  references need to go here?}
Recently, there has been progress on the dense case.
In \cite{vdBLNPSSSW20}, the authors developed an algorithm for weighted bipartite
matching and transshipment running in $\Otil(m + n^{3/2})$ time.
This is a nearly-linear-time algorithm in moderately dense
graphs.

Bubeck~\etal~\cite{BubeckCLL18} restarted the study of faster
high-accuracy algorithms for the weighted $\ell_p$-norm objective,
$\cost(\ff) = \norm{\SS\ff}_{p},$ 
a natural intermediate objective between $\ell_2$ and $\ell_{\infty}.$
This result improved the running time significantly over classical
interior point methods \cite{NesterovN94} for $p$ close to 2.
Adil~\etal~\cite{AdilKPS19}
gave a high-accuracy algorithm for computing
$\ell_{p}$-norm minimizing flows in time
$\min\{m^{\frac{4}{3} + o(1)}, n^{\omega}\}$ for
$p \in (2, \sqrt{\log n}$].
%
Building on their work, Kyng~\etal~\cite{KyngPSW19} gave an
almost-linear-time high-accuracy algorithm for \emph{unit-weight}
$\ell_{p}$-norm minimizing flows $\cost(\ff) = \norm{\ff}_{p}^{p}$ for
large $p \in (\omega(1),\sqrt{\log n}].$ More generally, they give an
almost-linear time-high-accuracy algorithm for \emph{mixed}
$\ell_2^{2} + \ell_p^{p}$ objectives as long as the $\ell_p^{p}$-norm
is unit-weight, i.e.,
\[\cost(\ff) = \|\RR^{\frac{1}{2}} \ff\|_2^2 + \norm{\ff}_{p}^{p}.\]
Their algorithm for $(\ell_2^{2} + \ell_p^p)$-minimizing flows was subsequently used as a key ingredient in recent
results improving the running time for high-accuracy/exact maximum
flow on unit-capacity graphs to $m^{4/3+o(1)}$~\cite{LiuS20a,
  LiuS20b}.

In this paper, we obtain a nearly-linear running time for weighted
$\ell^{2}_2 + \ell^{p}_p$ flow/voltage problems on graphs. Our algorithm requires 
$p(m^{1+o(1)} + n^{4/3 + o(1)})$ time for $p=\omega(1)$ which is
almost-linear-time for $p \leq m^{o(1)}$ in slightly dense graphs,
($m \ge n^{4/3 + o(1)}$). \sushant{Should we write the bound for all
  $p?$}

Our running time $m^{1+o(1)} + n^{4/3+o(1)}$ is even better than the $\Otil(m + n^{3/2})$ time
obtained for bipartite matching in \cite{vdBLNPSSSW20}.
Our result beats the $\Omega(n^{3/2})$ barrier
  that arises in \cite{vdBLNPSSSW20} from the use of interior point
  methods that maintain a vertex  dual solution using dense updates across
  $\sqrt{n}$ iterations.
  The progress on bipartite matching relies on
  highly technical graph-based inverse
maintenance techniques that are tightly interwoven with interior point
method analysis.
In constrast, our sparsification methods provide a clean interface to
iterative refinement, which makes our analysis much more simple and
compact.

\medskip{\textbf{Graph Sparsification.}
Various notions of graph sparsification -- replacing a dense graph
with a sparse one, while approximately preserving some key properties
of the dense graph  -- have been key ingredients in faster
low-accuracy algorithms.
%
Bencz\'{u}r and Karger~\cite{BenczurK96} defined cut-sparsifiers that
approximately preserve all cuts, and used them to give faster
low-accuracy approximation algorithms for maximum-flow. Since then,
several notions of sparsification have been studied extensively and
utilized for designing faster algorithms~\cite{PelegS89, SpielmanT11,
  Racke08, SpielmanS11, Madry10, Sherman13, KelnerLOS14, RackeST14,
  CohenP15, KyngLPSS16, DurfeePPR17, ChuGPSSW18}.

Sparsification has had a smaller direct impact on the design of faster
high-accuracy algorithms for graph problems, limited mostly to the
design of linear system solvers~\cite{SpielmanT04, KoutisMP11,
  PengS14, KyngLPSS16}.
Kyng~\etal~\cite{KyngPSW19} constructed sparsifiers for weighted
$\ell_2^{2}$ + unweighted $\ell_{p}^{p}$-norm objectives for flows.
In this paper, we develop almost-linear time algorithms for building
sparsifiers for weighted $\ell_2^{2} + \ell_{p}^{p}$ norm objectives
for flows and voltages,
\[
  \cost(\ff) = \|\RR^{\frac{1}{2}} \ff \|_2^{2} + \|\SS \ff \|_p^{p}
  ,\text{ and } \cost'(\BB\vv) = \|\WW^{\frac{1}{2}} \BB \vv \|_2^{2} + \|\UU \BB \vv \|_p^{p},
\]
and utilize them as key ingredients in our faster high-accuracy
algorithms for optimizing such objectives on graphs.
Our construction of sparsifiers for flow objectives builds on the
machinery from~\cite{KyngPSW19}, and our construction of sparsifiers
for voltage objectives builds on graph spanners~\cite{AlthoferDDJS93,
  PelegS89, BaswanaS07}.

\section{Our Results}\label{sec:results}

Our main results concern flow and voltage problems for mixed
$(\ell^{2}_2 + \ell_p^{p})$-objectives for $p \geq 2$. Since our
algorithms work best for large $p$, we restrict our attention to
$p = \omega(1)$ in this overview. Section~\ref{sec:Meta-Algo} provides
detailed running times for all $p \geq 2$.  We emphasize that by
setting the quadratic term to zero in our mixed
$(\ell^{2}_2 + \ell_p^{p})$-objectives, we get new state of the art
algorithms for $\ell_p$-norm miniziming flows and voltages.

\medskip\noindent{\textbf{Mixed $\ell_2$-$\ell_p$-norm minimizing flow.}
Consider a graph $G = (V,E)$ along with non-negative diagonal matrices
$\RR, \SS \in \rea^{E \times E},$ and a
gradient vector $\gg \in \rea^{E}$, as well as demands
$\dd \in \rea^{V}$. We refer to the diagonal entries of $\RR$ and
$\SS$ as $\ell_2$-weights and $\ell_p$-weights respectively.  Let $\BB$ denote
the signed edge-vertex incidence of $G$ (see Appendix \ref{sec:prelims}).  We wish to solve the following
minimization problem with the objective
$\obj(\ff) = \gg^{\top}\ff + \|\RR^{\nfrac{1}{2}} \ff\|_2^2 +
\norm{\SS \ff}_p^p$
\begin{equation}
  \min_{
    \BB^{\top} \ff = \dd
  }
  \obj(\ff)\label{eq:flowprob}
\end{equation}
We require
$\gg \perp \left\{\ker(\RR) \cap \ker(\SS) \cap \ker(\BB)\right\}$ so
that the problem has bounded minimum value, and $\dd \perp \vone$ so a
feasible solution exists.  These conditions can be checked in linear
time and have a simple combinatorial interpretation.
Note that the choice of graph edge directions in $\BB$ matters for the
value of $\gg^{\top}\ff,$. The flow on an edge is allowed to be
both positive or negative.

\medskip\noindent{\textbf{Mixed $\ell_2$-$\ell_p$-norm minimizing
    voltages.}  Consider a graph $G = (V,E)$ along with non-negative
  diagonal matrices $\WW \in \rea^{E \times E}$ and
  $\UU \in \rea^{E \times E},$ and demands $\dd \in \rea^{V}$.  We
  refer to the diagonal entries of $\WW$ and $\UU$ as $\ell_2$-conductances
  and $\ell_p$-conductances respectively.  In this case, we want to
  minimize the objective
  $\obj(\vv) = \dd^{\top}\vv + \|\WW^{\nfrac{1}{2}} \BB \vv\|_2^2 +
  \norm{\UU \vv}_p^p$ in minimization problem
\begin{equation}
  \min_{
    \vv
  }
  \obj(\vv)\label{eq:voltageprob}
\end{equation}
In the voltage setting, we only require  $\dd \perp \vone$ so the
problem has bounded minimum value.

\medskip\noindent{\textbf{Obtaining good solutions.} For both these problems, we
study high accuracy approximation algorithms that provide feasible
solutions $\xx$ (a flow or a voltage respectively), that approximately
minimize the objective function from some starting point $\xx^{(0)}$,
i.e., for some small $\eps > 0,$ we have
\[
  \obj(\xx) - \obj (\xx^\star) \leq \eps ( \obj(\xx^{(0)}) - \obj
  (\xx^\star) )
\]
wher $\xx^\star$ denotes an optimal feasible solution.  Our algorithms
apply to problems with quasipolynomially bounded parameters, including
quasipolynomial bounds on non-zero singular values of matrices we work with.
Below we state our main algorithmic results.
\begin{theorem}[Flow Algorithmic Result]\label{thm:flow-short}
  Consider a graph $G$ with $n$ vertices and $m$ edges, equipped with
  non-negative $\ell_2$ and $\ell_p$-weights, as well as a
  gradient and demands, all with quasi-polynomially bounded entries.
  For $p=\omega(1)$, in
  $p(m^{1+o(1)} + n^{4/3 + o(1)}) \log^{2} \nfrac{1}{\eps}$ time we
  can compute an $\epsilon$-approximately optimal flow solution to
  Problem~\eqref{eq:flowprob} with high probability.
\end{theorem}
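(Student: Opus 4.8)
The plan is to realize Problem~\eqref{eq:flowprob} as the outer loop of the iterative refinement framework of \cite{AdilKPS19}, using graph sparsification together with a constant-factor multiplicative-weights solver as the per-iteration oracle. After a standard preprocessing step — rescaling $\RR,\SS,\gg,\dd$ (legitimate since all parameters are quasi-polynomially bounded), contracting or discarding degenerate edges, and projecting onto the feasibility conditions $\dd\perp\vone$ and $\gg\perp\ker(\RR)\cap\ker(\SS)\cap\ker(\BB)$ — we pick any feasible starting flow $\ff^{(0)}$ (e.g.\ routing $\dd$ along a spanning tree). Since the refinement guarantee is multiplicative in the optimality gap, the quality of $\ff^{(0)}$ affects the running time only through a logarithmic factor.

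First I would set up the refinement loop. At an iterate $\ff^{(t)}$ with $\BB^\top\ff^{(t)}=\dd$, expanding $\obj(\ff^{(t)}+\DDelta)$ produces a \emph{residual problem}: minimize $\gg_t^\top\DDelta+\norm{\RR_t^{\nfrac{1}{2}}\DDelta}_2^2+\norm{\SS_t\DDelta}_p^p$ over circulations $\DDelta$ (i.e.\ $\BB^\top\DDelta=\vzero$), where $\gg_t$ is the projected gradient at $\ff^{(t)}$ and $\RR_t,\SS_t$ are reweightings determined by $\ff^{(t)}$; this is again an instance of Problem~\eqref{eq:flowprob} with $\dd=\vzero$. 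By the key lemma of \cite{AdilKPS19}, given any $O(1)$-factor (indeed any $\poly(\log n)$-factor, for $p=\omega(1)$) approximate minimizer $\DDelta^{(t)}$ of the residual, updating $\ff^{(t+1)}=\ff^{(t)}+\alpha_t\DDelta^{(t)}$ with an appropriate step $\alpha_t$ shrinks the optimality gap by a $(1-\Omega(1/p))$ factor while preserving $\BB^\top\ff^{(t+1)}=\dd$; hence $\Otil(p\log\nfrac{1}{\eps})$ rounds suffice to reach an $\eps$-approximate solution, and the linear systems solved inside only need $\polylog\nfrac{1}{\eps}$ bits of accuracy.

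The per-round oracle is implemented in three steps. (a) Run the almost-linear-time weighted $(\ell_2^2+\ell_p^p)$ sparsification algorithm developed earlier in the paper on the current residual instance; this returns, in $m^{1+o(1)}$ time, a graph $H_t$ with $n\,\poly(\log n)$ edges whose optimal residual value agrees with that of $G$ up to a constant factor, together with an efficient map that lifts any flow on $H_t$ to a flow on $G$ of comparable objective. (b) Solve the residual on $H_t$ to $O(1)$-accuracy with the new multiplicative-weights constant-approximation algorithm for mixed $(\ell_2^2+\ell_p^p)$ flows; since $H_t$ has $n^{1+o(1)}$ edges, the width-reduced iteration performs $\Otil(n^{1/3})$ rounds, each an electrical-flow computation solved in $n^{1+o(1)}$ time by a Laplacian solver, for a total of $n^{4/3+o(1)}$. (c) Lift the resulting flow back to $G$, obtaining a constant-factor approximate residual solution there. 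Summing the $m^{1+o(1)}+n^{4/3+o(1)}$ cost over $\Otil(p\log\nfrac{1}{\eps})$ rounds (and absorbing the $\poly(\log n)$ and $n^{o(1)}$ overheads) yields the claimed $p\bigl(m^{1+o(1)}+n^{4/3+o(1)}\bigr)\log^2\nfrac{1}{\eps}$ bound; the sparsifiers are randomized, so the output is correct with high probability.

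\textbf{Main obstacle.} I expect the crux to be the sparsification--refinement interface under \emph{arbitrary, round-dependent} weights. The residual weights $\RR_t,\SS_t$ can vary enormously across edges and across rounds, so the sparsifier must be weight-agnostic and, crucially, \emph{two-sided with an efficient, low-congestion lift}: it is not enough that the optimum of $H_t$ be at most that of $G$; we also need every near-optimal $H_t$-flow to route back through $G$ at objective cost blown up by only a constant, and this is exactly where the expander-decomposition structure of the construction is used. A secondary obstacle is establishing a width/potential bound for the new mixed-objective multiplicative-weights algorithm strong enough to guarantee the $\Otil(\hat m^{1/3})$ iteration count on $\hat m$-edge graphs, in particular one that survives the residual's nonuniform $\ell_2$-term. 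The remaining pieces — parameter normalization, bit precision of the Laplacian solves, and bookkeeping the $p$ and $\log\nfrac{1}{\eps}$ factors — are routine given the machinery assumed above.
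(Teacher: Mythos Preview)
Your high-level outline is right, but there is a genuine gap in step~(a)/(c): the flow sparsifier in this paper does \emph{not} produce a constant-factor approximation of the residual objective. Theorem~\ref{thm:flow-spars-short} only guarantees that a constant-factor solution on $H_t$, after scaling by $m^{-1/(p-1)}$, becomes an $\Otil(m^{2/(p-1)})$-approximate solution on $G$ (cf.\ Theorem~\ref{thm:Flow-Sparsification} and the open problem on removing the $m^{O(1)/(p-1)}$ loss). The ``Main obstacle'' you flag --- that every near-optimal $H_t$-flow must lift to $G$ with only constant blow-up --- is precisely what is \emph{not} established here. Consequently the refinement loop does not converge in $\Otil(p\log(1/\eps))$ rounds; the per-step approximation factor is $\kappa_2=\Otil(m^{2/(p-1)})$, so iterative refinement needs $\Otil\bigl(p\,m^{2/(p-1)}\log(1/\eps)\bigr)$ rounds. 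The claimed running time survives only because the theorem assumes $p=\omega(1)$, whence $m^{2/(p-1)}=m^{o(1)}$ is absorbed into the $m^{1+o(1)}$ term --- and indeed this loss is the sole reason the hypothesis $p=\omega(1)$ appears in the statement.

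Two smaller points your outline glosses over. First, the inner solver (Theorem~\ref{cor:ResidualDecision}) is a decision procedure parametrized by a guess $\nu$ for the residual optimum; the algorithm binary-searches over $\nu$ (line~\ref{alg:line:while} of Algorithm~\ref{alg:Meta-Algo}), which is where the second $\log(1/\eps)$ factor comes from --- your accounting produces only one. Second, the solver's $\ell_p$ guarantee is $\|\NN\xx\|_p^p\le O(3^p)\nu$, so for $p>\log m$ the paper first reduces the $\ell_p$ residual to an $\ell_{\log m}$ residual (Lemma~\ref{lem:p-to-q}) before sparsifying and solving; without this step the $3^p$ factor would not be $m^{o(1)}$. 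Once you plug these parameters ($\kappa_1=m^{o(1)}$ from the $p\to\log m$ reduction, $\kappa_2=\Otil(m^{2/(p-1)})$ from sparsification, $\kappa_3=O(1)$, $\kappa_4^{1/(p-1)}=O(1)$) into Theorem~\ref{thm:Meta-Thm}, the stated bound falls out.
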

This improves upon \cite{AdilKPS19, AdilPS19, AdilS20}
which culminated in a $pm^{4/3 + o(1)} \log^{2} \nfrac{1}{\eps}$ time
algorithm.
\begin{theorem}[Voltage Algorithmic Result]
  \label{thm:voltage-short}
  Consider a graph $G$ with $n$ vertices and $m$ edges, equipped with
  non-negative $\ell_2$ and $\ell_p$-conductances, as well as demands, all with quasi-polynomially bounded entries.
  For $p=\omega(1)$, in
  $p(m^{1+o(1)} + n^{4/3 + o(1)}) \log^{2} \nfrac{1}{\eps}$ time we
  can compute an $\epsilon$-approximately optimal voltage solution to
  Problem~\eqref{eq:voltageprob} with high probability.
\end{theorem}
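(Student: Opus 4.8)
The plan is to combine three ingredients: the iterative-refinement reduction of Adil et al.~\cite{AdilKPS19}, a new almost-linear-time \emph{voltage} sparsifier built from graph spanners and leverage-score sampling, and a width-reduced multiplicative-weights routine computing a constant-factor solution to a mixed $(\ell_2^2+\ell_p^p)$ voltage problem. The structure is: iterative refinement reduces the high-accuracy problem to $\Otil(p\log(1/\eps))$ coarse solves of residual instances of the same shape; each coarse solve is realized by sparsifying the residual instance to $n\cdot m^{o(1)}$ edges in $m^{1+o(1)}$ time, running the multiplicative-weights routine on the sparse instance in $n^{4/3+o(1)}$ time, and lifting the solution back to $G$.

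First I would set up the iterative refinement. Expanding the objective $\obj(\vv)$ of~\eqref{eq:voltageprob} around the current iterate and passing to the residual change of variables produces an optimization problem of exactly the same form --- minimize $\gg_{\mathrm{res}}^{\top}\ddelta+\|\WW_{\mathrm{res}}^{1/2}\BB\ddelta\|_2^2+\|\UU_{\mathrm{res}}\BB\ddelta\|_p^p$ over all $\ddelta$ --- for which, by~\cite{AdilKPS19}, a $\kappa$-approximate minimizer taken with a suitable step length shrinks the optimality gap by a factor $1-1/\Otil(p\,\poly(\kappa))$. With $\kappa=O(1)$ this gives $\Otil(p)$ iterations per constant-factor gap reduction and $\Otil(p\log(1/\eps))$ overall; the residual weights (and hence the sparsifier) must be recomputed at $\mathrm{poly}(\log(1/\eps))$ precision, contributing the second $\log(1/\eps)$ factor. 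Crucially the residual problem is unconstrained just like~\eqref{eq:voltageprob}, so one coarse solver serves all iterations.

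Next I would build the coarse solver. Given a residual voltage instance on $G$, I sparsify. For the $\ell_p^p$-conductance term I take a Baswana--Sen spanner $H$ of $G$ of stretch $t=m^{o(1)}$ with $n\cdot m^{o(1)}$ edges; for each edge $e=(a,b)\notin H$ with $\ell_p$-conductance $u_e$ I route $e$ along its spanner path $P_e$ (length $\le t$) and use $|v_a-v_b|^p\le |P_e|^{p-1}\sum_{f\in P_e}|v_{f^{+}}-v_{f^{-}}|^p\le t^{p-1}\sum_{f\in P_e}|v_{f^{+}}-v_{f^{-}}|^p$, reweighting the spanner edges to absorb these charges and bucketing the $e$'s by the magnitude of $u_e$ so that the number of charges crossing any spanner edge within a bucket, and hence the reverse inequality, stays bounded; this makes the $\ell_p^p$ term of the subgraph within a factor $t^{p-1}=m^{o(1)}$ of $G$'s, uniformly over $\vv$. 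For the $\ell_2^2$-conductance term I leverage-score (effective-resistance) sample the rows of $\WW_{\mathrm{res}}^{1/2}\BB$, i.e.\ spectrally sparsify $\BB^{\top}\WW_{\mathrm{res}}\BB$ to $n\,\polylog n$ edges; the linear term $\dd^{\top}\vv$ is unchanged. Merging the two sparse (reweighted sub)graphs gives $\Ghat$ with $\mhat=n\cdot m^{o(1)}$ edges whose mixed voltage objective is within $m^{o(1)}$ of $G$'s, built in $m^{1+o(1)}$ time with high probability. On $\Ghat$ I run the width-reduced multiplicative-weights scheme: maintaining $\ell_p$-weights, in each of $\Otil(\mhat^{1/3})$ rounds it solves one electrical-voltage (Laplacian) system in $\Otil(\mhat)$ time via a fast SDD solver~\cite{SpielmanT04}, updating weights and performing width reduction when the energy grows too fast; this returns a constant-factor-optimal voltage for the sparse objective in $\mhat^{4/3+o(1)}=n^{4/3+o(1)}$ time (the remaining $p$-dependence being $m^{o(1)}$ in the regime of interest). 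Since $\Ghat$ shares the vertex set of $G$ and distorts the objective by only $m^{o(1)}$, this voltage is an $m^{o(1)}$-approximate solution of the residual on $G$ --- a sufficient oracle once the $m^{o(1)}$ factor is absorbed into the iteration count.

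The step I expect to be the main obstacle is the \emph{lower-bound} direction of the $\ell_p^p$ sparsification: a spanner edge may lie on the charging paths of very many deleted edges, so a naive reweighting could push the spanner objective far above $G$'s; controlling this needs the weight-bucketing together with a counting argument bounding, per bucket, the total $\ell_p$-conductance routed through any spanner edge, and it is here that both the $m^{o(1)}$ distortion and the almost-linear running time are pinned down --- this is precisely what distinguishes the voltage sparsifier from the expander-decomposition-based flow sparsifier of Theorem~\ref{thm:flow-short}. A secondary obstacle is porting the width-reduced multiplicative-weights analysis, originally written for the constrained flow problem, to the unconstrained voltage objective with the extra linear term $\dd^{\top}\vv$: one checks that the per-round subproblem remains an ordinary electrical-voltage computation and that the width-reduction potential argument survives with the $\ell_2^2$ term acting as a fixed quadratic regularizer. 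The rest --- the iterative-refinement bookkeeping, the $p$- and $\eps$-dependence, and a union bound over the $\Otil(p\log(1/\eps))$ randomized sparsifications --- is routine given~\cite{AdilKPS19} and standard sparsification tail bounds.
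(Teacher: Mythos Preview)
Your high-level plan matches the paper: iterative refinement (Theorem~\ref{thm:Meta-Thm}) reducing to $\Otil(p\,\kappa_2\log(1/\epsilon))$ coarse residual solves, each handled by sparsifying in almost-linear time and then running a new voltage-space width-reduced multiplicative-weights routine (Theorem~\ref{cor:ResidualDecision}) on the sparse instance; the $\ell_2^2$ sparsification via spectral sampling is also the same. Where you diverge is the $\ell_p^p$ sparsification, and the paper's argument there is substantially simpler than yours.

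You route each deleted edge along its spanner path, add $u_e\,|P_e|^{p-1}$ of charge to spanner edges, and then try to recover the reverse inequality by bucketing $u_e$ and bounding how many routing paths cross a spanner edge. The obstacle you single out is real, and your sketch does not settle it: bucketing by conductance does not by itself limit the number of paths through a given spanner edge. (One can push further --- e.g.\ in a spanner w.r.t.\ lengths $1/u_e$ the path edges satisfy $u_f\ge u_e/K$, which constrains which buckets can interact --- but this already shows the argument is more delicate than you indicate.)

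The paper avoids all of this. It keeps the spanner edges with their \emph{original} $\ell_p$-conductances --- no routing, no reweighting, no bucketing. With $F\subseteq E$, the bound $\norm{\TT\BB_H\xx}_p\le\norm{\SS\BB_G\xx}_p$ is immediate. For the other direction it passes through $\ell_\infty$: a $K$-spanner with respect to lengths $\ell(e)=1/\ss(e)$ preserves $\norm{\SS\BB_G\xx}_\infty$ to within a factor $K$ (Lemma~\ref{lem:spannerinfnorm}, which is just the triangle inequality plus the mediant bound $\sum b_i/\sum a_i\le\max_i b_i/a_i$), and hence
\[
\norm{\SS\BB_G\xx}_p \;\le\; m^{1/p}\,\norm{\SS\BB_G\xx}_\infty \;\le\; m^{1/p}K\,\norm{\TT\BB_H\xx}_\infty \;\le\; m^{1/p}K\,\norm{\TT\BB_H\xx}_p.
\]
With $K=O(\log n)$ and the rescaling $\mu_2=m^{-1/(p-1)}$, a constant-factor solution on $H$ becomes an $\Otil(m^{1/(p-1)})$-approximate residual solution on $G$ (Theorem~\ref{thm:spannersparsify}), giving $\kappa_2=m^{o(1)}$ for $p=\omega(1)$ in Theorem~\ref{thm:Meta-Thm}. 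So your ``main obstacle'' is not overcome but bypassed entirely by going through $\ell_\infty$; the routing-and-reweighting machinery is simply not needed in voltage space. The remainder of your sketch (the new overconstrained MW solver with Laplacian solves per round, the $p$- and $\epsilon$-bookkeeping, the union bound) is in line with the paper.
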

\sushant{Should we write these theorems with the $p$ dependence in the
  power?}

\medskip\noindent{\textbf{Background: Iterative Refinement for Mixed
    $\ell_2$-$\ell_p$-norm Flow Objectives.}
  Adil~\etal~\cite{AdilKPS19} developed a notion of iterative
  refinement for mixed $(\ell^{2}_2 + \ell^{p}_p)$-objectives which in
  the flow setting, i.e.  Problem~\eqref{eq:flowprob}, corresponds to
  approximating $\obj'(\ddelta) = \obj(\ff+\ddelta)$ using another
  $(\ell_2^2+\ell_p^p)$-objective which roughly speaking corresponds
  to the 2nd degree Taylor series approximation of $\obj'(\ddelta)$
  combined with an $\ell_p$-norm term $\norm{\SS \ddelta}_p^p$, while
  ensuring feasibility of $\ff+\ddelta$ through a constraint
  $\BB \ddelta = \vzero$.  We call the resulting problem a
  \emph{residual} problem.  Adil~\etal~\cite{AdilKPS19} showed that
  obtaining a constant-factor approximate solution to the residual
  problem in $\delta$ is sufficient to ensure that $\obj(\ff+\ddelta)$
  is closer to the optimal solution by a multiplicative factor
  depending only on $p$.  In~\cite{AdilPS19}, this result was
  sharpened to show that such an approximate solution for the residual
  problem can be used to make $(1-\Omega(1/p))$ multiplicative
  progress to the optimum, so that $O(p \log(m/\epsilon))$ iterations
  suffice to produce an $\epsilon$-accurate solution.

  In order to solve the residual problem to a constant approximation,
  Adil~\etal~\cite{AdilKPS19} developed an accelerated multiplicative
  weights method for $(\ell^{2}_2 + \ell^{p}_p)$-flow objectives, or
  more generally, for mixed $(\ell^{2}_2 + \ell^{p}_p)$-regression in
  an underconstrained setting.

  \medskip\noindent{\textbf{Sparsification results.}  Our central
    technical results in this paper concern sparsification of residual
    flow and voltage problems, in the sense outlined in the previous
    paragraph.  Concretely, in nearly-linear time, we can take a
    residual problem on a dense graph and produce a residual problem
    on a sparse graph with $\Otil(n)$ edges, with the property that
    constant factor solutions to the sparse residual problem still
    make $(1-\Omega(m^{-\frac{2}{p-1}}p))$ multiplicative progress on
    the original problem.  This leads to an iterative refinement that
    converges in $O(p m^{\frac{2}{p-1}} \log(m/\epsilon))$ steps.
    However, the accelerated multiplicative weights algorithm that we
    use for each residual problem now only requires $\Otil(n^{4/3})$
    time to compute a crude solution.

\medskip\noindent{\textbf{Flow residual problem sparsification.}
  In the flow setting, we show the following:
\begin{theorem}[Informal Flow Sparsification Result]\label{thm:flow-spars-short}
  Consider a graph $G$ with $n$ vertices and $m$ edges, equipped with
  non-negative $\ell_2$ and $\ell_p$-weights, as well as a
  gradient. In $\Otil(m)$ time, we can compute a graph $H$ with $n$
  vertices and $\Otil(n)$ edges, equipped with non-negative
  $\ell_2$ and $\ell_p$-weights, as well as a
  gradient, such that a constant factor approximation to the flow
  residual problem on $H$, when scaled by $m^{\frac{-1}{p-1}}$ results
  in an $\Otil(m^{\frac{2}{p-1}})$ approximate solution to the flow
  residual problem on $G$. The algorithm works for all $p \geq 2$ and
  succeeds with high probability.
\end{theorem}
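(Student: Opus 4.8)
The plan is to first reduce to sparsifying a single mixed-objective flow problem, and then attack it with an expander decomposition, extending the unit-$\ell_p$-weight construction of~\cite{KyngPSW19} to arbitrary $\ell_p$-weights. I would begin by recording that the flow residual problem attached to Problem~\eqref{eq:flowprob} is itself an instance of the same type: the second-order term absorbs both the original quadratic weights $\RR$ and the local $\ell_p$-Hessian weights (of order $s_e^p|f_e|^{p-2}$), the $\ell_p$-weights $\SS$ carry over, the gradient is replaced by a residual gradient, and feasibility becomes the homogeneous constraint $\BB^\top\ddelta=\vzero$. So it suffices, given an arbitrary objective $\obj(\ff)=\gg^\top\ff+\|\RR^{1/2}\ff\|_2^2+\|\SS\ff\|_p^p$ on $G$, to produce a graph $H$ on the same vertex set with $\Otil(n)$ edges together with demand-preserving maps $\map{G}{H},\map{H}{G}$ between feasible flows for which $\obj_H$ and $\obj_G$ dominate each other in the lopsided way the statement requires.

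Next I would apply a near-linear-time expander decomposition to $G$: partition $V$ into clusters inducing $\phi$-expanders, with $\phi$ small enough that the number of inter-cluster edges is $\Otil(n)$, handling the inter-cluster graph by recursion if needed (this costs only $\Otil(1)$ extra levels since its edge count shrinks geometrically). Within each cluster I would build a sparse replacement on the \emph{same} vertices with $\Otil(n_i)$ edges: a spectral sparsifier handles the quadratic term and keeps each cluster an expander, while for the $\ell_p^p$ term the new point over~\cite{KyngPSW19} (which treated only unit $\ell_p$-weights) is to first bucket the edges by $\ell_p$-weight into $\polylog(m)$ groups whose weights agree up to a factor $2$, on each of which the $\ell_p^p$ term is essentially unit-weight and their expander-based sparsifier applies. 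The underlying reason the construction is lossless on $\ell_p^p$ is the power-mean inequality: replacing $t$ parallel edges of weight $s$ by one edge of weight $s\,t^{-(p-1)/p}$, and splitting flow evenly across the $t$ originals, leaves $\sum_i s^p|f_i|^p$ unchanged, with the inequality pointing the right way for uneven splits. The union of the cluster replacements with the (already sparse) inter-cluster edges is $H$, with $\Otil(n)$ edges, computed in $\Otil(m)$ time.

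The transfer maps use the expander property: a flow on a sparse cluster replacement reroutes onto the original cluster, and conversely, with congestion and dilation $\polylog(m)$ while preserving vertex demands exactly; with the weight rescaling above and the power-mean inequality this inflates the quadratic and $\ell_p^p$ parts of $\obj$ by only $\polylog(m)$ per level, and the linear part is preserved by routing the gradient analogously. The remaining step is bookkeeping. Under a global rescaling $\ddelta\mapsto\gamma\ddelta$ the three terms of $\obj$ scale as $\gamma,\gamma^2,\gamma^p$, whereas the rerouting distorts all three by the same $\polylog$ factor; choosing $\gamma=m^{-1/(p-1)}$ equalizes the mismatch and converts a constant-factor-optimal residual solution on $H$ into an $\Otil(m^{2/(p-1)})$-approximate residual solution on $G$, using on one side that the attainable decrease on $H$ is within $\polylog(m)$ of that on $G$ (route the $G$-optimum to $H$) and on the other the rescaling identity; tracking how the decrease on $H$ compares to that on $G$ and how much is lost mapping back produces the stated factor $m^{2/(p-1)}=\gamma^{-2}$. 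Nothing in the argument needs $p$ large, and the spectral/expander primitives succeed with high probability, so the bound holds for all $p\ge2$.

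I expect the crux to be controlling the \emph{nonlinear} $\ell_p^p$ term under rerouting: spectral and leverage-score sparsification are tailored to the quadratic form, so the genuinely new work is combining the weight rescaling with the power-mean inequality applied to the path-decomposition of each rerouting, while checking the rescaling stays compatible with the quadratic term (it does, but only up to $\polylog$ factors — hence the per-level losses). A secondary subtlety is that feasibility must be preserved \emph{exactly} — for the residual problem, $\BB^\top\ddelta=\vzero$ — so one really needs congestion/dilation routing inside expanders rather than generic spectral approximation, which would control only the quadratic form on the cycle space and would transfer neither the demands nor the linear term. And the whole pipeline must run in $\Otil(m)$ time, which pins us to near-linear-time expander-decomposition and sparsification subroutines.
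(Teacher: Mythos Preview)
Your proposal has the right ingredients---bucketing for non-uniform weights, expander decomposition, the KPSW sampling machinery, and the scaling $\gamma = m^{-1/(p-1)}$---but the order and the coupling are wrong in a way that matters. You propose to expander-decompose $G$ first, then inside each cluster handle the quadratic term by spectral sparsification and the $\ell_p^p$ term by bucketing-plus-KPSW. For flows this decoupling does not work: a sparsifier must come with a \emph{single} demand-preserving flow map that simultaneously controls both terms, and the map coming from a spectral sparsifier has nothing to do with the maps coming from the per-bucket KPSW sparsifiers. (The paper flags exactly this asymmetry with voltages, where the terms \emph{can} be decoupled because every edge is an obstacle; in flow space every edge is an opportunity, so the $\ell_2$ and $\ell_p$ weights must be correlated in the sparsifier.) Relatedly, bucketing the edges of an expander by $\ell_p$-weight destroys the expander property, so you cannot then invoke KPSW's $\textsc{SampleAndFixGradient}$ on each bucket.

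The paper's fix is to reverse the order: first round and bucket by the \emph{pair} $(\rr_e,\ss_e)$ into $\polylog(n)$ classes (quasipolynomial bounds make this finite), then run the KPSW expander decomposition \emph{inside each bucket} where both weights are already uniform, and finally sample each resulting uniform expander with a single probability $\tau$, scaling $r\mapsto\tau r$ and $s\mapsto\tau^p s$ together. Two further points you gloss over are handled carefully in the paper: the gradient is not simply ``routed analogously'' but is projected to cycle space and split into an $\alpha$-uniform case and a tiny-norm case, the latter requiring a new additive-error approximation notion $\preceq_{\kappa,\delta}$ (your purely multiplicative bookkeeping will not close here); and a preprocessing step contracts cycles consisting entirely of edges with $\rr_e=\ss_e=0$, since otherwise the bucket with both weights zero could have $\Theta(m)$ edges and cannot be sparsified.
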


\sushant{I don't like this line, undercuts what we do.}
Our sparsification techniques build on \cite{KyngPSW19}, require a new
bucketing scheme to deal with non-uniform $\ell_p$-weights,
as well as a prepreprocessing step to handle cycles with zero
$\ell_2$-weight and $\ell_p$-weight. This preprocessing scheme in turn
necessitates a more careful analysis of additive errors introduced
by gradient rounding, and we provide a more powerful framework for
this than \cite{KyngPSW19}.

\medskip\noindent{\textbf{Voltage residual problem sparsification.}
  In the voltage setting, we show the following.
\begin{theorem}[Voltage Sparsification Result (Informal)]\label{lem:vol-sparse-inf}
  Consider a graph $G$ with $n$ vertices and $m$ edges, equipped with
  non-negative $\ell_2$ and $\ell_p$-conductances. In $\Otil(m)$ time,
  we can compute a graph $H$ with $n$ vertices and $\Otil(n)$ edges,
  equipped with non-negative $\ell_2$ and $\ell_p$-conductances, such
  that constant factor approximation to the voltage residual problem
  on $H$, when scaled by $m^{\frac{-1}{p-1}}$ results in an
  $\Otil(m^{\frac{1}{p-1}})$ approximate solution to the voltage
  residual problem on $G$. The algorithm works for all $p \geq 2$ and
  succeeds with high probability.
\end{theorem}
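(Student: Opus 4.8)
\medskip

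\noindent\textbf{Proof plan.}
The plan is to sparsify the three parts of the voltage residual problem almost independently -- a spanner for the linear and $\ell_p^p$ parts and leverage-score sampling for the quadratic part -- and then to transfer an approximate solution by a single rescaling whose exponent is forced by the homogeneity degrees of the three terms. Concretely, following the iterative-refinement framework of Adil~\etal, the voltage residual problem at a current iterate $\vv_0$ has, after absorbing constants into the weights, the form $\min_{\ddelta\in\rea^V} r_G(\ddelta)$ with
\[
  r_G(\ddelta) \;=\; \gg^\top \BB \ddelta \;+\; \|\RR^{\nicefrac{1}{2}} \BB \ddelta\|_2^2 \;+\; \|\NN \BB \ddelta\|_p^p ,
\]
where $\gg\in\rea^E$ is an edge vector encoding both a flow routing of the demand (which exists since $\dd\perp\vone$) and the gradients of the smooth terms at $\vv_0$, and $\RR,\NN$ are non-negative diagonal with quasi-polynomially bounded entries. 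Two features drive the construction: the linear term depends on $\ddelta$ only through $\BB^\top\gg$, and every term is a sum of per-edge contributions depending only on the potential drop $\ddelta_u-\ddelta_v$. Crucially, the voltage problem is \emph{unconstrained}, so a voltage on any subgraph $H\subseteq G$ is already a voltage on $G$: we never re-route a \emph{solution} from $H$ back to $G$, only the problem \emph{data}. This asymmetry yields the exponent $\tfrac{1}{p-1}$ for voltages, whereas flows must also re-route circulations -- paying a second $\ell_p$-energy blow-up -- and get $\tfrac{2}{p-1}$.

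For the construction I would: (a)~bucket the edges by $\ell_p$-conductance into $\Otil(1)$ groups, so that within each bucket the conductances are uniform up to a constant; (b)~on each bucket build a low-congestion spanner with $\Otil(n)$ edges and congestion $n^{o(1)}$, whose stretch $t$ is chosen small enough (as a function of $p$) that the blow-up factor $t^{p-1}n^{o(1)}$ appearing below is $\Otil(m)$, fixing for each non-spanner edge $e=(u,v)$ a path $P_e$ of at most $t$ spanner edges from $u$ to $v$; (c)~re-route the linear data \emph{exactly} as $\gg_H=\gg|_{E(H)}+\sum_{e\notin E(H)}g_e\,\bs_{P_e}$, with $\bs_{P_e}\in\{0,\pm1\}^{E(H)}$ the signed path indicator, so that $\BB_H^\top\gg_H=\BB^\top\gg$ and $\gg_H$ stays quasi-polynomially bounded; (d)~keep the $\ell_p$-conductances of spanner edges unchanged, so that trivially $\|\NN_H\BB_H\vv\|_p^p\le\|\NN\BB\vv\|_p^p$ for every $\vv$, while the spanner property, the power-mean inequality $\big|\sum_{f\in P_e}x_f\big|^p\le t^{p-1}\sum_{f\in P_e}|x_f|^p$, bucketing, and the congestion bound give the reverse estimate $\|\NN\BB\vv\|_p^p\le\Otil(m)\cdot\|\NN_H\BB_H\vv\|_p^p$ for every $\vv$; (e)~spectrally sparsify $\BB^\top\RR\BB$ by effective-resistance / leverage-score sampling (resistances estimated via a fast Laplacian solver) to obtain $\RR_H$ on $\Otil(n)$ edges with $\|\RR_H^{\nicefrac{1}{2}}\BB_H\vv\|_2^2\Approx{1+o(1)}\|\RR^{\nicefrac{1}{2}}\BB\vv\|_2^2$. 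The graph $H$ carries the union of these edges, has $\Otil(n)$ edges, and the whole construction runs in $\Otil(m)$ time with high probability.

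Next, write $r_H$ for the residual objective built from $(\gg_H,\RR_H,\NN_H)$, put $\opt_G=-\min_\ddelta r_G(\ddelta)\ge0$, $\opt_H=-\min_\ddelta r_H(\ddelta)\ge0$, and $s=\widetilde\Theta(m^{-1/(p-1)})$. I would establish two facts. \emph{(i) $H$'s residual optimum is essentially at least $G$'s.} Since $\gg^\top\BB(\cdot)$ is preserved exactly, the quadratic parts agree up to $1+o(1)$, and $\|\NN_H\BB_H(\cdot)\|_p^p\le\|\NN\BB(\cdot)\|_p^p$, we get $r_H(\ddelta)\le r_G(\ddelta)+o(1)\,\|\RR^{\nicefrac{1}{2}}\BB\ddelta\|_2^2$; evaluating at the $G$-minimizer $\ddelta^\star$ and using $\|\RR^{\nicefrac{1}{2}}\BB\ddelta^\star\|_2^2\le\opt_G$ -- which follows from the ray-optimality identities $\gg^\top\BB\ddelta^\star=-2A-pC$ and $\opt_G=A+(p-1)C$ obtained by differentiating $r_G(\lambda\ddelta^\star)$ at $\lambda=1$ (here $A=\|\RR^{\nicefrac{1}{2}}\BB\ddelta^\star\|_2^2$, $C=\|\NN\BB\ddelta^\star\|_p^p$) -- gives $\opt_H\ge(1-o(1))\opt_G$. \emph{(ii) A rescaled $H$-solution is a good $G$-solution.} If $\hat\ddelta$ is a constant-factor approximate minimizer of $r_H$, then
\[
  r_G(s\hat\ddelta) \;\le\; s\,\gg^\top\BB\hat\ddelta \;+\; (1+o(1))\,s^2\,\|\RR_H^{\nicefrac{1}{2}}\BB_H\hat\ddelta\|_2^2 \;+\; \Otil(m)\,s^p\,\|\NN_H\BB_H\hat\ddelta\|_p^p ,
\]
and the calibration $s^{p-1}=\widetilde\Theta(m^{-1})$ makes the $\ell_p$ overshoot $\Otil(m)\,s^p$ drop to $o(s)$, while $s\le\tfrac12$ makes the quadratic term at most a small fraction of $s\,|\gg^\top\BB\hat\ddelta|$; since at any point with $r_H<0$ the negative linear term has magnitude at least the nonnegative quadratic-plus-$\ell_p$ mass, this yields $r_G(s\hat\ddelta)\le-\Omega(s)\,\opt_H\le-\widetilde\Omega(m^{-1/(p-1)})\,\opt_G$, i.e.\ an $\Otil(m^{1/(p-1)})$-approximate solution to the $G$-residual problem.

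The crux, and where I expect to spend the most effort, is step~(d): getting $\|\NN\BB\vv\|_p^p\le\beta\,\|\NN_H\BB_H\vv\|_p^p$ uniformly in $\vv$ with $\beta=\Otil(m)$. This is where non-uniform $\ell_p$-conductances force the bucketing, where an off-the-shelf $(2k-1)$-spanner must be replaced by a \emph{low-congestion} routing scheme, and where the stretch has to be traded against the edge count; and it is exactly the value $\beta=\Otil(m)$, seen through the relative homogeneity degree $p-1$ of the $\ell_p$ term versus the exactly-preserved linear term, that pins down both the rescaling $s=\widetilde\Theta(m^{-1/(p-1)})$ and the approximation factor $\Otil(m^{1/(p-1)})$. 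A secondary technical point, exactly as flagged in the flow case, is making the re-routed gradient $\gg_H$ compose with the buckets without accumulating additive rounding error.
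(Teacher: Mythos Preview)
Your high–level plan is the right one and matches the paper: sparsify the $\ell_2^2$ part by a spectral sparsifier, the $\ell_p^p$ part by a spanner, keep the linear term as-is, and recover the $m^{-1/(p-1)}$ rescaling from the homogeneity mismatch. The difference is that the paper's argument for the $\ell_p$ part is dramatically simpler than yours, and your version contains a genuine bug.

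\medskip
\textbf{What the paper does for the $\ell_p$ part.} It takes a single \emph{weighted} $(2k-1)$-spanner of $G$ with edge lengths $\ell(e)=1/\ss(e)$ and $k=\log n$, and observes (Lemma~\ref{lem:spannerinfnorm}) that any spanner preserves the weighted $\ell_\infty$ ``stretch'' of potentials: if $H$ is a $K$-spanner then $\|\SS\BB_G\xx\|_\infty \le K\,\|\SS\BB_H\xx\|_\infty$ for every $\xx$. From this and $\|y\|_p\le m^{1/p}\|y\|_\infty$, one line gives $\|\SS\BB_G\xx\|_p^p \le m(\log n)^p\,\|\SS\BB_H\xx\|_p^p$, which is exactly the $\Otil(m)$ blowup you need. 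No bucketing, no congestion, no power-mean, and the weighted spanner handles non-uniform $\ell_p$-conductances directly.

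\medskip
\textbf{Where your version goes wrong.} Your step~(d) relies on a ``low-congestion spanner with $\Otil(n)$ edges and congestion $n^{o(1)}$''. That object cannot exist once $m\gg n$: you are embedding $m$ paths, each of length $\ge 1$, into $\Otil(n)$ edges, so the \emph{average} congestion is already $\Omega(m/n)$. With congestion $\Theta(m/n)$ your blowup becomes $t^{p-1}\cdot c^p\cdot\Theta(m/n)$, and the factor-of-$2$ bucketing contributes $c^p=2^p$, which for $p$ near $\log m$ is another factor of $m$. These can be massaged (finer bucketing, accepting $t^p$ instead of $t^{p-1}$, etc.), but the analysis you wrote does not deliver $\Otil(m)$ as stated, and the fix is strictly more work than the paper's $\ell_\infty$ route.

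\medskip
\textbf{Two smaller points.} First, your step~(c) is unnecessary: you yourself note that the linear term depends on $\ddelta$ only through the vertex vector $\BB^\top\gg$, and in the voltage residual (which is unconstrained) that vector can simply be kept; there is nothing to re-route. Second, your steps~(i)–(ii) transferring approximate solutions are correct and are essentially what the paper does (via Lemma~\ref{lem:Decision}): once you have the two-sided $\ell_p^p$ and $\ell_2^2$ sandwiches, the rest is bookkeeping with ray-optimality, exactly as you outlined.
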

Note that our voltage sparsification is slightly stronger than our
flow sparsification, as the former loses only a factor
$\Otil(m^{\frac{1}{p-1}})$ in the approximation while the latter loses
a factor $\Otil(m^{\frac{2}{p-1}})$.
Our voltage sparsification uses a few key observations: In voltage
space, surprisingly, we can treat treat the $\ell_2$ and $\ell_p$ costs
separately.
This behavior is very different than the flow case, and arises becase
in voltage space, every edge provides an ``obstacle'', i.e. adding an
edge increases cost, whereas in flow space, every edge provides an
``opportunity'', i.e. adding an edge decreases cost.
This means that in voltage space, we can separately account for the
energy costs created by our $\ell_2$ and $\ell_p$ terms, whereas in
flow space, the $\ell_2$ and $\ell_p$ weights must be highly
correlated in a sparsifier.
Armed with this decoupling observation, 
we preserve $\ell_2$ cost using standard tools for spectral
graph sparsification, and we preserve $\ell_p$ cost approximately by
a reduction to graph distance preservation, which we in turn achieve
using weighted undirected graph spanners.

\medskip\noindent{\textbf{Voltage space accelerated multiplicative
    weights solver.}
  The algorithm from \cite{AdilKPS19} for constant approximate
  solutions to the residual problem works in the flow setting.  Using
  iterative refinement, the algorithm could be used to compute
  high-accuracy solutions. Because we can use high-accuracy flow
  solutions to extract high-accuracy solutions to the dual voltage
  problem, \cite{AdilKPS19} were also able to produce solutions to
  $\ell_q$-norm minimizing voltage problems (where $\ell_{q}$ for
  $q = p/(p-1)$ is the dual norm to $\ell_{p}$).  Hence, by solving
  $\ell_p$-flow problems for all $p \in (2,\infty)$, \cite{AdilKPS19}
  were able to solve $\ell_q$-norm minimizing voltage problems for all
  $q \in (1,2)$.

  Our sparsification of flow and voltage problems works only for
  $p \geq 2.$ Thus, in order to solve for $q$-norm minimizing voltages
  for $q > 2,$ we require a solver that works directly in voltage
  space for mixed $(\ell^{2}_2 + \ell^{p}_p)$-voltage objectives.

  We develop an accelerated multiplicative weights algorithm along the
  lines of~\cite{ChristianoKMST10, ChinMMP13, AdilKPS19} that works
  directly in voltage space for mixed
  $(\ell^{2}_2 + \ell^{p}_p)$-objectives, or more generally for
  overconstrained mixed $(\ell^{2}_2 + \ell^{p}_p)$-objective
  regression.  Concretely, this directly gives an algorithm for
  computing crude solutions to the residual problems that arise from
  applying \cite{AdilKPS19} iterative refinement to
  Problem~\eqref{eq:voltageprob}. Our solver produces an improved
  $O(1)$-approximation to the residual problem rather than a
  $p^{O(p)}$-approximation from 
  \cite{AdilKPS19}.
  This
  gives an $\Otil(m^{4/3})$ high-accuracy algorithm for mixed
  $(\ell^{2}_2 + \ell^{p}_p)$-objective voltage problems for $p > 2$, unlike
  \cite{AdilKPS19}, which could only solve pure $p >2$ voltage
  problems.
  We then speed this up to a $p(m^{1+o(1)} + n^{4/3+ o(1)})$ time
  algorithm for $p = \omega(1)$ by developing a sparsification
  procedure that applies directly to mixed
  $(\ell^{2}_2 + \ell^{p}_p)$-voltage problems for $p >
  2$. \deeksha{Should also add, this implies the fastest known direct
    solver for $\ell_p$ regression with total $p$ dependence $p^3$
    without the homotopy.}



  \medskip\noindent{\textbf{Mixed $\ell_2$-$\ell_p$-norm regression.}
    Our framework can also be applied outside of a graph setting,
    where our new accelerated multiplicative weights algorithm for
    overconstrained mixed $(\ell^{2}_2 + \ell^{p}_p)$-regression gives
    new state-of-the-art results in some regimes when combined with
    new sparsification results.  In this setting we develop
    sparsification techniques based on the Lewis weights sampling from
    the work of Cohen and Peng~\cite{CohenP15}.  We focus on the case
    $2 < p < 4$, where \cite{CohenP15} provided fast algorithms for
    Lewis weight sampling.
\begin{theorem}[General Matrices Sparsification Result]\label{thm:LewisWt}
Let $p \in [2,4)$, let $\MM \in \mathbb{R}^{m_1\times n}, \NN\in \mathbb{R}^{m_2 \times n}$ be matrices, $m_1,m_2 \geq n$, and let $\LSS(\BB)$ denote the time to solve a linear system in $\BB^\top \BB$. Then, we may compute $\widetilde{\MM}, \widetilde{\NN} \in \mathbb{R}^{O(n^{p/2}\log(n))\times n}$ such that with probability at least $1 -\frac{1}{n^{\Omega(1)}}$, for all $\Delta \in \rea^n$,
\[
\|\widetilde{\MM}\Delta\|_2^2 + \|\widetilde{\NN}\Delta\|_p^p \approx_{O(1)} \|\MM\Delta\|_2^2 + \|\NN\Delta\|_p^p,
\]
in time $\tilde{O}\left(\nnz(\MM) + \nnz(\NN) + \LSS(\MMhat) + \LSS(\NNhat)\right)$, for some $\MMhat$ and $\NNhat$ each containing $O(n\log(n))$ rescaled rows of $\MM$ and $\NN$, respectively.
\end{theorem}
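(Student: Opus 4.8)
The plan is to exploit the fact that the two terms of the objective act on disjoint blocks of rows, so that $\|\MM\Delta\|_2^2$ and $\|\NN\Delta\|_p^p$ can be sparsified completely independently and the results concatenated. Concretely, if $\widetilde{\MM}$ has the property that $\|\widetilde{\MM}\Delta\|_2^2 \approx_{O(1)} \|\MM\Delta\|_2^2$ for all $\Delta \in \rea^n$, and $\widetilde{\NN}$ has the property that $\|\widetilde{\NN}\Delta\|_p^p \approx_{O(1)} \|\NN\Delta\|_p^p$ for all $\Delta$, then adding the two estimates termwise yields $\|\widetilde{\MM}\Delta\|_2^2 + \|\widetilde{\NN}\Delta\|_p^p \approx_{O(1)} \|\MM\Delta\|_2^2 + \|\NN\Delta\|_p^p$ for every $\Delta$ simultaneously. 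Hence there is no real interaction between the two terms to manage, and the theorem reduces to two known-in-spirit sparsification subroutines plus a union bound over their (polynomially small) failure probabilities. We may assume $\MM$ and $\NN$ have full column rank --- otherwise restrict attention to their row spaces, where leverage scores and $\ell_p$ Lewis weights remain well defined --- and $m_1, m_2 \ge n$, so that each target row count $O(n\log n)$ and $O(n^{p/2}\log n)$ is genuinely a reduction.

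For the quadratic term I would obtain $\widetilde{\MM}$ as a constant-factor spectral sparsifier of $\MM$ with $O(n\log n)$ rescaled rows, using the standard recursive / repeated-halving leverage-score sampling approach (effective-resistance sampling as in \cite{SpielmanS11}, bootstrapped from crude uniform over-estimates so that one only ever solves systems in $O(n\log n)$-row submatrices). This produces such a sparsifier in $\Otil(\nnz(\MM))$ time plus a $\polylog$ number of linear-system solves in matrices of the form $\MMhat^\top \MMhat$, where $\MMhat$ consists of only $O(n\log n)$ rescaled rows of $\MM$; the leverage scores are themselves estimated to constant accuracy via the Johnson--Lindenstrauss / Spielman--Srivastava identity $\tau_i(\MM) \approx \|\Pi\,\MM(\MM^\top\MM)^{-1}\MM^\top e_i\|_2^2$ for a Gaussian $\Pi$ with $O(\log n)$ rows, every such solve being carried out against the coarse sparsifier rather than $\MM^\top\MM$ itself. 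The output satisfies $\|\widetilde{\MM}\Delta\|_2^2 \approx_{O(1)} \|\MM\Delta\|_2^2$ for all $\Delta$ with probability $1 - n^{-\Omega(1)}$, in time $\Otil(\nnz(\MM) + \LSS(\MMhat))$.

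For the $\ell_p$ term I would invoke the Lewis-weight sampling machinery of Cohen and Peng~\cite{CohenP15}, which is exactly what restricts us to $p \in [2,4)$. The $\ell_p$ Lewis weights $w$ of $\NN$ satisfy the fixed point $w_i = \tau_i\big(\diag{w}^{1/2 - 1/p}\NN\big)$, and the natural iteration contracts with ratio $|p/2 - 1| < 1$, so $\polylog$ rounds suffice to compute constant-accuracy Lewis weights $\widetilde{w}$; each round needs only approximate leverage scores of the reweighted matrix $\diag{w}^{1/2-1/p}\NN$, which we again compute by the JL identity using $O(\log n)$ solves in a coarse spectral sparsifier $\NNhat$ of that reweighted matrix ($\NNhat$ being $O(n\log n)$ rescaled rows of $\NN$, produced by the same repeated-halving routine). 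Thus the entire Lewis-weight computation costs $\Otil(\nnz(\NN) + \LSS(\NNhat))$. Sampling $O(n^{p/2}\log n)$ rows of $\NN$ with probabilities proportional to $\widetilde{w}_i$, and rescaling each sampled row by (its sampling probability)$^{-1/p}$, gives by the $\ell_p$ row-sampling guarantee of \cite{CohenP15} a matrix $\widetilde{\NN}$ with $\|\widetilde{\NN}\Delta\|_p^p \approx_{O(1)} \|\NN\Delta\|_p^p$ for all $\Delta$, again with probability $1 - n^{-\Omega(1)}$.

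Finally I would union-bound the two failure events, combine the two per-term guarantees as in the first paragraph, and note that since $p \ge 2$ the bound $O(n\log n)$ on the rows of $\widetilde{\MM}$ is also $O(n^{p/2}\log n)$ (padding with zero rows, which do not change $\|\widetilde{\MM}\Delta\|_2$, if a common bound is wanted verbatim); the total running time is $\Otil(\nnz(\MM) + \nnz(\NN) + \LSS(\MMhat) + \LSS(\NNhat))$ as claimed. I expect the main obstacle to be the fast Lewis-weight step rather than the spectral step or the concatenation: one has to verify that the fixed-point iteration still converges geometrically when each leverage-score evaluation is only approximate (so the errors do not compound across rounds), that the restriction $p < 4$ is precisely what keeps the contraction ratio bounded away from $1$ and hence the iteration count $\polylog$, and --- most important for the running-time bound --- that every linear system solved along the way is in one of the cheap $O(n\log n)$-row matrices $\MMhat$ or $\NNhat$, never in the full $\MM^\top\MM$ or $\NN^\top\NN$.
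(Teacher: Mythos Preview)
Your approach is correct, but the paper's construction differs in one structural detail worth noting. You sparsify the $\ell_2$ and $\ell_p$ terms \emph{independently}: leverage-score sample $\MM$ to get $\MMtil$ with $O(n\log n)$ rows, Lewis-weight sample $\NN$ to get $\NNtil$ with $O(n^{p/2}\log n)$ rows, then add the two per-term guarantees. The paper instead (Lemma~\ref{lem:LewisSparse}) performs a \emph{joint} sampling: it sets $\nu_i \propto \max\{\approxlewis{2,i}(\MM)\log n,\ \approxlewis{p,i}(\NN)\, n^{p/2-1}\log n\}$, draws a single sampling matrix $\SS$ with $N = \sum_i \nu_i = O(n^{p/2}\log n)$ rows scaled by $\nu_i^{-1/p}$, and then defines $\NNtil = \SS\NN$ and $\MMtil = \RR\SS\MM$, where $\RR$ is a diagonal matrix converting the $\nu_i^{-1/p}$ scaling into the $\nu_i^{-1/2}$ scaling required for the $\ell_2$ concentration bound. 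This shared-row-selection trick is why both $\MMtil$ and $\NNtil$ naturally have $O(n^{p/2}\log n)$ rows in the paper's version; it also tacitly assumes $m_1 = m_2$, a point the paper does not address. Your independent scheme is more modular, avoids that assumption, and even yields a smaller $\MMtil$ before padding. For the running-time accounting (fast approximate leverage scores from \cite{CohenLMMPS15}, Lewis-weight fixed-point iteration with contraction ratio $|p/2-1|<1$ from \cite{CohenP15}, all solves against $O(n\log n)$-row submatrices), your description matches the paper's Theorem~\ref{thm:ComputeApproxLewis} exactly.
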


\begin{theorem}[General Matrices Algorithmic Result]\label{thm:LewisMain}
For $p \in [2,4)$, with high probability we can find an $\eps$-approximate solution to \eqref{eq:Problem} in time
\[
\Otil\left(\left(\nnz(\MM) + \nnz(\NN) + \left(\LSS(\MMtil) + \LSS(\NNtil)\right)n^{\frac{p(p-2)}{6p-4}}\right)\log^2(1/\eps)\right),
\]
for some $\MMtil$ and $\NNtil$ each containing
$O(n^{p/2}\log(n))$ rescaled rows of $\MM$ and $\NN$, respectively, where
$\LSS(\AA)$ is the time required to solve a linear
equation in $\AA^{\top} \AA$ to quasipolynomial accuracy.
\end{theorem}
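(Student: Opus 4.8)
The plan is to combine three ingredients: the iterative-refinement framework for mixed $(\ell_2^2+\ell_p^p)$-objectives of~\cite{AdilKPS19,AdilPS19}, the sparsification guarantee of Theorem~\ref{thm:LewisWt}, and the new accelerated multiplicative-weights solver for \emph{overconstrained} mixed $(\ell_2^2+\ell_p^p)$-regression developed in this paper.

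First I would apply iterative refinement to Problem~\eqref{eq:Problem}. Starting from a crude feasible point, the framework of~\cite{AdilKPS19,AdilPS19} reduces computing an $\eps$-approximate solution to solving $O(p\log(1/\eps))$ \emph{residual problems} to $O(1)$-accuracy, each of which is itself a mixed $(\ell_2^2+\ell_p^p)$-regression in $n$ variables: its $\ell_p^p$-part is a fixed rescaling of $\|\NN\ddelta\|_p^p$, while its $\ell_2^2$-part is $\|\MM\ddelta\|_2^2$ together with a reweighting of the rows of $\NN$ by weights $0\le w_i\lesssim|\NN\xx|_i^{p-2}$ at the current iterate $\xx$, plus a linear term. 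Since $p<4$ we have $p=O(1)$, so there are only $O(\log(1/\eps))$ residual problems. Here I would stress the difference from the graph sparsifiers used elsewhere in the paper: Theorem~\ref{thm:LewisWt} preserves the entire mixed objective up to an $O(1)$ factor \emph{simultaneously for all} $\ddelta$, so an $O(1)$-approximate solution to the sparsified residual problem is an $O(1)$-approximate solution to the true residual problem; there is no $m^{\Theta(1/(p-1))}$ approximation loss and hence no inflation of the refinement iteration count.

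Next I would sparsify. Invoking Theorem~\ref{thm:LewisWt} produces $\MMtil,\NNtil$ with $O(n^{p/2}\log n)$ rows each, in time $\Otil(\nnz(\MM)+\nnz(\NN)+\LSS(\MMhat)+\LSS(\NNhat))$ (which is dominated by the overall bound), preserving $\|\MM\ddelta\|_2^2+\|\NN\ddelta\|_p^p$ up to $O(1)$. The key structural point, which I would verify carefully, is that a single $\ell_p$ Lewis-weight sample of $\NN$ also approximately preserves the point-dependent reweighted quadratic forms $\sum_i w_i(\NN\ddelta)_i^2$ with $0\le w_i\lesssim|\NN\xx|_i^{p-2}$ that appear in the residual problems, so the $\ell_p$-sparsifier is computed once and reused across all refinement iterations, while the reweighted $\ell_2^2$ part of each residual is (re)sparsified at the $\Otil(\nnz(\MM)+\nnz(\NN))$ cost of merely setting up the residual problem. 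I would then solve each sparsified residual problem with the new accelerated multiplicative-weights solver for overconstrained mixed $(\ell_2^2+\ell_p^p)$-regression: on an instance with $N=O(n^{p/2}\log n)$ rows and $n$ variables, its width analysis gives $\Otil(N^{(p-2)/(3p-2)})=\Otil(n^{p(p-2)/(6p-4)})$ iterations, each dominated by one quasipolynomially accurate solve of a reweighted least-squares system built from $\MMtil$ and $\NNtil$, at cost $\LSS(\MMtil)+\LSS(\NNtil)$. Multiplying the resulting per-residual cost $\Otil(\nnz(\MM)+\nnz(\NN)+(\LSS(\MMtil)+\LSS(\NNtil))\,n^{p(p-2)/(6p-4)})$ by the $O(\log(1/\eps))$ refinement steps, and by one further $\log(1/\eps)$ factor coming from the accuracy to which each residual solve must be driven so that errors do not accumulate across steps (as in~\cite{AdilKPS19,AdilPS19}), yields exactly the stated running time; the high-probability guarantee is inherited from Theorem~\ref{thm:LewisWt}.

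The step I expect to be the main obstacle is establishing that one Lewis-weight sparsifier of $\NN$ serves \emph{all} of the residual problems generated during iterative refinement, i.e.\ that it uniformly preserves the point-dependent reweighted quadratic forms and not only $\|\NN\ddelta\|_p^p$; a close second is the width/iteration analysis of the accelerated multiplicative-weights method in the overconstrained mixed $(\ell_2^2+\ell_p^p)$ regime, which must certify the $N^{(p-2)/(3p-2)}$ iteration bound. By contrast, propagating the $O(1)$-approximate residual guarantees through iterative refinement to an $\eps$-accurate solution of Problem~\eqref{eq:Problem} should be routine given~\cite{AdilKPS19,AdilPS19}.
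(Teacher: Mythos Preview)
Your overall architecture---iterative refinement, sparsify the residual, then solve the sparsified residual with the accelerated multiplicative-weights method---matches the paper's. The iteration count of the solver and the final arithmetic $(n^{p/2})^{(p-2)/(3p-2)} = n^{p(p-2)/(6p-4)}$ are correct.

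However, you introduce a complication the paper deliberately avoids, and you correctly flag it as your main obstacle: the claim that a \emph{single} Lewis-weight sample of $\NN$ simultaneously preserves all the iterate-dependent quadratics $\sum_i |\NN\xx|_i^{p-2}(\NN\ddelta)_i^2$. The paper does not prove this, does not need it, and there is no reason to expect it to hold---Lewis-weight sampling controls $\|\NN\ddelta\|_p$, not arbitrarily reweighted $\ell_2$ forms, and the weights $|\NN\xx|_i^{p-2}$ can be essentially arbitrary as $\xx$ varies. Instead, the paper plugs directly into the Meta-Algorithm (Theorem~\ref{thm:Meta-Thm}): at \emph{each} iteration of the outer refinement loop and the inner binary search over $\nu$, it calls \textsc{Sparsify} on the \emph{current} residual, i.e.\ on the pair $(\RR^{1/2},\NN)$ where $\RR$ already encodes the iterate-dependent reweighting. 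Theorem~\ref{thm:LewisWt} (via Lemma~\ref{lem:ResProblemLewisApprox}) then gives $\kappa_2=O(1)$ for that particular residual, and the cost $K$ of one sparsification call simply gets multiplied by the $\Otil(\log^2(1/\eps))$ total iteration count. So your ``main obstacle'' disappears: just re-sparsify each residual.

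Two smaller points. First, the second $\log(1/\eps)$ factor does not come from driving each residual solve to higher accuracy; it comes from the binary search over the objective scale $\nu$ inside each refinement step (the while loop in Algorithm~\ref{alg:Meta-Algo}, cf.\ Lemma~\ref{lem:BinarySearch}). Second, you need a starting point: the paper assumes a $\kappa$-approximate $\xx^{(0)}$ and carries $\log(\kappa/\eps)$ through the analysis, obtaining a suitable $\xx^{(0)}$ by an $\ell_2$ solve (or homotopy in $p$).
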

Note that for all $p \in (2,4),$ we have that the exponent
 $\frac{p(p-2)}{6p-4} \le 0.4.$
\begin{remark}
  By \cite{CohenLMMPS15}, a linear equation in $\AA^{\top} \AA$, where
  $\AA \in \rea^{m \times n}$ can be
  solved to quasipolynomial accuracy in time $\Otil(\nnz(\AA) +
  n^{\omega})$.
\end{remark}
Using the above result for solving the required linear systems, we get
a running time of
$\Otil(\nnz(\MM) + \nnz(\NN) + (n^{p/2} + n^{\omega})
n^{\frac{p(p-2)}{6p-4}}),$ matching an earlier input
sparsity result by Bubeck~\etal~\cite{BubeckCLL18} that achieves
$\Otil((\nnz(\MM) + \nnz(\NN))(1 + n^{\frac{1}{2}} m^{-\frac{1}{p}}) +
m^{\frac{1}{2}-\frac{1}{p}}n^2 + n^{\omega}),$ where
$\MM \in \rea^{m_1 \times n}, \NN \in \rea^{m_2 \times n}$ and
$m = \max\{m_1, m_2\}.$

\newcommand{\lewis}[1]{\ttau_{#1}}
\newcommand{\approxlewis}[1]{\widetilde{\ttau}_{#1}}

\section{Main Algorithm}
\label{sec:Meta-Algo}

%
%
%
In this section, we prove Theorems \ref{thm:flow-short},
\ref{thm:voltage-short}, and \ref{thm:LewisMain}. We first design an
algorithm to solve the following general problem:
 \begin{definition}
   For matrices $\MM \in \mathbb{R}^{m_1 \times n}$,
   $\NN \in \mathbb{R}^{m_2 \times n}$ and
   $\AA \in \mathbb{R}^{d \times n}$, $m_1,m_2 \geq n$,$d \leq n$, and
   vectors
   $\bb \perp \left\{\ker(\MM) \cap \ker(\NN) \cap \ker(\AA)\right\}$
   and $\cc \in \text{im}(\AA)$, we want to solve
 \begin{align}\label{eq:Problem}
 \min_{\xx}& \quad \bb^{\top}\xx + \|\MM\xx\|_2^2 + \|\NN\xx\|_p^p\\
 \text{s.t.} & \quad \AA\xx = \cc. \nonumber
 \end{align}
 \end{definition}
 In order to solve the above problem, we use the iterative refinement framework from \cite{AdilPS19} to obtain a residual problem which is defined as follows.
\begin{definition}
   \label{def:residual}
   For any $p \geq 2$, we define the residual problem $\res(\Delta)$,
   for \eqref{eq:Problem} at a feasible $\xx$ as,
   \[
     \max_{\AA\Delta = 0} \quad \res(\Delta) \defeq \gg^{\top}\Delta -
     \Delta^{\top}\RR\Delta - \|\NN\Delta\|_p^p, \text{    where,}
   \] 
   \[\gg = \frac{1}{p}\bb + \frac{2}{p}\MM^{\top} \MM\xx +
     |\NN\xx|^{p-2} \NN\xx \quad \text{and} \quad\RR =\frac{2}{p^2}\MM^{\top}\MM+ 2\NN^{\top} Diag(|\NN\xx|^{p-2})\NN.\]
\end{definition}
This residual problem can further be reduced by moving the term linear in $\xx$ to the constraints via a binary search. This leaves us with a problem of the form, 
  \begin{align*}
  \begin{aligned}
  \min_{\Delta} & \quad \Delta^{\top}\RR\Delta + \|\NN\Delta\|_p^p \\
  \text{s.t.} & \quad  \gg^{\top}\Delta = a, \AA\Delta = 0,
  \end{aligned}
\end{align*}
for some constant $a$.

In order to solve the above problem with $\ell_2^2 + \ell_p^p$
objective, we reduce the instance size via a sparsification routine,
and then solve the smaller problem by a multiplicative weights
algorithm. We adapt the multiplicative-weights algorithm from
\cite{AdilKPS19} to work in the voltage space while improving the $p$
dependence of the runtime from $p^{O(p)}$ to $p$, and the
approximation quality from $p^{O(p)}$ to $O(1)$. The precise
sparsification routines are described in later sections.


For large $p,$ i.e., $p > \log m$, in order to get a linear
dependence on the running time on $p,$ we need to reduce the residual
problem in $\ell_p$-norm to a residual problem in $\log m$-norm
by using the framework from \cite{AdilS20}.

The entire meta-algorithm is described formally in Algorithm
\ref{alg:Meta-Algo}, and its guarantees are described by the next
theorem. Most proof details are deferred to Appendix
\ref{sec:lp-reg}.
\begin{restatable}{theorem}{MetaThm}
\label{thm:Meta-Thm}
For an instance of Problem \eqref{eq:Problem}, suppose we are given a starting solution $\xx^{(0)}$ that satisfies $\AA\xx^{(0)} = \cc$ and is a $\kappa$ approximate solution to the optimum. Consider an iteration of the while loop, line \ref{alg:line:while} of Algorithm \ref{alg:Meta-Algo} for the $\ell_p$-norm residual problem at $\xx^{(t)}$. We can define $\mu_1$ and $\kappa_1$ such that if $\Dbar$ is a $\beta$ approximate solution to a corresponding $p'$-norm residual problem, then $\mu_1\Dbar$ is a $\kappa_1$-approximate solution to the $p$-residual problem. Further, suppose we have the following procedures,
\begin{tight_enumerate}
\item  {\sc Sparsify}: Runs in time $K$, takes as input any matrices $\RR,\NN$ and vector $\gg$ and returns $\widetilde{\RR},\widetilde{\NN},\widetilde{\gg}$ having sizes at most $\tilde{n} \times n$ for the matrices \bab{Need to be careful about $\widetilde{\RR}$ vs. $\widetilde{\RR}^\top \widetilde{\RR}$}, such that if $\Dtil$ is a $\beta$ approximate solution to,
 \[
 \max_{\AA\Delta = 0} \quad  \widetilde{\gg}^{\top}\Delta - \|\widetilde{\RR}\Delta\|_2^2 - \|\widetilde{\NN}\Delta\|_{p'}^{p'}, 
 \]
 for any $p' \geq 2$, then $\mu_2\Dtil$, for a computable $\mu_2$ is a $\kappa_2\beta$-approximate solution for,
  \[
 \max_{\AA\Delta = 0} \quad  \res(\Delta) \defeq \gg^{\top}\Delta - \|\RR^{1/2}\Delta\|_2^2 - \|\NN\Delta\|_{p'}^{p'}.
 \] 
\item {\sc Solver}: Approximately solves \eqref{eq:InsideProb} to
  return $\Dbar$ such that
  $ \|\widetilde{\RR}\Dbar\|_2^2 \leq \kappa_3 \nu$ and
  $\|\widetilde{\NN}\Delta\|_{p}^{p} \leq \kappa_4 \nu$ in time
  $\tilde{K}(\tilde{n})$ for instances of size at most $\tilde{n}$.
\end{tight_enumerate} 
Algorithm \ref{alg:Meta-Algo} finds an $\epsilon$-approximate solution for Problem \eqref{eq:Problem} in time 
\[
 \widetilde{O}\left(p\kappa_4^{1/(p-1)}\kappa_3 \kappa_2\kappa_1 (K + \tilde{K}(\tilde{n}))\log\left(\frac{\kappa p}{\epsilon}\right)^2\right).
\]
\end{restatable}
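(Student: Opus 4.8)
The plan is to analyze a single iteration of the outer while-loop (line~\ref{alg:line:while}) of Algorithm~\ref{alg:Meta-Algo}, show that it produces an update $\Delta^{(t)}$ with $\AA\Delta^{(t)} = 0$ that is a $\beta^\star$-approximate maximizer of the residual objective $\res(\cdot)$ at the current feasible point $\xx^{(t)}$ for $\beta^\star = O(\kappa_1\kappa_2\kappa_3\kappa_4^{1/(p-1)})$, and then to feed this into the iterative-refinement guarantee of~\cite{AdilKPS19, AdilPS19}: an appropriately scaled $\beta^\star$-approximate residual solution decreases the optimality gap by a multiplicative factor $1-\Omega(1/(\beta^\star p))$, i.e.\ $\obj(\xx^{(t+1)}) - \obj(\xx^\star) \le (1 - \Omega(1/(\beta^\star p)))\,(\obj(\xx^{(t)}) - \obj(\xx^\star))$. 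Since $\xx^{(0)}$ is $\kappa$-approximate and every update preserves $\AA\Delta^{(t)} = 0$, all iterates stay feasible and $T = O(\beta^\star p\,\log(\kappa/\epsilon))$ iterations suffice to reach $\epsilon$-accuracy. It then remains to (a) establish the value of $\beta^\star$ by chasing the reduction chain inside one iteration, and (b) bound the per-iteration cost.

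For (a), fix the iteration and form $\gg, \RR$ as in Definition~\ref{def:residual}; the update is produced by three approximation-preserving reductions. First, when $p > \log m$, apply the norm-reduction step of~\cite{AdilS20} to pass from the $p$-residual problem to a $p'$-residual problem with $p' = \Theta(\log m)$, so that the multiplier $\mu_1$ from the statement turns an approximate $p'$-solution into a $p$-solution while incurring only a factor $\kappa_1$ in the approximation ratio (for $p \le \log m$ take $p' = p$ and $\kappa_1 = 1$). Second, run {\sc Sparsify} on $(\RR, \NN, \gg)$ to obtain $(\widetilde\RR, \widetilde\NN, \widetilde\gg)$ of size $\tilde n \times n$, so that $\mu_2$ lifts a $\beta$-approximate solution of the sparsified residual problem to a $\kappa_2\beta$-approximate solution of the un-sparsified $p'$-residual problem. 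Third, move the linear term into a constraint $\widetilde\gg^{\top}\Delta = a$ by binary search and call {\sc Solver} on the resulting instance~\eqref{eq:InsideProb} with target value $\nu$, obtaining $\Dbar$ with $\widetilde\gg^{\top}\Dbar = a$, $\|\widetilde\RR\Dbar\|_2^2 \le \kappa_3\nu$, and $\|\widetilde\NN\Dbar\|_p^p \le \kappa_4\nu$.

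The delicate step is turning this {\sc Solver} output into an approximation factor for the sparsified residual problem. Because all parameters are quasi-polynomially bounded, a binary search of depth $O(\log(\kappa p/\epsilon))$ over $a$ (and the target $\nu$) certifies that $\nu$ is within a constant of the optimum of~\eqref{eq:InsideProb}, and, using that the optimal residual value is $\Theta(\nu)$ at the best $a$, that $a = \Theta(\nu)$. I would then rescale $\Dbar \mapsto t\Dbar$ for $t \in (0,1]$: since $p \ge 2$,
\[
  \widetilde\gg^{\top}(t\Dbar) - \|\widetilde\RR(t\Dbar)\|_2^2 - \|\widetilde\NN(t\Dbar)\|_p^p \;\ge\; t\,a - t^2\kappa_3\nu - t^p\kappa_4\nu ,
\]
and optimizing over $t$: the linear-versus-$\ell_2^2$ trade-off contributes $\Omega(\nu/\kappa_3)$, while the linear-versus-$\ell_p^p$ trade-off contributes $\Omega(\nu/(p\kappa_4)^{1/(p-1)}) = \Omega(\nu/\kappa_4^{1/(p-1)})$ (as $p^{1/(p-1)} = O(1)$ for $p \ge 2$). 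Hence some $t\Dbar$ attains residual value $\Omega(\nu/(\kappa_3\kappa_4^{1/(p-1)}))$, i.e.\ is an $O(\kappa_3\kappa_4^{1/(p-1)})$-approximate solution of the sparsified residual problem; the exponent $1/(p-1)$ is exactly the price of trading a linear term against a $t^p$ term, which is why the $\ell_p$-weight quality $\kappa_4$ enters only as $\kappa_4^{1/(p-1)}$. Composing the three reductions, $\mu_1\mu_2\,t\,\Dbar$ is feasible for the $p$-residual problem at $\xx^{(t)}$ and is $\beta^\star$-approximate with $\beta^\star = O(\kappa_1\kappa_2\kappa_3\kappa_4^{1/(p-1)})$.

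For (b), each iteration forms $\gg,\RR$, runs the norm reduction, calls {\sc Sparsify} once (time $K$) and {\sc Solver} $O(\log(\kappa p/\epsilon))$ times (time $\tilde K(\tilde n)$ each), and performs $\Otil(m)$ additional bookkeeping (extracting $\Delta^{(t)}$, updating $\xx$, maintaining the objective values that drive the binary search), all absorbed into $\Otil(\cdot)$. Multiplying $T = O(\beta^\star p\log(\kappa/\epsilon))$ iterations by the $O(\log(\kappa p/\epsilon))$ inner {\sc Solver} calls and by the per-call cost $K + \tilde K(\tilde n)$, using $\log(\kappa/\epsilon) \le \log(\kappa p/\epsilon)$ and $\beta^\star = O(\kappa_1\kappa_2\kappa_3\kappa_4^{1/(p-1)})$, gives the claimed bound $\Otil\bigl(p\,\kappa_4^{1/(p-1)}\kappa_3\kappa_2\kappa_1\,(K + \tilde K(\tilde n))\,\log(\kappa p/\epsilon)^2\bigr)$. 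I expect the main obstacle to be the rescaling argument of part~(a): obtaining the exponent $1/(p-1)$ on $\kappa_4$ (rather than $\kappa_4$ itself) requires carefully coupling the binary-search target $\nu$, the feasible linear value $a$, and the scaling parameter $t$, and this must be carried out uniformly over all $p \ge 2$ and, in the $p > \log m$ regime, on top of the norm reduction of~\cite{AdilS20}, whose output is itself the instance being sparsified and solved --- so one must also verify that $\mu_1$ correctly transports the approximation guarantee back through that reduction.
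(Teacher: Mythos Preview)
Your proposal is correct and follows essentially the same approach as the paper's proof: invoke the iterative-refinement lemma to reduce to $O(p\beta^\star\log(\kappa/\epsilon))$ residual solves, then analyze one residual solve as the composition of the $p\to p'$ norm reduction (Lemma~\ref{lem:p-to-q}), {\sc Sparsify}, and the constrained {\sc Solver} call, with a search over $\nu$ (Lemma~\ref{lem:BinarySearch}) contributing the second $\log(\kappa p/\epsilon)$ factor. Two small remarks: (i) in the algorithm, {\sc Sparsify} is invoked once per $\nu$ value (inside the while loop), not once per outer iteration, since the $p\to p'$ conversion makes $\NN'$ depend on $\nu$; this does not change your final bound. (ii) Your ``optimize over $t$'' step is exactly the content of the paper's Lemma~\ref{lem:Decision}, which simply fixes $t=\mu=\Theta\bigl(a/(b\kappa_3\kappa_4^{1/(p-1)})\bigr)$ rather than optimizing; your phrasing ``the linear-versus-$\ell_2^2$ trade-off contributes $\Omega(\nu/\kappa_3)$ while the linear-versus-$\ell_p^p$ trade-off contributes $\Omega(\nu/\kappa_4^{1/(p-1)})$'' is slightly loose---the two constraints on $t$ combine to force $t\lesssim 1/(\kappa_3\kappa_4^{1/(p-1)})$ and hence residual $\Omega(\nu/(\kappa_3\kappa_4^{1/(p-1)}))$, which is what you need.
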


\begin{algorithm}[h]
\caption{Meta-Algorithm for $\ell_p$ Flows and Voltages}
\label{alg:Meta-Algo}
 \begin{algorithmic}[1]
 \Procedure{\textsc{Sparsified-p-Problems}}{$\AA, \MM,\NN, \cc, \bb, p$}
 \State $\xx \leftarrow \xx^{(0)}$, such that $\ff\left(\xx^{(0)}\right) \leq \kappa \opt$
 \State $T  \leftarrow \widetilde{O}\left(p \kappa_1\kappa_2\kappa_3 \log\left(\frac{\kappa}{\epsilon}\right)\right)$
\For{$t = 0$ to $T$}
 \State At $\xx^{(t)}$ define $\gg,\RR,\NN$ and $\res(\Delta)$, the residual problem (Definition \ref{def:residual})
 \State $a  \leftarrow \frac{1}{2},b \leftarrow 1, \mu_1  \leftarrow 1, \kappa_1  \leftarrow 1$
 \State $\nu  \leftarrow \ff\left(\xx^{(0)}\right)$
\While{$ \nu \geq \epsilon \frac{ \ff\left(\xx^{(0)}\right)}{\kappa p}$}\label{alg:line:while}
\If{$p > \log m$} \Comment{Convert $\ell_p$-norm residual to $\log m$-norm residual}
\State $p'  \leftarrow \log m$
\State $\NN'  \leftarrow \frac{1}{2^{1/p'}} \left(\frac{\nu}{m} \right)^{\frac{1}{p'}-\frac{1}{p}}\NN$
\State $ a  \leftarrow \frac{1}{33}, b  \leftarrow O(1)m^{o(1)}$
\State $ \mu_1 \leftarrow m^{-o(1)}, \kappa_1 \leftarrow m^{o(1)}$  \Comment{Lose $\kappa_1$ in approx. when scaled by $\mu_1$}
\State $(\widetilde{\gg},\widetilde{\RR},\widetilde{\NN})  \leftarrow \text{{\sc Sparsify$(\gg,\RR,\NN')$}}$ \Comment{Lose $\kappa_2$ in approx. when scaled by $\mu_2$}
\Else
\State $(\widetilde{\gg},\widetilde{\RR},\widetilde{\NN})  \leftarrow \text{{\sc Sparsify$(\gg,\RR,\NN)$}}$ \Comment{Lose $\kappa_2$ in approx. when scaled by $\mu_2$}
\State $p' \leftarrow p$
\EndIf
\State Use {\sc Solver} to compute $\kappa_3,\kappa_4$ approximate solution to
 \begin{align}
\label{eq:InsideProb}
\centering
\begin{aligned}
\Dtil^{(\nu)} \leftarrow \arg\min_{\Delta}& \quad \|\widetilde{\RR}^{1/2}\Delta\|_2^2 + \|\widetilde{\NN}\Delta\|_{p'}^{p'}\\
\text{s.t.}& \quad \widetilde{\gg}^{\top}\Delta = a\nu, \quad\AA\Delta = 0.
\end{aligned}
\end{align}
\State $\Dbar^{(\nu)}  \leftarrow\frac{a}{2b\kappa_3 \kappa_4^{1/(p'-1)}}\mu_2\mu_1 \Dtil^{(\nu)}$
\State $ \nu  \leftarrow\nu/2$
\EndWhile
\State $\Delta \leftarrow \arg\min_{\Dbar^{(\nu)}}\quad \ff\left(\xx-\frac{\Dbar^{(\nu)}}{p}\right)$
\State $\xx  \leftarrow \xx-\frac{\Delta}{p}$
\EndFor
 \State\Return $ \xx$
 \EndProcedure 
 \end{algorithmic}
\end{algorithm}

\subsection{Algorithms for $\ell_p$-norm
  Problems}\label{sec:MainAlgoProofs}
The problems discussed in Section \ref{sec:results} are special cases
of Problem \eqref{eq:Problem}, which means we can use Algorithm
\ref{alg:Meta-Algo}. To prove our results, we will utilize Theorem
\ref{thm:Meta-Thm}, with the respective sparsification procedures and
the following multiplicative-weights based algorithm for solving
problems of the form,
\begin{align}
  \label{eq:res}
 \min_{\Delta}  & \quad \Delta^{\top}\MM^{\top}\MM\Delta + \|\NN\Delta\|_p^p \\
 & \text{s.t.} \quad \AA\Delta = \cc. \nonumber
 \end{align}
 We describe our solver formally and prove the following theorem about
 its guarantees in Appendix \ref{sec:AKPSOracle}.
\begin{restatable}{theorem}{AKPSAlgo}
\label{cor:ResidualDecision}
Let $p\geq 2$.
Consider an instance of Problem \eqref{eq:res} described by matrices
$\AA \in \mathbb{R}^{d \times n},\NN \in \mathbb{R}^{m_1 \times n},\MM
\in \mathbb{R}^{m_2 \times n}$, $d \le n \le m_1,m_2$, and vector
$\cc \in \rea^{d}$.
If the optimum of this problem is at most $\nu$, Procedure
\textsc{Residual-Solver} (Algorithm~\ref{alg:FasterOracleAlgorithm}) returns an $\xx$ such that
$\AA \xx = \cc,$ and $\xx^{\top}\MM^{\top}\MM\xx \leq O(1)\nu$ and
$\|\NN\xx\|_p^p \leq O(3^p)\nu$.
The algorithm makes ${O} \left(pm_1^{\frac{p-2}{(3p-2)}}\right)$ calls
to a linear system solver.
\end{restatable}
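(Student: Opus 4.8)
The plan is to analyze Procedure \textsc{Residual-Solver} as an accelerated, width-reduced multiplicative-weights algorithm in the style of \cite{ChristianoKMST10, AdilKPS19}, adapted here to the mixed $(\ell_2^2+\ell_p^p)$ objective in the overconstrained setting with a general affine constraint $\AA\Delta = \cc$. After rescaling we may assume $\nu = 1$, so that the hypothesis supplies a feasible $\Delta^\star$ with $\|\MM\Delta^\star\|_2^2 + \|\NN\Delta^\star\|_p^p \le 1$, and the goal is to output a feasible $\xx$ whose $\ell_2$ part is $O(1)$ and whose $\ell_p$ part is $O(3^p)$. We never certify near-optimality; we only certify this relaxed, split feasibility, which is exactly what the {\sc Solver} hypothesis of Theorem~\ref{thm:Meta-Thm} asks for ($\kappa_3 = O(1)$, $\kappa_4 = O(3^p)$).

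The core subroutine is a weighted least-squares oracle. It maintains nonnegative weights $\ww \in \rea^{m_1}$ on the rows of $\NN$ and a width parameter $\rho$, and on each call returns the minimizer $\Delta$ over $\{\AA\Delta = \cc\}$ of a quadratic form of the shape $\|\MM\Delta\|_2^2 + \sum_i \rr_i(\NN\Delta)_i^2$, where $\rr_i$ blends $w_i$ with a uniform resistance tuned in terms of $\rho$ and $\sum_j w_j$; this is one linear-system solve in a matrix of the form $\MM^\top\MM + \NN^\top\DD\NN$ restricted to $\ker\AA$ (the affine shift handled once by a fixed particular solution of $\AA\Delta=\cc$). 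The key lemma is the oracle's optimality certificate: since $\Delta^\star$ is feasible for the same constraint, comparing the quadratic value at $\Delta$ to that at $\Delta^\star$ and using $\|\MM\Delta^\star\|_2^2,\|\NN\Delta^\star\|_p^p \le 1$ together with H\"older's inequality on $\sum_i \rr_i(\NN\Delta^\star)_i^2$ bounds $\|\MM\Delta\|_2^2 = O(1)$ and the weighted quantity $\sum_i w_i(\NN\Delta)_i^2$ in terms of $\sum_j w_j$, which in the regime we use also yields an $\ell_\infty$-type control $|(\NN\Delta)_i| = O(\rho)$.

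Next I would run the potential argument. The algorithm iterates $T$ times; at step $t$ it calls the oracle for $\Delta^{(t)}$. If some coordinate has $|(\NN\Delta^{(t)})_i|$ exceeding $\rho$ it is a width-reduction step and we multiply the offending weight(s) by a fixed constant $>1$; otherwise it is a good step, we set $w_i \leftarrow w_i\big(1 + \tfrac{\eta}{\rho}|(\NN\Delta^{(t)})_i|\big)$ and accumulate $\Delta^{(t)}$ into a running average $\bar\Delta$. Tracking $\Phi^{(t)} = \sum_i w_i^{(t)}$: each good step multiplies $\Phi$ by at most $\exp(O(\eta))$, since by Cauchy--Schwarz and the oracle's weighted-$\ell_2$ bound $\tfrac1\rho\sum_i w_i|(\NN\Delta^{(t)})_i| = O(\Phi)$; meanwhile a single coordinate can be width-reduced only $O(\log_{1+\Omega(1)}\Phi_{\max})$ times because $\Phi \ge \max_i w_i$ and $\Phi$ stays bounded above, which caps the number $r$ of width-reduction steps at roughly $m_1\rho^{-\Theta(1)}$. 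A standard regret/averaging argument over the good steps shows that after $T-r = O\big(p\,\rho^{\Theta(1)}\log m_1\big)$ good steps the scaled running average $\bar\Delta$ is feasible with $\|\MM\bar\Delta\|_2^2 = O(1)$ and $\|\NN\bar\Delta\|_p^p = O(3^p)$, the $3^p$ arising from converting the per-coordinate width bound at scale $\rho$ into an $\ell_p$ bound.

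Finally, choosing $\rho$ to balance the $O(m_1\rho^{-\Theta(1)})$ width-reduction steps against the $O(p\,\rho^{\Theta(1)}\log m_1)$ good steps gives $T = O\big(p\,m_1^{\frac{p-2}{3p-2}}\big)$ iterations (logarithmic factors absorbed), each one $O(1)$ calls to a linear-system solver, proving the claim. The main obstacle is making the oracle bound tight enough that the per-good-step growth of $\Phi$ is genuinely $\exp(O(\eta))$ and not $\exp(O(\eta)\cdot\mathrm{poly}(p))$ --- an extra $\mathrm{poly}(p)$ there would corrupt the $m_1$-exponent after balancing --- and correctly threading the general affine constraint $\AA\Delta=\cc$ through the weighted-$\ell_2$ oracle, since in \cite{AdilKPS19} the constraint is a single hyperplane; the $\ell_\infty$-to-$\ell_p$ conversion with only a $3^p$ loss and the exact optimization of $\rho$ are delicate but routine and follow the template of \cite{AdilKPS19}.
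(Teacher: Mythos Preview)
Your high-level outline is on the right track, but there is a genuine gap in how you bound the number of width-reduction steps. The paper's analysis tracks \emph{two} potentials, not one. Besides a $\Phi$-type potential (in the paper $\Phi(\ww) = \|\ww\|_p^p$, not $\sum_i w_i$), the crucial tool is the \emph{energy} potential
\[
\Psi(\rr) \;=\; \min_{\AA\Delta = \cc}\; m_1^{(p-2)/p}\|\MM\Delta\|_2^2 + \tfrac{1}{3^{p-2}}\sum_e \rr_e(\NN\Delta)_e^2,
\]
i.e.\ the optimal value of the oracle's weighted least-squares problem as a function of the resistances $\rr_e = \ww_e^{p-2}$. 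A Sherman--Morrison--Woodbury computation (Lemma~\ref{lem:ckmst:res-increase}) shows that $\Psi$ \emph{increases} by $\Omega(\tau^{2/p})$ in every width-reduction step (Lemma~\ref{lem:ReduceWidthElectricalPotential}), while $\Psi$ is uniformly bounded above by $O(m_1^{(p-2)/p})$ because $\Phi$ is. This is what caps the number of width-reduction steps at $K \le 2^{-p/(p-2)}\rho^2 m_1^{2/p}\beta^{-2/(p-2)}$ and yields the exponent $(p-2)/(3p-2)$ after balancing against $T = \alpha^{-1}m_1^{1/p}$. Your coordinate-wise argument (``a single coordinate can be width-reduced only $O(\log \Phi_{\max})$ times'') gives at best $\tilde{O}(m_1)$ width-reduction steps in total, not $m_1\rho^{-\Theta(1)}$; without the $\Psi$ potential you cannot recover the claimed iteration bound.

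Several other details also differ from the paper and matter for the argument to close. The weight update in a flow step is \emph{additive}, $\ww \leftarrow \ww + \alpha|\NN\Delta|$, with resistance $\rr_e = \ww_e^{p-2}$; width reduction is triggered by the global test $\|\NN\Delta\|_p^p > \tau$ (not a per-coordinate $\ell_\infty$ threshold) and doubles $\rr_e$ only on edges with $|\NN\Delta|_e \ge \rho$ \emph{and} $\rr_e \le \beta$ --- the resistance threshold $\beta$ is essential for both the $\Phi$-growth bound (Lemma~\ref{lem:ReduceWidthGammaPotential}) and the $\Psi$-increase lemma. Finally, the $O(3^p)$ arises because $\ww \ge |\NN\xx|$ coordinatewise throughout, so $\|\NN\xx\|_p^p \le \Phi$, and after $T = \alpha^{-1}m_1^{1/p}$ flow steps $\Phi \le (2\alpha T + m_1^{1/p})^p e = (3m_1^{1/p})^p e$; it is not an $\ell_\infty$-to-$\ell_p$ conversion as you suggest.
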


We utilize Procedure \textsc{Residual-Solver} as the Procedure
\textsc{Solver} in Algorithm \textsc{Sparsified-p-Problems}. The
algorithm uses the procedure only for solving problems instances with
$p \leq \log m.$ Thus, its running time is
$\tilde{K}(\tilde{n}) = \Otil \left(\tilde{n}^{\frac{p-2}{3p-2}} \cdot
  \LSS(\tilde{n})\right) \leq \Otil \left(\tilde{n}^{1/3} \cdot
  \LSS(\tilde{n})\right) $, where $\LSS(\tilde{n})$ denotes the time
required to solve a linear system in matrices of size $\tilde{n}$. We
also have, $\kappa_3 = O(1), \kappa_4^{1/(p-1)} = O(1)$.

We next estimate the values of $\kappa_1$ and $\mu_1$. If
$p \leq \log m$, we have $\mu_1 = 1$ and $\kappa_1 =
1$. Otherwise, $\mu_1 = \Otil(1)$ and $\kappa_1 = O(m^{o(1)})$ (Refer
to Lemma \ref{lem:p-to-q} in Appendix \ref{sec:lp-reg}).

In order to obtain an initial solution, we usually solve an
$\ell_2$-norm problem. This gives an $m^{p/2}$ approximate initial
solution which results in a factor of $p^2$ in the running time. To
avoid this, we can do a homotopy on $p$ similar to \cite{AdilS20},
i.e., start with an $\ell_2$ solution and solve the $\ell_{2^2}$
problem to a constant approximation, followed by
$\ell_{2^3},..\ell_p$. We note that a constant approximate solution to
the $\ell_{p/2}$-norm problem gives an $O(m)$ approximation to the
$\ell_p$ problem and thus, we can solve $\log p$ problems where we can
assume $\kappa = O(m)$.

We now complete the proof of our various algorithmic results by
utilizing sparsification procedures specific to each problem.

\subsubsection*{$\ell_p$ Flows}
We will prove Theorem \ref{thm:flow-short} (Flow Algorithmic Result), with explicit $p$ dependencies.
%
\begin{proof}
From Theorem \ref{thm:flow-spars-short}, we obtain a sparse graph in $K = \Otil(m)$ time with $\tilde{n} = \Otil(n)$ edges. A constant factor approximation to the flow residual problem on this sparse graph when scaled by $\mu_2 = m^{-\frac{1}{p-1}}$ gives a $\kappa_2 = \Otil\left(m^{\frac{2}{p-1}}\right)$-approximate solution to the flow residual problem on the original graph. We can solve linear systems on the sparse graph in $\Otil(\tilde{n}) = \Otil(n)$ time using fast Laplacian solvers. Using all these values in Theorem \ref{thm:Meta-Thm}, we get the final runtime to be $pm^{\frac{2}{p-1}+o(1)}\left(m+n^{1+\frac{p-2}{3p-2}}\right) \log^2\left(\frac{pm}{\epsilon}\right)$ as claimed. We prove Theorem \ref{thm:flow-spars-short} in Section \ref{sec:Flow-Sparse}.
\end{proof}

\subsubsection*{$\ell_p$ Voltages}
We will prove Theorem \ref{thm:voltage-short} (Voltage Algorithmic Result), with explicit $p$ dependencies.
%
\begin{proof}
From Theorem \ref{lem:vol-sparse-inf}, we obtain a sparse graph in $K = \Otil(m)$ time with $\tilde{n} = \Otil(n)$ edges. A constant factor approximation to the voltage residual problem on this sparse graph when scaled by $\mu_2 = m^{-\frac{1}{p-1}}$ gives a $\kappa_2 = \Otil\left(m^{\frac{1}{p-1}}\right)$-approximate solution to the voltage residual problem on the original graph. We can solve linear systems on the sparse graph in $\Otil(\tilde{n}) = \Otil(n)$ time using fast Laplacian solvers. Using  these values in Theorem \ref{thm:Meta-Thm}, we get the final runtime to be $pm^{\frac{1}{p-1}+o(1)}\left(m+n^{1+\frac{p-2}{3p-2}}\right) \log^2\left(\frac{pm}{\epsilon}\right)$ as claimed. We prove Theorem \ref{lem:vol-sparse-inf} in Section \ref{sec:voltage-spars}.
\end{proof}

\subsubsection*{General Matrices}
We will now prove Theorem \ref{thm:LewisMain}.
\begin{proof}
  We assume Theorem \ref{thm:LewisWt}, which we prove in Appendix
  \ref{sec:lewis-wt}. From the theorem, we have $\kappa_2 = O(1)$ and
  $\mu_2 = O(1)$. Note that $K = \LSS(\MMhat) + \LSS(\NNhat)$ for some
  $\MMhat, \NNhat \in \rea^{O(n\log(n)) \times n}$, which is the time
  required to solve linear systems in $\MMhat^\top \MMhat$ and
  $\NNhat^\top \NNhat$, respectively. Since, by
  Theorem~\ref{thm:LewisWt}, the size of $\MMtil$ and $\NNtil$ is
  $\tilde{n} = O(n^{p/2}\log(n))$, the cost from the solver in
  Theorem~\ref{cor:ResidualDecision} is
  $\tilde{O}_p\left(\left(\LSS(\MMtil) +
      \LSS(\NNtil)\right)n^{\frac{p(p-2)}{6p-4}}\right)$.
\end{proof}

\section{Construction of Sparsifiers for $\ell_2^{2} + \ell_{p}^{p}$
  Voltages}\label{sec:voltage-spars}
In this section, we prove a formal version of the voltage sparsification result (Theorem \ref{lem:vol-sparse-inf}):

\begin{theorem}
  \label{thm:spannersparsify}
  Consider a graph $G = (V,E)$ with
  non-negative $2$-weights $\ww \in \rea^E$
  and
  non-negative $p$-weights $\ss  \in \rea^E$,
  with $m$
  and $n$ vertices.
  We can produce a graph $H = (V,F)$ with edges $F \subseteq E$,
  $\ell_2$-weights $\uu \in \rea^F$, 
  and
  $\ell_p$-weights $\tt \in \rea^F$,
  such that with probability at least $1-\delta$ the graph  $H$ has $O(n \log(n/\epsilon))$ edges and
  \begin{equation}
    \frac{1}{1.5} \norm{\WW \BB_{G}\xx}_2 
    \leq 
    \norm{\UU \BB_{H}\xx}_2
    \leq
    1.5 
    \norm{\WW \BB_{G}\xx}_2
    \label{eq:Hspectralsparsify}
  \end{equation}
  and for any $p \in [1,\infty]$
 \begin{equation}
    \frac{1}
    {m^{1/p} \log(n)}
    \norm{\SS \BB_{G}\xx}_p    
    \leq 
    \norm{\TT \BB_{H}\xx}_p 
    \leq 
    \norm{\SS \BB_{G}\xx}_p
     \label{eq:Hpsparsify}
  \end{equation}
  where $\WW = \diag{\ww}, \UU = \diag{\uu}, \SS = \diag{\ss}, \TT = \diag{\tt}$.
  We denote the routine computing $H$ and $\uu,\tt$ by
  $\textsc{SpannerSparsify}$, so that $(H, \uu, \tt) =
  \textsc{SpannerSparsify}(G, \ww, \ss)$.
  This algorithm runs in $\Otil(m\log(1/\delta))$ time.
\end{theorem}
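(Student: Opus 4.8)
The plan is to exploit the decoupling of the two voltage costs. We build two subgraphs of $G$ on the vertex set $V$: a graph $H_2 = (V, F_2)$ responsible for the spectral guarantee~\eqref{eq:Hspectralsparsify}, and a graph $H_p = (V, F_p)$ responsible for the $\ell_p$ guarantee~\eqref{eq:Hpsparsify}; we output $H = (V, F_2 \cup F_p)$, placing $\uu$ only on the edges of $F_2$ (and $0$ elsewhere) and $\tt$ only on the edges of $F_p$ (and $0$ elsewhere). Since $\norm{\UU\BB_H\xx}_2$ sees only edges with $\uu_e \neq 0$ and $\norm{\TT\BB_H\xx}_p$ only edges with $\tt_e \neq 0$, the two guarantees can be established independently. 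For~\eqref{eq:Hspectralsparsify}, note that $\norm{\WW\BB_G\xx}_2^2 = \xx^\top\BB_G^\top\WW^2\BB_G\xx$ is the Laplacian quadratic form of $G$ with edge conductances $\ww_e^2$; I would run a standard near-linear-time spectral sparsification routine (leverage-score / effective-resistance sampling following Spielman--Srivastava, run in $\Otil(m)$ time with a fast Laplacian solver) to get $H_2 \subseteq G$ with $O(n\log n)$ edges and conductances $\uu_e^2$ whose Laplacian form is a $(1\pm 0.2)$-approximation of that of $G$. Taking square roots yields~\eqref{eq:Hspectralsparsify}, since $\sqrt{0.8} > 2/3$ and $\sqrt{1.2} < 3/2$. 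Boosting the success probability to $1-\delta$ costs an $O(\log(1/\delta))$ factor.

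For~\eqref{eq:Hpsparsify}, I would give each edge $e$ with $\ss_e > 0$ the length $\ell_e \defeq 1/\ss_e$ and compute, in $\Otil(m)$ time, a weighted graph spanner $H_p \subseteq G$ with $O(n\log n)$ edges and stretch $t \le \log n$ (e.g.\ Baswana--Sen with $k \approx \tfrac12\log n$); set $\tt_e = \ss_e$ for $e \in F_p$ and $\tt_e = 0$ otherwise. The upper bound in~\eqref{eq:Hpsparsify} is immediate, as $F_p \subseteq E$ and $\tt \le \ss$ coordinatewise. For the lower bound, write $\Delta_e \defeq (\BB_G\xx)_e$ and $\sigma_e \defeq \ss_e\abs{\Delta_e} = \ell_e^{-1}\abs{\Delta_e}$. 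For each $e = (u,v)$ with $\ss_e > 0$ let $P_e$ be a shortest $u$--$v$ path in $H_p$, so $\sum_{e'\in P_e}\ell_{e'} \le t\ell_e$. The triangle inequality on potentials gives $\abs{\Delta_e} \le \sum_{e'\in P_e}\abs{\Delta_{e'}} = \sum_{e'\in P_e}\ell_{e'}\sigma_{e'}$, and Jensen's inequality for $x \mapsto x^p$ with the measure $\{\ell_{e'}\}_{e'\in P_e}$ gives $\abs{\Delta_e}^p \le \bigl(\sum_{e'\in P_e}\ell_{e'}\bigr)^{p-1}\sum_{e'\in P_e}\ell_{e'}\sigma_{e'}^p \le (t\ell_e)^{p-1}\sum_{e'\in P_e}\ell_{e'}\sigma_{e'}^p$, whence $\sigma_e^p = \ell_e^{-p}\abs{\Delta_e}^p \le t^{p-1}\ell_e^{-1}\sum_{e'\in P_e}\ell_{e'}\sigma_{e'}^p$. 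Summing over all $e\in E$ with $\ss_e > 0$ and exchanging the order of summation,
\[
\norm{\SS\BB_G\xx}_p^p \;\le\; t^{p-1}\sum_{e'\in F_p}\ell_{e'}\,\sigma_{e'}^p\Bigl(\sum_{e:\,e'\in P_e}\ell_e^{-1}\Bigr) \;=\; t^{p-1}\sum_{e'\in F_p}\ell_{e'}\,C_{e'}\,\sigma_{e'}^p, \qquad C_{e'} \defeq \sum_{e:\,e'\in P_e}\ss_e .
\]
The congestion $C_{e'}$ is controlled for free: if $e'\in P_e$ then $\ell_{e'} \le \sum_{e''\in P_e}\ell_{e''} \le t\ell_e$, i.e.\ $\ss_e \le t\ss_{e'}$, and at most $m$ edges $e$ have $e'\in P_e$, so $C_{e'} \le tm\,\ss_{e'}$ and $\ell_{e'}C_{e'} \le tm$. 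Hence $\norm{\SS\BB_G\xx}_p^p \le t^p m\sum_{e'\in F_p}\sigma_{e'}^p = t^p m\,\norm{\TT\BB_H\xx}_p^p$, i.e.\ $\norm{\TT\BB_H\xx}_p \ge (t\,m^{1/p})^{-1}\norm{\SS\BB_G\xx}_p$, which is~\eqref{eq:Hpsparsify} since $t \le \log n$; the endpoint $p = \infty$ follows from the same path bound with $\max_{e'\in P_e}\sigma_{e'}$ replacing the Jensen step. Finally $H$ has $O(n\log n)$ edges (within the claimed $O(n\log(n/\epsilon))$), computed in $\Otil(m\log(1/\delta))$ time.

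I expect the main difficulty to be conceptual rather than computational, and it is the point the introduction isolates: that in voltage space the $\ell_2$ and $\ell_p$ costs can be optimized on separate subgraphs, and that preserving the $\ell_p$-voltage cost is equivalent to preserving graph distances, i.e.\ to building a spanner. The single step I would verify most carefully is that a bare stretch guarantee --- with no bound on the hop-length of the replacement paths $P_e$ --- already suffices: it does, precisely because stretch by itself forces $\ell_{e'} \le t\ell_e$ for every $e'$ on $P_e$, which is exactly what caps the congestion $C_{e'} \le tm\,\ss_{e'}$ uniformly. The remaining points are routine: degenerate inputs (edges with $\ss_e = 0$ or $\ww_e = 0$, which contribute nothing to the relevant norm, and disconnected $G$, where the two endpoints of any present edge still lie in a common component so $P_e$ exists), and a union bound over the $\Otil(p)$ invocations of this routine inside Algorithm~\ref{alg:Meta-Algo}.
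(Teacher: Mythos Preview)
Your proposal is correct and follows the same high-level architecture as the paper: decouple the two costs, handle $\ell_2$ by spectral sparsification and $\ell_p$ by a spanner on the graph with lengths $\ell_e = 1/\ss_e$, then take the union. The difference is entirely in how you derive the lower bound of~\eqref{eq:Hpsparsify}. The paper takes a shorter route through $\ell_\infty$: Lemma~\ref{lem:spannerinfnorm} gives $\norm{\SS\BB_G\xx}_\infty \le t\,\norm{\TT\BB_{H}\xx}_\infty$ directly from the spanner property (for an edge $e$ with path $P_e$, $\ell_e^{-1}\abs{\Delta_e} \le \bigl(\sum_{e'\in P_e}\ell_{e'}\bigr)^{-1} t \sum_{e'\in P_e}\abs{\Delta_{e'}} \le t\max_{e'}\ell_{e'}^{-1}\abs{\Delta_{e'}}$), and then the two trivial norm comparisons $\norm{\cdot}_p \le m^{1/p}\norm{\cdot}_\infty$ and $\norm{\cdot}_\infty \le \norm{\cdot}_p$ finish the job in one line. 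Your Jensen-plus-congestion argument reaches the identical bound $t\,m^{1/p}$ but with more bookkeeping; the key extra observation you need, that $e'\in P_e$ forces $\ell_{e'}\le t\ell_e$ and hence $\ell_{e'}C_{e'}\le tm$, is exactly what the $\ell_\infty$ route sidesteps. Neither approach dominates the other in generality here, but the paper's is noticeably shorter.
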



We will first define some terms required for our result. Given a undirected graph $G = (V,E)$, with edge lengths $\ll \in \rea^{E}$,
and $u,v \in V$, we let $d_G(u,v)$ denote the shortest path distance in $G$ w.r.t $\ll$,
so that if $P$ is the shortest path w.r.t $\ll$ then 
\[
  d_{G,\ll}(u,v)
  =
  \sum_{e \in P} \ll(e)
\]

\begin{definition}
  Given a undirected graph $G = (V,E)$ with edge lengths $\ll \in \rea^{E}$,
  a $K$-spanner is a subgraph $H$ of $G$ with the same edge lengths s.t.
  $d_H(u,v) \leq K d_G(u,v)$.
\end{definition}
Baswana and Sen showed the following result on spanners~\cite{BaswanaS07}.
\begin{theorem}
  \label{thm:spanner}
  Given an undirected graph $G = (V,E,\ll)$ with $m$ edges and $n$ vertices,
  and an integer $k > 1$, we can compute a $(2k-1)$-spanner $H$ of $G$
  with $O(n^{1+1/k})$ edges 
  in expected time $O(km)$.
\end{theorem}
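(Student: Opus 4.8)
The statement is the Baswana--Sen near-linear-time spanner construction, so the plan is to reconstruct their clustering-based argument. Maintain a hierarchy of clusterings $\calC_0,\dots,\calC_{k-1}$ of subsets of $V$, each cluster carrying a fixed center vertex. Let $\calC_0$ be the $n$ singletons (each its own center). To pass from $\calC_{i-1}$ to $\calC_i$, independently retain (``sample'') each cluster of $\calC_{i-1}$ with probability $n^{-1/k}$; then process each vertex $v$ lying in an \emph{unsampled} cluster of $\calC_{i-1}$: for every cluster $c\in\calC_{i-1}$ incident to $v$, let $e_{v,c}$ be a least-weight edge from $v$ into $c$ and $\omega(v,c)$ its weight; if no incident cluster was sampled, add all the $e_{v,c}$ to $H$ and retire $v$; otherwise let $c^\star$ be the sampled incident cluster of smallest $\omega(v,\cdot)$, add $e_{v,c^\star}$ to $H$, put $v$ into $c^\star$ (center unchanged), and add $e_{v,c}$ for every incident $c$ with $\omega(v,c)<\omega(v,c^\star)$. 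After level $k-1$, each vertex still in a cluster of $\calC_{k-1}$ adds a least-weight edge to every incident cluster of $\calC_{k-1}$. The output $H=(V,F)$ takes $F$ to be the added edges with lengths inherited from $G$; it is a subgraph of $G$ on the same vertex set, so it remains to establish the stretch, the edge count, and the running time.

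For the stretch it suffices to show $d_H(u,v)\le(2k-1)\,\ll(u,v)$ for every edge $(u,v)\in E$, since a general pair follows by concatenating shortest paths. The engine is the invariant that every vertex $x$ in a cluster $c\in\calC_j$ with center $z$ has an $x$-to-$z$ path in $H$ of length at most the sum of the weights of the (at most $j$) join-edges in $x$'s membership chain, together with the fact that these weights are non-decreasing along the chain. Fix $(u,v)$ with $w=\ll(u,v)$; if $(u,v)\in F$ we are done, so suppose it is first passed over at level $i+1$ (or in the final phase) while, say, $u$ is processed. The rule that a processed vertex emits an edge to its nearest sampled incident cluster and to every strictly-closer incident cluster guarantees that at this level $H$ gains an edge of weight $\le w$ from $u$ into a cluster --- either the cluster $u$ joins or $v$'s own cluster $c_v$ --- that also contains $v$ or shares a center $z^\star$ with $v$'s cluster; a short argument then shows that all relevant join-edge weights in $u$'s and $v$'s chains up to this level are $\le w$. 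The invariant applied to both endpoints bounds $d_H(u,z^\star)$ and $d_H(v,z^\star)$ by $O(kw)$, so $d_H(u,v)=O(kw)$; tracking the chain lengths in the join, retire, and final-phase sub-cases sharpens the constant to the claimed $2k-1$.

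For the edge count, fix a level $i$ and a processed vertex $v$ and order its incident clusters $c_1,c_2,\dots$ by increasing $\omega(v,\cdot)$: in every branch of the rule, $v$ emits its edge to $c_j$ only if $c_1,\dots,c_{j-1}$ are all unsampled, an event of probability $(1-n^{-1/k})^{j-1}$ by independence, so $v$ contributes at most $\sum_{j\ge1}(1-n^{-1/k})^{j-1}=n^{1/k}$ edges in expectation at level $i$. Summing over the $\le n$ processed vertices and the $O(k)$ levels (the final phase behaves like one extra level, using $\E|\calC_{k-1}|=O(n^{1/k})$) bounds $\E|F|$ by $O(k\,n^{1+1/k})$, which is the stated $O(n^{1+1/k})$ for constant $k$ and is $O(n\log n)$ for the logarithmic $k$ that \textsc{SpannerSparsify} uses. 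For the running time, each level costs, per processed vertex $v$, one scan over its incident edges to bucket them by the far endpoint's cluster, read off each $\omega(v,c)$ and $e_{v,c}$, find $c^\star$, and emit the selected edges --- $O(\deg(v))$ time --- so $O(m)$ per level and $O(km)$ overall, with cluster bookkeeping and sampling cheaper. The main obstacle is the weighted stretch analysis: choosing the invariant so that it is preserved by the ``also emit every strictly-closer cluster'' rule, and then pushing through the case analysis (which endpoint is processed, retire versus join, the final phase) to convert ``the processed endpoint's join-edge weight is $\le w$'' plus monotonicity into the exact bound $(2k-1)w$; the size and time bounds are the routine geometric-series and edge-scan computations above, and since only expected guarantees are claimed no concentration step is needed.
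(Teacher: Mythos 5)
The paper does not prove this theorem; it is imported verbatim from Baswana and Sen \cite{BaswanaS07}, and your proposal is a faithful reconstruction of their clustering argument (sampled cluster hierarchy, join-or-retire rule with edges to all strictly closer clusters, geometric-series size bound, per-level edge scan for the running time), so there is nothing in the paper to contrast it with. One small point worth flagging: your analysis yields expected size $O(k\,n^{1+1/k})$, which is what Baswana--Sen actually prove, whereas the theorem as stated drops the factor $k$; as you note, this makes no difference for the paper's use of the result with $k=\log n$, where both read $O(n\log n)$.
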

\begin{lemma}
  \label{lem:spannerinfnorm}
  Given an undirected graph $G = (V,E)$ with positive edge lengths $\ll \in \rea^{E}$,
  and a $K$-spanner $H = (V,F)$ of $G$, for all $\xx \in \rea^V$ we
  have 
\[
  \max_{(u,v) \in F} \frac{1}{\ll(u,v)} \abs{\xx(u) - \xx(v)}
    \leq
    \max_{(u,v) \in E} \frac{1}{\ll(u,v)}\abs{\xx(u) - \xx(v)}
    \leq
    K \max_{(u,v) \in F} \frac{1}{\ll(u,v)}\abs{\xx(u) - \xx(v)}
    \]
  \end{lemma}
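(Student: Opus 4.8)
The plan is to dispatch the two inequalities separately. The left inequality is immediate: since $H = (V,F)$ is a subgraph of $G$, we have $F \subseteq E$, so the maximum on the left is taken over a subset of the terms appearing in the maximum on the right, hence it is no larger. No work is needed here.

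For the right inequality, I would set $M \defeq \max_{(a,b) \in F} \frac{1}{\ll(a,b)}\abs{\xx(a) - \xx(b)}$, so that $\abs{\xx(a) - \xx(b)} \leq M\,\ll(a,b)$ for every edge $(a,b) \in F$. Fix an arbitrary edge $(u,v) \in E$. The key point is that the $K$-spanner property lets me replace the single edge $(u,v)$ by a short path inside $H$: since $(u,v) \in E$, we have $d_{G,\ll}(u,v) \leq \ll(u,v)$, and hence by the spanner guarantee there is a $u$-$v$ path $P$ using only edges of $F$ with $\sum_{e \in P} \ll(e) = d_{H,\ll}(u,v) \leq K\, d_{G,\ll}(u,v) \leq K\, \ll(u,v)$.

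Now I would telescope $\xx$ along $P = (u = w_0, w_1, \dots, w_t = v)$ via the triangle inequality:
\[
  \abs{\xx(u) - \xx(v)}
  \leq \sum_{i=1}^{t} \abs{\xx(w_{i-1}) - \xx(w_i)}
  \leq \sum_{i=1}^{t} M\,\ll(w_{i-1},w_i)
  = M \sum_{e \in P} \ll(e)
  \leq M K\, \ll(u,v).
\]
Dividing by $\ll(u,v) > 0$ gives $\frac{1}{\ll(u,v)}\abs{\xx(u) - \xx(v)} \leq K M$, and taking the maximum over all $(u,v) \in E$ yields the right inequality. There is no real obstacle here; the only thing to be slightly careful about is using positivity of $\ll$ to ensure the division is valid and that $d_{G,\ll}(u,v) \le \ll(u,v)$ (the single-edge path is a legitimate, finite-length path), and that the spanner preserves edge lengths so the lengths used along $P$ are the same $\ll$ values.
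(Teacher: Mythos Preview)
Your proof is correct and follows essentially the same approach as the paper: the left inequality is immediate from $F \subseteq E$, and the right inequality is obtained by telescoping $\xx$ along a spanner path in $H$ for each edge $(u,v)\in E$ and using the bound $\sum_{e\in P}\ll(e)\le K\,\ll(u,v)$. The paper presents the telescoping step as the chain $\frac{1}{\ll(u,v)}\abs{\xx(u)-\xx(v)} \le \frac{K}{\sum_{e\in P}\ll(e)}\abs{\sum_{e\in P}(\xx(z)-\xx(y))} \le K\max_{e\in P}\frac{\abs{\xx(z)-\xx(y)}}{\ll(z,y)}$, but this is just a compressed rewriting of your triangle-inequality computation.
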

\begin{proof}
  The inequality $\max_{(u,v) \in F} \frac{1}{\ll(u,v)} \abs{\xx(u) - \xx(v)}
    \leq
    \max_{(u,v) \in E} \frac{1}{\ll(u,v)}  \abs{\xx(u) - \xx(v)}$
    is immediate from $F \subseteq E$.
    
 To prove the second inequality, we note that if $(u,v) \in E$ has
 shortest path $P$ in $H$ then
 \[
   \frac{1}{\ll(u,v)}\abs{\xx(u) - \xx(v)}
   \leq
   \frac{K}{\sum_{(z,y) \in P} \ll(z,y)}\abs{\sum_{(z,y) \in P}
     \xx(z)-\xx(y)}
   \leq
   \max_{(z,y) \in P}
   \frac{K}{\ll(z,y)}\abs{\xx(z) - \xx(y)}
   .
\]
\end{proof}

  


\begin{definition}
  Given a undirected graph $G = (V,E)$ with $m$ edges and $n$ vertices
  with positive edge $\ell_2$-weights $\ww \in \rea^{E}$,
  a spectral $\epsilon$-approximation of $G$ is a graph $H = (V,F)$ with
  $F \subseteq E$ with positive edge $\ell_2$-weights $\uu \in \rea^F$ s.t.
    \[
    \frac{1}{1+\epsilon} \norm{\WW \BB_{G}\xx}_2 
    \leq 
    \norm{\UU \BB_{H}\xx}_2
    \leq
    (1+\epsilon)
    \norm{\WW \BB_{G}\xx}_2 
  \]
  where $\WW = \diag{\ww}$ and $\UU = \diag{\uu}$.
\end{definition}

The following result on spectral sparsifiers was shown by Spielman and
Srivastava~\cite{SpielmanS11} (see also \cite{Spielman15}).
\begin{theorem}
  \label{thm:spectralsparsify}
  Given a graph $G = (V,E)$ with positive $\ell_2$-weights $\ww \in
  \rea^E$ with $m$ edges
  and $n$ vertices,
  for any $\epsilon \in (0,1/2]$, 
  we can produce a graph $H = (V,F)$ with edges $F \subseteq E$ and
  $\ell_2$-weights $\uu \in \rea^F$
  such that $H$ has $O(n \epsilon^{-2} \log(n/\delta))$ edges and
  with probability at least $1-\delta$ we have that
  $(H,\uu)$ is a spectral $\epsilon$-approximation of $(G,\ww
  )$.
  We denote the routine computing $H$ and $\uu$ by
  $\textsc{SpectralSparsify}$, so that
  $(H, \uu) = \textsc{SpectralSparsify}(G, \ss, \epsilon, \delta)$.
  This algorithm runs in $\Otil(m)$ time.
  Furthermore, if the weights $\ww$ are quasipolynomially bounded,
  then so are the weights of $\uu$.
\end{theorem}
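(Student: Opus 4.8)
The plan is to prove this by leverage-score (effective-resistance) sampling in the style of Spielman and Srivastava~\cite{SpielmanS11}, with the sampling probabilities estimated in nearly-linear time via a fast Laplacian solver~\cite{SpielmanT04}. Write $\BB = \BB_G$, let $\bb_e^\top$ denote the row of $\BB$ indexed by $e$, and set $\LL = \BB^\top\WW^2\BB = \sum_e\ww(e)^2\,\bb_e\bb_e^\top$. Since $\norm{\WW\BB\xx}_2^2 = \xx^\top\LL\xx$ and, for a candidate output $(H,\uu)$, $\norm{\UU\BB_H\xx}_2^2 = \xx^\top\LL_H\xx$ with $\LL_H\defeq\BB_H^\top\UU^2\BB_H$, a spectral $\epsilon$-approximation is precisely the PSD sandwich $(1+\epsilon)^{-2}\LL\pleq\LL_H\pleq(1+\epsilon)^2\LL$; after shrinking $\epsilon$ by an absolute constant it suffices to produce $\LL_H$ with $(1-\epsilon)\LL\pleq\LL_H\pleq(1+\epsilon)\LL$.

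First I would introduce the leverage score $\ttau_e = \ww(e)^2\,\bb_e^\top\LL^+\bb_e$, equal to $\ww(e)^2$ times the effective resistance across $e$; Rayleigh monotonicity gives $\ttau_e\in[0,1]$, and $\sum_e\ttau_e=\mathrm{rank}(\LL)\le n-1$. I set $q=C\epsilon^{-2}\log(n/\delta)$ for a large absolute constant $C$, put $p_e=\min\{1,q\,\ttau_e\}$, keep each edge $e$ independently with probability $p_e$, and give a kept edge the weight $\uu(e)=\ww(e)/\sqrt{p_e}$, so $\LL_H=\sum_{e\in F}\tfrac{1}{p_e}\ww(e)^2\bb_e\bb_e^\top$ and $\E[\LL_H]=\LL$. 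The expected number of kept edges is $\sum_e p_e\le q\sum_e\ttau_e+n=O(n\epsilon^{-2}\log(n/\delta))$, and a scalar Chernoff bound on the independent keep-indicators gives the same bound with high probability.

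The core step is spectral concentration. Setting $\YY_e=\tfrac{\ind{\{e\in F\}}}{p_e}\,\ww(e)^2\,\LL^{+/2}\bb_e\bb_e^\top\LL^{+/2}$, where $\LL^{+/2}$ is the square root of $\LL^+$ on $\mathrm{im}(\LL)$, the $\YY_e$ are independent, PSD, and satisfy $\sum_e\E\YY_e=\LL^{+/2}\LL\LL^{+/2}=\PPi$, the orthogonal projection onto $\mathrm{im}(\LL)$; moreover $\norm{\YY_e}\le\ttau_e/p_e=1/q$ for every edge with $p_e<1$, while edges with $p_e=1$ contribute deterministically and can be dropped from the concentration argument. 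A matrix Chernoff bound for sums of independent PSD matrices then yields $\norm{\sum_e\YY_e-\PPi}\le\epsilon$ with probability at least $1-\delta$ provided $C$ is large enough, and conjugating by $\LL^{1/2}$ converts this into $(1-\epsilon)\LL\pleq\LL_H\pleq(1+\epsilon)\LL$; this in particular forces $\mathrm{im}(\LL_H)=\mathrm{im}(\LL)$, so the two quadratic forms also agree (both vanish) on $\ker(\LL)$, giving the stated norm inequality for all $\xx\in\rea^V$. I expect this matrix-concentration step, together with verifying the norm/variance bounds it requires, to be the main technical point, though it is by now entirely standard.

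Finally, for the running time I would avoid computing exact leverage scores and instead use the Spielman--Srivastava projection identity: drawing a Johnson--Lindenstrauss matrix $\QQ$ with $O(\log n)$ rows and forming the rows of $\QQ\WW\BB\LL^+$ by one fast Laplacian solve each produces, in $\Otil(m)$ total time, estimates $\approxlewis{e}$ with $\ttau_e\le\approxlewis{e}\le 2\ttau_e$ for all $e$ simultaneously with high probability; running the sampling above with $\approxlewis{e}$ in place of $\ttau_e$ changes the edge count and the concentration guarantee by at most constant factors, since $\approxlewis{e}\ge\ttau_e$ preserves $\norm{\YY_e}\le 1/q$ and $\sum_e\approxlewis{e}=\Otil(n)$. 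For the quasipolynomial weight bound: if the input weights lie in $[w_{\min},w_{\max}]$, then for $e=(u,v)$, shorting every vertex other than $u$ (which only decreases resistance) shows the effective resistance across $e$ is at least $\big(\sum_{e'\ni u}\ww(e')^2\big)^{-1}\ge(m\,w_{\max}^2)^{-1}$, so $\ttau_e\ge w_{\min}^2/(m\,w_{\max}^2)$ and hence $p_e$ is bounded below by a quasipolynomial quantity; since $p_e\le 1$, the output weight $\uu(e)=\ww(e)/\sqrt{p_e}$ lies between $w_{\min}$ and a quasipolynomial multiple of $w_{\max}$, hence is quasipolynomially bounded. Combining the edge-count bound, the concentration guarantee, the $\Otil(m)$ leverage-score estimation, and the $O(m)$ sampling pass gives the claimed $\Otil(m)$ running time and completes the proof.
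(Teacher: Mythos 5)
Your proof is correct and is precisely the standard Spielman--Srivastava argument (leverage-score sampling, matrix Chernoff, JL-accelerated effective-resistance estimation via a Laplacian solver), which is exactly what the paper relies on: the paper states this theorem without proof, citing \cite{SpielmanS11}. The only addition beyond the citation is the quasipolynomial weight bound, and your lower bound on $\ttau_e$ via Rayleigh monotonicity handles that correctly.
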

 We can now prove our main result.

\begin{proof}[Proof of Theorem~\ref{thm:spannersparsify}]
  We consider a graph $G = (V,E)$ with $m$ edges
  and $n$ vertices, and with non-negative  $\ell_p$-weights $\rr
  \in \rea^E$, non-negative  $\ell_2$-weights $\ss \in \rea^E$.
  We define $\hat{E} \subseteq E$ to be the edges s.t. $\ss(e) > 0$, and
  then let $\ll \in \rea^{\hat{E}}$ by $\ll(e) = 1/\ss(e)$, and
  $\hat{G} = (V,\hat{E})$.
  We then apply Theorem~\ref{thm:spanner} to $\hat{G}$ with $\ll$
  as edge lengths, and with $k = \log(n)$.
  We turn the algorithm of Theorem~\ref{thm:spanner} into running time $\Otil(m \log(1/\delta))$,
  instead of expected time $\Otil(m)$, by applying the standard
  Las Vegas to Monte-Carlo reduction.
  With probability $1-\delta/2$, 
  this gives us a $\log n$-spanner $H_{1}$ of
  $\hat{G}
  $, and we define $\tt$ by restricting $\ss$ to the edges of $H_{1}$.
  By Lemma~\ref{lem:spannerinfnorm}, we then have
  \[
    \norm{\TT \BB_{H_{1}}\xx}_{\infty}
    \leq 
    \norm{\SS \BB_{G}\xx}_{\infty}
    \leq \log(n) \norm{\TT \BB_{H_{1}}\xx}_{\infty}
  \]
  Because $\TT \BB_{H_{1}}\xx$ is a restriction of $\SS \BB_{G}\xx$ to a
  subset of the coordinates, we always have for any $p \geq 1$ that
  $\norm{\TT \BB_{H_{1}}\xx}_{p} \leq \norm{\SS \BB_{G}\xx}_{p}$.
  
  At the same time, we also have
  \[
    \norm{\SS \BB_{G}\xx}_{p}
    \leq
    m^{1/p} \norm{\SS \BB_{G}\xx}_{\infty}
    \leq
    m^{1/p} \log(n) \norm{\TT \BB_{H_{1}}\xx}_{\infty}
    \leq
    m^{1/p} \log(n)  \norm{\TT \BB_{H_{1}}\xx}_{p}
  \]

  We define $\tilde{E} \subseteq E$ to be the edges s.t. $\rr(e) > 0$, and
  the let $\tilde{G} = (V,\tilde{E})$.
  Now, appealing to Theorem~\ref{thm:spectralsparsify}, 
  we let $(H_2, \uu) = \textsc{SpectralSparsify}(\tilde{G} , \rr, 1/2,
  \epsilon/2)$.

  Finally, we form $H$ by taking the union of the edge sets of $H_1$
  and $H_2$ and extending $\uu$ and $\tt$ to the new edge set
  by adding zero entries as needed.
  By a union bound, the approximation guarantees of
  Equations~\eqref{eq:Hspectralsparsify} and \eqref{eq:Hpsparsify}
  simultaneously hold with probability at least $1-\delta$.

  The edge set remains bounded in size by $O( n \log n)$.
\end{proof}
 To see Theorem \ref{lem:vol-sparse-inf}, note that from Theorem \ref{thm:spannersparsify}, we get,
\begin{align*}
 m^{-\frac{1}{p-1}} \left(m^{-\frac{1}{p-1}}\norm{\WW \BB_{G}\xx}_2^2 + m^{-1}  \norm{\SS \BB_{G}\xx}_p ^p \right) \leq  m^{-\frac{1}{p-1}} \left( \norm{\UU \BB_{H}\xx}_2^2 + \norm{\TT \BB_{H}\xx}_p^p \right)
\end{align*}
The other direction is easy to see.
  

\section{Extensions of Our Results and Open Problems}

\subsubsection*{Solving dual problems: $q$-norm minimizing flows and
  voltages for $ q < 2$.}
When the mixed $(\ell_2^{2} + \ell_{p}^{p})$-objective flow problem
(Problem~\eqref{eq:flowprob}) is restricted to the case $\gg = \vzero$
and $\RR = \vzero$, it becomes a pure $\ell_p$-norm minimizing flow
problem, and its dual problem can be slightly rearranged to give
\begin{equation}
  \min_{
    \vv
  }
  \dd^{\top}\vv
  +
  \norm{\SS^{-1} \BB
    \vv}_q^q
  \label{eq:dualqvoltage}
\end{equation}
where $q = p/(p-1) = 1 + 1/(p-1)$. We refer to the diagonal entries of
$\SS^{-1}$ as $\ell_q$-conductances.  Because we can solve
Problem~\eqref{eq:flowprob} to high-accuracy in near-linear time for
$p = \omega(1)$, this allows us to solve
Problem~\eqref{eq:dualqvoltage}, the dual voltage $\ell_q$-norm
minimization, in time
$p(m^{1+o(1)} + n^{4/3 + o(1)}) \log^{2} \nfrac{1}{\eps}$ (see
~\cite[Section 7]{AdilKPS19} for the reduction).  We summarize this in
the theorem below.
\begin{theorem}[Voltage Algorithmic Result, $q < 2$ (Informal)]
  Consider a graph $G$ with $n$ vertices and $m$ edges, equipped with positive $\ell_q$-conductances, as
  well as a demand vector.
  For $1 <q < 2$, when ${q=1+o(1)}$, in
  $\poly\left( \frac{1}{q-1} \right)(m^{1+o(1)} + n^{4/3 + o(1)})
  \log^{2} \nfrac{1}{\eps}$ time, we can compute an
  $\epsilon$-approximately optimal voltage solution to
  Problem~\eqref{eq:dualqvoltage} with high probability.  \todo{I'm not
    confident a simple conversion wouldn't lose more p dependence}
\end{theorem}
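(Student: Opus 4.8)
The plan is to reduce Problem~\eqref{eq:dualqvoltage} to the mixed $(\ell_2^2+\ell_p^p)$-flow problem solved by Theorem~\ref{thm:flow-short}, via Lagrangian duality, and then transfer a high-accuracy flow solution back to a high-accuracy voltage solution using the primal-to-dual reduction of~\cite{AdilKPS19}.

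First, set $p = q/(q-1) = 1 + 1/(q-1)$, so that $q = p/(p-1)$ is the dual exponent. Since $q = 1 + o(1)$ we have $p = \Theta(1/(q-1)) = \omega(1)$ and $\poly(p) = \poly(1/(q-1))$, so Theorem~\ref{thm:flow-short} is applicable. Taking $\gg = \vzero$ and $\RR = \vzero$ in Problem~\eqref{eq:flowprob} leaves the pure problem $\min_{\BB^{\top}\ff = \dd}\norm{\SS\ff}_p^p$. A standard Fenchel-dual computation --- introduce a multiplier $\vv$ for the demand constraint, minimize $\norm{\SS\ff}_p^p - (\BB\vv)^{\top}\ff$ edge by edge, and use that the convex conjugate of $t \mapsto |t|^p$ is proportional to $|t|^q$ --- shows that, after a global rescaling of $\vv$ and a sign flip, this dual is exactly Problem~\eqref{eq:dualqvoltage} with $\ell_q$-conductances given by the diagonal of $\SS^{-1}$; this is the reduction worked out in~\cite[Section~7]{AdilKPS19}. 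The quasipolynomial bound on the $\ell_q$-conductances translates into a quasipolynomial bound on the $\ell_p$-weights $\SS$ and their reciprocals, as required by Theorem~\ref{thm:flow-short}.

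Second, run Theorem~\ref{thm:flow-short} on this flow instance to accuracy $\eps'$, obtaining in $p(m^{1+o(1)}+n^{4/3+o(1)})\log^{2}\nfrac{1}{\eps'}$ time, with high probability, a flow $\ff$ that is $\eps'$-approximately optimal. Then apply the primal-to-dual extraction of~\cite[Section~7]{AdilKPS19}: from the near-optimal $\ff$ one reads off a candidate voltage $\vv$ essentially through the per-edge KKT relation $p\,\SS^p\abs{\ff}^{p-2}\ff \leftrightarrow \BB\vv$, and one shows that $\eps'$-optimality of $\ff$ forces $\eps$-optimality of $\vv$ as long as $\eps' \le \eps^{c}/\gamma(p,m)$ for an absolute constant $c$ and some $\gamma(p,m) = \poly(p,m)$ (possibly as bad as $p^{\poly(p)}$). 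With this choice $\log^{2}\nfrac{1}{\eps'} = \poly(p)\bigl(\log^{2}\nfrac{1}{\eps} + \polylog m\bigr)$, so the overall running time is still $\poly(p)\,(m^{1+o(1)}+n^{4/3+o(1)})\log^{2}\nfrac{1}{\eps} = \poly\left(\frac{1}{q-1}\right)(m^{1+o(1)}+n^{4/3+o(1)})\log^{2}\nfrac{1}{\eps}$, and the high-probability guarantee is inherited from Theorem~\ref{thm:flow-short}.

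The main obstacle is exactly the one flagged in the accompanying note: pinning down how lossy the primal-to-dual conversion of~\cite{AdilKPS19} actually is. Any bound of the form $\eps' = \eps^{O(1)}/p^{\poly(p)}$ is enough for the stated $\poly(1/(q-1))$ running time, but a naive argument may give something worse, and one must rule that out by a careful accounting of the duality gap --- bounding $\obj(\vv)-\obj(\vv^{\star})$ for the voltage objective by the flow suboptimality times a $\poly(p,m)$ factor arising from the range of $\SS$ and the condition number of the reweightings $\SS^p\abs{\ff}^{p-2}$ encountered along the iterative-refinement trajectory. A secondary subtlety is that the voltage-space solver and voltage sparsifier developed earlier in this paper (Theorems~\ref{cor:ResidualDecision} and~\ref{lem:vol-sparse-inf}) are built for exponent $\ge 2$, hence do not apply directly to the $q < 2$ voltage problem; the primal (flow) route is therefore the right one, and there is no shortcut around controlling the conversion loss.
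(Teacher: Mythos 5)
Your proposal follows exactly the paper's route: set $p = q/(q-1) = \omega(1)$, observe that Problem~\eqref{eq:dualqvoltage} is the dual of the pure $\ell_p$-flow problem obtained from Problem~\eqref{eq:flowprob} with $\gg=\vzero$, $\RR=\vzero$, solve that with Theorem~\ref{thm:flow-short}, and recover the voltages via the reduction in \cite[Section 7]{AdilKPS19}. The paper gives no more detail than you do --- and it carries the very same unresolved caveat about the $p$-dependence of the flow-to-voltage conversion that you flag --- so your write-up is, if anything, a more careful account of the same argument.
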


Similarly, we can solve $\ell_q$-norm minimizing flows for $q < 2$ as
dual to the $\ell_p$-voltage problem, a special case of the mixed
$(\ell^2_2+\ell^p_p)$-voltage problem.
%
Picking $\WW = \vzero$ in Problem~\eqref{eq:voltageprob}, we obtain
a pure $\ell_p$-norm minimizing voltage problem, and its dual problem
can be slightly rearranged to give
\begin{equation}
  \min_{
    \BB^{\top} \ff = \dd
  }
  \norm{\UU^{-1} \ff}_q^q
  \label{eq:dualqflow}
\end{equation}
where $q = p/(p-1) = 1 + 1/(p-1)$. We refer to the diagonal entries of
$\UU^{-1}$ as $q$-weights.  Again, because we can solve
Problem~\eqref{eq:voltageprob} to high-accuracy in near-linear time
for $p = \omega(1)$, this allows us to solve
Problem~\eqref{eq:dualqflow}, the dual flow $\ell_q$-norm
minimization, in time
$p(m^{1+o(1)} + n^{4/3 + o(1)}) \log^{2} \nfrac{1}{\eps}$.

\begin{theorem}[Flow Algorithmic Result, $q < 2$ (Informal)]
  Consider a graph $G$ with $n$ vertices and $m$ edges, equipped with positive $q$-weights, as
  well as a demand vector.
  For $1 <q < 2$, when ${q=1+o(1)}$, in
  $\poly\left( \frac{1}{q-1} \right) (m^{1+o(1)} + n^{4/3 + o(1)})
  \log^{2} \nfrac{1}{\eps}$ time, we can compute an
  $\epsilon$-approximately optimal flow solution to
  Problem~\eqref{eq:dualqflow} with high probability.
\end{theorem}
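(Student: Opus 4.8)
The plan is to derive this from the fast mixed $(\ell_2^2+\ell_p^p)$-voltage solver of Theorem~\ref{thm:voltage-short} by convex duality, exactly paralleling the reduction used just above for $q$-norm voltages. Setting $\WW = \vzero$ in Problem~\eqref{eq:voltageprob} leaves the pure $\ell_p$-norm minimizing voltage problem $\min_\vv \dd^\top\vv + \norm{\UU\BB\vv}_p^p$; computing its convex dual using the conjugate of $t\mapsto\abs{t}^p$, and rescaling $\UU$ by a $p$-dependent constant and flipping one sign, produces precisely Problem~\eqref{eq:dualqflow} with $q = p/(p-1)$. Strong duality holds because both problems are convex and the hypothesis $\dd\perp\vone$ simultaneously guarantees a feasible flow and a finite voltage optimum. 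Since $q = 1+o(1)$ corresponds to $p = q/(q-1) = \omega(1)$ and $\poly(1/(q-1)) = \poly(p)$, invoking Theorem~\ref{thm:voltage-short} with this $p$ already gives the target running time, modulo the extra logarithmic factors accounted for below.

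First I would call Theorem~\ref{thm:voltage-short} to produce, in time $p(m^{1+o(1)}+n^{4/3+o(1)})\log^2(1/\eps')$, an $\eps'$-approximately optimal voltage $\vv$ for the pure $\ell_p$-voltage problem, where $\eps'$ is a suitably small polynomial in $\eps$, $1/m$, and $q-1$; its choice makes $\log^2(1/\eps')$ contribute $O(\log^2(1/\eps))$ together with terms absorbed into $m^{1+o(1)}$ and $\poly(1/(q-1))$. From $\vv$ I would read off the induced flow $\ff = -c_p\,\UU^2\,\abs{\UU\BB\vv}^{p-2}\BB\vv$, with $c_p$ a $p$-dependent constant: stationarity of the voltage objective forces $\BB^\top\ff = \dd$ and dual optimality of $\ff$ when $\vv$ is exactly optimal, and a direct computation gives $\norm{\UU^{-1}\ff}_q^q$ equal to a $p$-dependent constant times the $\ell_p$-term of $\vv$, hence within a constant factor of the flow optimum. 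For an only approximately optimal $\vv$ one instead gets $\BB^\top\ff = \dd + \rr$ with a small $\rr\perp\vone$; I would restore exact feasibility by routing $-\rr$ along a low-stretch spanning tree, bound the $\ell_q$-cost of the repaired flow in terms of $\norm{\rr}$ and the conductances, and then convert the $\eps'$-gap of $\vv$ into an $\eps$-gap for the flow by the standard approximate-duality bookkeeping, the polynomial slack in $\eps'$ being spent here. This yields an $\eps$-approximate solution to Problem~\eqref{eq:dualqflow} in the claimed time.

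The hard part is this last projection-and-accounting step. Unlike the voltage problem, the flow problem is constrained, so an approximately optimal dual (voltage) solution produces only an approximately \emph{feasible} primal (flow) solution, and one must show that repairing feasibility neither blows up the $1/(q-1)$ dependence nor degrades the approximation ratio. Quantitatively, this requires relating the residual demand $\rr$ to the sub-optimality of $\vv$ through the (for large $p$, weak) strong convexity of the $\ell_p$-term transverse to its flat directions, incurring the usual $m^{O(1/p)}$-type losses, and then verifying that those losses are swallowed by taking $\eps'$ polynomially small. This is the subtlety flagged for the companion $q$-norm voltage theorem; I expect to resolve it by appealing to the reduction of \cite[Section~7]{AdilKPS19}, whose analysis is symmetric under exchanging the roles of flows and voltages, so that its argument for extracting $\ell_q$-voltages from $\ell_p$-flows transfers directly to extracting $\ell_q$-flows from $\ell_p$-voltages.
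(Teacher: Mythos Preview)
Your proposal is correct and follows the same approach as the paper: the paper does not give a formal proof of this theorem but simply observes that setting $\WW=\vzero$ in Problem~\eqref{eq:voltageprob} yields a pure $\ell_p$-voltage problem whose dual is Problem~\eqref{eq:dualqflow}, and then invokes Theorem~\ref{thm:voltage-short} together with the primal--dual extraction of \cite[Section~7]{AdilKPS19}. You have in fact written out more of the extraction and feasibility-repair argument than the paper does; the paper leaves this at the level of ``Similarly, we can solve\ldots'' and even flags (in an author note on the companion theorem) uncertainty about the precise $p$-dependence of the conversion, which is exactly the subtlety you identify.
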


\subsubsection*{Open Questions}

\noindent{\textbf{Mixed $\ell_2,\ell_q$ problems for small $q < 2$.}
In this work, we provided new state-of-the-art algorithms for weighted
mixed $\ell_2,\ell_p$-norm minimizing flow and voltage problems for $p
> >2$, and
for pure $\ell_q$-norm minimizing flow and voltage problems for $q$ near
$1$.

A reasonable  definition of mixed $\ell_2,\ell_q$-norm problems for $q < 2$ is
based on gamma-functions as introduced in \cite{BubeckCLL18} and used
in \cite{AdilKPS19}.
We believe that with minor adjustments to our multiplicative weights
solver, these objectives could be handled too, by solving their dual
$\ell_2,\ell_p$-gamma function problem for $p > 2$.

\medskip\noindent{\textbf{Directly sparsifying mixed $\ell_2,\ell_q$ problems for $q < 2$.}
A second approach to developing a fast $\ell_2,\ell_q$-gamma function solver for $q
< 2$ would be to directly develop sparsification in this setting.
We believe this might be possible, and in the general matrix setting
might provide better algorithms than alternative approaches.

\medskip\noindent{\textbf{Removing the $m^{\frac{O(1)}{p-1}}$ loss in
    sparsification.} Our current approaches to graph mixed
  $\ell_2,\ell_p$-sparsification lose a factor $m^{\frac{O(1)}{p-1}}$ in their
  quality of approximation, which leads to a $m^{\frac{O(1)}{p-1}}$
  factor slowdown in running time, and makes our algorithms less
  useful for small $p$.  We believe a more sophisticated graph
  sparsification routine could remove this loss and result in
  significantly faster algorithms for $p$ close to 2.

\medskip\noindent{\textbf{Using mixed $\ell_2,\ell_p$-objectives as oracles for $\ell_{\infty}$  regression.}
The current state-of-the-art algorithm for computing maximum flow in unit
capacity graphs runs in $\Otil(m^{4/3})$ time \cite{LiuS20b}, and uses
the almost-linear-time algorithm from~\cite{KyngPSW19} for solving
\emph{unweighted} $\ell_2^{2} + \ell_{p}^{p}$ instances as a key
ingredient.

\printbibliography

\newpage
\appendix
\section{Construction of Sparsifiers for $\ell_2^{2} + \ell_{p}^{p}$ Flows}\label{sec:Flow-Sparse}
\todo{from KPSW, need to adapt!}
In this section we will prove the following formal version of Theorem \ref{thm:flow-spars-short}.
\begin{theorem}
\label{thm:Flow-Sparsification}
  Consider an instance $\calG = (V^{\calG}, E^{\calG}
  ,\rr^{\calG},\ss^{\calG},\gg^{\calG})$ with $n$ vertices and $m$
  edges. Suppose we want to solve, $\min_{\BB_G^{\top}\ff = 0} \obj^{\calG}(\ff)$. We can compute in time $\Otil(m)$ an instance
  $\calH = (V^{\calH}, E^{\calH}
  ,\rr^{\calH},\ss^{\calH},\gg^{\calH})$ such that with  probability $1-\epsilon$, $\calH$  has $n$ vertices and $m_{\calH} = n \polylog(n/(\epsilon\delta))$ edges, and for all $\ff^{\calH}$ we can compute a corresponding $\ff^{\calG}$ in $\Otil(m)$ time such that,
 \[
  \calH \circapprox_{\kappa,\delta} \calG \quad \text{and} \quad \calG \circapprox_{\kappa,\delta} \calH
  \]
 where $\kappa = m^{1/(p-1)} \polylog(n/(\epsilon\delta))$.
\end{theorem}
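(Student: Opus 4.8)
\textbf{Proof plan for Theorem~\ref{thm:Flow-Sparsification}.}
The plan is to adapt the expander-decomposition-based sparsifier of \cite{KyngPSW19}, which handles unit $\ell_p$-weights, to non-uniform weights via bucketing, after a preprocessing step that removes a troublesome family of cycles. First I would isolate the \emph{degenerate} edges, those $e$ with $\rr^{\calG}_e = \ss^{\calG}_e = 0$; along such edges $\obj^{\calG}$ sees only the linear term $\gg^{\calG\top}\ff$. Any cycle supported on degenerate edges lies in $\ker(\RR) \cap \ker(\SS)$, has zero divergence, and is therefore orthogonal to $\gg^{\calG}$ by the problem's assumptions, so pushing flow around it leaves $\obj^{\calG}$ unchanged. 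Hence I can retain only a spanning forest of the degenerate subgraph (at most $n-1$ edges) and delete the remaining degenerate edges, rerouting any flow on a deleted edge along the forest; this is an exact, cost-preserving transformation and supplies the $\Otil(m)$-time flow map on these coordinates. After this step every cycle of $\calG$ carries positive $\ell_2$- or $\ell_p$-cost except inside the retained forest, which is carried along separately.

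Next, using that $\ss^{\calG}$ is quasipolynomially bounded, I would partition the edges with $\ss^{\calG}_e > 0$ into geometric buckets $E_1, \dots, E_k$ where within $E_j$ all $\ell_p$-weights lie in a multiplicative interval of width $1 + \Theta(\nfrac1p)$, so $k = O(p\,\polylog n)$ (which is $\polylog n$ in the regime $p \le \log m$ in which the routine is invoked), and within each $E_j$ the weighted cost $\sum_{e \in E_j}(\ss^{\calG}_e)^p |\ff_e|^p$ is a constant-factor approximation of the uniform cost at the bucket's representative scale. I would then run the \cite{KyngPSW19} sparsification on each bucket, treating it as a unit-$\ell_p$-weight instance at the bucket scale while keeping the genuine $\ell_2$-weights $\rr^{\calG}$ (which that machinery already handles), obtaining per bucket a sparsifier with $\Otil(n)$ edges, a two-sided cycle-approximation with factor $\kappa_0 = m^{\nfrac1{(p-1)}}\polylog(n/(\eps\delta))$, and an $\Otil(m)$-time flow remapping that is local to the bucket, hence divergence-preserving. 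Taking the union of the per-bucket sparsifiers and the degenerate forest gives a graph with $k \cdot \Otil(n) = n\,\polylog(n/(\eps\delta))$ edges; since $\obj^{\calG}$ is the sum of the per-bucket objectives plus the linear term, a two-sided cycle-approximation of each summand assembles (via the bucket-local flow maps) into one of the whole, with no multiplicative loss in $k$ — only the additive errors accumulate, which I absorb by running each bucket with failure/accuracy parameter $\delta_0 = \delta / \poly(k)$ and union-bounding.

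The remaining ingredient is the gradient: I would replace $\gg^{\calG}$ by a rounded, quasipolynomially bounded $\gg^{\calH}$ supported on the sparsifier, showing the error it injects into $\obj$ is additive and at most $\delta$, using that for any feasible $\ddelta$ the quantity $|\gg^{\calG\top}\ddelta|$ is controlled by the $\ell_2$/$\ell_p$ energy of $\ddelta$, so a sufficiently fine rounding — transported through the same expander decomposition used for sparsification — perturbs the objective only by a lower-order term. I expect the main obstacle to be making this compatible with the degenerate-cycle preprocessing: the gradient mass sitting on deleted degenerate edges, on the retained degenerate forest, and on edges dropped by sparsification must all be redistributed consistently so that corresponding flows on $\calG$ and $\calH$ have matching linear cost up to $\delta$, which needs a unified additive-error framework more general than the one in \cite{KyngPSW19} (which had no degenerate edges). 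With that in hand, tracking the constants and $\polylog$ factors shows both $\calH \circapprox_{\kappa,\delta} \calG$ and $\calG \circapprox_{\kappa,\delta} \calH$ hold with $\kappa = m^{\nfrac1{(p-1)}}\polylog(n/(\eps\delta))$, and all steps run in $\Otil(m)$ time.
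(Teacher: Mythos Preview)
Your proposal is correct and follows essentially the same route as the paper: preprocess to eliminate cycles supported on edges with $\rr_e=\ss_e=0$, bucket the remaining edges into uniform-weight classes, run the \cite{KyngPSW19} expander-decomposition sparsifier on each bucket, and absorb gradient-rounding errors in an additive-$\delta$ framework before taking the union. The only organizational differences are that the paper contracts the constant cycles (Lemma~\ref{lem:cyclePreprocess}) rather than retaining a spanning forest of degenerate edges, and it buckets by both $\rr$ and $\ss$ simultaneously after rounding each to the nearest power of two (so factor-$2$ width suffices, since in the appendix's parametrization $\ss_e$ enters the cost linearly) rather than bucketing only by $\ss$ with width $1+\Theta(1/p)$ and deferring $\rr$ to the \cite{KyngPSW19} black box; note in particular that the underlying Theorems~\ref{thm:Decompose} and~\ref{thm:sampAndFixGrad} require uniform $\rr$ as well, so the paper's joint bucketing and your deferred bucketing end up doing the same work.
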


\subsection{Preliminaries}
\subsubsection*{Smoothed \texorpdfstring{$\ell_{p}$} - norm functions}
\label{subsec:SmoothNorm}
We consider $p$-norms smoothed by the addition of a
quadratic term. First we define such a smoothed $p\textsuperscript{th}$-power on $\rea.$
\begin{definition}[Smoothed $p\textsuperscript{th}$-power]
  Given $r, x \in \rea, r \ge 0$ define the $r$-smoothed $s$-weighted
  $p\textsuperscript{th}$-power of $x$ to be
  \[h_{p}(r, s, x) = r x^2 + s\abs{x}^p.
  \]
\end{definition}
\noindent This definition can be naturally extended to vectors to
obtained smoothed $\ell_{p}$-norms.
\begin{definition}[Smoothed $\ell_{p}$-norm]
  Given vectors $\xx \in \rea^m, \rr , \ss\in \rea^{m}_{\ge 0},$,
  define the $\rr$-smooth $\ss$-weighted $p$-norm of $\xx$ to be
  \[h_{p}(\rr, \ss, \xx) = \sum_{i=1}^m h_{p}(\rr_i, \ss_i, \xx_i) =
    \sum_{i=1}^{m} ( \rr_i \xx_i^2 + \ss_i\abs{\xx_i}^{p}).
  \]
\end{definition}

\label{sec:prelims}

\subsubsection*{Flow Problems and Approximation}
\label{sec:flow-overview}
We will consider problems where we seek to find flows minimizing
smoothed $p$-norms. We first define these problem instances.
\begin{definition}[Smoothed $p$-norm instance]
  A \emph{smoothed $p$-norm instance} is a tuple
  $\calG,$
  \[\calG\defeq (V^{\calG}, E^{\calG}, \gg^{\calG}, \rr^{\calG}, \ss^{\calG}),\]
  where $V^{\calG}$ is a set of vertices, $E^{\calG}$ is a set of
  undirected edges on $V^{\calG},$ the edges are accompanied
  by a gradient, specified by $\gg^{\calG} \in \rea^{E^{\calG}},$ 
  the edges have $\ell_2^{2}$-resistances
  given by
  $\rr^{\calG}
  \in \rea^{E^{\calG}}_{\ge 0},$ and $s \in \rea^{E^{\calG}}_{\ge 0}$ gives the
  $p$-norm scaling.
\end{definition}
\begin{definition}[Flows, residues, and circulations]
  Given a smoothed $p$-norm instance $\calG,$ a vector
  $\ff \in \rea^{E^{\calG}}$ is said to be a flow on $\calG$. A flow
  vector $\ff$ satisfies residues $\bb \in \rea^{V^{\calG}}$ if
  $\left( \BB^{\calG} \right)^{\top} \ff = \bb,$ where
  $\BB^{\calG} \in \rea^{E^{\calG} \times V^{\calG} }$ is the
  edge-vertex incidence matrix of the graph $(V^{\calG},E^{\calG}),$
  \emph{i.e.},
  $\left( \BB^{\calG} \right)^{\top}_{(u,v)} = \vone_{u} - \vone_{v}.$

  A flow $\ff$ with residue $\vzero$ is called a circulation on $\calG$.
\end{definition}
Note that our underlying instance and the edges are
undirected. However, for every undirected edge $e=(u,v) \in E$, we
assign an arbitrary fixed direction to the edge, say $u \to v,$ and
interpret $\ff_e \ge 0$ as flow in the direction of the edge from $u$
to $v,$ and $\ff_e < 0$ as flow in the reverse direction. For
convenience, we assume that for any edge $(u,v) \in E,$ we have
$\ff_{(u,v)} = - \ff_{(v,u)}.$
\begin{definition}[Objective, $\obj^{\calG}$]
  Given a smoothed $p$-norm instance $\calG,$ and a flow $\ff$ on
  $\calG,$ the associated objective function, or the energy, of $\ff$
  is given by
  \[\obj^{\calG}(\ff) = \left( \gg^{\calG} \right)^{\top} \ff -
  h_p(\rr, \ss, \ff).\]
\end{definition}

\begin{definition}[Smoothed $p$-norm flow / circulation problem]
  Given a smoothed $p$-norm instance $\calG$ and a residue vector
  $\bb \in \rea^{E^{\calG}},$ the \emph{smoothed $p$-norm flow
    problem} $(\calG, \bb)$, finds a flow $\ff \in \rea^{E^{\calG}}$
  with residues $\bb$ that maximizes $\obj^{\calG} (\ff),$
  \emph{i.e.},
  \[
    \max_{\ff: (\BB^{\calG})^{\top} \ff = \bb} \obj^{\calG} \left( \ff \right).
  \]
  If $\bb = \vzero,$ we call it a \emph{smoothed $p$-norm circulation
    problem}.
\end{definition}
Note that the optimal objective of a smoothed $p$-norm circulation
problem is always non-negative, whereas for a smoothed $p$-norm flow
problem, it could be negative.
\subsubsection*{Approximating Smoothed $p$-norm Instances}
Since we work with objective functions that are non-standard (and not
even homogeneous), we need to carefully define a new notion of
approximation for these instances.

\begin{definition}[
  $\calH \preceq_{\kappa,\delta} \calG$]
  For two smoothed $p$-norm instances, $\calG, \calH,$ we write
  $\calH \preceq_{\kappa,\delta} \calG$ if there is a linear map
  $\map{\calH}{\calG}: \rea^{E^{\calH}} \rightarrow \rea^{E^{\calG}}$
  such that for every flow $\ff^{\calH}$ on $\calH,$ we have that
  $\ff^{\calG} = \map{\calH}{\calG} (\ff^{\calH})$ is a flow on
  $\calG$ such that
  \begin{enumerate}
  \item $\ff^{\calG}$ has the same residues as $\ff^{\calH}$ \emph{i.e.},
    $(\BB^{\calG})^{\top} \ff^{\calG} = (\BB^{\calH})^{\top}
    \ff^{\calH},$ and
  \item has energy bounded by:
    \[
      \frac{1}{\kappa} \left( \obj^{\calH} \left( \ff^{\calH} \right) -
             \delta  \norm{\ff^{\calH}}_1 \right)
      \leq
      \obj^{\calG} \left( \frac{1}{\kappa} \ff^{\calG} \right).
    \]
  \end{enumerate}
\end{definition}
For some of our transformations on graphs, we will be able to prove
approximation guarantees only for circulations. Thus, we define the
following notion restricted to circulations.

\begin{definition}[$\calH \circapprox_{\kappa,\delta} \calG$]
  For two smoothed $p$-norm instances, $\calG, \calH,$ we write
  $\calH \circapprox_{\kappa} \calG$ if there is a linear map
  $\map{\calH}{\calG} : \rea^{E^{\calH}} \rightarrow \rea^{E^{\calG}}$
  such that for any circulation $\ff^{\calH}$ on $\calH$, \emph{i.e.},
  $(\BB^{\calH})^{\top}\ff^{\calH} = \vzero,$ the flow
  $\ff^{\calG} = \map{\calH}{\calG}(\ff^{\calH})$ is a circulation,
  \emph{i.e.},
  $(\BB^{\calG})^{\top} \ff^{\calG} = \vzero, $
  and satisfies
    \[
      \frac{1}{\kappa} \left( \obj^{\calH} \left( \ff^{\calH} \right) -
             \delta  \norm{\ff^{\calH}}_1 \right)
      \leq
      \obj^{\calG} \left( \frac{1}{\kappa} \ff^{\calG} \right).
    \]
  Observe that $\calH \preceq_{\kappa,\delta} \calG$ implies
  $\calH \circapprox_{\kappa,\delta} \calG.$
\end{definition}

We define the usual induced matrix $1$-to-$1$ norm as
\[
  \norm{\calM}_{1 \to 1} = \max_{ \ff \in \rea^E }
  \frac{ \norm{\calM
    \ff}_{1} }{\norm{\ff}_{1} }
  \]

We define a special matrix $1$-to-$1$ norm over circulations by 
\[
  \norm{\calM}^{\text{cycle}}_{1 \to 1} = \max_{ \BB \ff = \vzero }
  \frac{ \norm{\calM
    \ff}_{1} }{\norm{\ff}_{1} }
  \]
  
\begin{definition}[
  $\calH \preceq_{\kappa} \calG$ and $\calH \circapprox_{\kappa} \calG$]
  We abbreviate $\calH \preceq_{\kappa, 0} \calG$ as   $\calH
  \preceq_{\kappa} \calG$,
  and $\calH \circapprox_{\kappa,0
  } \calG$ as $\calH \circapprox_{\kappa} \calG$
\end{definition}

\begin{definition}
  In the context of a problem with $m$ vertices and $n$ edges,
  we say a real number $x$ is quasi-polynomially
  bounded 
  \[
    2^{-\polylog(n)} \leq \abs{x} \leq 2^{\polylog(n)}
    \]
\end{definition}

\begin{definition}
  Consider a smoothed $p$-norm flow
  problem $(\calG, \bb)$ where 
  $\calG = (V, E,\rr,\ss,\gg)$
  with $n$ vertices and $m$ edges.
  We say that the instance is
  quasipolynomially-bounded
  if the entries of $\rr$ and $\ss$ are quasipolynomially bounded and 
  \[
    \max_{\ff: (\BB^{\calG})^{\top} \ff = \bb} \obj^{\calG} \left( \ff
    \right)
    \leq
    2^{\polylog(n)}.
  \]
\end{definition}

\begin{definition}[Touched and untouched cycles]
  We say a cycle of edges in $E$ is \emph{touched} if it contains an
  edge $e$ s.t. $\rr(e) \neq 0$ or
  $\ss(e) \neq 0$.
  Otherwise, we say the cycle is \emph{untouched}
\end{definition}

\begin{definition}[Cycle-touching instance]
  We say an instance $\calG = (V, E,\rr,\ss,\gg)$ is
  cycle-touching if every cycle of edges in $e$ is touched.
\end{definition}

\subsection{Additional Properties of Flow Problems and Approximation}
The definitions in Section \ref{sec:flow-overview} satisfy most properties that we want from
comparisons. The following lemma, slightly extends
a similar statement in \cite{KyngPSW19}.
\begin{restatable}[Reflexivity]{lemma}{identity}
  \label{lem:approximations:identity}
  For every smoothed $p$-norm instance $\calG,$ and every
  $\kappa \ge 1$, $\delta \geq 0$, we have $\calG \preceq_{\kappa,\delta} \calG$
  with the identity map.
\end{restatable}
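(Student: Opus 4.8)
The plan is to verify directly that the identity map $\map{\calG}{\calG} = \mathrm{Id}$ on $\rea^{E^{\calG}}$ witnesses $\calG \preceq_{\kappa,\delta} \calG$. Condition (1) of the definition — equality of residues — is immediate, since $\ff^{\calG} := \map{\calG}{\calG}(\ff^{\calG}) = \ff^{\calG}$ and the incidence matrices are literally the same, so $(\BB^{\calG})^\top \ff^{\calG} = (\BB^{\calG})^\top \ff^{\calG}$. Thus the entire content is condition (2): for every flow $\ff$ on $\calG$ we must show
\[
  \frac{1}{\kappa}\left(\obj^{\calG}(\ff) - \delta \norm{\ff}_1\right) \leq \obj^{\calG}\left(\tfrac{1}{\kappa}\ff\right).
\]

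First I would expand both sides using $\obj^{\calG}(\xx) = (\gg^{\calG})^\top \xx - \sum_i \bigl(\rr^{\calG}_i \xx_i^2 + \ss^{\calG}_i \abs{\xx_i}^p\bigr)$. The linear parts contribute $\frac{1}{\kappa}(\gg^{\calG})^\top \ff$ to each side and cancel. What remains, after cancellation and multiplying through by $-1$, is the inequality
\[
  \sum_i \left(\frac{1}{\kappa^2}\rr^{\calG}_i \ff_i^2 + \frac{1}{\kappa^p}\ss^{\calG}_i \abs{\ff_i}^p\right)
  \;\leq\;
  \frac{1}{\kappa}\sum_i \left(\rr^{\calG}_i \ff_i^2 + \ss^{\calG}_i \abs{\ff_i}^p\right) + \frac{\delta}{\kappa}\norm{\ff}_1 .
\]
The key observation is the elementary fact that for $\kappa \geq 1$ and any exponent $q \geq 1$ one has $\kappa^{-q} \leq \kappa^{-1}$. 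Applying this with $q = 2$ for the quadratic terms and with $q = p$ (recall $p \geq 2$) for the $p$-th-power terms, and using $\rr^{\calG}_i, \ss^{\calG}_i \geq 0$ and $\ff_i^2, \abs{\ff_i}^p \geq 0$, the displayed inequality holds coordinate by coordinate already without the extra slack $\frac{\delta}{\kappa}\norm{\ff}_1 \geq 0$, which only helps since $\delta \geq 0$. This establishes condition (2) and hence the lemma.

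There is no genuine obstacle here; this is a short computation. The only point requiring a little care is the bookkeeping of the powers of $\kappa$ when $\obj^{\calG}$ is evaluated at the rescaled argument $\tfrac{1}{\kappa}\ff$: the quadratic penalty scales as $\kappa^{-2}$ and the $p$-th-power penalty as $\kappa^{-p}$, whereas on the left-hand side the whole objective (and the $\delta$-correction) is scaled only by $\kappa^{-1}$ — and it is precisely the hypothesis $\kappa \geq 1$ (together with $p \geq 1$) that makes the mismatch go the right way, so the statement would fail for a general scalar.
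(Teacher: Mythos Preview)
Your proof is correct and follows essentially the same approach as the paper: both verify the identity map works by expanding the objective, cancelling the linear term, and using $\kappa^{-2},\kappa^{-p} \le \kappa^{-1}$ for $\kappa \ge 1$ together with nonnegativity of $\rr,\ss$ and the $\delta$-slack. Your write-up is, if anything, slightly more explicit than the paper's chain of inequalities.
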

\begin{proof}
  Consider the map $\map{\calG}{\calG}$ such that for every flow
  $\ff^{\calG}$ on $\calG,$ we have $\map{\calG}{\calG}(\ff^{\calG}) =
  \ff^{\calG}.$ Thus,
  \begin{align*}
    \obj^{\calG}\left( {\kappa}^{-1}
    \map{\calG}{\calG}(\ff^{\calG}) \right)
    & = \obj^{\calG} \left( {\kappa}^{-1} \ff^{\calG} \right) \\
    & = \left( \gg^{\calG}  \right)^{\top} \left( {\kappa}^{-1}
      \ff^{\calG} \right) - h_p(\rr, \kappa^{-1} \ff^{\calG}) \\
    & \ge \kappa^{-1}  \left( \gg^{\calG}  \right)^{\top} \ff^{\calG}
      - \kappa^{-2} h_p( \rr, \ff^{\calG})
    & \\
    & \ge \kappa^{-1}  \left( \gg^{\calG}  \right)^{\top} \ff^{\calG}
      - \kappa^{-1} h_p( \rr, \ff^{\calG})
          & \\
    & =\kappa^{-1} \obj^{\calG}(\ff^{\calG})
          & \\
    & \ge
      \kappa^{-1} \obj^{\calG}(\ff^{\calG}) - \delta \norm{\ff^{\calG}}_{1}
  \end{align*}
  Moreover $(\BB^{\calG})^{\top} \map{\calG}{\calG}(\ff^{\calG}) = \BB^{\calG}
  \ff^{\calG}.$ Thus, the claims follow.
\end{proof}


It behaves well under composition.
\begin{restatable}[Composition]{lemma}{composition}
  \label{lem:approximations:composition}
  
  Given two smoothed $p$-norm instances, $\calG_1, \calG_2,$ such that
  $\calG_1 \preceq_{\kappa_1,\delta_1} \calG_2$ with the map
  $\map{\calG_1}{\calG_2}$ and $\calG_2 \preceq_{\kappa_2,\delta_2} \calG_3$
  with the map $\map{\calG_2}{\calG_3}$, then
  $\calG_1 \preceq_{\kappa, \delta} \calG_3$ with the map
  $ \map{\calG_1}{\calG_3} = \map{\calG_2}{\calG_3} \circ
  \map{\calG_1}{\calG_2}$
  and
  $\kappa = \kappa_1 \kappa_2$
  and
  $\delta = \delta_1+\delta_2
    \norm{\map{\calG_1}{\calG_2}}_{1 \to 1}$.

   Similarly, given two smoothed $p$-norm instances, $\calG_1, \calG_2,$ such that
  $\calG_1 \circapprox_{\kappa_1,\delta_1} \calG_2$ with the map
  $\map{\calG_1}{\calG_2}$ and $\calG_2 \circapprox_{\kappa_2,\delta_2} \calG_3$
  with the map $\map{\calG_2}{\calG_3}$, then
  $\calG_1 \circapprox_{\kappa, \delta} \calG_3$ with the map
  $ \map{\calG_1}{\calG_3} = \map{\calG_2}{\calG_3} \circ
  \map{\calG_1}{\calG_2}$
  and
  $\kappa = \kappa_1 \kappa_2$
  and
  $\delta = \delta_1+\delta_2
  \norm{\map{\calG_1}{\calG_2}}^{\text{cycle}}_{1 \to 1}$.
\end{restatable}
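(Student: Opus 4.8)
The proof proceeds by straightforwardly composing the two linear maps and chaining the two approximation inequalities, being careful about how the additive $\delta$-error propagates through the first map. I would first handle the $\preceq$ version. Let $\ff^{\calG_1}$ be a flow on $\calG_1$, set $\ff^{\calG_2} = \map{\calG_1}{\calG_2}(\ff^{\calG_1})$ and $\ff^{\calG_3} = \map{\calG_2}{\calG_3}(\ff^{\calG_2})$. Residue preservation follows immediately by transitivity: $(\BB^{\calG_3})^\top \ff^{\calG_3} = (\BB^{\calG_2})^\top \ff^{\calG_2} = (\BB^{\calG_1})^\top \ff^{\calG_1}$, so $\map{\calG_1}{\calG_3}$ preserves residues. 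For the energy bound, I would apply the definition of $\calG_2 \preceq_{\kappa_2,\delta_2}\calG_3$ to $\ff^{\calG_2}$, then the definition of $\calG_1 \preceq_{\kappa_1,\delta_1}\calG_2$ to $\ff^{\calG_1}$:
\[
  \obj^{\calG_3}\!\left(\tfrac{1}{\kappa_2}\cdot\tfrac{1}{\kappa_1}\ff^{\calG_3}\right)
  \;\geq\;
  \tfrac{1}{\kappa_1}\!\left(\tfrac{1}{\kappa_2}\!\left(\obj^{\calG_2}(\ff^{\calG_2}) - \delta_2 \norm{\ff^{\calG_2}}_1\right)\right).
\]
Here I must be slightly careful: the definition of $\preceq$ only directly gives $\tfrac{1}{\kappa}(\obj^{\calH}(\ff^{\calH}) - \delta\norm{\ff^{\calH}}_1) \le \obj^{\calG}(\tfrac{1}{\kappa}\ff^{\calG})$, but since $\obj$ is not homogeneous I cannot freely pull scalars out. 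The clean way is to note that applying the $\calG_2 \preceq \calG_3$ guarantee to the flow $\tfrac{1}{\kappa_1}\ff^{\calG_2}$ on $\calG_2$ (which is a valid flow, with image $\tfrac{1}{\kappa_1}\ff^{\calG_3}$) gives $\obj^{\calG_3}(\tfrac{1}{\kappa_1\kappa_2}\ff^{\calG_3}) \ge \tfrac{1}{\kappa_2}(\obj^{\calG_2}(\tfrac{1}{\kappa_1}\ff^{\calG_2}) - \delta_2\norm{\tfrac{1}{\kappa_1}\ff^{\calG_2}}_1)$, and then apply the $\calG_1 \preceq \calG_2$ guarantee, which says precisely $\obj^{\calG_2}(\tfrac{1}{\kappa_1}\ff^{\calG_2}) \ge \tfrac{1}{\kappa_1}(\obj^{\calG_1}(\ff^{\calG_1}) - \delta_1\norm{\ff^{\calG_1}}_1)$. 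Substituting and using $\kappa_1 \ge 1$ so that $\tfrac{1}{\kappa_1\kappa_2} \le \tfrac{1}{\kappa_2}$ to absorb the extra factor on the $\delta_1$ term, plus bounding $\norm{\ff^{\calG_2}}_1 = \norm{\map{\calG_1}{\calG_2}\ff^{\calG_1}}_1 \le \norm{\map{\calG_1}{\calG_2}}_{1\to 1}\norm{\ff^{\calG_1}}_1$, yields
\[
  \obj^{\calG_3}\!\left(\tfrac{1}{\kappa_1\kappa_2}\ff^{\calG_3}\right)
  \;\geq\;
  \tfrac{1}{\kappa_1\kappa_2}\!\left(\obj^{\calG_1}(\ff^{\calG_1}) - \left(\delta_1 + \delta_2\norm{\map{\calG_1}{\calG_2}}_{1\to 1}\right)\norm{\ff^{\calG_1}}_1\right),
\]
which is exactly the claim with $\kappa = \kappa_1\kappa_2$ and $\delta = \delta_1 + \delta_2\norm{\map{\calG_1}{\calG_2}}_{1\to 1}$.

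For the $\circapprox$ version the argument is identical, except that $\ff^{\calG_1}$ is now a circulation; residue preservation then shows $\ff^{\calG_2}$ and $\ff^{\calG_3}$ are circulations, so the $\circapprox$ guarantees of the two given comparisons apply, and in the final step we only need to bound $\norm{\map{\calG_1}{\calG_2}\ff^{\calG_1}}_1$ over \emph{circulations} $\ff^{\calG_1}$, which is why $\norm{\map{\calG_1}{\calG_2}}^{\text{cycle}}_{1\to 1}$ suffices in place of $\norm{\map{\calG_1}{\calG_2}}_{1\to 1}$.

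The only real subtlety — the "hard part," though it is minor — is the bookkeeping around the $\delta_1$ term: because $\obj$ is inhomogeneous, one cannot pull the $\tfrac{1}{\kappa_1}$ out of $\obj^{\calG_2}(\tfrac{1}{\kappa_1}\ff^{\calG_2})$, so the chaining must be done by feeding the already-scaled flow $\tfrac{1}{\kappa_1}\ff^{\calG_2}$ into the second comparison's guarantee rather than trying to manipulate the inequalities algebraically after the fact; and one must use $\kappa_1,\kappa_2 \ge 1$ to keep the combined additive error from blowing up by a $\kappa_1$ factor on the $\delta_1$ piece. Everything else is routine.
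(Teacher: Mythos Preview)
Your proposal is correct and follows essentially the same route as the paper's proof: rearrange each guarantee as an upper bound on the source objective, apply the second guarantee to the already-scaled flow $\tfrac{1}{\kappa_1}\ff^{\calG_2}$ (using linearity of the map so its image is $\tfrac{1}{\kappa_1}\ff^{\calG_3}$), substitute the first guarantee, and bound $\norm{\ff^{\calG_2}}_1$ via the $1\to 1$ (resp.\ cycle $1\to 1$) norm of $\map{\calG_1}{\calG_2}$. One minor remark: the appeal to $\kappa_1 \ge 1$ you flag as needed to ``absorb the extra factor on the $\delta_1$ term'' is not actually required---after substitution the $\delta_1$ term already carries exactly a $\tfrac{1}{\kappa_1\kappa_2}$ coefficient, matching the $\delta_2$ term, so there is nothing to absorb; but this does not affect the correctness of your argument.
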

\begin{proof}
  We can simply chain together the guarantees to see that:
  \begin{align*}
    \obj^{\calG_1} \left( \ff^{\calG_1} \right)
    &\leq
      \kappa_1 \obj^{\calG_2} \left( \frac{1}{\kappa_1} \ff^{\calG_2}
      \right)
      +
      \delta_1  \norm{\ff^{\calG_1}}_1
    \\
    & \leq
      \kappa_1  \kappa_2
      \left(
      \obj^{\calG_3} \left( \frac{1}{\kappa_1 \kappa_2} \ff^{\calG_3}\right)
      +\delta_2 \norm{\frac{1}{\kappa_1} \ff^{\calG_2}}_1
      \right)
      +
      \delta_1  \norm{\ff^{\calG_1}}_1
     \\
      &=
      \kappa_1  \kappa_2 \obj^{\calG_3} \left( \frac{1}{\kappa_1
      \kappa_2} \ff^{\calG_3}\right)
      +\delta_2 \norm{\ff^{\calG_2}}_1
      +\delta_1  \norm{\ff^{\calG_1}}_1
      \\
      &\leq
        \kappa_1  \kappa_2 \obj^{\calG_3} \left( \frac{1}{\kappa_1
      \kappa_2} \ff^{\calG_3}\right)
      +\left(
      \delta_2 \norm{\map{\calG_1}{\calG_2}}_{1 \to 1}
      +\delta_1
      \right)\norm{\ff^{\calG_1}}_1
  \end{align*}
  A similar calculation gives the cycle composition guarantee, but now
  allows us to bound norms using the
  $\norm{\map{\calG_1}{\calG_2}}^{\text{cycle}}_{1 \to 1}$ norm.
  In all cases, our maps preserve that flows route the correct demands.
\end{proof}

The most important property of this is that this notion of
approximation is also additive, \emph{i.e.}, it works well with graph
decompositions.
\begin{definition}[Union of two instances]
  \label{def:union}
  Consider smoothed $p$-norm instances, $\calG_1, \calG_2,$ with the
  same set of vertices, \emph{i.e.}  $V^{\calG_1} = V^{\calG_2}.$
  Define $\calG = \calG_1 \cup \calG_2$ as the instance on the same
  set of vertices obtained by taking a disjoint union of the edges
  (potentially resulting in multi-edges). Formally,
  \[ \calG = (V^{\calG_1}, E^{\calG_1} \cup E^{\calG_2}, (\gg^{\calG_1},
    \gg^{\calG_2}), (\rr^{\calG_1}, \rr^{\calG_2}), (\ss^{\calG_1},
    \ss^{\calG_2})).\]
\end{definition}
We prove the following lemma, which closely follows an analogous
statement in \cite{KyngPSW19}.
\begin{restatable}[Union of instances]{lemma}{unionlem}
  \label{lem:union}
  Consider four smoothed $p$-norm instances,
  $\calG_1, \calG_2, \calH_1, \calH_2,$ on the same set of vertices,
  \emph{i.e.} $V^{\calG_1} = V^{\calG_2} = V^{\calH_1} = V^{\calH_2},$
  such that for $i=1,2,$ $\calH_i \preceq_{\kappa,\delta} \calG_i$ with the
  map $\map{\calH_i}{\calG_i}.$ Let
  $\calG \defeq \calG_1 \cup \calG_2,$ and
  $\calH \defeq \calH_1 \cup \calH_2.$
  Then, $\calH \preceq_{\kappa,\delta} \calG$ with the map
  \[
    \map{\calH}{\calG} \left( \ff^{\calH} =
      (\ff^{\calH_1},\ff^{\calH_2}) \right) \defeq \left(
      \map{\calH_1}{\calG_1} \left( \ff^{\calH_1} \right),
      \map{\calH_2}{\calG_2} \left( \ff^{\calH_2} \right) \right),
  \]
  where $(\ff^{H_1}, \ff^{H_2})$ is the decomposition of
  $\ff^{H}$ onto the supports of $H_1$ and $H_2$.
\end{restatable}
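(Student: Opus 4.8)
The plan is to exploit the fact that the disjoint union of edge sets makes every relevant object split as an exact sum of two pieces, so the hypotheses for the pairs $(\calG_1,\calH_1)$ and $(\calG_2,\calH_2)$ can simply be added coordinatewise. First I would note that $\map{\calH}{\calG}$ as defined is linear, being the direct sum of the two linear maps $\map{\calH_1}{\calG_1}$, $\map{\calH_2}{\calG_2}$ acting on the two disjoint coordinate blocks of $\rea^{E^{\calH}} = \rea^{E^{\calH_1}} \oplus \rea^{E^{\calH_2}}$. In particular the decomposition $\ff^{\calH} = (\ff^{\calH_1},\ff^{\calH_2})$ onto the supports of $\calH_1$ and $\calH_2$ is unambiguous, and $\ff^{\calG} \defeq \map{\calH}{\calG}(\ff^{\calH}) = \big(\map{\calH_1}{\calG_1}(\ff^{\calH_1}),\ \map{\calH_2}{\calG_2}(\ff^{\calH_2})\big) \eqqcolon (\ff^{\calG_1},\ff^{\calG_2})$.

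Next I would verify the two requirements in the definition of $\preceq_{\kappa,\delta}$. For residues: since $\calG$ is built by disjointly unioning edges on the common vertex set, the incidence matrix $\BB^{\calG}$ is the row-concatenation of $\BB^{\calG_1}$ and $\BB^{\calG_2}$, so $(\BB^{\calG})^{\top}\ff^{\calG} = (\BB^{\calG_1})^{\top}\ff^{\calG_1} + (\BB^{\calG_2})^{\top}\ff^{\calG_2}$, and likewise on $\calH$. Since each given map preserves residues, $(\BB^{\calG_i})^{\top}\ff^{\calG_i} = (\BB^{\calH_i})^{\top}\ff^{\calH_i}$ for $i=1,2$; adding the two identities gives $(\BB^{\calG})^{\top}\ff^{\calG} = (\BB^{\calH})^{\top}\ff^{\calH}$.

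For the energy bound, the key point is that $\obj^{\calG}$ is additive over the edge partition: because $\gg^{\calG} = (\gg^{\calG_1},\gg^{\calG_2})$, $\rr^{\calG} = (\rr^{\calG_1},\rr^{\calG_2})$, $\ss^{\calG} = (\ss^{\calG_1},\ss^{\calG_2})$ (Definition~\ref{def:union}), and $h_p$ is a coordinatewise sum, we get $\obj^{\calG}(\tfrac1\kappa\ff^{\calG}) = \obj^{\calG_1}(\tfrac1\kappa\ff^{\calG_1}) + \obj^{\calG_2}(\tfrac1\kappa\ff^{\calG_2})$, and similarly $\obj^{\calH}(\ff^{\calH}) = \obj^{\calH_1}(\ff^{\calH_1}) + \obj^{\calH_2}(\ff^{\calH_2})$ and $\norm{\ff^{\calH}}_1 = \norm{\ff^{\calH_1}}_1 + \norm{\ff^{\calH_2}}_1$. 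Applying the hypothesis $\calH_i \preceq_{\kappa,\delta} \calG_i$ to each block and summing,
\[
  \obj^{\calG}\!\left(\tfrac1\kappa\ff^{\calG}\right)
  = \sum_{i=1}^{2}\obj^{\calG_i}\!\left(\tfrac1\kappa\ff^{\calG_i}\right)
  \geq \sum_{i=1}^{2}\tfrac1\kappa\!\left(\obj^{\calH_i}(\ff^{\calH_i}) - \delta\norm{\ff^{\calH_i}}_1\right)
  = \tfrac1\kappa\!\left(\obj^{\calH}(\ff^{\calH}) - \delta\norm{\ff^{\calH}}_1\right),
\]
which is exactly the required inequality.

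I do not expect a genuine obstacle; the only care needed is the bookkeeping that all data ($\gg,\rr,\ss,\BB$), the objective $\obj$, the $\ell_1$-norm, and the maps respect the partition $E^{\calH} = E^{\calH_1}\sqcup E^{\calH_2}$ (and likewise for $\calG$), which is immediate from Definition~\ref{def:union}. One subtlety worth flagging: this clean additive argument works for $\preceq$ but does not directly give the analogous statement for $\circapprox$, since the restrictions $\ff^{\calH_1},\ff^{\calH_2}$ of a circulation on $\calH$ need not individually be circulations on $\calH_1,\calH_2$; that case, if needed elsewhere, would require a separate argument.
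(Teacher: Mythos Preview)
Your proposal is correct and follows essentially the same approach as the paper: decompose the flow along the disjoint edge partition, use additivity of $\obj$, $\norm{\cdot}_1$, and residues across the union, and sum the two hypothesized inequalities. Your closing remark about $\circapprox$ is a nice observation but not needed for the lemma as stated.
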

\begin{proof}
Let $\ff^{\calH}$ be a flow on $\calH.$ We write $\ff^{\calH} =
  (\ff^{\calH_1}, \ff^{\calH_2}).$ Let $\ff^{\calG} \defeq
  \map{\calH}{\calG}(\ff^{\calH}).$ If $\ff^{\calG_i}$ denotes
  $\map{\calH_i}{\calG_i}(\ff^{\calH_i})$ for $i=1,2,$ then we know
  that $\ff^{\calG} = (\ff^{\calG_1}, \ff^{\calG_2}).$ Thus, the
  objectives satisfy
  \begin{align*}
    \obj^{\calG}(\kappa^{-1} \ff^{\calG})
    & = \obj^{\calG_1}(\kappa^{-1} \ff^{\calG_1}) +
      \obj^{\calG_2}(\kappa^{-1} \ff^{\calG_2}) \\
    & \ge 
      \kappa^{-1}\left( \obj^{\calH_1}(\ff^{\calH_1})
      -\delta \norm{\ff^{\calH_1}}_1
      \right)
      +
      \kappa^{-1}\left( \obj^{\calH_2}(\ff^{\calH_2})
            -\delta \norm{\ff^{\calH_2}}_1
      \right)
      = \kappa^{-1} \left(  \obj^{\calH}(\ff^{\calH})
      - \delta \norm{\ff^{\calH}}_1
      \right)
  \end{align*}
  For the residues, we have,
  \begin{align*}
    (\BB^{\calG})^{\top}(\ff^{\calG})
    & = (\BB^{\calG_1})^{\top}(\ff^{\calG_1}) +
      (\BB^{\calG_2})^{\top}(\ff^{\calG_2}) \\
    & = (\BB^{\calH_1})^{\top}(\ff^{\calH_1}) +
      (\BB^{\calH_2})^{\top}(\ff^{\calH_2}) = (\BB^{\calH})^{\top}(\ff^{\calH}).
  \end{align*}
  Thus, $\calH \preceq_{\kappa,\delta} \calG.$
\end{proof}

This notion of approximation also behaves nicely with scaling of
$\ell_2$ and $\ell_p$ resistances.
\begin{restatable}{lemma}{PerturbResistances}
  \label{lem:PerturbResistances}
  For all $\kappa \ge 1,$ and for all pairs of smoothed $p$-norm
  instances, $\calG, \calH,$ on the same underlying graphs,
  \emph{i.e.}, $(V^{\calG}, E^{\calG}) = (V^{\calH}, E^{\calH}),$ such
  that,
  \begin{tight_enumerate}
  \item the gradients are identical, $\gg^{\calG} = \gg^{\calH},$
  \item the $\ell_{2}^2$ resistances are off by at most $\kappa,$ \emph{i.e.},
    $\rr_e^{\calG} \leq \kappa \rr_{e}^{\calH}$ for all edges $e,$ and
  \item the $p$-norm scaling  is off by at most
    $\kappa^{p - 1},$ \emph{i.e.},
    $s^{\calG} \leq \kappa^{p - 1} s^{\calH},$
  \end{tight_enumerate}
  then $\calH \preceq_{\kappa} \calG$ with the identity map.
\end{restatable}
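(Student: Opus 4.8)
The plan is to use $\map{\calH}{\calG}$ equal to the identity map on $\rea^{E^{\calH}} = \rea^{E^{\calG}}$ and verify the two requirements in the definition of $\calH \preceq_{\kappa} \calG$ (that is, $\calH \preceq_{\kappa,0}\calG$) directly, flow by flow. The residue condition is immediate: since $\calG$ and $\calH$ have the same underlying graph we have $\BB^{\calG} = \BB^{\calH}$, so for every flow $\ff$ on $\calH$ the image $\ff$ on $\calG$ satisfies $(\BB^{\calG})^\top \ff = (\BB^{\calH})^\top \ff$. The remaining task is to establish, for every flow $\ff$, the energy inequality
\[
  \frac{1}{\kappa}\,\obj^{\calH}(\ff) \;\le\; \obj^{\calG}\!\left(\tfrac{1}{\kappa}\ff\right).
\]

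To do this I would expand both sides using $\obj^{\calG}(\ff) = (\gg^{\calG})^\top \ff - h_p(\rr^{\calG},\ss^{\calG},\ff)$ together with the fact that $h_p$ is a nonnegative sum of the monomials $\rr_e \ff_e^2$ and $\ss_e \abs{\ff_e}^p$. The left-hand side becomes $\tfrac{1}{\kappa}(\gg^{\calH})^\top \ff - \tfrac{1}{\kappa}\sum_e(\rr^{\calH}_e \ff_e^2 + \ss^{\calH}_e \abs{\ff_e}^p)$, while the right-hand side becomes $\tfrac{1}{\kappa}(\gg^{\calG})^\top \ff - \sum_e(\tfrac{1}{\kappa^2}\rr^{\calG}_e \ff_e^2 + \tfrac{1}{\kappa^p}\ss^{\calG}_e \abs{\ff_e}^p)$. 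Assumption (1), $\gg^{\calG} = \gg^{\calH}$, cancels the linear terms, so the inequality reduces to the coordinatewise bounds
\[
  \tfrac{1}{\kappa^2}\rr^{\calG}_e \;\le\; \tfrac{1}{\kappa}\rr^{\calH}_e
  \qquad\text{and}\qquad
  \tfrac{1}{\kappa^p}\ss^{\calG}_e \;\le\; \tfrac{1}{\kappa}\ss^{\calH}_e
\]
for every edge $e$. These rearrange exactly to $\rr^{\calG}_e \le \kappa\,\rr^{\calH}_e$ and $\ss^{\calG}_e \le \kappa^{p-1}\ss^{\calH}_e$, which are assumptions (2) and (3); since $\ff_e^2 \ge 0$ and $\abs{\ff_e}^p \ge 0$, summing over $e$ gives the claimed inequality and hence $\calH \preceq_{\kappa} \calG$.

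There is no real obstacle here — the statement is a term-by-term comparison of two energies, and the only thing to get right is the bookkeeping of the powers of $\kappa$: rescaling a flow by $1/\kappa$ shrinks the quadratic penalty by $\kappa^{-2}$ but the whole inequality only needs a $\kappa^{-1}$ factor, which is why resistances are permitted to grow by a factor $\kappa$; likewise the $p$-th power penalty shrinks by $\kappa^{-p}$, leaving room for the $p$-norm scalings to grow by $\kappa^{p-1}$. This mirrors the computation in the reflexivity proof (Lemma~\ref{lem:approximations:identity}), now carried out across two instances; the hypothesis $\kappa \ge 1$ is only used to stay in the regime where the definition of $\preceq_{\kappa}$ is invoked.
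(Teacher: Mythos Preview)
Your proof is correct. The paper itself does not spell out the argument but simply cites Lemma~2.13 of \cite{KyngPSW19}; your direct term-by-term verification is exactly the computation underlying that cited lemma, so the approaches coincide.
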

\begin{proof}
Follows from Lemma 2.13 in \cite{KyngPSW19}.
\end{proof}


\subsection{Orthogonal Decompositions of Flows}
\label{sec:CyclePotential}

At the core of our graph decomposition and sparsification
procedures is a decomposition of the gradient $\gg$
of $\calG$ into its cycle space and potential flow space.
We denote such a splitting using
\begin{equation}
\gg^{\calG}
= 
\gghat^{\calG} + \BB^{\calG} \ppsi^{\calG},
\text{ s.t. }~{\BB^{\calG}}^{\top} \gghat^{\calG}=\vzero.
\end{equation}
Here $\gghat$ is a circulation, while
$\BB \ppsi$ gives a potential induced edge value.
We will omit the superscripts when the context is clear.

The following minimization based formulation of this
splitting of $\gg$ is critical to our method of bounding
the overall progress of our algorithm
\begin{fact}
\label{fact:Projection}
The projection of $\gg$ onto the cycle space is obtained
by minimizing the Euclidean norm of $\gg$ plus a potential flow.
Specifically,
\[
\norm{\gghat}_2^2
=
\min_{\xx} \norm{\gg + \BB \xx }_2^2.
\]
\end{fact}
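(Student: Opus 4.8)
The plan is to prove Fact~\ref{fact:Projection}, which states that $\norm{\gghat}_2^2 = \min_{\xx} \norm{\gg + \BB\xx}_2^2$, by recognizing this as the standard statement that orthogonal projection onto a subspace minimizes distance to that subspace. Recall the splitting $\gg = \gghat + \BB\ppsi$ with ${\BB}^{\top}\gghat = \vzero$; this says precisely that $\gghat$ lies in the cycle space (the kernel of $\BB^{\top}$, equivalently the orthogonal complement of the column space / image of $\BB$), while $\BB\ppsi$ lies in the image of $\BB$ (the potential flow space). So $\gghat$ is the component of $\gg$ orthogonal to $\mathrm{im}(\BB)$, i.e. $\gghat = \gg - \Pi_{\mathrm{im}(\BB)}\gg$ where $\Pi_{\mathrm{im}(\BB)}$ is the orthogonal projector onto $\mathrm{im}(\BB)$.

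The key steps, in order: First, observe that minimizing $\norm{\gg + \BB\xx}_2^2$ over all $\xx$ is the same as minimizing $\norm{\gg - \zz}_2^2$ over all $\zz \in \mathrm{im}(\BB)$ (writing $\zz = -\BB\xx$; as $\xx$ ranges over $\rea^{V^{\calG}}$, $\BB\xx$ ranges over all of $\mathrm{im}(\BB)$, and negation is a bijection of that subspace). Second, invoke the fact that the closest point in a subspace $W$ to a given vector $\gg$ is the orthogonal projection $\Pi_W \gg$, and the residual $\gg - \Pi_W\gg$ is orthogonal to $W$; concretely one verifies via the Pythagorean identity that for any $\zz \in W$, $\norm{\gg - \zz}_2^2 = \norm{\gg - \Pi_W\gg}_2^2 + \norm{\Pi_W\gg - \zz}_2^2 \geq \norm{\gg - \Pi_W\gg}_2^2$, with equality iff $\zz = \Pi_W\gg$. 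Third, identify $\gg - \Pi_{\mathrm{im}(\BB)}\gg$ with $\gghat$: since $\gg = \gghat + \BB\ppsi$ with $\BB\ppsi \in \mathrm{im}(\BB)$ and $\gghat \perp \mathrm{im}(\BB)$ (because $\BB^{\top}\gghat = \vzero$ means $\langle \gghat, \BB\xx\rangle = \langle \BB^{\top}\gghat, \xx\rangle = 0$ for all $\xx$), the decomposition $\gg = \gghat + \BB\ppsi$ is exactly the orthogonal decomposition with respect to $\mathrm{im}(\BB)$, so $\Pi_{\mathrm{im}(\BB)}\gg = \BB\ppsi$ and $\gg - \Pi_{\mathrm{im}(\BB)}\gg = \gghat$. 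Combining, $\min_{\xx}\norm{\gg + \BB\xx}_2^2 = \norm{\gghat}_2^2$, achieved at $\xx = -\ppsi$.

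There is essentially no hard part here — this is a routine linear-algebra fact — but the one point requiring a sentence of care is the uniqueness of the orthogonal decomposition: one should note that the splitting of $\gg$ into a cycle-space component and a potential-flow component is unique (since $\ker(\BB^{\top}) \cap \mathrm{im}(\BB) = \{\vzero\}$, as these are orthogonal complements in $\rea^{E^{\calG}}$), so that $\gghat$ as used elsewhere in the paper is well-defined and indeed equals the projection residual. Everything else is the Pythagorean theorem.
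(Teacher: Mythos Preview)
Your proof is correct. The paper states this as a \texttt{Fact} without proof, treating it as a standard piece of linear algebra; your argument via orthogonal projection onto $\mathrm{im}(\BB)$ and the Pythagorean identity is exactly the routine justification one would supply, and there is nothing further to compare.
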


\begin{lemma}
\label{lem:EnergyDecrease}
Given a graph/gradient instance $\calG$, consider
$\calH$ formed from a subset of its edges.
The projections of $\gg^{\calG}$ and $\gg^{\calH}$ onto
their respective cycle spaces, $\gghat^{\calG}$
and $\gghat^{\calH}$ satsify:
\[
\norm{\gghat^{\calH}}_2^2
\leq
\norm{\gghat^{\calG}}_2^2
\leq
\norm{\gg^{\calG}}_2^2.
\]
\end{lemma}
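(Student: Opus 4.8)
The plan is to derive both inequalities from the variational characterization in Fact~\ref{fact:Projection}: $\norm{\gghat^{\calG}}_2^2 = \min_{\xx} \norm{\gg^{\calG} + \BB^{\calG}\xx}_2^2$, together with the analogous identity for $\calH$.

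The rightmost inequality $\norm{\gghat^{\calG}}_2^2 \leq \norm{\gg^{\calG}}_2^2$ is immediate: take $\xx = \vzero$ in the minimization defining $\norm{\gghat^{\calG}}_2^2$.

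For the leftmost inequality I would first assume, without loss of generality, that $\calH$ and $\calG$ share the same vertex set (padding $\calH$ with isolated vertices affects neither side), so that a single potential vector $\xx$ can be applied to both $\BB^{\calG}$ and $\BB^{\calH}$. Since $E^{\calH} \subseteq E^{\calG}$ and $\gg^{\calH}$ is the restriction of $\gg^{\calG}$ to the coordinates $E^{\calH}$, for every $\xx$ the vector $\gg^{\calH} + \BB^{\calH}\xx$ is exactly the restriction of $\gg^{\calG} + \BB^{\calG}\xx$ to the coordinates indexed by $E^{\calH}$; dropping the remaining coordinates only discards nonnegative squared terms, so $\norm{\gg^{\calH} + \BB^{\calH}\xx}_2^2 \leq \norm{\gg^{\calG} + \BB^{\calG}\xx}_2^2$ for all $\xx$. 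Letting $\xx^{\calG}$ be a minimizer achieving $\norm{\gghat^{\calG}}_2^2 = \norm{\gg^{\calG} + \BB^{\calG}\xx^{\calG}}_2^2$, and using the minimality of $\norm{\gghat^{\calH}}_2^2$ over potentials for $\calH$, we conclude $\norm{\gghat^{\calH}}_2^2 \leq \norm{\gg^{\calH} + \BB^{\calH}\xx^{\calG}}_2^2 \leq \norm{\gg^{\calG} + \BB^{\calG}\xx^{\calG}}_2^2 = \norm{\gghat^{\calG}}_2^2$.

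The only point needing care is the bookkeeping around vertex sets and incidence-matrix dimensions, so that evaluating $\BB^{\calG}\xx$ and $\BB^{\calH}\xx$ at the same $\xx$ is meaningful; once the vertex sets are aligned, the argument is simply coordinate-wise monotonicity of the squared Euclidean norm, and I anticipate no genuine obstacle.
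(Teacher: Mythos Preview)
Your proposal is correct. The paper actually states Lemma~\ref{lem:EnergyDecrease} without proof, evidently regarding it as an immediate consequence of Fact~\ref{fact:Projection}; your argument is precisely the natural one-line derivation from that variational characterization (plug in $\xx = \vzero$ for the right inequality, and for the left inequality use that $\gg^{\calH} + \BB^{\calH}\xx$ is a coordinate restriction of $\gg^{\calG} + \BB^{\calG}\xx$ and then specialize $\xx$ to a minimizer for $\calG$).
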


\subsection{Numerics, Conditioning, Inexact Laplacian Solvers}

Because we allow instances $\calG = (V, E,\rr,\ss,\gg)$ with $\rr(e) =
0$ and $\ss(e) = 0$ for some edges $e \in E$ (and this is important
for our sparsificaton procedures), we need to be somewhat careful
about disallowing instances with a cycle that has zero $\rr$ and $\ss$
values and non-zero gradient: In that case, our 
``energy'' can diverge to $+\infty$.

%
%
%
%

\begin{definition}[Unbounded and constant cycles]
  An untouched cycle $C \subseteq E$ is \emph{unbounded} if the sum of terms around
  the cycle is non-zero, i.e. $\sum_{e \in C} \gg(e) \neq 0$.
  If the sum \emph{is} zero, we call the unbounded cycle a \emph{constant cycle}.
\end{definition}

\begin{lemma}
  Consider a smoothed $p$-norm flow problem $(\calG, \bb)$, i.e.
     \begin{equation*}
      \max_{\ff: (\BB^{\calG})^{\top} \ff = \bb} \obj^{\calG} \left(
        \ff \right).
    \end{equation*}
 The problem is unbounded, i.e. has objective value $\obj^{\calG} \to
 \infty$ if and only if $\calG$ contains an unbounded cycle.
\end{lemma}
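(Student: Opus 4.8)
The plan is to prove the two implications separately, handling the \emph{untouched} edges $E_0 = \{e : \rr(e) = 0,\ \ss(e) = 0\}$ and the \emph{touched} edges $E_1 = E^{\calG}\setminus E_0$ differently throughout. I assume that $\bb$ is routable, i.e.\ some $\ff_0$ with $(\BB^{\calG})^{\top}\ff_0 = \bb$ exists; otherwise the maximum is over the empty set and there is nothing to prove.

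\textbf{Unbounded cycle $\Rightarrow$ unbounded objective.} Let $C$ be an unbounded cycle and let $\cchi_C \in \rea^{E^{\calG}}$ be its signed indicator with respect to a fixed traversal, so $(\BB^{\calG})^{\top}\cchi_C = \vzero$ and $\gg^{\top}\cchi_C = \sum_{e \in C}\gg(e) \neq 0$. For $t \in \rea$ set $\ff_t = \ff_0 + t\,\cchi_C$; each $\ff_t$ is feasible since $\cchi_C$ is a circulation. Because every edge of $C$ is untouched and $(\cchi_C)_e = 0$ off $C$, we get $h_p(\rr,\ss,\ff_t) = h_p(\rr,\ss,\ff_0)$ for all $t$, so $\obj^{\calG}(\ff_t) = \obj^{\calG}(\ff_0) + t\,\gg^{\top}\cchi_C$, which tends to $+\infty$ for a suitable sign of $t$. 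Hence the problem is unbounded.

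\textbf{No unbounded cycle $\Rightarrow$ bounded objective.} Suppose every untouched cycle $C$ is constant, $\sum_{e\in C}\gg(e) = 0$, and let $G_0 = (V^{\calG}, E_0)$. This says precisely that $\restr{\gg}{E_0}$ is orthogonal to the cycle space of $G_0$, hence $\restr{\gg}{E_0} \in \mathrm{im}(\BB_{G_0})$, i.e.\ there is $\ppsi \in \rea^{V^{\calG}}$ with $\restr{\gg}{E_0} = \restr{(\BB^{\calG}\ppsi)}{E_0}$. For any circulation $\cc$ of $\calG$, $\langle \BB^{\calG}\ppsi, \cc\rangle = \langle \ppsi, (\BB^{\calG})^{\top}\cc\rangle = 0$, and splitting this inner product over $E_0$ and $E_1$ gives $\sum_{e\in E_0}\gg_e\cc_e = -\sum_{e\in E_1}(\BB^{\calG}\ppsi)_e\cc_e$. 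Writing a feasible flow as $\ff = \ff_0 + \cc$ with $\cc$ a circulation, and using that edges in $E_0$ contribute nothing to $h_p$,
\[
  \obj^{\calG}(\ff_0+\cc) = \gg^{\top}\ff_0 + \sum_{e\in E_1}\Big[\big(\gg_e - (\BB^{\calG}\ppsi)_e\big)\cc_e - \rr_e(\ff_0+\cc)_e^2 - \ss_e\abs{(\ff_0+\cc)_e}^p\Big].
\]
Each summand is a continuous function of $\cc_e \in \rea$ that tends to $-\infty$ as $\abs{\cc_e}\to\infty$, since $e\in E_1$ forces $\rr_e > 0$ or $\ss_e > 0$ and $p \geq 2$; hence it attains a finite maximum $M_e$, and $\obj^{\calG}(\ff_0+\cc) \leq \gg^{\top}\ff_0 + \sum_{e\in E_1}M_e < \infty$ for every feasible flow. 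So the objective is bounded.

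\textbf{Main obstacle.} The first implication and the bookkeeping are routine; the crux is recognizing that ``no unbounded cycle'' forces $\restr{\gg}{E_0}$ to be a potential flow on the untouched subgraph $G_0$, which is what lets me transfer the unpenalized linear term on $E_0$ onto the touched edges $E_1$ via $(\BB^{\calG})^{\top}\cc = \vzero$, after which per-edge coercivity on $E_1$ closes the bound. The only things to be careful about are the orientation conventions identifying $\gg^{\top}\cchi_C$ with $\sum_{e\in C}\gg(e)$ and the sign in $\restr{\gg}{E_0} = \restr{(\BB^{\calG}\ppsi)}{E_0}$, neither of which affects the structure of the argument.
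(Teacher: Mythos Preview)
Your proof is correct and shares its core idea with the paper's: both recognize that ``no unbounded cycle'' means $\restr{\gg}{E_0}$ lies in the image of $\BB_{G_0}$, i.e., is a potential flow on the untouched subgraph. The paper phrases this abstractly as ``the problem is unbounded exactly when $\gg$ has nonzero inner product with some element of $\ker(\BB^{\top}) \cap \ker(\RR) \cap \ker(\SS)$'' and then reduces to a single cycle by writing such an element as a combination of untouched cycles, leaving the boundedness direction as an implicit (standard) fact. Your argument is more explicit and elementary: you use the potential $\ppsi$ to shift the unpenalized linear term from $E_0$ onto $E_1$ via $(\BB^{\calG})^{\top}\cc = \vzero$, then bound each $E_1$-summand separately by per-edge coercivity. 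This yields a concrete finite upper bound and makes the proof fully self-contained, at the cost of slightly more bookkeeping than the paper's one-line subspace characterization.
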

\begin{proof}
  The problem is unbounded exactly when the gradient has non-zero
  inner product with some element of the subspace $\ker(\BB^{\top}) \cap \ker(\RR) \cap \ker(\SS)$.
    The first condition means it $\ker(\BB^{\top})$ tells us the
      element must be in the cycle space, while the latter two tells
      us it must be supported edges with non-zero $\RR$ and $\SS$.
      Writing the element as a linear combination of particular
      cycles, the gradient must have a non-zero inner product with
      one untouched cycle.
\end{proof}

\begin{lemma}
  \label{lem:detectUnbounded}
  The algorithm $\textsc{detectUnbounded}$ takes as input a
  smoothed $p$-norm problem $(\calG, \bb)$ and if $\calG$ contains an unbounded cycle, the algorithm detects this and
  output an unbounded cycle (an arbitrary one if there are
  multiple).
\end{lemma}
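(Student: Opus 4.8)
The plan is to reduce the detection of an unbounded cycle to a sequence of simple connectivity and shortest-path style computations on the \emph{untouched} subgraph. First I would extract the subgraph $G_0 = (V, E_0)$ where $E_0 = \{e \in E : \rr(e) = 0 \text{ and } \ss(e) = 0\}$; by the preceding lemma the instance is unbounded if and only if there is a cycle supported on $E_0$ whose signed gradient sum is nonzero, so all the work happens inside $G_0$. Restrict $\gg$ to $E_0$ to get a vector $\gg_0 \in \rea^{E_0}$, where each edge has a fixed orientation as in the preliminaries.

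Next I would process $G_0$ one connected component at a time. Fix a component, pick a spanning tree $T$ (via BFS/DFS in linear time), and define tree-induced vertex potentials $\ppsi$ by rooting $T$ at an arbitrary vertex and setting $\ppsi(r) = 0$, then propagating $\ppsi(v) = \ppsi(u) + \gg_0(u,v)$ along tree edges (respecting orientation). With this potential, every non-tree edge $e = (u,v) \in E_0$ has a well-defined \emph{discrepancy} $\theta(e) := \gg_0(u,v) - (\ppsi(u) - \ppsi(v))$. The key observation is that $\theta(e) \neq 0$ for some non-tree edge $e$ if and only if the fundamental cycle $C_e$ of $e$ with respect to $T$ has $\sum_{e' \in C_e} \gg_0(e') \neq 0$ (with appropriate signs), since the telescoping of $\ppsi$ around the tree part of $C_e$ exactly cancels the tree contribution. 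Conversely, if $\theta(e) = 0$ for every non-tree edge, then $\gg_0$ restricted to this component is a potential flow, hence orthogonal to the cycle space, so no unbounded cycle is supported there. Thus \textsc{detectUnbounded} scans non-tree edges; if it finds one with $\theta(e) \neq 0$, it outputs the fundamental cycle $C_e$ (the tree path from $u$ to $v$ together with $e$), which is by construction untouched and has nonzero gradient sum, i.e.\ an unbounded cycle. If no component yields such an edge, it reports that no unbounded cycle exists.

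For correctness I would argue both directions explicitly: soundness is immediate since any cycle the algorithm outputs lies in $E_0$ (untouched) and has nonzero signed gradient sum by the telescoping identity; completeness follows because if some untouched cycle $C$ has $\sum_{e \in C}\gg(e) \neq 0$, then writing $C$ in the basis of fundamental cycles of $T$ shows that at least one fundamental cycle $C_e$ in the same component has nonzero gradient sum, hence $\theta(e) \neq 0$ and the algorithm detects it. The running time is linear in $|E_0| \le m$: building the spanning forest, computing $\ppsi$, and evaluating $\theta$ on all non-tree edges are each linear-time, and emitting one fundamental cycle costs at most $O(n)$.

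The main obstacle I anticipate is purely bookkeeping rather than conceptual: carefully tracking edge orientations so that the signed sums $\sum_{e \in C}\gg(e)$, the potential propagation, and the discrepancies $\theta(e)$ all use a consistent sign convention, especially given the paper's convention $\ff_{(u,v)} = -\ff_{(v,u)}$. Once the sign conventions are pinned down, the telescoping identity and the fundamental-cycle basis argument are routine, and no numerical subtlety arises because we only ever test gradient sums for being exactly zero on edges that are exactly untouched.
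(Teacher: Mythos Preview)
Your proposal is correct and follows essentially the same approach as the paper: restrict to the untouched subgraph, build a spanning forest via DFS, propagate potentials (``voltages'') along tree edges, and check each non-tree edge for consistency, outputting the corresponding fundamental cycle when a discrepancy is found. The paper's proof is much terser (it phrases the check as happening online during the DFS, ``output cycle on stack''), but your more explicit treatment of the fundamental-cycle basis and the soundness/completeness directions fills in exactly the details the paper omits.
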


\begin{proof}
We run a DFS only using edges outside the union of the support of $\rr$ and $\ss$ (just delete the  others before), and assign voltages. On an edge to already visited
  vertex, check if new voltage agrees, if output cycle on stack: it is
  an unbounded cycle.
  At the end, consistent voltages certify that gradient is an
  electrical flow and hence objective is bounded.
  Overall time is linear, as each edge is processed only once.
\end{proof}


\begin{lemma}
  \label{lem:cyclePreprocess}
  The algorithm $\textsc{constantCycleContraction}$ takes as input a
  smoothed $p$-norm problem $(\calG, \bb)$ which has bounded objective
  value and
  \begin{enumerate}
  \item Returns an equivalent instance $(\calG',\bb')$ where is $\calG'$ is cycle-touching.
    $\calG'$ is equals $\calG$ with all constant cycle contracted
    into a vertex.
  \item The problems $(\calG, \bb)$ and $(\calG', \bb')$ are
    equivalent in that
    \begin{enumerate}
    \item Any feasible solution $\ff$ to the first
      problem can be trivially mapped to a feasible solution to the
      second and vice versa.
    \item Mapping a solution from $(\calG,\bb)$ to $(\calG',\bb')$ always changes the
      objective value by the \emph{same} scalar (say $c$) for all
      solutions, and simiarly mapping the other way changes the
      objective by $-c$.
    \item The maps can be applied in $O(m)$ time.
    \item If the entries of $(\calG,\bb)$ are quasi-polynomially
      bounded, then so are the entries of $(\calG',\bb')$.
    \end{enumerate}
  \end{enumerate}
  Furthermore, if $\bb = \vzero$ then $\bb' = 0$ and the solution
  maps preserve the objective value exactly, i.e.  $c=0$ and hence
  \[
    \calG \circapprox_{1} \calG' \text{ and } \calG' \circapprox_{1}
    \calG,
  \]
  and for zero-demand flows the maps $\calM_{G \to G'}$ and $\calM_{G \to G'}$
  satisfy
  \[
    \norm{\calM_{G \to G'}}^{\text{cycle}}_{1 \to 1} \leq 1 \text{ and } \norm{\calM_{G' \to
        G}}^{\text{cycle}}_{1 \to 1} \leq m.
    \]
\end{lemma}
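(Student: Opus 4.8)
The plan is to describe $\textsc{constantCycleContraction}$ concretely and then verify each of the stated properties in turn. First I would run $\textsc{detectUnbounded}$ (Lemma~\ref{lem:detectUnbounded}); since the input has bounded objective value, this certifies that every untouched cycle is a \emph{constant} cycle, i.e. the gradient $\gg$ restricted to the untouched subgraph is a potential flow. So I would run a DFS on the subgraph $(V, E_0)$ where $E_0 = \{e : \rr(e) = 0 \text{ and } \ss(e) = 0\}$, and on each connected component assign vertex potentials $\ppi$ (rooted arbitrarily) so that $\gg(u,v) = \ppi(u) - \ppi(v)$ for every tree edge; boundedness guarantees this extends consistently to all of $E_0$. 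The instance $\calG'$ is then obtained by contracting each connected component of $(V, E_0)$ to a single vertex, deleting the (now self-loop) edges of $E_0$, and keeping $\rr, \ss, \gg$ on the surviving edges unchanged. The demand $\bb'$ is the pushforward of $\bb$ under the contraction. Since each component of $E_0$ was spanned by a tree of $E_0$-edges on which the potential drop is pinned, $\calG'$ is cycle-touching: any cycle in $\calG'$ lifts to a closed walk in $\calG$ that must use an edge outside $E_0$, hence is touched.

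Next I would establish the equivalence (item~2). The forward map $\calM_{G \to G'}$ simply drops the $E_0$-coordinates of a flow; the backward map $\calM_{G' \to G}$ takes a flow on $\calG'$ and, to restore feasibility, routes the induced demand discrepancy inside each contracted component along its spanning tree of $E_0$-edges (a unique choice given the tree). Feasibility in both directions is routine: contracting/expanding edges with zero $\rr,\ss$ does not change the residue constraint after the pushforward/tree-routing correction. For the objective, note $\obj^{\calG}(\ff) = \gg^\top \ff - h_p(\rr,\ss,\ff)$, and on $E_0$-edges the $h_p$-term vanishes, so the only contribution from the contracted part is the linear term $\sum_{e \in E_0} \gg(e) \ff(e)$. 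Because the flow on the tree of $E_0$-edges is \emph{determined} by the demand being routed across the component, and because $\gg|_{E_0}$ is a potential flow, $\sum_{e\in E_0}\gg(e)\ff(e) = \sum_v \ppi(v) \bb_0(v)$ depends only on the demand $\bb$ (not on the flow on the touched edges) — this telescoping is the crux. Hence mapping $\calG \to \calG'$ shifts the objective by a fixed scalar $c$ (this lost linear contribution) for \emph{every} solution, and mapping back adds $c$. When $\bb = \vzero$, there is no demand to route inside any component, so the tree-flow is identically zero, $c = 0$, and the objective is preserved exactly; this gives $\calG \circapprox_1 \calG'$ and $\calG' \circapprox_1 \calG$ with the identity-like maps. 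The $\norm{\cdot}^{\text{cycle}}_{1\to 1}$ bounds follow: $\calM_{G\to G'}$ only deletes coordinates so its norm is $\le 1$, while $\calM_{G' \to G}$ can spread one unit of flow across up to $m$ tree edges, giving the bound $\le m$. Linear running time and preservation of quasipolynomial bounds are immediate from the DFS-based construction (potentials are sums of at most $n$ gradient entries, each quasipolynomially bounded, so $\ppi$ — and hence the shift $c$ and any restored tree-flow — stays quasipolynomially bounded).

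The main obstacle I anticipate is \textbf{pinning down exactly why the objective shift $c$ is solution-independent in the demand case}, and doing so cleanly in the presence of the arbitrary fixed edge orientations and the sign convention $\ff_{(u,v)} = -\ff_{(v,u)}$. The subtlety is that when $\bb \ne \vzero$, the choice of how to route the internal demand along the $E_0$-spanning tree is what makes the lost linear term well-defined, and one must check that $\gg|_{E_0}$ being a potential flow makes $\sum_{e \in E_0}\gg(e)\ff(e)$ equal to $\ppi^\top \bb_0$ regardless of which spanning tree / root was picked (it isn't, in general, unless $\gg|_{E_0}$ is curl-free — which is exactly the content of boundedness). I would handle this by writing $\ff|_{E_0} = \ff^{\mathrm{tree}} + (\text{circulation in } E_0)$, observing the circulation part contributes $0$ against the potential flow $\gg|_{E_0}$, and that $\ff^{\mathrm{tree}}$ is forced by the demand. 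Everything else — reflexivity-style bookkeeping of residues, the $O(m)$ time bound, and the quasipolynomial-boundedness claim — is mechanical given the machinery already set up in Lemmas~\ref{lem:approximations:identity}--\ref{lem:PerturbResistances}.
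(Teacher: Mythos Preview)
Your overall construction---contract the connected components of untouched edges, use boundedness to realize $\gg|_{E_0}$ as a potential flow with potentials $\ppi$, map forward by dropping $E_0$-coordinates and backward by routing the induced internal residue along a spanning tree of $E_0$---is the right shape and is essentially what the paper sketches. Your backward map is in fact more carefully specified than the paper's ``put no flow on these edges.''

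There is, however, a real gap in your claim that the objective shift is solution-independent, and in particular that $c=0$ when $\bb=\vzero$. The identity $\sum_{e\in E_0}\gg(e)\ff(e)=\ppi^{\top}\bb_0$ holds with $\bb_0=(\BB|_{E_0})^{\top}\ff_{E_0}=\bb-(\BB|_{E_1})^{\top}\ff_{E_1}$, and this \emph{does} depend on the flow on the touched edges: even with $\bb=\vzero$, a circulation on $\calG$ that enters and exits a contracted component through $E_1$ edges creates nonzero internal residue, hence nonzero tree flow, and the contribution $\gg_{E_0}^{\top}\ff_{E_0}=-\ppi^{\top}(\BB|_{E_1})^{\top}\ff_{E_1}$ varies with $\ff$. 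With the gradient on surviving edges kept unchanged as you propose, one of the two directions $\calG\circapprox_1\calG'$, $\calG'\circapprox_1\calG$ fails. The fix is to also shift the gradient on each surviving edge $e=(u,v)\in E_1$ to $\gg^{\calG'}(e)=\gg^{\calG}(e)-(\ppi(u)-\ppi(v))$ (extending $\ppi$ by $0$ off the contracted components); then $\gg^{\calG}=\gg^{\calG'}+\BB\ppi$ with $\gg^{\calG'}|_{E_0}=\vzero$, so for any feasible $\ff$ one gets $\obj^{\calG}(\ff)=\obj^{\calG'}(\ff|_{E_1})+\ppi^{\top}\bb$, a genuinely constant shift that vanishes when $\bb=\vzero$. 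With this correction the rest of your argument (both $\circapprox_1$ relations via your two maps, the $1$-to-$1$ norm bounds, $O(m)$ time, and quasipolynomial boundedness) goes through.
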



\begin{proof}
Once we know the problem is bounded, and hence contains no unbounded
cycles, we can look at the connected components consisting of
untouched edges, and we can repeatedly contract cycles of these parts.
In the case where demands are zero initially, we produce a smaller
instance where again demands are zero. To lift to the larger instance,
we can simply put no flow on these edges. Because having a cycle
flow on these edges does not increase the objective, our mapping
guarantees hold.
  Overall time is linear, as each edge is processed only once.
\end{proof}

  

%


\begin{lemma}
  Consider a smoothed $p$-norm circulation problem
  problem $(\calG, \vzero)$,  where 
  $\calG = (V, E,\rr,\ss,\gg)$.
 Suppose the entries of $\rr$ and $\ss$ are quasipolynomially
 bounded and $\calG$ is cycle-touching, and suppose $\norm{\gg}_{\infty}
    \leq 2^{\polylog(n)} $.
  Then $(\calG, \vzero)$ is quasi-polynomially bounded.
\end{lemma}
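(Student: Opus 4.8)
The plan is to bound $\max_{\ff:\,(\BB^{\calG})^{\top}\ff=\vzero}\obj^{\calG}(\ff)$ by $2^{\polylog(n)}$; together with the hypothesis that the entries of $\rr$ and $\ss$ are quasipolynomially bounded, this is exactly the definition of $(\calG,\vzero)$ being quasipolynomially bounded. As a sanity check, since $\calG$ is cycle-touching it has no untouched cycles, hence no unbounded cycles, so the earlier lemma already guarantees the optimum is finite; the content is to make this quantitative.

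First I would use the cycle-touching hypothesis combinatorially. Let $\Ehat=\{e: \rr(e)\neq 0 \text{ or } \ss(e)\neq 0\}$. Since every cycle of $\calG$ contains an edge of $\Ehat$, the subgraph $(V,E\setminus\Ehat)$ is acyclic, so I can greedily extend it to a spanning forest $T$ of $G$ with $E\setminus T\subseteq\Ehat$; in particular every non-tree edge is touched. For $e\notin T$ let $\cchi_e\in\{-1,0,1\}^{E}$ be the signed indicator of the fundamental cycle of $e$ with respect to $T$, oriented so $(\cchi_e)_e=1$. These vectors span the cycle space, and because $e\notin C_{e'}$ for distinct non-tree $e,e'$, every circulation $\ff$ satisfies $\ff=\sum_{e\notin T}\ff_e\,\cchi_e$ — the coefficient of $\cchi_e$ is literally the flow value $\ff_e$ on the non-tree edge $e$. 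Also $\norm{\cchi_e}_1\le n$.

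Next I would bound the objective in terms of $\{\abs{\ff_e}\}_{e\notin T}$. For the linear term, $\gg^{\top}\ff=\sum_{e\notin T}\ff_e\,(\gg^{\top}\cchi_e)$ with $\abs{\gg^{\top}\cchi_e}\le\norm{\gg}_\infty\norm{\cchi_e}_1\le n\cdot 2^{\polylog(n)}=2^{\polylog(n)}$, so $\gg^{\top}\ff\le 2^{\polylog(n)}\sum_{e\notin T}\abs{\ff_e}$. For the penalty, dropping the nonnegative tree-edge terms, $h_p(\rr,\ss,\ff)\ge\sum_{e\notin T}(\rr_e\ff_e^2+\ss_e\abs{\ff_e}^p)\ge\sum_{e\notin T}\max(\rr_e\ff_e^2,\ss_e\abs{\ff_e}^p)$, and since $e$ is touched, one of $\rr_e,\ss_e$ is $\ge 2^{-\polylog(n)}$, giving a term $\ge 2^{-\polylog(n)}\abs{\ff_e}^{q_e}$ with $q_e\in\{2,p\}$. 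Hence $\obj^{\calG}(\ff)\le\sum_{e\notin T}\big(2^{\polylog(n)}\abs{\ff_e}-2^{-\polylog(n)}\abs{\ff_e}^{q_e}\big)$, and each summand is at most $\max_{t\ge 0}(At-Bt^{q})$ with $A=2^{\polylog(n)}$, $B=2^{-\polylog(n)}$, $q\in\{2,p\}$, $q\ge 2$. Elementary calculus gives this maximum is $At^{*}(1-1/q)\le At^{*}$ at $t^{*}=(A/(qB))^{1/(q-1)}\le (A/B)^{1/(q-1)}\le A/B$ (using $A/B\ge 1$ and $q-1\ge 1$), so it is at most $A^2/B=2^{\polylog(n)}$. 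Summing over the $\le m$ non-tree edges, with $m\le 2^{\polylog(n)}$, yields $\obj^{\calG}(\ff)\le 2^{\polylog(n)}$ uniformly over all circulations $\ff$, which finishes the proof.

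The main obstacle to keep in mind is precisely the edges with $\rr(e)=\ss(e)=0$: on those the penalty vanishes, so one cannot bound $\norm{\ff}_\infty$ directly; the spanning-forest trick is exactly what routes all the ``work'' of a circulation through touched non-tree edges, where the coefficient-equals-flow-value identity lets the penalty terms bite. Two minor points: one should check the calculus bound degrades gracefully for large $p$ (it does, since $1/(p-1)\le 1$), and one should read ``$\rr,\ss$ quasipolynomially bounded'' as saying the \emph{nonzero} entries lie in $[2^{-\polylog(n)},2^{\polylog(n)}]$, since the setup explicitly permits zero entries.
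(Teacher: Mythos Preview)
Your proof is correct. The paper's own argument is only a one-line sketch (``increasing flow along that cycle will eventually lead to a decrease in objective''), which explains why the optimum is finite but does not derive the quasipolynomial bound; your proposal supplies exactly the missing quantitative analysis. The key idea is the same in spirit---cycle-touching forces every cycle to pick up a nonzero $\rr$ or $\ss$ penalty---but you make it precise via the spanning-forest/fundamental-cycle decomposition, which isolates each non-tree edge's flow value as both a coefficient in the linear term and the argument of a penalty term, and then closes with the elementary $\max_{t\ge 0}(At - Bt^q)$ bound. This is more than the paper provides and is the natural way to carry out the computation.
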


\begin{proof}
If every cycle that every non-zero $g$ edges appears in contains
  a non-zero entry of $r$ or $s$, then increasing flow along that
  cycle will eventually lead to a decrease in objective.
\end{proof}

\begin{remark}
  We are analyzing our algorihtm in the Real RAM model, but by
  applying the tools from this section, it can also be analyzed in
  fixed precision arithmetic with polylogarithmic bit complexity per
  number:
  In this model, our $\textsc{detectUnbounded}$ and $\textsc{constantCycleContraction}$ procedures still
  work and returns a cycle-touching instance. Once an instance is
  cycle-touching and the non-zero vectors it returns are not too small or
  big, it is possible to manage errors from fixed point arithmetic.
  \todo{be less vague?}
  This can also allow us to work with inexact Laplacian solvers using
  quasipolynomial errors, which we can compute in this model in nearly-linear time.
\end{remark}




\subsection{Main Sparsification Theorem for Flows}

\begin{theorem}[Instance Sparsification]
  \label{thm:instancesparse}
  Consider an instance $\calG = (V^{\calG}, E^{\calG}
  ,\rr^{\calG},\ss^{\calG},\gg^{\calG})$ with $n$ vertices and $m$
  edges, with $\rr^{\calG}$ and $\ss^{\calG}$ quasipolynomially bounded, and
  $\norm{\gg^{\calG}}_{\infty} \leq 2^{\polylog(n)}$, and suppose the
  instance is cycle-touching.
  We can compute an instance
  $\calH = (V^{\calH} E^{\calH}
  ,\rr^{\calH},\ss^{\calH},\gg^{\calH})$ 
  with $n$ vertices and $m_{\calH} = n \polylog(n/(\epsilon\delta))$
  edges,
  again with $\rr^{\calH}$ and $\ss^{\calH}$ quasipolynomially bounded, and
  ${\norm{\gg^{\calH}}_{\infty} \leq 2^{\polylog(n)}}$,
  in time $\Otil(m)$ such that with probability $1-\epsilon$
  the maps
  $\map{\calG}{\calH}$ and $\map{\calH}{\calG}$
  certify 
  \[
    \calH \preceq_{\kappa,\delta} \calG \text{ and } \calG \preceq_{\kappa,\delta}
    \calH
    ,
  \]
  where $\kappa = m^{1/(p-1)} \polylog(n/(\epsilon\delta))$.
  Furthermore, these maps can be applied in time $\Otil(m)$.
\end{theorem}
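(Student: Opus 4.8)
The plan is to adapt the expander-decomposition-based sparsification of \cite{KyngPSW19} for smoothed $p$-norm instances, reorganizing it so that it tolerates arbitrary (quasipolynomially bounded) $\ell_p$-weights $\ss^{\calG}$ rather than unit weights. First I would split the gradient, using Fact~\ref{fact:Projection}, as $\gg^{\calG} = \gghat^{\calG} + \BB^{\calG}\ppsi^{\calG}$ with $\gghat^{\calG}$ a circulation. The potential part $\BB^{\calG}\ppsi^{\calG}$ pairs with any flow only through its residue, so once a map preserves residues the same vertex potential $\ppsi^{\calG}$ can be reinstalled on $\calH$ and contributes identically to $\obj^{\calH}$ and $\obj^{\calG}$; all the sparsification work therefore concerns $\gghat^{\calG}$. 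Since $\calG$ is cycle-touching and $\norm{\gg^{\calG}}_{\infty}\le 2^{\polylog(n)}$, Lemma~\ref{lem:EnergyDecrease} bounds the relevant magnitudes and the energy stays finite, so I can round the entries of $\gghat^{\calG}$ — and of the gradients that will be created at the portal edges later — to a geometric grid of $\poly(n/(\eps\delta))$ values and drop edges whose rounded gradient is negligible, each such rounding perturbing $\obj^{\calG}$ by at most an additive $\delta'\norm{\ff}_1$ with $\delta'$ a tiny fraction of the target $\delta$.

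Next I would bucket by weight. Partition $E^{\calG}$ into $O(\log(n/(\eps\delta)))$ classes so that within a class all $\ss^{\calG}_e$ lie in a single multiplicative window (raising sub-threshold weights to the threshold, which costs only a $\kappa$ factor by Lemma~\ref{lem:PerturbResistances}), and likewise for $\rr^{\calG}$. By Lemma~\ref{lem:union} it suffices to sparsify each class separately and take the disjoint union, so within a class I may rescale edges and flows to make the $\ell_p$-weights uniform up to a constant and then invoke exactly the unit-weight construction of \cite{KyngPSW19}: compute an expander decomposition of the class, and in each expander piece replace the rounded uniform-weight gradient circulation by a sparse instance supported on $\Otil(1)$ portal vertices, with forward and backward maps $\map{\calH}{\calG}$ and $\map{\calG}{\calH}$ realized by expander (cycle) routing of flows to and from the portals. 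Expander routing keeps congestion and path lengths $\polylog$, which controls both the $\ell_2^2$ and the $\ell_p^p$ energy up to $\polylog$ factors, except for the loss incurred when a single portal edge must carry flow that in $\calG$ was spread over up to $m$ nearly-parallel edges; absorbing that loss via the scaling built into the $\preceq_{\kappa,\delta}$ definition works out to the claimed $\kappa = m^{1/(p-1)}\polylog(n/(\eps\delta))$. Because the routing maps preserve residues exactly, the resulting guarantees are of the strong $\calH\preceq_{\kappa,\delta}\calG$ form, and the symmetric construction gives $\calG\preceq_{\kappa,\delta}\calH$. The randomness enters only through the expander decomposition and any internal sampling, and a union bound over the $\polylog$ classes yields failure probability $\eps$.

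Finally I would reassemble: reinstall $\BB^{\calG}\ppsi^{\calG}$, undo the per-class rescalings, union the pieces via Lemma~\ref{lem:union}, and compose all the maps via Lemma~\ref{lem:approximations:composition}, so that $\kappa$ accumulates multiplicatively while the additive slacks accumulate weighted by the $1\to 1$ norms (respectively cycle $1\to 1$ norms) of the intervening maps. The edge count is $O(\log(n/(\eps\delta)))$ classes times $\Otil(n)$ portal edges each, i.e. $n\polylog(n/(\eps\delta))$; projection, rounding, expander decomposition, and routing all run in $\Otil(m)$, and the composed maps apply in $\Otil(m)$; and since every transformation only rescales quantities by quasipolynomially bounded factors, $\rr^{\calH}$, $\ss^{\calH}$, and $\gg^{\calH}$ remain quasipolynomially bounded.

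The step I expect to be the main obstacle is the additive-error accounting. After the gradient-rounding slacks are pushed through the within-class rescaling maps and through any earlier contractions, their $1\to 1$ (and cycle $1\to 1$) norms can be as large as $\poly(m)$, so to end up with total additive slack at most $\delta$ I must choose the rounding grid with resolution $\poly(n/(\eps\delta))$ — finer than what sufficed in the unit-weight setting of \cite{KyngPSW19} — and order the operations so that each map acts on an instance whose flows have a priori controlled $\ell_1$ norm. Making this bookkeeping precise, and checking it composes cleanly with the union and perturbation lemmas, is the delicate part; the expander-routing energy bounds themselves are essentially inherited from \cite{KyngPSW19}.
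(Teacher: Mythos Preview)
Your high-level plan---bucket the edges so that $\rr$ and $\ss$ are (approximately) uniform within each bucket, then invoke the \cite{KyngPSW19} expander-based sparsifier on each bucket and union the results---is exactly the paper's approach. A few of your surrounding details, however, diverge from what the paper actually does and would need to be corrected.

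First, the paper rounds $\rr^{\calG}$ and $\ss^{\calG}$ to powers of two (the $\textsc{instanceRound}$ step), not the gradient; because the nonzero entries are quasipolynomially bounded, this yields $\polylog(n)$ buckets in which $r$ and $s$ are \emph{exactly} uniform, so no per-class rescaling is needed beyond a single factor-$2$ loss via Lemma~\ref{lem:PerturbResistances}. Your proposed gradient rounding and ``drop edges whose rounded gradient is negligible'' step is not part of the argument and is in fact unsafe: an edge with tiny gradient can still carry essential $\rr$- or $\ss$-cost and cannot be removed. Second, your top-level splitting $\gg^{\calG}=\gghat^{\calG}+\BB^{\calG}\ppsi^{\calG}$ is unnecessary: the restriction of $\gghat^{\calG}$ to a bucket (or to an expander piece) is not a circulation there, and the \cite{KyngPSW19} machinery already re-projects the gradient onto the cycle space of each piece inside $\textsc{Decompose}$ (Theorem~\ref{thm:Decompose}). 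Third, the mechanism inside each expander piece is not ``portal vertices'' with ``expander routing''; $\textsc{SampleAndFixGradient}$ (Theorem~\ref{thm:sampAndFixGrad}) uniformly samples edges with probability $\tau$, rescales $r$ by $\tau$ and $s$ by $\tau^{p}$, and then adds a potential-flow correction to the gradient---the $m^{1/(p-1)}$ loss comes from this rescaling, not from congesting a portal edge. Finally, the additive slack $\delta$ in the paper does not originate from any gradient rounding you do at the top; it comes from the ``small case'' of Theorem~\ref{thm:Decompose}, where the projected gradient on a piece has $\ell_2$ norm at most $\tilde\delta\cdot\|\gg^{\calG}\|_2$ and is simply set to zero before sampling, which is what justifies the $\preceq_{\kappa,\delta}$ relation via Lemma~\ref{lem:approximations:composition}. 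Your concern about $1\to1$ norm blowup under composition is well placed, but in the paper it is handled by choosing $\tilde\delta=2^{-\log^{c}n}\delta$ so that $\tilde\delta\|\gg\|_{\infty}\le\delta$, not by finer gradient gridding.
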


\begin{definition}
  \label{defn:uniformexpander}
  A graph~\footnote{We use an instance and its underlying graph interchangeably in our discussion.} $G$ is a $\alpha$-uniform $\phi$-expander (or {\em uniform expander} when parameters not spelled out explicitly) if
  \begin{tight_enumerate}
  \item $\rr$ on all edges are the same.
  \item $\ss$ on all edges are the same.
  \item $G$ has {\em conductance}\footnote{$\rr$ are uniform, so conductance is defined as in unweighted graphs. We use the standard definition of conductance. For graph $G=(V,E)$, the conductance of any $\emptyset\neq S\subsetneq V$ is $\phi(S)=\frac{\delta(S)}{\min\left(vol(S),vol(V\setminus S)\right)}$ where $\delta(S)$ is the number of edges on the cut $(S,V\setminus S)$ and $vol(S)$ is the sum of the degree of nodes in $S$. The conductance of a graph is $\phi_G=\min_{S\neq \emptyset,V}\phi(S)$.} at least $\phi$.
  \item The projection of $\gg$ onto the cycle space of $G$,
    $\gghat^G = (\II-\BB \LL^{\dag} \BB^{\top})\gg$, is {\em $\alpha$-uniform} (see next definition), where $\BB$ is the edge-vertex incidence matrix of $G$, and $\LL=\BB^{\top}\BB$ is the Laplacian.
  \end{tight_enumerate}
\end{definition}
\begin{definition}
  \label{defn:alphauniform}
  A vector $\yy \in \rea^m$ is said to be {\em $\alpha$-uniform} if
  \[\norm{\yy}_{\infty}^{2} \le \frac{\alpha}{m}
    \norm{\yy}^{2}_{2}.\]
  We abuse the notation to also let the all zero vector $\vzero$ be $1$-uniform.
\end{definition}

The next 2 theorems are from \cite{KyngPSW19}
\begin{restatable}[Decomposition into Uniform Expanders]{theorem}{Decompose}
  \label{thm:Decompose}
  Given any graph/gradient/resistance instance $\calG$
  with $n$ vertices, $m$ edges, all equal to $r$, $p$-weights all equal to
  $s$, and gradient $\gg^{\calG}$,
  along with a parameter $\delta$, $\textsc{Decompose}(\calG, \delta)$
  returns disjoint vertex subsets
  $V_1, V_2, \ldots $
  in $O(m \log^{7}n \log^2(n / \delta))$ time such that
  if we let $\calG_1, \calG_2, \ldots $ be the instances obtained
  by restricting $\calG$ to the induced graphs
  on the $V_i$ sets,
  then
  at least $m/2$ edges are contained in these subgraphs,
  and each $\calG_i$ satisfies (for some absolute constant
  $c_{partition}$):
  \begin{enumerate}
  \item The graph $(V^{\calG_i}, E^{\calG_i})$ has conductance
    at least
    \[
      \phi = 
      \frac{1}{c_{partition}
        \cdot \log^{3}n
        \cdot \log\left(n / \delta\right)},
    \]
    and degrees at least $\phi \cdot \frac{m}{3n}$,
    where $c_{partition}$ is an absolute constant.
  \item The projection of its gradient $\gg^{\calG_i}$ into the cycle space
    of $\calG_i$, $\gghat^{\calG_i}$ satisfies one of:
    \begin{enumerate}
    \item $\gghat^{\calG_i}$ is $O(\log^{8}{n} \log^{3}(n / \delta))$-uniform,
      \[
        \left(\gghat_{e}^{\calG_i}\right)^2
        \leq
        \frac{O\left( \log^{14}n \log^{5}\left( n / \delta\right)\right)}
        {m_{i}} \norm{\gghat^{\calG_i}}_2^2
        \qquad
        \forall e \in E\left( \calG_i \right).
      \]
      Here $m_i$ is the number of edges in $\calG^{\calG_i}$.
      \label{case:Uniform}
    \item The $\ell_2^2$ norm of $\gghat^{\calG_i}$ is smaller by a factor
      of $\delta$ than the unprojected gradient:
      \[
        \norm{\gghat^{\calG_i}}^2_2
        \leq
        \delta \cdot \norm{\gg^{\calG}}_2^2.
      \]
      \label{case:Tiny}
    \end{enumerate}
  \end{enumerate}
\end{restatable}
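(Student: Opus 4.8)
The plan is to obtain $\textsc{Decompose}$ by recursively applying a near-linear-time expander-decomposition primitive, augmented with a test on the cycle-projected gradient, using the monotone potential $\norm{\gghat}_2^2$ supplied by Lemma~\ref{lem:EnergyDecrease}. At a recursion node handling an induced subinstance $\calG'$ on a vertex set $U$, I would first compute the cycle-space projection $\gghat^{\calG'} = (\II - \BB\LL^{\dagger}\BB^{\top})\gg^{\calG'}$ by a Laplacian solve, and branch as follows. If $(U,E^{\calG'})$ is a $\phi$-expander \emph{and} $\gghat^{\calG'}$ is either $\alpha$-uniform with $\alpha = O(\log^{8}n\log^{3}(n/\delta))$ or already satisfies $\norm{\gghat^{\calG'}}_2^2 \le \delta\norm{\gg^{\calG}}_2^2$, then trim low-degree vertices (as in the standard boundary-linked expander decomposition) to enforce degrees $\ge \phi m/(3n)$ and output the piece. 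Otherwise produce a cut: a balanced $\phi$-sparse cut from the approximate-sparsest-cut oracle when the conductance is too low, or, when instead $\gghat^{\calG'}$ is too non-uniform, the set $T$ of ``heavy'' edges $e$ with $(\gghat^{\calG'}_e)^2 > (\alpha/\abs{E^{\calG'}})\norm{\gghat^{\calG'}}_2^2$; then delete the cut edges (resp.\ the edges of $T$) and recurse on the pieces. This is exactly the construction of \cite{KyngPSW19}, so one may also simply invoke it; below I outline why the stated parameters come out.

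The accounting is that both the conductance cuts and the heavy-edge deletions make controlled progress. Along any root-to-leaf path there are only $O(\log n)$ balanced conductance cuts. For the gradient branch, each heavy edge carries at least an $\alpha/\abs{E^{\calG'}}$-fraction of the energy, so $\abs{T} \le \abs{E^{\calG'}}/\alpha$, and — this is the crux, see below — a heavy set forced by non-$\alpha$-uniformity in fact carries a \emph{constant} fraction of $\norm{\gghat^{\calG'}}_2^2$, so after its deletion $\norm{\gghat}_2^2$ drops by a $(1-\Omega(1))$ factor (using Lemma~\ref{lem:EnergyDecrease} to pass between sub-instances). Hence at most $O(\log(1/\delta))$ gradient deletions occur before the $\delta\norm{\gg^{\calG}}_2^2$ threshold of case~(b) is met, so every root-to-leaf path has length $O(\poly\log n\cdot\log(1/\delta))$; since every cut is $\phi$-sparse or a heavy batch of size $\le\abs{E^{\calG'}}/\alpha$, charging removed edges to the smaller side of each cut bounds the total number of deleted edges by $m/2$ once $\phi = 1/(c_{partition}\log^{3}n\log(n/\delta))$ with $c_{partition}$ large enough, so the surviving pieces keep $\ge m/2$ edges. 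The running time is $O(\poly\log n\cdot\log(1/\delta))$ rounds of approximate-balanced-cut computation and Laplacian solves, arrangeable as $O(m\log^{7}n\log^{2}(n/\delta))$, and repeatedly re-normalizing the heavy threshold against the shrunken $\norm{\gghat}_2^2$ is what inflates the uniformity constant from $\log^{8}n\log^{3}(n/\delta)$ in the test to the $\log^{14}n\log^{5}(n/\delta)$ stated in case~(a).

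The main obstacle is the crucial claim in the gradient branch: that a $\phi$-expander with the min-degree guarantee whose cycle-projected gradient $\gghat$ is not $\poly\log$-uniform must contain a heavy set $T$ carrying a constant fraction of $\norm{\gghat}_2^2$ and small enough that its removal fits the edge budget. The handle I would use is the variational identity $\norm{\gghat}_2^2 = \min_{\xx}\norm{\gg+\BB\xx}_2^2$ of Fact~\ref{fact:Projection}, together with $\norm{\gghat}_2^2 = \norm{\gg}_2^2 - \norm{\BB\ppsi}_2^2$ at the optimal potentials $\ppsi$: on an expander with large minimum degree all effective resistances are $O(\poly\log n)$-small, so the potential-flow part cannot absorb a gradient coordinate large relative to $\norm{\gghat}_2$, forcing it to survive in $\gghat$; a counting argument bounds $\abs{T}$, and a Cheeger-type ball-growing step around the endpoints of $T$ (needed only when $T$ is itself too large to delete outright) extracts a $\phi$-sparse cut. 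Verifying that the per-round energy loss, the conductance $\phi$, and the recursion depth compose to give precisely the stated exponents is the delicate but routine bookkeeping; since Theorem~\ref{thm:Decompose} is quoted verbatim from \cite{KyngPSW19}, one is free to take it as a black box.
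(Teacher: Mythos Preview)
The paper does not prove Theorem~\ref{thm:Decompose} at all: immediately before the statement the paper says ``The next 2 theorems are from \cite{KyngPSW19}'', and the theorem is simply imported as a black box. You recognize this yourself in your final sentence, so in the context of \emph{this} paper your last clause --- just invoke \cite{KyngPSW19} --- is already the complete ``proof''.

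As for the sketch you give of the underlying \cite{KyngPSW19} argument: the architecture (recursive expander decomposition interleaved with a gradient-uniformity test, with Lemma~\ref{lem:EnergyDecrease} supplying the monotone potential) is the right picture, and you correctly single out the hard step. But your proposed mechanism for the gradient branch --- delete the heavy set $T=\{e:(\gghat_e)^2>(\alpha/m')\norm{\gghat}_2^2\}$ and claim this removes a \emph{constant} fraction of $\norm{\gghat}_2^2$ --- does not follow from non-$\alpha$-uniformity alone: non-uniformity only guarantees $T\neq\emptyset$, and a single heavy edge may carry as little as an $\alpha/m'$ fraction of the energy, so the recursion depth on that branch is not obviously $O(\log(1/\delta))$. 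The effective-resistance/expander argument you gesture at is in the right direction, but turning it into the constant-fraction drop (or, alternatively, replacing the delete-heavy-edges step by a bucketing-by-magnitude scheme that directly enforces uniformity within buckets, at the cost of re-running expander decomposition on the buckets) is exactly the technical work that \cite{KyngPSW19} does and that your sketch leaves open.
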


\begin{restatable}[Sampling Uniform Expanders]{theorem}{ExpanderSparsify}
  \label{thm:sampAndFixGrad}
  Given an $\alpha$-uniform $\phi$-expander
  $\calG = (V^{\calG} E^{\calG} , r^{\calG},s^{\calG},\gg^{\calG})$
  with $m$ edges and vertex degrees at least $d_{\min}$,
  for any sampling probability $\tau$ satisfying
  \[
    \tau \geq c_{sample} \log(n/\epsilon) \cdot
    \left( \frac{\alpha}{m}
      + \frac{1}{\phi^2 d_{\min}} \right),
  \]
  where $c_{sample}$ is some absolute constant,
  $\textsc{SampleAndFixGradient}(\calG, \tau)$ 
  with probability at least $1-\epsilon$ \todo{confirm $\epsilon$ dep} returns 
  a partial instance $\calH = (H,r^{\calH},s^{\calH},\gg^{\calH})$ and maps
  $\map{\calG}{\calH}$ and $\map{\calH}{\calG}$.
  The graph $H$ has the same vertex
  set as $G$, and $H$ has at most $2 \tau m$ edges.
  Furthermore, $r^{\calH} = \tau \cdot r^{\calG}$ and $s^{\calH} = \tau^p \cdot s^{\calG}$.
  The maps
  $\map{\calG}{\calH}$ and $\map{\calH}{\calG}$
  certify 
  \[
    \calH \preceq_{\kappa} \calG \text{ and } \calG \preceq_{\kappa}
    \calH
    ,
  \]
  where $\kappa = m^{1/(p-1)} \phi^{-9} \log^{3} n  $ \todo{check
    $\epsilon$ dep}
  This map can be applied in time $\Otil(m)$.
\end{restatable}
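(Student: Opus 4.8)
The plan is to follow the uniform-expander sparsification of~\cite{KyngPSW19}, adapting it from unit $\ell_p$-weights to uniform (but possibly non-unit) $\ell_p$-weights; since every edge of $\calG$ carries the same value $r$ of $r^{\calG}$ and the same value $s$ of $s^{\calG}$, this adaptation just means carrying the two scalars $r,s$ through the estimates. I would organize $\textsc{SampleAndFixGradient}(\calG,\tau)$ into three steps: subsample the edges, build the two flow maps by rerouting in expanders, and recompute (``fix'') the gradient so that cycle-space inner products are preserved. For the first step, include each of the $m$ edges of $\calG$ in $H$ independently with probability $\tau$. A Chernoff bound gives $|E(H)|\le 2\tau m$ with probability $1-\exp(-\Omega(\tau m))\ge 1-\epsilon/\poly(n)$, using $\tau m\ge c_{sample}\alpha\log(n/\epsilon)\ge c_{sample}\log(n/\epsilon)$; likewise every vertex keeps at least $\tfrac12\tau d_{\min}=\Omega(\phi^{-2}\log(n/\epsilon))$ incident edges. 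The term $1/(\phi^2 d_{\min})$ in the sampling rate is exactly the threshold for spectral concentration, so a matrix-Chernoff bound applied to $\tfrac1\tau\sum_{e\in E(H)}\bb_e\bb_e^\top$ against the Laplacian of $\calG$ (normalized by degrees, using that $\calG$ is a $\phi$-expander) shows $H$ is an $\Omega(\phi)$-expander with probability $1-\epsilon/\poly(n)$; consequently effective resistances between endpoints of any edge of $G$ or $H$ are $O(\phi^{-2}/d_{\min})$, and both graphs admit routings of length $L=O(\phi^{-1}\log n)$ with low congestion.

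For the second step I would set $r^{\calH}=\tau\,r$ and $s^{\calH}=\tau^p\,s$ (so the retained edges become cheaper, compensating for there being a $\tau$-fraction of them), then define $\map{\calG}{\calH}$ by keeping each retained edge in place and rerouting each absent edge $e=(u,v)$ through $H$ along a short low-congestion $u$--$v$ path (e.g.\ the electrical flow w.r.t.\ $r^{\calH}$); this is a fixed linear map that preserves residues, since each rerouting flow has the same residue as the edge it replaces. I would define $\map{\calH}{\calG}$ symmetrically, spreading the flow on each edge of $H$ over $\approx\!1/\tau$ short $u$--$v$ paths in $G$ (possible since $1/\tau\le d_{\min}$ and $G$ is a $\phi$-expander), so that an $H$-edge carrying $f$ produces about $1/\tau$ edges of $G$ each carrying about $\tau f$ — again a fixed, residue-preserving linear map. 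For the third step I would set $\gg^{\calH}=(\map{\calG}{\calH})^\top\gg^{\calG}$ plus a potential correction, so that $\gg^{\calH\top}\map{\calG}{\calH}(\ff)=\gg^{\calG\top}\ff$ for every circulation $\ff$ and, in general, the two gradient terms agree up to a residue-only additive constant that cancels in the objective comparison. The hypothesis that $\gghat^{\calG}$ is $\alpha$-uniform enters precisely here: it prevents the electrical rerouting from concentrating the gradient, so that $\|\gg^{\calH}\|_\infty\le 2^{\polylog(n)}$ and, together with $r^{\calH}=\tau r$, $s^{\calH}=\tau^p s$, the new instance stays quasipolynomially bounded.

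It remains to bound the energies of the rerouted flows. For the $\ell_2^2$ term the $\tfrac1\tau$-rescaling of resistances exactly offsets the loss from spreading and rerouting, and the effective-resistance and congestion bounds from the first step give distortion $O(1)$ — the usual spectral-sparsification estimate, clean here because the resistances are uniform. For the $\ell_p^p$ term there is no such concentration: spreading $f$ over $1/\tau$ paths of length $L$, each carrying $\approx\tau f$, costs $\approx (L/\tau)\,s\,(\tau f)^p=(L\tau^{p-1})\,s\,|f|^p$, which is at most $\kappa^{p-1} s^{\calH}|f|^p=\kappa^{p-1}\tau^p s\,|f|^p$ once $\kappa\ge(L/\tau)^{1/(p-1)}$, giving distortion $O\big(m^{1/(p-1)}\poly(\phi^{-1},\log n)\big)$. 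Composing these bounds via Lemmas~\ref{lem:approximations:composition} and~\ref{lem:PerturbResistances} would yield $\calH\preceq_\kappa\calG$ and $\calG\preceq_\kappa\calH$ with $\kappa=m^{1/(p-1)}\phi^{-9}\log^3 n$, the $\phi^{-9}\log^3 n$ absorbing the effective-resistance and congestion overheads of both routings. For the running time: sampling is $O(m)$, the electrical flows and the potential correction cost $\Otil(m)$ via a near-linear-time Laplacian solver, and applying either map is $\Otil(m)$.

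I expect the main obstacle to be the $\ell_p^p$ term. The $\ell_2^2$ energy is spectral, so uniform sampling plus the $\tfrac1\tau$-rescaling preserves it up to $O(1)$; but for $p>2$ the $\ell_p^p$ energy has no concentration, and the delicate point is to show that the \emph{same} fixed pair of linear maps — chosen to make the $\ell_2^2$ and gradient terms behave — also controls the $\ell_p^p$ term, and controls it by exactly $m^{1/(p-1)}$ up to $\phi$ and $\log$ factors (a larger loss would slow the outer iterative refinement, and a $p$-dependent routing would spoil the other two estimates), all while the fixed gradient stays $\ell_\infty$-bounded via $\alpha$-uniformity. This is also where the generalization from unit to uniform $\ell_p$-weights needs care, since the scalar $s$ does not interact with the $\tau^p$ rescaling the way a unit weight does.
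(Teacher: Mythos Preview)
The paper does not prove this theorem. It explicitly imports both Theorem~\ref{thm:Decompose} and Theorem~\ref{thm:sampAndFixGrad} from \cite{KyngPSW19} (see the sentence ``The next 2 theorems are from \cite{KyngPSW19}'' immediately preceding the two statements), and then uses them as black boxes inside the proof of Theorem~\ref{thm:instancesparse}. So there is no proof in the paper to compare your attempt against.

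Your plan is a reasonable sketch of how the result is obtained in \cite{KyngPSW19}: uniformly subsample, use matrix Chernoff to retain expansion, build residue-preserving linear maps by expander rerouting, and re-project the gradient onto the new cycle space. Two minor remarks. First, your closing worry that ``the scalar $s$ does not interact with the $\tau^p$ rescaling the way a unit weight does'' is misplaced: since $\calG$ is a \emph{uniform} expander (Definition~\ref{defn:uniformexpander}), all edges share one scalar $s$, which factors out of every $\ell_p^p$ estimate globally, so the passage from unit to uniform $\ell_p$-weights is a one-line rescaling --- your first instinct was correct. Second, nothing in your outline actually forces the specific constant $\phi^{-9}\log^3 n$; that comes from composing the concrete congestion/dilation bounds for routing in $\phi$-expanders with the gradient-projection error bounds of \cite{KyngPSW19}, and your sketch would need to invoke those lemmas explicitly to recover it.
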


\begin{definition}[$2$-Rounded instance]
  We call an instance $\calG = (V^{\calG}, E^{\calG}
  ,\rr^{\calG},\ss^{\calG},\gg^{\calG}),$ \emph{$2$-rounded},
  if every non-zero entry of $\ss$ and $\rr$ has absolute value equal to
  a power of two (which can be negative).
\end{definition}

\todolow{reformat into pseudo-code? }
Given an instance $\calG = (V^{\calG}, E^{\calG}
,\rr^{\calG},\ss^{\calG},\gg^{\calG}),$
we compute a $2$-rounded instance
$\calG' = (V^{\calG}, E^{\calG}
,\rr^{\calG'},\ss^{\calG'},\gg^{\calG}),$.
We round each $\rr^{\calG}_e$ of edges $e\in
E^{\calG} $ down to the nearest power of $2$ (can be less
than $1$).
Similarly, we round each $\ss^{\calG}_e
$ of edges $e\in
E^{\calG} $ down to the nearest power of $2$ (can be less
than $1$).
We denote this rounding procedure by
$\textsc{instanceRound}$, s.t.
$\calG' = \textsc{instanceRound}(\calG)$.
We will only need to apply this procedure to quasipolynomially bounded
numbers, which ensures it can be implemented in logarithmic time in the Real
RAM with comparisons.
\begin{remark}
  Becaues it is applied to quasipolynomially bounded entries, the rounding can be implemented using a polylogarithmic number of
  bit operations in fixed point arithmic.
\end{remark}

\begin{lemma}[$2$-rounding]
  Consider an instance
  $\calG = (V, E,\rr,\ss,\gg)$.
  Let $\calG' = \textsc{instanceRound}(\calG)$.
  Then the identity map between the instances 
  certifies
  \[
    \calG \preceq_{1} \calG' \text{ and } \calG' \preceq_{2}
    \calG
    .
    \todo{check dir and values}
  \]
  and $\calG'$ is $2$-rounded.
\end{lemma}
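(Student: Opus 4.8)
The plan is to derive both relations from Lemma~\ref{lem:PerturbResistances}, which compares two smoothed $p$-norm instances on a common graph with the same gradient whose $\ell_2^2$-resistances and $p$-weights differ by bounded multiplicative factors. Write $\calG' = \textsc{instanceRound}(\calG) = (V, E, \rr^{\calG'}, \ss^{\calG'}, \gg^{\calG})$. Two bookkeeping facts come for free: the identity map $\ff \mapsto \ff$ on $\rea^E$ preserves residues since $\calG$ and $\calG'$ share the underlying graph (so $\BB^{\calG} = \BB^{\calG'}$), and $\calG'$ is $2$-rounded by construction, because $\textsc{instanceRound}$ replaces each nonzero entry of $\rr^{\calG}$ (resp.\ $\ss^{\calG}$) by the largest power of two not exceeding it and keeps zero entries at zero; in particular $\rr^{\calG'}, \ss^{\calG'} \geq \vzero$, so $\calG'$ is a valid instance with gradient equal to that of $\calG$.

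For $\calG \preceq_1 \calG'$: rounding down never increases a weight, so $\rr^{\calG'}_e \leq \rr^{\calG}_e$ and $\ss^{\calG'}_e \leq \ss^{\calG}_e$ for every edge $e$. Invoking Lemma~\ref{lem:PerturbResistances} with the instance being compressed taken to be $\calG$, the comparison instance taken to be $\calG'$, and $\kappa = 1$ --- its three hypotheses $\gg^{\calG} = \gg^{\calG'}$, $\rr^{\calG'}_e \leq 1\cdot\rr^{\calG}_e$, and $\ss^{\calG'}_e \leq 1^{p-1}\,\ss^{\calG}_e$ all hold --- gives $\calG \preceq_1 \calG'$ with the identity map. (One may also see this directly: $h_p$ is coordinatewise nondecreasing in $\rr$ and $\ss$, hence $\obj^{\calG}(\ff) = (\gg^{\calG})^{\top}\ff - h_p(\rr^{\calG}, \ss^{\calG}, \ff) \leq (\gg^{\calG})^{\top}\ff - h_p(\rr^{\calG'}, \ss^{\calG'}, \ff) = \obj^{\calG'}(\ff)$ for all $\ff$, which is precisely the $\kappa = 1$, $\delta = 0$ requirement.)

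For $\calG' \preceq_2 \calG$: the quantitative input is that rounding a positive real down to the nearest power of two loses less than a factor of two, so $\rr^{\calG}_e \leq 2\,\rr^{\calG'}_e$ and $\ss^{\calG}_e \leq 2\,\ss^{\calG'}_e$ for every $e$ (with the equality $0 = 0$ on zero coordinates). Since $p \geq 2$ we further have $\ss^{\calG}_e \leq 2\,\ss^{\calG'}_e \leq 2^{\,p-1}\,\ss^{\calG'}_e$. Applying Lemma~\ref{lem:PerturbResistances} now with compressed instance $\calG'$, comparison instance $\calG$, and $\kappa = 2$ yields $\calG' \preceq_2 \calG$ with the identity map; again one can check it by hand, since $\obj^{\calG}(\tfrac12\ff) - \tfrac12\obj^{\calG'}(\ff) = \sum_e (\tfrac12\rr^{\calG'}_e - \tfrac14\rr^{\calG}_e)\ff_e^2 + \sum_e(\tfrac12\ss^{\calG'}_e - 2^{-p}\ss^{\calG}_e)|\ff_e|^p \geq 0$ under the two weight bounds.

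I do not expect a genuine obstacle here; the only points needing a moment's care are that zero weights must be carried to zero weights (so the hypotheses of Lemma~\ref{lem:PerturbResistances} hold with equality on those coordinates rather than failing), and that the $p$-weight hypothesis of that lemma asks only for a factor $2^{p-1}$ while rounding costs just a factor $2 \leq 2^{p-1}$, so there is slack. Together with the already-noted fact that $\calG'$ is $2$-rounded, this completes the argument.
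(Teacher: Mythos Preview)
Your proof is correct and is precisely the intended argument: the paper states this lemma without proof, relying implicitly on Lemma~\ref{lem:PerturbResistances}, and you have filled in exactly those details (including the careful handling of zero weights and the $2 \le 2^{p-1}$ slack for $p \ge 2$).
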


\todolow{summarize outline? write pseudo-code}

\begin{proof}[Proof of
  Theorem~\ref{thm:instancesparse}]
We are given an instance $\calG = (V^{\calG}, E^{\calG}
,\rr^{\calG},\ss^{\calG},\gg^{\calG})$ with $n$ vertices and $m$
edges, with $\rr^{\calG}$ and $\ss^{\calG}$ quasipolynomially bounded, and
  $\norm{\gg^{\calG}}_{\infty} \leq 2^{\polylog(n)}$, and cycle-touching.
  

  We then compute $\calG' = \textsc{instanceRound}(\calG)$, to get
  the $2$-rounded, cycle-touching instance $\calG' = (V^{\calG'}, E^{\calG'}
  ,\rr^{\calG'},\ss^{\calG'},\gg^{\calG'})$ with $\rr^{\calG'}$ and $\ss^{\calG'}$ quasipolynomially bounded, and
  ${\norm{\gg^{\calG'}}_{\infty} \leq 2^{\polylog(n)}}$
  with using the identity map between the instances
  \begin{equation}
    \calG \preceq_{1} \calG' \text{ and } \calG' \preceq_{2}
    \calG.\label{eq:roundapx}
  \end{equation}
  Because $\calG'$ is $2$-rounded and $\rr^{\calG'}$ and
  $\ss^{\calG'}$ quasipolynomially bounded, the entries of these two
  vectors only take on $\polylog(n)$ different values.
  Thus we can divide the edges $E^{\calG'}$ into $\polylog(n)$
  buckets such that in every bucket $i$, all edges have the same
  $2$-resistance value $\rr_e = r^{(i)}$ and the same $p$-weight value
  $\ss(e) = s^{(i)}$.
  
  Let $\calG^{(i)} = (V^{\calG^{(i)}}, E^{\calG^{(i)}}
  ,\rr^{\calG^{(i)}},\ss^{\calG^{(i)}},\gg^{\calG^{(i)}})$ be the
  instance arising from restricting $\calG'$ to the edges of bucket
  $i$, while letting $V^{\calG^{(i)}}$ to be the set of vertices
  incident on edges of $E^{\calG^{(i)}}$.
  There will be exactly one bucket containing all the edges $e$ with
  $\rr(e) = \ss(e) = 0$. 
  We let this bucket have index $i = 0$.
  This bucket cannot contain a cycle of edges, because if it did then
  $\calG''$ would contain an untouched cycle, contradiction that it is
  cycle-touching.
  Hence $\calG^{(0)}$ contains at most $n-1$ edges.

    We now sparsify the edges in every bucket separately, except bucket
  $i = 0$, which we do not sparsify.
  For $i > 0$, we define $\calG^{(i,0)} = \calG^{(i)}$, and let $j \gets 0$.
  As long as $\calG^{(i,j)}$ contains more than $n \log^{10} n$ edges
  we then repeat the following: 
  Appealing to Theorem~\ref{thm:Decompose},
  we now call $\textsc{Decompose}(\calG^{(i,j)}, \tilde{\delta})$ with $\tilde{\delta}
  = 2^{-\log^c n} \delta$ for some universal constant $c$ large enough
  that $\norm{\gg^{\calG}}_{\infty} \tilde{\delta} \leq \delta$.
  This produces a partition of $V^{(i,j)}$ into disjoint
  $V^{(i,j)}_1, \ldots, V^{(i,j)}_k$.
  The $\textsc{Decompose}$ algorithm defines  $\calG^{(i,j)}_l$ to be the
  instance given by restricting $\calG^{(i,j)}$ to the induced graph
  on $V^{(i,j)}_l$ and let $\calG^{(i,j+1)}$ be the instance arising
  from restricting $\calG^{(i,j)}$ to the graph consisting of edges
  crossing between the vertex partitions, and the vertices incident on
  these edges.
  We then let $j \gets j+1$ and repeat the decomposition if necessary.
    By Theorem~\ref{thm:Decompose}, $\calG^{(i,j+1)}$ contains at most
  half of the edges of $\calG^{(i,j+1)}$, we call $\textsc{Decompose}$
  at most $\log n$ times as $\log_2(m/n) \leq \log(n)$.
 For each bucket $i$, we let $j_i$ denote the last instance produced,
 i.e. $\calG^{(i,j_i))}$ is the this final instance, which is not
 included in any expander. 
  For every instance $\calG^{(i,j)}$,
  we let $m^{(i,j)} =
  \abs{E^{\calG^{(i,j)}}}$ denote the number of edges of the graph,
  and $n^{(i,j)} =
  \abs{V^{\calG^{(i,j)}}}$ the number of vertices.
  Note that the edges of $\calG^{(i)}$ are partitioned between the
  $\calG^{(i,j))}_l$ instances and the $\calG^{(i,j_i))}$ instances,
  i.e. each edges of $\calG^{(i)}$ is contained in exactly instance,
  either a 
  $\calG^{(i,j))}_l$ or a $\calG^{(i,j_i))}$.
  Thus the union of all these in the sense of
  Definition~\ref{def:union} is exactly $\calG^{(i)}$, and the union
$\calG^{(i)}$ is $\calG'$.

  We will not sparsify the final $\calG^{(i,j_i)}$, which for each bucket $i$
  have at most $n \log^{10} n$ edges.
  Again, for every instance $\calG^{(i,j)}_l$,
  we let $m^{(i,j)}_l = \abs{E^{\calG^{(i,j)}_l}}$ denote the number
  of edges of the graph.
  
By Theorem~\ref{thm:Decompose}, 
 \begin{itemize}
  \item The graph associated with each $\calG^{(i,j)}_l$ has
    conductance at least
    $\phi \geq \frac{1}{C_1\cdot \log^{3}n \cdot \log\left(n /
        \tilde{\delta}\right)}$,
    for some universal constant $C_1$.
  \item  For each $\calG^{(i,j)}_l$ either
    \begin{itemize}
    \item (``Uniform case'') the instance is $\alpha$-uniform with $\alpha \leq C_2
      \log^{8}{n} \log^{3}(n / \tilde{\delta})$ for some universal constant
      $C_2$. 
    \item (``Small
      case'') The $\ell_2^2$ norm of $\gghat^{\calG^{(i,j)}_l}$, the
      gradient projected to the cycle space,
      is smaller by a factor
      of $\tilde{\delta}$ than the unprojected gradient of the original graph 
      \[
        \norm{\gghat^{\calG^{(i,j)}_l}}^2_2
        \leq
        \tilde{\delta} \cdot \norm{\gghat^{\calG^{(i,j)}}}
        _2^2  \leq
        \tilde{\delta} \cdot \norm{\gghat^{\calG}}_2^2
        .
      \]
      \label{case:Tiny}
    \end{itemize}
  \item The minimum degree of each $\calG^{(i,j)}_l$ graph is at least
    $\phi \cdot \frac{m^{(i,j)}}{3n^{(i,j)}}$.
    
  \end{itemize}
    We let $\tilde{\epsilon} = \epsilon/m$.
  For instances $\calG^{(i,j)}_l$ in the ``Uniform case'', we let
  \[
    \tau^{(i,j)}_l = c_{sample} \log(n/\tilde{\epsilon}) 
    \left( \frac{\alpha}{m^{(i,j)}_l}      
      + \frac{1}{\phi^2 d_{\min}} \right)
    \leq C_3 \log(n/\tilde{\epsilon}) \log^{12}(n / \tilde{\delta})
    \left(  \frac{1}{m^{(i,j)}_l}  +
      \frac{n^{(i,j)}}{m^{(i,j)}}
    \right)
  \]
  for some universal constant $C_3$,
and we call $\textsc{SampleAndFixGradient}(\calG^{(i,j)}_l,
\tau^{(i,j)}_l)$, which 
returns instance $\calH^{(i,j)}_l$ 
and maps
$\map{\calG^{(i,j)}_l}{\calH^{(i,j)}_l}$ and $\map{\calH^{(i,j)}_l}{\calG^{(i,j)}_l}$.

For instances $\calG^{(i,j)}_l$ in the ``small case'',
  we define $\calG'^{(i,j)}_l$ to be the instance $\calG^{(i,j)}_l$
  with all gradient entries set to zero.
  We let 
  \[
    \tau^{(i,j)}_l = c_{sample} \log(n/\tilde{\epsilon}) 
    \frac{1}{\phi^2 d_{\min}} 
    \leq C_3 \log(n/\tilde{\epsilon}) \log^{12}(n / \tilde{\delta})
    \frac{n^{(i,j)}}{m^{(i,j)}}
  \]
  And call 
  $\textsc{SampleAndFixGradient}(\calG'^{(i,j)}_l, \tau^{(i,j)}_l)$
  which again returns an instance $\calH^{(i,j)}_l$ 
and maps
$\map{\calG^{(i,j)}_l}{\calH^{(i,j)}_l}$ and
$\map{\calH^{(i,j)}_l}{\calG^{(i,j)}_l}$.

The instance $\calH^{(i,j)}_l$ has the same vertex set as
$\calG^{(i,j)}_l$ and at most $\tau^{(i,j)}_l m^{(i,j)}_l$ edges.

In the ``uniform case'', with probability at least $1-\tilde{\epsilon}$, the maps 
certify 
  \[
    \calH^{(i,j)}_l \preceq_{\kappa} \calG^{(i,j)}_l\text{ and } \calG^{(i,j)}_l \preceq_{\kappa}
    \calH^{(i,j)}_l
    ,
  \]
  
  where $\kappa \leq  m^{1/(p-1)} \log^{42}\left(n / \tilde{\delta}\right) $,
  directly by Theorem~\ref{thm:sampAndFixGrad}.
In the ``small case'', with probability at least $1-\tilde{\epsilon}$, the maps 
certify with the same $\kappa$ that 
\begin{align}
  \label{eq:smallcaseadditiveexpanderapx}
    \calH^{(i,j)}_l  \preceq_{\kappa,\delta} \calG^{(i,j)}_l \text{ and } \calG^{(i,j)}_l \preceq_{\kappa,\delta}
    \calH^{(i,j)}_l ,
\end{align}
  because by Theorem~\ref{thm:sampAndFixGrad},
  \[
    \calH^{(i,j)}_l \preceq_{\kappa} \calG'^{(i,j)}_l\text{ and } \calG'^{(i,j)}_l \preceq_{\kappa}
    \calH^{(i,j)}_l
    ,
  \]
  and our rounding of the gradients implies (certified by the identity map)
  that 
  \[
    \calG^{(i,j)}_l \preceq_{1,\tilde{\delta} \norm{\gg}_{\infty} } \calG'^{(i,j)}_l
    \text{ and }
    \calG'^{(i,j)}_l \preceq_{1,\tilde{\delta} \norm{\gg}_{\infty}
    } \calG^{(i,j)}_l
    ,
  \]
  with finally our choice of $\tilde{\delta} = 2^{\log^c(n)} \delta$ for some
  large enough universal constant $c$ ensures
  $\tilde{\delta} \norm{\gg}_{\infty} \leq \delta$, and by
  Lemma~\ref{lem:approximations:composition},
  the guarantees compose to give Equation~\eqref{eq:smallcaseadditiveexpanderapx}.
  By Lemma~\ref{lem:approximations:composition}, the guarantees
  compose and 
we get that the total edge count in the sampled instances $\calH^{(i,j)}_l$ 
associated with $\calG^{(i,j)}$ is bounded by
\begin{align*}
  \sum_{l} 2\tau^{(i,j)}_l m^{(i,j)}_l
  & \leq
  \sum_{l} 
  2C_3 \log(n/\tilde{\epsilon}) \log^{12}(n / \tilde{\delta})
    \left(  1 + n^{(i,j)}  
      \frac{m^{(i,j)}_l  }{m^{(i,j)}}
    \right)
 \\
  &\leq
    3C_3 \log(n/\tilde{\epsilon}) \log^{12}(n / \tilde{\delta}) n^{(i,j)}
 \\
  &\leq
    3C_3 \log(n/\tilde{\epsilon}) \log^{12}(n / \tilde{\delta}) n
    .
\end{align*}
Since the number of buckets (indexed by $i$) is bounded by
$\polylog(n)$ and the number $\textsc{Decompose}$ calls for each
bucket (indexed by $j$) is bounded by $\log(n)$, we get that the total
number of edges summed across all the $\calH^{(i,j)}_l$ for all
$i,j,l$ is bounded by
\[
  n \log(1/\tilde{\epsilon}) \log^{12}(1/\tilde{\delta}) \polylog(n)
\]
Summed over all $i$, the total number of edges in the
$\calG^{(i,j_i)}$ instances is 
\[
  n \polylog(n)
  .
\]
We return a sparsifier instance $\calH$ consisting of the union over all $i$
and over all $j$ of
the $\calH^{(i,j)}_l$ and the $\calG^{(i,j_i)}$, and $\calG^{(0)}$
(the bucket with edges where $2$ and $p$ weights are both zero),
with a total number
of edges bounded by
\[
  n \log(1/\tilde{\epsilon}) \log^{12}(1/\tilde{\delta}) \polylog(n)
  \leq
  n \polylog(n/(\epsilon \delta))
  .
\]
Because the union of the original instances $\calG^{(i,j_i)}$, and
$\calG^{(0)}$ gives us $\calG'$,
by Lemma~\ref{lem:union},
\[
  \calG' \preceq_{\kappa,\delta} \calH
  \text{ and }
  \calH \preceq_{\kappa,\delta} \calG'
\]
where again
$\kappa \leq m^{1/(p-1)} \log^{42}\left(n / \tilde{\delta}\right)
\leq m^{1/(p-1)} \polylog(n/\delta)$.
Then, because Equation~\eqref{eq:roundapx} holds using the identity
map between $\calG$ and $\calG'$, we have
\[
  \calG \preceq_{2\kappa,\delta} \calH
  \text{ and }
  \calH \preceq_{2\kappa,\delta} \calG
\]
\todo{need ot check $\kappa$ $\delta$ interaction under composition}

We sparsified at most $m$ different expanders, since each contains a
distinct edges of $\calG$, and each sparsification fails with
probability at most $\tilde{\epsilon} = \epsilon/m$,
so by a union bound, the probability none of the sparsifications fail
is at least $1-\epsilon$.
That weights of $\calH$ are quasipolynomially bounded follows from the explicit
weights given in Theorem~\ref{thm:sampAndFixGrad}.
The overall time bound to apply the union map also follows immediately
from Theorem~\ref{thm:sampAndFixGrad}.
\end{proof}

\subsection*{Proof of Theorem \ref{thm:Flow-Sparsification}}
\begin{proof}
We will use Theorem \ref{thm:instancesparse}. This requires that the
instance is cycle-touching. So we first convert out instance $\calG$
to $\calG'$ using Lemma \ref{lem:cyclePreprocess}. We thus have
maps $\calM_{\calG \to \calG'}$ and $\calM_{\calG' \to \calG}$ that can be applied in
$O(m)$ time, where $\norm{\calM_{\calG \to \calG'}}_{1 \to 1}^{\text{cycle}}\leq 1$ and $\calG' \circapprox_{1,0} \calG$, $\calG \circapprox_{1,0}\calG'$ (since we are solving the residual problem, our demand vector is $0$).  We can apply Theorem \ref{thm:instancesparse} on $\calG'$ to get  an instance $\calH$ with at most $m_{\calH} = n \polylog(n/(\epsilon\delta))$ edges, maps $\calM'$ that can be computed in $\Otil(m)$ time, and 
 \[
  \calH \circapprox_{\kappa,\delta} \calG' \quad \text{and} \quad \calG' \circapprox_{\kappa,\delta} \calH
  \]
  where $\kappa = m^{-1/(p-1)}$. Now to go from $\calG$ to $\calH$, we will compose these two approximations and we thus have from Lemma \ref{lem:approximations:composition},
  \[
    \calG \circapprox_{\kappa,\delta \norm{\calM_{\calG \to \calG'}}_{1 \to 1}} \calH    
    \text{ and }
    \calH \circapprox_{\kappa,\delta} \calG
  \]
Finally, as $\norm{\calM_{\calG \to \calG'}}_{1 \to 1} \leq 1$, this
completes our proof. We remark that any quasipolynomial blow-up in
this error would also be acceptable.
\end{proof}


\section{Sparsification for General $\ell_2^{2} + \ell_{p}^{p}$
  Objectives Using Lewis Weights}\label{sec:lewis-wt}

We will prove Theorem \ref{thm:LewisWt}.

 \subsection{Leverage Scores and Lewis Weights}
 For $\alpha \geq 1$, and $x, y > 0,$ we say $x \approx_{\alpha} y$ if
 $\frac{1}{\alpha}x \leq y \leq \alpha x$.
 The statistical leverage score of a row $\aa_i$ of a matrix $\AA$ is
 defined as
 \[
   \lewis{2, i}(\AA) \defeq \aa_i^\top (\AA^\top \AA)^{-1} \aa_i =
   \norm{(\AA^\top \AA)^{-1/2}\aa_i}_2^2,
 \]

 The generalization of statistical leverage scores to $\ell_{p}$-norms
 is given by $\ell_p$ Lewis weights~\cite{Lewis78}, which are defined as
 follows:
 \begin{definition}
   For a matrix $\AA$ and for $p \geq 1$, we define the $\ell_{p}$
   Lewis weights $\{\lewis{p,i}\}_{i}$ to be the unique weights such
   that
   \[
     \lewis{p, i} = \lewis{2, i}(\diag{\lewis{p, i}}^{1/2-1/p} \AA).
   \]
   Equivalently,
   \[
     \aa_i^\top \left(\AA^\top \diag{\lewis{p, i}}^{1-2/p} \AA\right)^{-1}
     \aa_i = \lewis{p, i}^{2/p}.
   \]
   When the matrix $\AA$ is not obvious from the context, we will
   denote the Lewis weights by $\lewis{p,i}(\AA).$
 \end{definition}


 We use $\approxlewis{p,i}$ to denote $\beta$-approximate Lewis
 weights, i.e., $\approxlewis{p,i} \approx_{\beta} \lewis{p,i}$.
 %

 \begin{lemma}[Foster's Theorem \cite{Foster53}]\label{lem:foster} For any matrix
   $\AA \in \rea^{m \times n},$ $m\geq n$, we have $\sum_i \lewis{2, i}(\AA) = rank(\AA) \leq n.$
 \end{lemma}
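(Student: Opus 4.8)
The plan is to recognize the statistical leverage scores $\lewis{2,i}(\AA)$ as the diagonal entries of the orthogonal projection onto the column space of $\AA$, and then invoke the elementary fact that the trace of an orthogonal projection equals its rank.

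First I would pass to the thin singular value decomposition $\AA = \UU \Sigma \VV^{\top}$, where $r = \mathrm{rank}(\AA) \le n$, the matrices $\UU \in \rea^{m \times r}$ and $\VV \in \rea^{n \times r}$ have orthonormal columns, and $\Sigma \in \rea^{r \times r}$ is diagonal and positive. When $\AA$ is rank-deficient I would interpret $(\AA^{\top}\AA)^{-1}$ in the definition of $\lewis{2,i}$ as the Moore--Penrose pseudoinverse $(\AA^{\top}\AA)^{+} = \VV \Sigma^{-2} \VV^{\top}$; this is the natural reading and keeps $\lewis{2,i}(\AA) = \norm{(\AA^{\top}\AA)^{-1/2}\aa_i}_2^2$ well-defined. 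Equivalently one may restrict to a maximal linearly independent subset of columns, which changes neither the column space nor any leverage score.

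Next I would compute directly. Writing $\uu_i^{\top}$ for the $i$-th row of $\UU$, we have $\aa_i = \VV \Sigma \uu_i$, so using $\UU^{\top}\UU = \VV^{\top}\VV = I_r$,
\[
  \lewis{2,i}(\AA) = \aa_i^{\top}(\AA^{\top}\AA)^{+}\aa_i = \uu_i^{\top}\Sigma \VV^{\top}\VV \Sigma^{-2}\VV^{\top}\VV \Sigma \uu_i = \uu_i^{\top}\uu_i = \norm{\uu_i}_2^2 .
\]
Summing over $i$ and using that $\UU$ has orthonormal columns,
\[
  \sum_i \lewis{2,i}(\AA) = \sum_i \norm{\uu_i}_2^2 = \norm{\UU}_F^2 = \mathrm{tr}(\UU^{\top}\UU) = r = \mathrm{rank}(\AA) \le n .
\]
Equivalently, $\sum_i \lewis{2,i}(\AA) = \mathrm{tr}\!\big(\AA(\AA^{\top}\AA)^{+}\AA^{\top}\big)$, and $\AA(\AA^{\top}\AA)^{+}\AA^{\top} = \UU\UU^{\top}$ is the orthogonal projector onto $\mathrm{col}(\AA)$, whose trace is its rank $r$.

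There is essentially no hard step here; the only point requiring a line of care is making sense of $(\AA^{\top}\AA)^{-1}$ when $\mathrm{rank}(\AA) < n$, which the pseudoinverse (or the reduction to a linearly independent subset of columns) handles cleanly.
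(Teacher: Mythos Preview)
Your proof is correct and is the standard argument: leverage scores are the diagonal entries of the orthogonal projector $\AA(\AA^{\top}\AA)^{+}\AA^{\top}$ onto the column space of $\AA$, and the trace of that projector is its rank. The paper does not actually supply a proof of this lemma; it simply states it as Foster's Theorem with a citation, so there is nothing to compare against.
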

 As a simple corollary, we get that the $\ell_p$ Lewis weights also
 sum to $n.$
  \begin{corollary}\label{cor:FosterLewis}
  For any matrix
    $\AA \in \rea^{m \times n},$ $m \geq n,$ and any $p,$ we have
    $\sum_i \lewis{p, i}(\AA) \leq n.$
  \end{corollary}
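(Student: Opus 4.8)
The plan is to observe that the fixed-point identity defining the $\ell_p$ Lewis weights exhibits them directly as ordinary $\ell_2$ leverage scores of a reweighted copy of $\AA$, and then to invoke Foster's theorem (Lemma~\ref{lem:foster}).

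Concretely, I would first appeal to the existence and uniqueness of the $\ell_p$ Lewis weights of $\AA$ (which is part of the definition), so that the diagonal matrix $\WW = \diag{\lewis{p,i}}^{1/2 - 1/p}$ is well-defined. By the defining relation, for every row index $i$ we have $\lewis{p,i}(\AA) = \lewis{2,i}(\WW\AA)$. Summing over $i$ and applying Lemma~\ref{lem:foster} to the matrix $\WW\AA \in \rea^{m \times n}$ then yields
\[
  \sum_i \lewis{p,i}(\AA) = \sum_i \lewis{2,i}(\WW\AA) = rank(\WW\AA) \leq n,
\]
which is exactly the claim.

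There is essentially no obstacle here: all of the content sits in Foster's theorem, and the only point to state carefully is that $\WW$ is well-defined, which follows from uniqueness of the Lewis weights (rows $\aa_i = \vzero$, which have $\lewis{p,i} = 0$, contribute nothing and can be discarded first). If one wished to avoid invoking exact existence, one could instead run the same computation with $\beta$-approximate Lewis weights $\approxlewis{p,i}$ in place of $\lewis{p,i}$, losing only a $\poly(\beta)$ factor in the resulting bound; but for the exact inequality $\sum_i \lewis{p,i}(\AA) \leq n$ the one-line argument above suffices.
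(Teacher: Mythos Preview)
Your argument is correct and is essentially identical to the paper's proof: both set $\WW = \diag{\lewis{p,i}}^{1/2-1/p}$, rewrite $\lewis{p,i}(\AA) = \lewis{2,i}(\WW\AA)$ via the defining fixed-point relation, and then apply Lemma~\ref{lem:foster} to conclude $\sum_i \lewis{p,i}(\AA) = \operatorname{rank}(\WW\AA) \leq n$. Your additional remarks about well-definedness of $\WW$ and the approximate-weight variant are reasonable but not needed for the core argument.
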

 \begin{proof}
 By definition and existence of the Lewis weights,
 \[
 \sum_i \lewis{p, i}(\AA) = \sum_i \lewis{2, i}(\diag{\lewis{p, i}}^{1/2-1/p}\AA) = rank(\diag{\lewis{p, i}}^{1/2-1/p}\AA) \leq n,
 \]
 where the second equality follows from Lemma \ref{lem:foster}.
 \end{proof}

As we will see, having access to $\lewis{2,i}$ would allow us to determine a spectral approximation to $\AA$, though with many fewer rows. Unfortunately, the naive approach to calculating them requires computing $(\AA^\top \AA)^+$, which would defeat the purpose of finding a smaller spectral approximation in the first place. Thus, the key insight of work by \cite{CohenLMMPS15} is that a certain \emph{uniform sampling-based} approach is sufficient to determine approximate leverage scores, as established by the following important lemma.

 \begin{lemma}[Lemma 7, \cite{CohenLMMPS15}]\label{lem:ApproxLeverageScores}
Given matrix $\AA$, $p \in [2,4)$, $\theta < 1$, and a matrix $\BB$ containing $O(n\log(n))$ rescaled rows of $\AA$, there is an algorithm that, w.h.p. in $n$, computes $n^\theta$-approximate $\lewis{2,i}$ for $\AA$ in time $O((\LSS(\BB) + \nnz(\AA))\theta^{-1})$, where $\LSS(\BB)$ is the time required to solve a linear equation in $\BB^{\top} \BB$ to quasipolynomial accuracy.
 \end{lemma}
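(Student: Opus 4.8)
This is Lemma~7 of~\cite{CohenLMMPS15}; we recall its proof. The core identity is
$\lewis{2,i}(\AA)=\aa_i^\top(\AA^\top\AA)^{+}\aa_i=\norm{\AA(\AA^\top\AA)^{+}\aa_i}_2^2$,
and the point is that one can substitute the much smaller matrix $\BB$ for $\AA$ in the middle factor at a bounded loss: since $\BB$ is a (rescaled) row-subset of $\AA$, the Gram matrices $\BB^\top\BB$ and $\AA^\top\AA$ are comparable up to a bounded factor, so the surrogate scores $\aa_i^\top(\BB^\top\BB)^{+}\aa_i$ sandwich $\lewis{2,i}(\AA)$ up to that factor, and their sum equals $\operatorname{tr}\!\big((\BB^\top\BB)^{+}\AA^\top\AA\big)$, which is likewise controlled. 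To evaluate these surrogates for all $m$ rows at once without ever forming $(\BB^\top\BB)^{+}$, draw a Johnson--Lindenstrauss matrix $\GG\in\rea^{q\times n_{\BB}}$ with $q=O(\log n)$ (where $n_\BB=O(n\log n)$ is the number of rows of $\BB$), compute $\GG\BB$ in time $O(\nnz(\BB)\log n)=\Otil(\nnz(\AA))$, apply $(\BB^\top\BB)^{+}$ to each of its $q$ rows using $q$ solver calls (cost $O(\LSS(\BB)\log n)$), form $\MM=\GG\BB(\BB^\top\BB)^{+}\in\rea^{q\times n}$, and output $\approxlewis{2,i}=\norm{\MM\aa_i}_2^2$ for all $i$ in $O(\nnz(\AA)\log n)$ total time; a union bound over the JL distortion gives $\approxlewis{2,i}\approx_{O(1)}\aa_i^\top(\BB^\top\BB)^{+}\aa_i$ for every $i$ with high probability.

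Because the given $\BB$ is only a coarse approximation of $\AA$, one application of this estimator guarantees surrogate scores that are correct only up to a possibly large ($\poly(n)$) factor. The fix is $O(\theta^{-1})$ rounds of refinement: using the current surrogate scores, perform one step of leverage-score sampling of $\AA$, but capped so that only $O(n\log n)$ rescaled rows survive (equivalently, sample row $i$ with probability proportional to $\min\{1,\approxlewis{2,i}/n^{\theta}\}$ and rescale), then re-run the JL estimator above with the freshly sampled matrix in place of $\BB$. A matrix-concentration argument --- the ``repeated halving'' analysis of~\cite{CohenLMMPS15} --- shows that, with high probability, $O(\theta^{-1})$ such rounds suffice to bring the surrogates within an $n^{\theta}$ factor of $\lewis{2,i}(\AA)$ while the number of rescaled rows never exceeds $O(n\log n)$ (so every solve stays a solve in a matrix of that size). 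Each round costs $\Otil(\LSS(\BB)+\nnz(\AA))$, so the total is $O\!\big((\LSS(\BB)+\nnz(\AA))\theta^{-1}\big)$ after absorbing polylogarithmic factors, as claimed.

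The main obstacle is precisely the matrix-concentration step: one has to show that capped leverage-score sampling driven by over-estimates of total mass $O(n^{1+\theta})$ yields, with high probability, a matrix of only $O(n\log n)$ rescaled rows whose Gram matrix is an $n^{\theta}$-spectral approximation of $\AA^\top\AA$, and that the leverage-score surrogates it induces are again valid over-estimates with total mass $O(n^{1+\theta})$ --- so that neither the row count nor the distortion grows over the $O(\theta^{-1})$ rounds. Everything else --- the Johnson--Lindenstrauss bound and the bookkeeping of solver calls and sparse matrix--vector products --- is routine. (The hypothesis $p\in[2,4)$ plays no role in this estimator; it is carried along only because the lemma is used as a subroutine inside the $\ell_p$ Lewis-weight computation.)
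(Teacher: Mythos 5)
The paper does not actually prove this statement: it is imported verbatim (as ``Lemma 7'') from \cite{CohenLMMPS15}, so there is no internal proof to compare yours against. Judged on its own terms, your first paragraph has the right core ingredient --- estimate the generalized leverage scores $\aa_i^\top(\BB^\top\BB)^{+}\aa_i$ for all $i$ simultaneously by sketching $\BB(\BB^\top\BB)^{+}$ with a Johnson--Lindenstrauss matrix, at a cost of one solve in $\BB^\top\BB$ plus one pass over $\AA$ per sketch dimension. But your justification that these surrogates approximate $\lewis{2,i}(\AA)$ --- ``since $\BB$ is a rescaled row-subset of $\AA$, the Gram matrices are comparable up to a bounded factor'' --- is false for an arbitrary rescaled row subset (take all rows from a proper subspace of the row space). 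The lemma implicitly assumes $\BB$ is a spectral approximation of $\AA$, which is what the sampling routines of \cite{CohenLMMPS15} that produce $\BB$ guarantee; the paper's paraphrase simply omits this hypothesis.

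The larger issue is your second paragraph. The $\theta^{-1}$ in the running time is not a number of refinement rounds; it is the sketch dimension. To certify only an $n^{\theta}$-factor distortion with high probability over $\poly(n)$ fixed vectors, a Gaussian sketch of dimension $q=O(\theta^{-1})$ suffices (the chi-squared tail bound needs $q\log(n^{\theta})\gtrsim\log n$), and each of the $q$ dimensions costs $\LSS(\BB)+\nnz(\AA)$, which reproduces the stated bound $O((\LSS(\BB)+\nnz(\AA))\theta^{-1})$ exactly, with no iteration. Your version instead transplants the repeated-halving/undersampling machinery of \cite{CohenLMMPS15} --- which is how one \emph{constructs} $\BB$, not how one uses it once it is given --- into the proof of this lemma, and the scheme you describe (cap the sampling probabilities at $\approxlewis{2,i}/n^{\theta}$, resample, repeat $O(\theta^{-1})$ times) is asserted to converge without argument; you yourself flag that concentration step as the main obstacle. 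So the proposal is not a complete proof: its crux is an unproven claim, and it is not the mechanism the cited lemma actually rests on.
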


While the previous lemma provides approximations to $\lewis{2,i}$, it was later shown by \cite{CohenP15} that in fact we may use such a routine as a black-box for determining approximations $\lewis{p,i}(\AA)$, for $p \in [2,4)$.

 \begin{lemma}[Lemma 2.4, \cite{CohenP15}]\label{lem:ApproxLewisWeights}
   For any fixed $p<4$, given a routine {\sc ApproxLeverageScores} for
   computing, with high probability in $n$, $\beta$-approximate statistical leverage scores of rows
   of matrices of the form $\WW\AA$ for $\beta = n^{\theta\frac{1-|p/2 -1|}{p}}$,
   there is an algorithm {\sc ApproxLewisWeights} $n^{\theta}$-approximate $\ell_p$ Lewis
   weights for $\AA$ with $O\left(\frac{\log(\theta^{-1})}{1-|p/2-1|}\right)$ calls to {\sc
     ApproxLeverageScores}.
 \end{lemma}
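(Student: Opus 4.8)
The plan is to follow the fixed‑point iteration of Cohen and Peng. Recall that the vector of $\ell_p$ Lewis weights $\ttau_p$ is characterized by $\lewis{p,i} = \lewis{2,i}(\diag{\ttau_p}^{1/2-1/p}\AA)$; writing $\WW = \diag{w}$ and $M(w) = \AA^\top \WW^{1-2/p}\AA = \sum_j w_j^{1-2/p}\aa_j\aa_j^\top$, this equation is equivalent to $w_i = (\aa_i^\top M(w)^{-1}\aa_i)^{p/2}$. So I would define the iteration $T(w)_i := (\aa_i^\top M(w)^{-1}\aa_i)^{p/2}$, whose unique positive fixed point is $\ttau_p$ (existence/uniqueness can be cited from \cite{CohenP15}). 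The point that makes the black‑box reduction work is that one step of $T$ is computable from leverage scores of a reweighting of $\AA$: since $\lewis{2,i}(\WW^{1/2-1/p}\AA) = w_i^{1-2/p}\,\aa_i^\top M(w)^{-1}\aa_i$, given $\beta$‑approximate leverage scores $\widetilde{\ell}_i \approx_\beta \lewis{2,i}(\WW^{1/2-1/p}\AA)$ of the matrix $\WW^{1/2-1/p}\AA$ — which has exactly the form $\WW'\AA$ required by {\sc ApproxLeverageScores} — we can set $w_i^{\mathrm{new}} = (\widetilde\ell_i\, w_i^{2/p-1})^{p/2} = \widetilde\ell_i^{\,p/2}\,w_i^{1-p/2}$, a $\beta^{p/2}$‑multiplicative approximation of $T(w)_i$. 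I would first record that this update keeps all weights positive and, under the paper's quasipolynomial boundedness assumptions, within a quasipolynomial range, so that the preconditions of {\sc ApproxLeverageScores} hold at every round.

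The key step is to show $T$ is a contraction in logarithmic coordinates. Put $u = \log w$, so $M = \sum_j e^{(1-2/p)u_j}\aa_j\aa_j^\top$. Differentiating, $\partial_{u_j} M = (1-2/p)\,w_j^{1-2/p}\aa_j\aa_j^\top$, hence $\partial_{u_j}\log(\aa_i^\top M^{-1}\aa_i) = -(1-2/p)\,w_j^{1-2/p}(\aa_i^\top M^{-1}\aa_j)^2/(\aa_i^\top M^{-1}\aa_i)$, and so $\partial_{u_j}\log T(e^u)_i = -(p/2-1)\,w_j^{1-2/p}(\aa_i^\top M^{-1}\aa_j)^2/(\aa_i^\top M^{-1}\aa_i)$. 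These partials all have the same sign, and $\sum_j w_j^{1-2/p}(\aa_i^\top M^{-1}\aa_j)^2 = \aa_i^\top M^{-1}\big(\sum_j w_j^{1-2/p}\aa_j\aa_j^\top\big)M^{-1}\aa_i = \aa_i^\top M^{-1}\aa_i$, so every row of the Jacobian of $u\mapsto\log T(e^u)$ has $\ell_1$‑norm exactly $|p/2-1|$. By the mean value theorem along the segment, $\|\log T(e^u)-\log T(e^{u'})\|_\infty \le |p/2-1|\,\|u-u'\|_\infty$, and $|p/2-1| < 1$ precisely because $p<4$; write $\rho := |p/2-1|$.

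With the contraction in hand, the remainder is a standard perturbed‑iteration estimate. The approximate update is the exact map $\log T(e^{\cdot})$ composed with an additive perturbation of $\ell_\infty$‑size at most $\tfrac p2\log\beta$ in log‑coordinates, so if $u^\star = \log\ttau_p$ and $u^{(k)}$ is the $k$‑th iterate then $\|u^{(k)}-u^\star\|_\infty \le \rho^k\|u^{(0)}-u^\star\|_\infty + \tfrac{p\log\beta}{2(1-\rho)}$. Warm‑starting from uniform weights makes $\|u^{(0)}-u^\star\|_\infty = \polylog(n)$ under the boundedness assumptions, so taking $k = O\!\left(\tfrac{\log(1/\theta)}{1-\rho}\right)$ drives the first term below $\tfrac\theta2\log n$, and choosing $\beta = n^{\theta(1-\rho)/p}$ makes the second term at most $\tfrac\theta2\log n$; hence $\|u^{(k)}-u^\star\|_\infty \le \theta\log n$, i.e. $w^{(k)} \approx_{n^\theta} \ttau_p$. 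Each round costs one call to {\sc ApproxLeverageScores} on $\WW^{1/2-1/p}\AA$ plus $O(\nnz(\AA))$ work to form that matrix and rescale, giving the claimed $O\!\left(\log(\theta^{-1})/(1-|p/2-1|)\right)$ calls.

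I expect the main obstacle to be the contraction bound: extracting the sharp constant $|p/2-1|$ (rather than a lossier factor like $2|1-2/p|$) requires precisely the iteration $w \mapsto (\aa_i^\top M(w)^{-1}\aa_i)^{p/2}$ and the cancellation identity $\sum_j w_j^{1-2/p}(\aa_i^\top M^{-1}\aa_j)^2 = \aa_i^\top M^{-1}\aa_i$; getting this wrong inflates the round count. A secondary technicality is balancing the two error sources so that the required per‑call accuracy $\beta$ sharpens at exactly the rate $n^{\Theta(\theta(1-\rho)/p)}$ needed to survive the $\Theta(1/(1-\rho))$ rounds, together with controlling the warm‑start error $\|u^{(0)}-u^\star\|_\infty$ and verifying the iterates remain in a range where {\sc ApproxLeverageScores} applies throughout.
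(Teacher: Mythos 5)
This lemma is not proved in the paper at all: it is imported verbatim as Lemma~2.4 of \cite{CohenP15}, so there is no internal argument to compare against, and what you have written is essentially a reconstruction of Cohen and Peng's proof. Your reconstruction is correct: the fixed point $w_i=(\aa_i^\top M(w)^{-1}\aa_i)^{p/2}$ with $M(w)=\AA^\top \WW^{1-2/p}\AA$, the implementation of one step via leverage scores of $\WW^{1/2-1/p}\AA$ at multiplicative cost $\beta^{p/2}$, the contraction factor $|p/2-1|$ in the $\ell_\infty$ log-metric, and the geometric accumulation of per-step errors do combine to give exactly the stated $\beta=n^{\theta(1-|p/2-1|)/p}$ and round count. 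Two remarks on how your route differs from the source. First, \cite{CohenP15} obtain the contraction constant without any calculus, via Loewner monotonicity: if $w\approx_\alpha w'$ entrywise then $M(w)\approx_{\alpha^{|1-2/p|}}M(w')$ in the positive-semidefinite order, hence $\aa_i^\top M(w)^{-1}\aa_i\approx_{\alpha^{|1-2/p|}}\aa_i^\top M(w')^{-1}\aa_i$ and $T(w)\approx_{\alpha^{|p/2-1|}}T(w')$; your Jacobian computation with the cancellation identity $\sum_j w_j^{1-2/p}(\aa_i^\top M^{-1}\aa_j)^2=\aa_i^\top M^{-1}\aa_i$ is a correct but heavier way to get the same sharp constant. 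Second, the round count you derive is really $O\bigl((\log(1/\theta)+\log D_0)/(1-|p/2-1|)\bigr)$ where $D_0=\norm{u^{(0)}-u^\star}_\infty$ is the warm-start distance; under the quasipolynomial boundedness assumed throughout the paper $D_0=\polylog(n)$, so the extra term is $O(\log\log n)$ and is absorbed wherever the lemma is invoked (e.g.\ in Theorem~\ref{thm:ComputeApproxLewis}), but it should be stated rather than silently dropped.
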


Combining these two lemmas, we arrive at the following overall computational cost for finding $\approxlewis{p,i}$.

  \begin{theorem}\label{thm:ComputeApproxLewis}
Given matrix $\AA$, $p \in [2,4)$, $\theta < 1$, there is an algorithm that computes $n^\theta$-approximate $\lewis{p,i}$ for $\AA$ in time 
\[
O\left(\frac{p}{(1-|p/2-1|)^2}(\LSS(\BB) + \nnz(\AA))\theta^{-1}\log(\theta^{-1})\right),
\]
where $\LSS(\BB)$ is the time required to solve a linear
equation in $\BB^{\top} \BB$ to quasipolynomial accuracy for some matrix $\BB$ containing $O(n\log(n))$ rescaled rows of $\AA$.
 \end{theorem}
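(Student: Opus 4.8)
The plan is to combine Lemma~\ref{lem:ApproxLeverageScores} and Lemma~\ref{lem:ApproxLewisWeights}, tuning the internal accuracy parameters so that the leverage-score subroutine is invoked at exactly the quality the Lewis-weight iteration demands. Concretely, Lemma~\ref{lem:ApproxLewisWeights} reduces computing $n^\theta$-approximate $\ell_p$ Lewis weights of $\AA$ to $O\left(\frac{\log(\theta^{-1})}{1-|p/2-1|}\right)$ calls to a subroutine {\sc ApproxLeverageScores} that returns $\beta$-approximate statistical leverage scores of reweighted matrices $\WW\AA$, for $\beta = n^{\theta\frac{1-|p/2-1|}{p}}$. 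So it suffices to supply such a subroutine, bound its per-call running time, and multiply.

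First I would instantiate {\sc ApproxLeverageScores} via Lemma~\ref{lem:ApproxLeverageScores}, applied with $\WW\AA$ in place of $\AA$ and with its accuracy exponent $\theta'$ chosen so that $n^{\theta'}=\beta$, i.e.\ $\theta' = \theta\cdot\frac{1-|p/2-1|}{p}$. Since $\WW$ is diagonal, every row of $\WW\AA$ is a rescaling of the corresponding row of $\AA$, so the matrix $\BB$ of $O(n\log n)$ rescaled rows that Lemma~\ref{lem:ApproxLeverageScores} requires consists of $O(n\log n)$ rescaled rows of $\AA$; all such $\BB$ have dimensions $O(n\log n)\times n$, so a single quantity $\LSS(\BB)$ upper bounds every solve arising across the calls. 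Lemma~\ref{lem:ApproxLeverageScores} then gives per-call time
\[
O\left((\LSS(\BB)+\nnz(\AA))\,(\theta')^{-1}\right) = O\left((\LSS(\BB)+\nnz(\AA))\,\frac{p}{\theta(1-|p/2-1|)}\right).
\]
Multiplying by the number $O\left(\frac{\log(\theta^{-1})}{1-|p/2-1|}\right)$ of calls yields
\[
O\left(\frac{p}{(1-|p/2-1|)^2}\,(\LSS(\BB)+\nnz(\AA))\,\theta^{-1}\log(\theta^{-1})\right),
\]
which is the claimed running time; correctness of the output (that it is $n^\theta$-approximate) is immediate from Lemma~\ref{lem:ApproxLewisWeights} once the supplied routine meets its required approximation factor $\beta$ exactly.

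The main thing to be careful about is the bookkeeping around $\BB$ and the conditioning of the reweighted matrices. One should check that the diagonal scalings $\WW$ generated inside the Lewis-weight recursion stay polynomially bounded — equivalently that the Lewis weights are polynomially bounded above and below, which follows from Corollary~\ref{cor:FosterLewis} (they sum to at most $n$) together with $p$ bounded and the input being quasipolynomially bounded — so that the recursive uniform-sampling preprocessing underlying Lemma~\ref{lem:ApproxLeverageScores} produces a valid spectral sparsifier $\BB$ of $\WW\AA$ on each call without the solve time degrading. Once that is in hand, the rest is the direct parameter substitution above; this is the only delicate step, and the arithmetic of composing the two lemmas is routine.
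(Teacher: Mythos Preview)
Your proposal is correct and follows essentially the same approach as the paper: the paper's proof is a one-liner stating that the theorem follows immediately from Lemmas~\ref{lem:ApproxLeverageScores} and~\ref{lem:ApproxLewisWeights}, and your argument supplies exactly that composition with the arithmetic spelled out. Your choice of $\theta' = \theta\cdot\frac{1-|p/2-1|}{p}$ and the resulting running-time product are precisely what the terse paper proof implicitly relies on.
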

\begin{proof}
The theorem follows immediately from Lemmas \ref{lem:ApproxLeverageScores} and \ref{lem:ApproxLewisWeights}.
\end{proof}

 \begin{lemma}[$\ell_2$ Matrix Concentration Bound (Lemma 4, \cite{CohenLMMPS15})]
   \label{lem:l2-wts}
   There exists an absolute constant $C_{2}$ such that for any matrix
   $\AA \in \rea^{m \times n},$ and any set of sampling values $\nu_i$
   satisfying
   \[
     \nu_i \geq \lewis{2, i}(\AA) \cdot C_{2}\epsilon^{-2} \log n,
   \]
   if we generate a matrix $\SS$ with $N = \sum_i \nu_i$ rows, each
   chosen independently as the $i^{th}$ standard basis vector times
   $\frac{1}{\sqrt{\nu_i}}$, with probability $\frac{\nu_i}{N}$, then
   with probability at least $1-\frac{1}{n^{\Omega(1)}}$ we have
   \[
     \forall \xx \in \rea^{n},\ \|\SS\AA\xx\|_2
     \approx_{1+\epsilon}\|\AA\xx\|_2.
   \]
 \end{lemma}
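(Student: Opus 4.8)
The plan is to recognize this as a standard matrix-concentration statement (it is Lemma~4 of \cite{CohenLMMPS15}) and to prove it by a matrix Chernoff bound applied to an isotropic rescaling of $\AA$. First I would assume $\AA$ has full column rank; otherwise restrict attention to the row space of $\AA$, replacing $(\AA^{\top}\AA)^{-1}$ by the pseudoinverse on that subspace, which is harmless since every quantity in the statement depends on $\xx$ only through $\AA\xx$. Set $\MM = (\AA^{\top}\AA)^{-1/2}$ and $\bb_i = \MM\aa_i$, so that $\sum_i \bb_i\bb_i^{\top} = \II$ and $\norm{\bb_i}_2^2 = \aa_i^{\top}(\AA^{\top}\AA)^{-1}\aa_i = \lewis{2,i}(\AA)$.

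Next I would rewrite the desired conclusion spectrally. Writing $i_1,\dots,i_N$ for the i.i.d.\ indices of the standard basis vectors selected for the $N$ rows of $\SS$, the $k$-th row of $\SS\AA\MM$ is $\nu_{i_k}^{-1/2}\bb_{i_k}^{\top}$, so $(\SS\AA\MM)^{\top}(\SS\AA\MM) = \sum_{k=1}^{N}\XX_k$ with $\XX_k = \nu_{i_k}^{-1}\bb_{i_k}\bb_{i_k}^{\top}$. The $\XX_k$ are independent PSD rank-one matrices; using $\Pr[i_k = i] = \nu_i/N$ we get $\E[\XX_k] = \tfrac1N\sum_i \bb_i\bb_i^{\top} = \tfrac1N\II$, hence $\E\big[\sum_k \XX_k\big] = \II$. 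The hypothesis $\nu_i \geq C_2\epsilon^{-2}\log n \cdot \lewis{2,i}(\AA)$ gives the deterministic bound $\norm{\XX_k} = \lewis{2,i_k}(\AA)/\nu_{i_k} \leq \epsilon^2/(C_2\log n)$.

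Now apply the matrix Chernoff inequality (e.g.\ Tropp's) to $\sum_k \XX_k$, with $\mu_{\min} = \mu_{\max} = 1$, per-term operator-norm bound $R = \epsilon^2/(C_2\log n)$, and deviation parameter $\epsilon$: both tail events have probability at most $n\cdot\exp(-\Omega(\epsilon^2)\cdot \mu_{\max}/R) = n\cdot\exp(-\Omega(C_2)\log n) = n^{1-\Omega(C_2)}$, which is $n^{-\Omega(1)}$ once $C_2$ is a large enough absolute constant. On this event, $(1-\epsilon)\II \preceq (\SS\AA\MM)^{\top}(\SS\AA\MM) \preceq (1+\epsilon)\II$, i.e.\ $\norm{\SS\AA\MM\yy}_2^2 \approx_{1+\epsilon}\norm{\yy}_2^2$ for all $\yy$; substituting $\yy = \MM^{-1}\xx$, so $\norm{\yy}_2^2 = \xx^{\top}\AA^{\top}\AA\xx = \norm{\AA\xx}_2^2$, then taking square roots and rescaling $\epsilon$ by a constant, yields $\norm{\SS\AA\xx}_2 \approx_{1+\epsilon}\norm{\AA\xx}_2$ simultaneously for all $\xx$.

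I do not expect a real obstacle: the argument is routine once phrased in isotropic coordinates, and the "for all $\xx$'' part is exactly what the spectral formulation delivers for free. The only points needing care are the rank-deficient case, handled by passing to the row space of $\AA$ as above, and the constant-factor bookkeeping between the squared-norm spectral statement produced by matrix Chernoff and the stated norm guarantee. If one prefers to avoid matrix Chernoff, matrix Bernstein applied to the centered terms $\XX_k - \tfrac1N\II$, using the same operator-norm bound and the variance estimate $\norm{\sum_k \E[\XX_k^2]} \leq R$, gives the same conclusion.
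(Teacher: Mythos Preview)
The paper does not supply its own proof of this lemma: it is quoted verbatim as Lemma~4 of \cite{CohenLMMPS15} and used as a black box. Your argument is correct and is precisely the standard proof underlying that cited result---isotropize via $(\AA^{\top}\AA)^{-1/2}$, observe that the sampled Gram matrix is a sum of i.i.d.\ rank-one PSD matrices with mean $\tfrac{1}{N}\II$ and operator norm bounded by $\epsilon^2/(C_2\log n)$, and conclude by matrix Chernoff.
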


 \begin{lemma}[\cite{BourgainLM89},~\cite{CohenP15}(Lemma 7.1)]
   \label{lem:lp-wts}
   For $p \ge 2$, there exists an absolute constant $C_{p}$ such that
   for any matrix $\AA \in \rea^{m \times n},$ and any set of sampling
   values $\nu_i$ satisfying
   \[
     \nu_i \geq \lewis{p, i}(\AA) \cdot C_{p} n^{\frac{p}{2}-1}
     \epsilon^{-5} \log n \log\nfrac{1}{\epsilon},
   \]
   if we generate a matrix $\SS$ with $N = \sum_i \nu_i$ rows, each
   chosen independently as the $i^{th}$ standard basis vector times
   $\frac{1}{\nu_i^{1/p}}$, with probability $\frac{\nu_i}{N}$, then
   with probability at least $1-\frac{1}{n^{\Omega(1)}}$ we have
   \[
     \forall \xx \in \rea^{n},\ \|\SS\AA\xx\|_p
     \approx_{1+\epsilon}\|\AA\xx\|_p.
   \]
 \end{lemma}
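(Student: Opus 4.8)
The plan is to reproduce the Lewis-weight sampling argument of Cohen–Peng \cite{CohenP15}, which adapts the Bourgain–Lindenstrauss–Milman chaining technique. Write $\lewis{p,i} = \lewis{p,i}(\AA)$ and let $\MM = \AA^\top \diag{\lewis{p,\cdot}}^{1-2/p}\AA$. The Lewis fixed-point equation gives $\aa_i^\top \MM^{-1}\aa_i = \lewis{p,i}^{2/p}$, so putting $\bb_i = \MM^{-1/2}\aa_i$ and $\yy = \MM^{1/2}\xx$ we have $\aa_i^\top\xx = \bb_i^\top\yy$, $\norm{\bb_i}_2 = \lewis{p,i}^{1/p}$, and $\sum_i \lewis{p,i}^{1-2/p}\bb_i\bb_i^\top = \II$. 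Because Lewis weights, leverage scores, and the sampling distribution are all invariant under right-multiplication of $\AA$ by an invertible matrix, it suffices to prove the claim for the matrix $\BB$ with rows $\bb_i$; that is, we may assume $\AA$ is in \emph{Lewis isotropic position}. In that position two elementary facts hold via Hölder and Corollary~\ref{cor:FosterLewis}: (i) $\norm{\yy}_2^2 = \sum_i \lewis{p,i}^{1-2/p}(\bb_i^\top\yy)^2 \le n^{1-2/p}\norm{\BB\yy}_p^2$; and (ii) $\lvert\bb_i^\top\yy\rvert \le \lewis{p,i}^{1/p}\norm{\yy}_2 \le \lewis{p,i}^{1/p} n^{1/2-1/p}\norm{\BB\yy}_p$, so after the oversampling by $n^{p/2-1}$ the quantities $\nu_i^{-1}\lvert\bb_i^\top\yy\rvert^p$ are uniformly tiny (this is the "flatness" the argument needs).

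Next I would set up the empirical process. Normalizing by homogeneity to $\norm{\BB\yy}_p = 1$, let $T = \{\BB\yy : \norm{\BB\yy}_p = 1\}\subset\rea^m$; writing $\SS$ as a sum of $N = \sum_i\nu_i$ i.i.d.\ rescaled standard basis vectors, $\norm{\SS\AA\xx}_p^p = \sum_{k=1}^N \nu_{i_k}^{-1}\lvert\bb_{i_k}^\top\yy\rvert^p$ is an unbiased estimate of $\norm{\BB\yy}_p^p$, and the goal is to bound $Z := \sup_{\yy\in T}\bigl\lvert \sum_{k}\nu_{i_k}^{-1}\lvert\bb_{i_k}^\top\yy\rvert^p - 1\bigr\rvert$ by $\epsilon$ with high probability. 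First I would symmetrize, introducing Rademacher signs $\varepsilon_k$, and then apply the Ledoux–Talagrand contraction principle to the functions $t\mapsto\lvert t\rvert^p$, which on the relevant range are Lipschitz with a constant controlled by the flatness bound (ii); this reduces $\E Z$ to a Rademacher process that is essentially linear in $\yy$, with coefficients of size $O(\lvert\bb_i^\top\yy\rvert^{p-1})$.

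Then I would bound that Rademacher process by Dudley's entropy integral, decomposing the metric on $T$ into an $\ell_2$ piece at small scales and an $\ell_\infty$ (equivalently $\ell_p$-after-flattening) piece at large scales; this split is the technical heart. The $\ell_2$-covering numbers obey $\log N(T,\norm{\cdot}_2,t)\lesssim n\log(1/t)$ by a volumetric bound, using fact (i) for the $\ell_2$-diameter of $T$; at large scales the flatness bound (ii) caps both how many coordinates of $\BB\yy$ can be large and how much the tail contributes, which is where the $n^{p/2-1}$ factor enters; and the $\ell_2$ mass of the sampled vectors is controlled uniformly over $T$ by applying the spectral concentration bound of Lemma~\ref{lem:l2-wts} to $\diag{\lewis{p,\cdot}}^{1/2-1/p}\AA$ (valid since the $\ell_2$ leverage scores of that matrix equal $\lewis{p,i}$ and the $\nu_i$ over-sample them). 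Assembling, the hypothesis $\nu_i \ge \lewis{p,i}\cdot C_p n^{p/2-1}\epsilon^{-5}\log n\log(1/\epsilon)$ forces $\E Z \le \epsilon/2$, and finally I would upgrade this in-expectation bound to the stated $1 - n^{-\Omega(1)}$ probability via a Talagrand/bounded-difference concentration inequality for suprema of nonnegative empirical processes, each summand being uniformly $n^{-\Omega(1)}$ by (ii). The main obstacle is exactly the chaining step: making the $\ell_2$/$\ell_\infty$ metric decomposition and the covering estimates sharp enough that the sample complexity comes out as $n^{p/2-1}$ with precisely the $\epsilon^{-5}\log n\log(1/\epsilon)$ overhead rather than a cruder polynomial loss.
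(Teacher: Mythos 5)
This lemma is imported by the paper as a black box: it is stated with citations to \cite{BourgainLM89} and \cite{CohenP15} (Lemma 7.1) and no proof is given in the paper itself, so there is no internal argument to compare against. What you have written is a sketch of the external Cohen--Peng/Bourgain--Lindenstrauss--Milman proof, and the ingredients you identify are the right ones: the reduction to Lewis isotropic position via the fixed-point identity $\aa_i^\top\MM^{-1}\aa_i = \lewis{p,i}^{2/p}$ and invariance under right-multiplication; the two Hölder consequences $\norm{\yy}_2^2 \le n^{1-2/p}\norm{\BB\yy}_p^2$ and $\abs{\bb_i^\top\yy}\le \lewis{p,i}^{1/p}n^{1/2-1/p}\norm{\BB\yy}_p$ (both computations check out, and the second is exactly what makes $\nu_i^{-1}\abs{\bb_i^\top\yy}^p$ uniformly small after the $n^{p/2-1}$ oversampling); the unbiasedness of the estimator; and the symmetrization--contraction--chaining--concentration pipeline.

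That said, as a proof your proposal is incomplete at precisely the step you yourself flag as the obstacle. The Dudley/entropy argument with the mixed $\ell_2$-at-small-scales, $\ell_\infty$-at-large-scales metric is where all the work lives: you assert the covering bound $\log N(T,\norm{\cdot}_2,t)\lesssim n\log(1/t)$ and that the large-scale contribution is "capped" by flatness, but you do not carry out the scale decomposition, the per-scale deviation bounds, or the summation that produces the stated $n^{p/2-1}\epsilon^{-5}\log n\log(1/\epsilon)$ sample complexity rather than a cruder loss. Also, a uniform bound on the $\ell_2$ mass of the sampled vectors over all of $T$ does not follow directly from Lemma~\ref{lem:l2-wts} as stated (that lemma gives a spectral guarantee for a $1/\nu_i^{1/2}$-scaled sample, whereas your $\SS$ uses $1/\nu_i^{1/p}$; one needs the rescaling bridge you set up in fact (ii) to transfer it). Since the paper treats the lemma as a citation, deferring these details is defensible, but be aware that what you have is a correct roadmap to the known proof, not a self-contained argument.
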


  \begin{lemma}\label{lem:LewisSparse}
    For $p \in [2,4),$ given matrices
    $\CC, \DD \in \rea^{m \times n},$ there exist
    $\nu_i > 0, i \in [m]$ with
    $N = \sum_i \nu_i \leq O(1) n^{\nfrac{p}{2}} \log n$ such that, if
    we generate a matrix $\SS$ with $N$ rows, each chosen
    independently as the $i^{th}$ standard basis vector times
    $\frac{1}{\nu_i^{1/p}}$ with probability $\frac{\nu_i}{N}$, then
    we can compute a diagonal matrix
    $\RR \in \rea^{N \times N}_{\ge 0}$ such that with probability at
    least $1- \frac{1}{n^{\Omega(1)}}$,
    \[
      \forall \xx \in \rea^{n},\ \|\RR\SS\CC\xx\|_2
      \approx_{2} \|\CC\xx\|_2 \textrm{ and }
      \|\SS\DD\xx\|_p \approx_{2} \|\DD\xx\|_p.
    \]
  \end{lemma}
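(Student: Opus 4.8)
The plan is to use a single random sample that simultaneously serves as an $\ell_2$-sparsifier for $\CC$ and an $\ell_p$-sparsifier for $\DD$, by choosing the sampling values $\nu_i$ to dominate the thresholds of both Lemma~\ref{lem:l2-wts} (applied to $\CC$) and Lemma~\ref{lem:lp-wts} (applied to $\DD$), and then reconciling the two slightly different per-coordinate rescalings used by these lemmas with one diagonal correction. Concretely, fix the absolute constant $\epsilon = \tfrac12$ and let $C_2, C_p$ be the constants from Lemmas~\ref{lem:l2-wts} and~\ref{lem:lp-wts}. For each $i$ set
\[
  \nu_i \;=\; \max\Bigl\{\, \lewis{2,i}(\CC)\cdot C_2\,\epsilon^{-2}\log n,\;\; \lewis{p,i}(\DD)\cdot C_p\, n^{p/2-1}\,\epsilon^{-5}\,\log n\,\log\tfrac{1}{\epsilon} \,\Bigr\},
\]
where any coordinate with $\nu_i = 0$ (rows $i$ of both $\CC$ and $\DD$ vanish) may be assigned an arbitrarily tiny positive value since it contributes nothing to either side; thus $\nu_i > 0$. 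Draw $\SS$ with $N = \sum_i \nu_i$ rows exactly as in the statement (rounding $N$ up costs at most one extra row), let $i(k)$ denote the source coordinate of the $k$-th sampled row, and define the nonnegative diagonal matrix $\RR \in \rea^{N\times N}$ by $\RR_{kk} = \nu_{i(k)}^{1/p - 1/2}$; the exponent is nonpositive because $p \ge 2$, but $\RR$ is still a legitimate positive diagonal and is computable from $\SS$ and the $\nu_i$.

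The two approximation guarantees then follow by feeding this one sample to the two concentration lemmas. By construction $\SS$ is precisely the random sketch of Lemma~\ref{lem:lp-wts} for $\DD$ (coordinate $i$ is included with probability $\nu_i/N$ and scaled by $\nu_i^{-1/p}$), and our $\nu_i$ satisfy the hypothesis $\nu_i \ge \lewis{p,i}(\DD)\,C_p\,n^{p/2-1}\epsilon^{-5}\log n\log\tfrac{1}{\epsilon}$; hence with probability $1 - n^{-\Omega(1)}$ we get $\|\SS\DD\xx\|_p \approx_{1+\epsilon} \|\DD\xx\|_p$ for all $\xx$. On the other hand, the $k$-th row of $\RR\SS$ equals $\nu_{i(k)}^{1/p-1/2}\cdot\nu_{i(k)}^{-1/p}\,\ee_{i(k)}^\top = \nu_{i(k)}^{-1/2}\,\ee_{i(k)}^\top$, with $i(k)$ still distributed according to $\nu_{i(k)}/N$, so $\RR\SS$ is distributed exactly as the sketch of Lemma~\ref{lem:l2-wts} for $\CC$ with the same $\nu_i$, which satisfy $\nu_i \ge \lewis{2,i}(\CC)\,C_2\,\epsilon^{-2}\log n$; hence with probability $1 - n^{-\Omega(1)}$ we get $\|\RR\SS\CC\xx\|_2 \approx_{1+\epsilon} \|\CC\xx\|_2$ for all $\xx$. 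A union bound keeps the total failure probability at $n^{-\Omega(1)}$, and $1+\epsilon = \tfrac32 \le 2$ gives the claimed $\approx_2$ bounds.

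It remains only to bound $N$. Using $\max\{a,b\}\le a+b$,
\[
  N \;=\; \sum_i \nu_i \;\le\; C_2\,\epsilon^{-2}\log n\sum_i \lewis{2,i}(\CC) \;+\; C_p\, n^{p/2-1}\epsilon^{-5}\log n\log\tfrac{1}{\epsilon}\sum_i \lewis{p,i}(\DD),
\]
and by Foster's theorem (Lemma~\ref{lem:foster}) the first sum is at most $n$, while by Corollary~\ref{cor:FosterLewis} the second is at most $n$; since $\epsilon$ is an absolute constant and $p \ge 2$, this is $O(n\log n) + O(n^{p/2}\log n) = O(n^{p/2}\log n)$, as claimed. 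The only real subtlety in the argument — the step I would be most careful about — is the bookkeeping that the same random draw is simultaneously valid for both lemmas: the \emph{choice} of which coordinates enter the sample depends only on the probabilities $\nu_i/N$, not on the per-row scale, so one sketch suffices provided the scalings are corrected appropriately (identity for the $\ell_p$ claim, the diagonal $\RR$ for the $\ell_2$ claim), and taking $\nu_i$ to be the coordinatewise maximum of the two thresholds is exactly what makes both lemma hypotheses hold at once. Everything else is a direct invocation of the quoted results, so there is no genuinely hard step.
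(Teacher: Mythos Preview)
Your proof is correct and follows essentially the same approach as the paper: define $\nu_i$ as the coordinatewise maximum of the two sampling thresholds, build the single sketch $\SS$, and correct the per-row scaling with the diagonal $\RR_{kk}=\nu_{i(k)}^{1/p-1/2}$ so that $\RR\SS$ matches the $\ell_2$ sketch of Lemma~\ref{lem:l2-wts} while $\SS$ itself matches the $\ell_p$ sketch of Lemma~\ref{lem:lp-wts}. The only cosmetic difference is that the paper plugs in $2$-approximate leverage scores and Lewis weights (with a view toward later computability), whereas you use the exact quantities; since the lemma only asserts the existence of the $\nu_i$, your choice is equally valid.
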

  \begin{proof}
    Let $\approxlewis{2,i}(\CC)$ be $2$-approximate leverage scores of
    $\CC$ and $\approxlewis{p,i}(\DD)$ be $2$-approximate $\ell_p$
    Lewis weights for $\DD$. Define
    \[
      \nu_i = C_{2,p} \max\left\{ \approxlewis{2, i}(\CC) \cdot\log
        n,\ \approxlewis{p, i}(\DD) \cdot n^{\frac{p}{2}-1} \log n
      \right\},
    \]
    where $C_{2,p}$ is a large enough absolute constant we specify later.
    Since
    $ \sum_i \approxlewis{2,i}(\CC) \le 2 \sum_i \lewis{2,i}(\CC) \leq
    2 n$ and
    $ \sum_i \approxlewis{p,i}(\DD) \le 2 \sum_i \lewis{p,i}(\DD) \leq
    2 n$ from Corollary~\ref{cor:FosterLewis}, we get
    $ N = \sum_i \nu_i \leq O(C_{2,p}) n^{\frac{p}{2}-1} \log n.$ Let $\SS$
    be as defined in the lemma statement, i.e.
    \[
      \SS_{ab} = \begin{cases} \frac{1}{\nu_b^{1/p}} & \text{if $b^{th}$
          basis vector is chosen for row $a$},\\ 0& \text{otherwise}.
      \end{cases}
    \]
    Let us assume for row $a$, we have chosen the $b^{th}$ basis
    vector. Now define the diagonal matrix $\RR$ as
    \[
      \RR_{aa} = \nu_b^{\frac{1}{p}-\frac{1}{2}}.
    \]
    Note that $\widetilde{\SS} = \RR\SS$ is a matrix with $N$ rows,
    each chosen independently as the $i^{th}$ standard basis vector
    times $\frac{1}{\nu_i^{1/2}}$ with probability
    $\frac{\nu_i}{N}$. We can pick $C_{2,p}$ large enough so that
    $\nu_i \geq \lewis{2, i}(\CC) \cdot C_{2} \log n,$ and we can
    apply Lemma~\ref{lem:l2-wts} for some constant $\epsilon < 1$ to
    obtain that with probability at least
    $1- \frac{1}{n^{\Omega(1)}},$ we have
    \begin{equation}
      \label{eq:1}
      \forall \xx \in \rea^{n},\
      \| \RR\SS\CC\xx\|_2 =
      \| \widetilde{\SS}\CC\xx\|_2
      \approx_{2} \|\CC\xx\|_2.
    \end{equation}
    Similarly, we can pick $C_{2,p}$ large enough so that we have
    $\nu_i \geq \lewis{p, i}(\DD) \cdot n^{\frac{p}{2}-1} \log
    n$. Thus, using Lemma \ref{lem:lp-wts}, we get that with
    probability at least $1- \frac{1}{n^{\Omega(1)}},$ we have
    \begin{equation}
      \label{eq:2}
      \forall \xx \in \rea^{n},\ 
      \|\SS\DD\xx\|_p
      \approx_{2} \|\DD\xx\|_p.
    \end{equation}
    Combining the above two claims, and applying a union bound, we
    obtain our lemma.
\end{proof}

\begin{lemma}\label{lem:ResProblemLewisApprox}
Let $p \in [2,4)$, let $\MM, \NN, \AA$ be matrices such that $\MM \in \rea^{m_1\times n}$, $\NN \in \rea^{m_2 \times n}$, $m_1, m_2 \geq n$, and $\AA \in \rea^{d \times n}$, $d \leq n$, and consider the problem
 \begin{align}\label{eq:QuadPlusP}
 \min_{\Delta}  & \quad \Delta^{\top}\MM^{\top}\MM\Delta + \|\NN\Delta\|_p^p \\
 & \text{s.t.} \quad \AA\Delta = \cc, \nonumber
 \end{align}
 with optimum at $\Delta^*$. Then, with high probability we may compute $\MMtil, \NNtil\in \rea^{O(n^{p/2}\log(n))\times n}$ such that, for a $\kappa$-approximate solution $\Dtil$ to the problem
 \begin{align}
 \min_{\Delta}  & \quad \Delta^{\top}\MMtil^{\top}\MMtil\Delta + \|\NNtil\Delta\|_p^p \\
 & \text{s.t.} \quad \AA\Delta = \cc \nonumber
 \end{align}
with optimum at $\Dtil^*$, $\Dtil$ is a $O(\kappa)$-approximate solution to \eqref{eq:QuadPlusP}.
\end{lemma}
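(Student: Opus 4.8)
The plan is to sparsify the $\ell_2^2$ part and the $\ell_p^p$ part of the objective independently using Lemma~\ref{lem:LewisSparse}, while leaving the linear constraint $\AA\Delta = \cc$ completely unchanged, and then to observe that a uniform multiplicative approximation of the objective transfers approximate optimality with only a constant-factor loss.

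Concretely, I would apply Lemma~\ref{lem:LewisSparse} once with $\CC = \DD = \MM$, keeping only its $\ell_2$ conclusion, to obtain a subsampling-and-rescaling operator whose output $\MMtil \in \rea^{O(n^{p/2}\log n)\times n}$ consists of rescaled rows of $\MM$ and satisfies $\|\MMtil\Delta\|_2 \approx_2 \|\MM\Delta\|_2$ for all $\Delta \in \rea^n$; and once more with $\CC = \DD = \NN$, keeping only its $\ell_p$ conclusion, to obtain $\NNtil \in \rea^{O(n^{p/2}\log n)\times n}$ consisting of rescaled rows of $\NN$ with $\|\NNtil\Delta\|_p \approx_2 \|\NN\Delta\|_p$ for all $\Delta$. (Equivalently one may pad $\MM$ and $\NN$ to a common number of rows with zero rows and invoke the lemma a single time; the approximate leverage scores and Lewis weights required inside Lemma~\ref{lem:LewisSparse} are produced by Theorem~\ref{thm:ComputeApproxLewis}.) Write $F(\Delta) = \|\MM\Delta\|_2^2 + \|\NN\Delta\|_p^p$ and $\widetilde F(\Delta) = \|\MMtil\Delta\|_2^2 + \|\NNtil\Delta\|_p^p$. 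Squaring the first guarantee gives $\|\MMtil\Delta\|_2^2 \approx_4 \|\MM\Delta\|_2^2$, and raising the second to the power $p$ gives $\|\NNtil\Delta\|_p^p \approx_{2^p} \|\NN\Delta\|_p^p$; since $p < 4$ we have $2^p < 16$, so summing these two pairs of nonnegative quantities yields a constant $C \le 16$ with $\widetilde F(\Delta) \approx_{C} F(\Delta)$ for every $\Delta \in \rea^n$. By a union bound this holds with probability at least $1 - n^{-\Omega(1)}$.

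Now let $\Delta^*$ be optimal for \eqref{eq:QuadPlusP} and $\Dtil^*$ optimal for the sparsified problem; both satisfy the same constraint $\AA\Delta = \cc$, and $\Dtil^*$ has finite objective because $\widetilde F(\Dtil^*) \le \widetilde F(\Delta^*) \le C\,F(\Delta^*) < \infty$. If $\Dtil$ is a $\kappa$-approximate solution of the sparsified problem, then $\AA\Dtil = \cc$ and $\widetilde F(\Dtil) \le \kappa\,\widetilde F(\Dtil^*)$, so
\[
F(\Dtil) \;\le\; C\,\widetilde F(\Dtil) \;\le\; C\kappa\,\widetilde F(\Dtil^*) \;\le\; C\kappa\,\widetilde F(\Delta^*) \;\le\; C^2\kappa\,F(\Delta^*),
\]
where the third inequality uses optimality of $\Dtil^*$ together with feasibility of $\Delta^*$. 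Since $\Dtil$ is feasible for \eqref{eq:QuadPlusP} and $C^2 = O(1)$, this shows $\Dtil$ is an $O(\kappa)$-approximate solution of \eqref{eq:QuadPlusP}, as claimed.

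I expect essentially no serious obstacle: once Lemma~\ref{lem:LewisSparse} is in place the argument is bookkeeping, and the only two points that need care are that $\AA$ and $\cc$ are not touched (so the two problems share a feasible set and feasibility of $\Dtil$ is automatic) and that converting the $\approx_2$ norm guarantee into an $O(1)$-factor guarantee on $p$-th powers costs a factor $2^p$, which is tolerable precisely because we restrict to $p \in [2,4)$. For general $p$ one would instead invoke Lemma~\ref{lem:lp-wts} with accuracy $\epsilon = \Theta(1/p)$ so that $(1+\epsilon)^p = O(1)$, at the price of a $\poly(p)$ factor more rows; that refinement is not needed here.
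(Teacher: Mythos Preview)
Your proof is correct and follows essentially the same approach as the paper. The paper invokes Lemma~\ref{lem:LewisSparse} a single time with $\CC=\MM$, $\DD=\NN$ (implicitly padding to equalize row counts) to produce one sampling operator $\SS$ and diagonal $\RR$ with $\MMtil=\RR\SS\MM$, $\NNtil=\SS\NN$, then runs the same four-step chain $F(\Dtil)\le 2^p\widetilde F(\Dtil)\le 2^p\kappa\,\widetilde F(\Dtil^*)\le 2^p\kappa\,\widetilde F(\Delta^*)\le 2^{2p}\kappa\,F(\Delta^*)\le 256\kappa\,F(\Delta^*)$; your two-invocation variant and the chain $F\le C\widetilde F\le C\kappa\widetilde F(\Dtil^*)\le C\kappa\widetilde F(\Delta^*)\le C^2\kappa F(\Delta^*)$ are the same argument with cosmetic differences.
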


\begin{proof}
Let $\MMtil = \RR\SS\MM$ and $\NNtil = \SS\NN$ be as provided by Lemma \ref{lem:LewisSparse}. It follows that
\begin{align*}
\Dtil^\top \MM^\top \MM \Dtil + \|\NN \Dtil\|_p^p \leq 2^p\left(\Dtil^\top \MMtil^\top \MMtil \Dtil + \|\NNtil \Dtil\|_p^p\right) &\leq 2^p\kappa\left(\Dtil^{*\top}\MMtil^{\top}\MMtil\Dtil^* + \|\NNtil\Dtil^*\|_p^p\right)\\
&\leq 2^p\kappa\left(\Delta^{*\top}\MMtil^{\top}\MMtil\Delta^* + \|\NNtil\Delta^*\|_p^p\right)\\
&\leq 2^{2p}\kappa\left(\Delta^{*\top}\MM^{\top}\MM\Delta^* + \|\NN\Delta^*\|_p^p\right)\\
&\leq 256\kappa\left(\Delta^{*\top}\MM^{\top}\MM\Delta^* + \|\NN\Delta^*\|_p^p\right),
\end{align*}
where the last inequality follows from our bound on $p$.
\end{proof}

We now recall the main Lewis weights-based sparsification result, Theorem \ref{thm:LewisMain} which was proven in Section~\ref{sec:MainAlgoProofs}.


This result gives us the following corollaries which distinguish the general problem from the more structured graph problem, whereby the latter may take advantage of fast Laplacian solvers. Note that, for Theorem~\ref{thm:LewisMain} and its corollaries, we use $\tilde{O}_p(\cdot)$ to suppress a $(1-|p/2-1|)^{-2}$ term, which will become large as $p$ approaches $4$.

\begin{proof}[Proof of Theorem \ref{thm:LewisWt}]
Follows from Lemmas \ref{thm:ComputeApproxLewis} and \ref{lem:ResProblemLewisApprox}.
\end{proof}

\begin{corollary}[General matrix setting]
Consider \eqref{eq:Problem} for arbitrary $\MM \in \rea^{m_1\times n}, \NN \in \rea^{m_2 \times n}$. Then, for $p \in [2,4)$, with high probability, we can find an $\eps$-approximate solution in time
\[
\tilde{O}_p\left(\left(\nnz(\MM) + \nnz(\NN) + \left(\nnz(\MMtil)+\nnz(\NNtil)+n^\omega\right)n^{\frac{p(p-2)}{6p-4}}\right)\log^2(1/\eps)\right),
\]
for some $\MMtil$ and $\NNtil$ each containing $O(n^{p/2}\log(n))$ rescaled rows of $\MM$ and $\NN$, respectively.
\end{corollary}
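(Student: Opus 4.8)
The plan is to obtain this corollary as a direct instantiation of Theorem~\ref{thm:LewisMain}, simply by substituting a concrete, general-purpose bound for the cost $\LSS(\cdot)$ of the linear-system solves that the algorithm performs. Theorem~\ref{thm:LewisMain} already gives the running time
\[
\Otil\left(\left(\nnz(\MM) + \nnz(\NN) + \left(\LSS(\MMtil) + \LSS(\NNtil)\right)n^{\frac{p(p-2)}{6p-4}}\right)\log^2(1/\eps)\right),
\]
where $\MMtil, \NNtil$ each consist of $O(n^{p/2}\log n)$ rescaled rows of $\MM, \NN$, and $\LSS(\AA)$ denotes the time to solve a linear equation in $\AA^\top\AA$ to quasipolynomial accuracy; so the only task is to replace $\LSS(\MMtil)$ and $\LSS(\NNtil)$ by explicit quantities valid for arbitrary matrices.

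For that I would invoke the bound of \cite{CohenLMMPS15} recorded in the Remark above: for $\AA \in \rea^{k\times n}$ a linear system in $\AA^\top\AA$ can be solved to quasipolynomial accuracy in time $\Otil(\nnz(\AA) + n^\omega)$. Taking $\AA = \MMtil$, then $\AA = \NNtil$ (and, for the systems inside \textsc{Residual-Solver} that couple both factors with a diagonal reweighting, the vertically stacked matrix, which has sparsity $\nnz(\MMtil)+\nnz(\NNtil)$ and still only $O(n^{p/2}\log n)$ rows), gives $\LSS(\MMtil)+\LSS(\NNtil) = \Otil(\nnz(\MMtil)+\nnz(\NNtil)+n^\omega)$. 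Substituting this into the display above and merging the two $n^\omega$ terms into one, the running time becomes
\[
\tilde{O}_p\left(\left(\nnz(\MM) + \nnz(\NN) + \left(\nnz(\MMtil)+\nnz(\NNtil)+n^\omega\right)n^{\frac{p(p-2)}{6p-4}}\right)\log^2(1/\eps)\right),
\]
which is the claimed bound; the passage from $\Otil$ to $\tilde{O}_p$ merely records the $(1-|p/2-1|)^{-2}$ factor arising in the approximate-Lewis-weight computation (Theorem~\ref{thm:ComputeApproxLewis}) used to build $\MMtil,\NNtil$ inside Theorem~\ref{thm:LewisWt}, per the convention stated just before the corollary.

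The only place any care is needed is term collection, and I expect this to be the main (but minor) obstacle rather than anything mathematical. One should check that the sparsification/Lewis-weight preprocessing, whose cost in Theorem~\ref{thm:Meta-Thm} appears as $K = \LSS(\MMhat)+\LSS(\NNhat)$ for matrices $\MMhat,\NNhat$ with only $O(n\log n)$ rows, is dominated: by the same \cite{CohenLMMPS15} bound it is $\Otil(\nnz(\MMhat)+\nnz(\NNhat)+n^\omega) = \Otil(n^\omega)$, which is absorbed since $\nnz(\MMtil)+\nnz(\NNtil)+n^\omega \ge n^\omega$ and $n^{\frac{p(p-2)}{6p-4}}\ge 1$, and it is in any case already folded into the bound of Theorem~\ref{thm:LewisMain} that we start from. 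Likewise the $p$- and $\log p$-type factors coming from Theorems~\ref{thm:Meta-Thm} and~\ref{cor:ResidualDecision} are constants here (as $p<4$) and are swallowed by $\tilde{O}_p$. Hence the corollary follows with no new ingredients beyond Theorem~\ref{thm:LewisMain} and the off-the-shelf dense linear-system solver.
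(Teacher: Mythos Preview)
Your proposal is correct and matches the paper's approach: the corollary is stated without proof in the paper and is meant as an immediate specialization of Theorem~\ref{thm:LewisMain} by plugging in the $\Otil(\nnz(\AA)+n^{\omega})$ linear-system bound from the Remark citing \cite{CohenLMMPS15}. Your additional checks on the preprocessing cost $K$ and the $\tilde{O}_p$ convention are accurate and go slightly beyond what the paper spells out, but the core derivation is identical.
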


\begin{corollary}[Graph setting]
Consider \eqref{eq:Problem} for $\MM \in \rea^{m_1\times n}, \NN \in \rea^{m_2 \times n}$, $m_1, m_2 \geq n$, given as the edge-vertex incidence matrices for some graphs. Then, for $p \in [2,4)$, with high probability, we can find an $\eps$-approximate solution to \eqref{eq:Problem} in time
\[
\tilde{O}_p\left(\left(m_1 + m_2 + n^{\frac{p}{2}\left(1+\frac{p-2}{3p-2}\right)}\right)\log^2(1/\eps)\right).
\]
\end{corollary}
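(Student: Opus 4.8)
The plan is to obtain this corollary as a direct specialization of Theorem~\ref{thm:LewisMain}, the only difference being that when $\MM$ and $\NN$ are (rescaled) edge--vertex incidence matrices we can replace the generic linear-system cost $\LSS(\cdot)$ appearing there by the cost of a nearly-linear-time Laplacian solve, which is exactly what removes the $n^{\omega}$-type term present in the general-matrix version of this corollary.

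First I would record the two elementary reductions that turn the two ``input-sparsity'' terms of Theorem~\ref{thm:LewisMain} into $m_1 + m_2$: an edge--vertex incidence matrix has exactly two non-zero entries per row, and rescaling rows preserves the sparsity pattern, so $\nnz(\MM) = O(m_1)$ and $\nnz(\NN) = O(m_2)$. Next, by Theorem~\ref{thm:LewisWt} (invoked inside Theorem~\ref{thm:LewisMain}), the matrices $\MMtil$ and $\NNtil$ each consist of $O(n^{p/2}\log n)$ rescaled rows of $\MM$ and $\NN$ respectively; selecting a subset of rows of a rescaled incidence matrix and rescaling again yields a rescaled incidence matrix of a subgraph on at most $n$ vertices, so both $\MMtil^{\top}\MMtil$ and $\NNtil^{\top}\NNtil$ are weighted graph Laplacians on at most $n$ vertices with $O(n^{p/2}\log n)$ edges. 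Hence, by fast approximate Laplacian solvers~\cite{SpielmanT04, KoutisMP11, PengS14, KyngLPSS16}, a linear system in either matrix can be solved to quasipolynomial accuracy in time $\Otil(n^{p/2}\log n)$, i.e. $\LSS(\MMtil) = \LSS(\NNtil) = \Otil(n^{p/2})$. If the residual solver of Theorem~\ref{cor:ResidualDecision} additionally needs to enforce the equality constraint $\AA\Delta = \cc$ with $d \le n$ rows, this is absorbed by eliminating those rows, exactly as in the proofs of Theorems~\ref{thm:flow-short} and~\ref{thm:voltage-short}, without affecting the asymptotics.

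Substituting these estimates into the running-time bound of Theorem~\ref{thm:LewisMain} then yields, up to polylogarithmic and $(1-\abs{p/2-1})^{-2}$ factors absorbed into $\Otil_p(\cdot)$, a running time of $\Otil_p((m_1 + m_2 + n^{p/2} \cdot n^{\frac{p(p-2)}{6p-4}})\log^2(1/\eps))$, and a one-line computation gives $\frac{p}{2} + \frac{p(p-2)}{6p-4} = \frac{p}{2}(1 + \frac{p-2}{3p-2})$, matching the claimed exponent; the ``with high probability'' and ``$\eps$-approximate solution'' guarantees carry over verbatim from Theorem~\ref{thm:LewisMain}. I expect no serious obstacle: the one point that needs care is confirming that \emph{every} linear system the sparsified residual solver touches is Laplacian-structured, so that the $\Otil(n^{p/2})$ per-solve cost is legitimate; this is precisely the observation that $\MMtil^{\top}\MMtil$ and $\NNtil^{\top}\NNtil$ are weighted graph Laplacians, mirroring the corresponding step in the proofs of Theorems~\ref{thm:flow-short} and~\ref{thm:voltage-short}.
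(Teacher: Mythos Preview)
Your proposal is correct and follows exactly the approach the paper intends: the paper does not write out a proof of this corollary, but the sentence immediately preceding the two corollaries states that the graph setting ``may take advantage of fast Laplacian solvers,'' which is precisely your substitution of $\LSS(\MMtil),\LSS(\NNtil)=\Otil(n^{p/2})$ into Theorem~\ref{thm:LewisMain}. The remaining bookkeeping (nnz of incidence matrices, the exponent arithmetic, and the high-probability and $\eps$-accuracy guarantees) is handled exactly as you describe.
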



\renewcommand{\union}{\cup}
\section{Width-Reduced Approximate Solver for
  $\ell_{2}^{2} + \ell_{p}^{p}$ Problems}
 \label{sec:AKPSOracle}
 We will solve problems of the form,
  \begin{align}\label{eq:Un-ScaledProblem}
 \min_{\Delta}  & \quad \Delta^{\top}\MM^{\top}\MM\Delta + \|\NN\Delta\|_p^p \\
 & \text{s.t.} \quad \AA\Delta = \cc, \nonumber
  \end{align}

 which have an optimum at most $\nu$. We first scale the problem down to a new problem with optimum at most $1$. Note that there exists $\Dopt$ such that $\AA\Dopt = \cc$ and $\Dopt^{\top}\MM^{\top}\MM\Dopt + \|\NN\Dopt\|_p^p \leq \nu$. Let $\tilde{\MM} = \nu^{-{\frac{p-2}{2}}} \MM$ and $\Dtil = \nu^{-1/p}\Dopt$. The following problem has optimum at most $1$ since $\Dtil$ is a feasible solution.
   \begin{align}\label{eq:Scaled}
 \min_{\Delta}  & \quad \Delta^{\top}\tilde{\MM}^{\top}\tilde{\MM}\Delta + \|\NN\Delta\|_p^p \\
 & \text{s.t.} \quad \AA\Delta = \nu^{-1/p}\cc, \nonumber
 \end{align}
 Now, let $\Dbar$ denote a feasible solution such that $ \Dbar^{\top}\tilde{\MM}^{\top}\tilde{\MM}\Dbar \leq \alpha$ and $\|\NN\Delta\|_p^p \leq \beta$. Note that $\Delta = \nu^{1/p}\Dbar$ satisfies the constraints of \eqref{eq:Un-ScaledProblem} and,
 \[
 \Delta^{\top}\MM^{\top}\MM\Delta \leq \alpha \nu, \quad \text{and}, \quad   \|\NN\Delta\|_p^p \leq \beta \nu.
 \]
 It is thus sufficient to solve Problem \eqref{eq:Scaled} to an  $\alpha,\beta$ approximation. We thus have the following result which follows from Theorem \ref{lem:FasterAlgo},
 \deeksha{At some point we can make the general oracle work for non scaled problems. We do not need any scaling and can include the $\nu$'s in the parameters.}
 \AKPSAlgo*
\subsection{Solving Scaled Problem} 
 
We will show that we can solve problems of the form 
 \begin{align}\label{eq:ScaledProblem}
 \min_{\Delta}  & \quad \Delta^{\top}\MM^{\top}\MM\Delta + \|\NN\Delta\|_p^p \\
 & \text{s.t.} \quad \AA\Delta = \cc, \nonumber
 \end{align}
which have an optimum value at most $1$. We will use the following oracle in our algorithm.

\begin{algorithm}[H]
\caption{Oracle}
\label{alg:oracle}
 \begin{algorithmic}[1]
 \Procedure{\textsc{Oracle}}{$\AA, \MM,\NN, \cc, \ww$}
\State $\rr_e \leftarrow \ww_e^{p-2}$
\label{algline:resistance}
\State Compute,
\[
  \Delta = \arg\min_{\AA \Delta' =  \cc} \quad m_1^{\frac{p-2}{p}}{\Delta'}^{\top}\MM^{\top}\MM\Delta' + \frac{1}{3^{p-2}} \sum_e \rr_e \left(\NN\Delta^{'}\right)^2_e
\]
 \State\Return $ \Delta$
 \EndProcedure 
 \end{algorithmic}
\end{algorithm}
We can use now use Algorithm 4 from \cite{AdilKPS19}.
\begin{algorithm}
\caption{Algorithm for the Scaled down Problem}
\label{alg:FasterOracleAlgorithm}
 \begin{algorithmic}[1]
 \Procedure{Residual-Solver}{$\AA, \MM,\NN, \cc$}
 \State $\ww^{(0,0)}_e \leftarrow 1$
 \State $\xx \leftarrow 0$
 \State $\rho \leftarrow {\Theta}\left(m_1^{\frac{(p^2-4p+2)}{p(3p-2)}}\right)$\Comment{width parameter}\label{algline:defrho}
 \State $\beta \leftarrow {\Theta} \left(m_1^{\frac{p-2}{3p-2}}\right)$\Comment{resistance threshold}
 \State $\alpha \leftarrow {\Theta}\left(p^{-1} m_1^{-\frac{p^2-5p+2}{p(3p-2)}}\right)$\Comment{step size}
 \State $\tau \leftarrow {\Theta}\left(m_1^{\frac{(p-1)(p-2)}{(3p-2)}}\right)$\Comment{$\ell_p$ energy threshold}\label{algline:defalpha}
\State $T \leftarrow \alpha^{-1} m_1^{1/p} = {\Theta} \left( p m^{\frac{p-2}{3p-2}}\right)$        
\State{$i \leftarrow 0, k \leftarrow 0$}
\While{$i < T$} 
\State $\Delta = \textsc{Oracle}(\AA, \MM,\NN, \cc, \ww)$ \label{algline:InvokeOracle}
\If{$\norm{\NN\Delta}_{p}^p \leq \tau$} \hfill \Comment{flow step} \label{algline:CheckWidth}
\State $\ww^{(i+1,k)} \leftarrow \ww^{(i,k)} + \alpha \abs{\NN\Delta}$ \label{algline:LowWidth}
\State $\xx \leftarrow \xx +  \alpha \Delta$
\State $ i \leftarrow i+1$ 
\Else \hfill \Comment{width reduction step}
\State{For all edges $e$ with $|\NN\Delta|_e \geq \rho$ and $\rr_e \leq \beta$\label{lin:WidthReduceEdge}}
\Statex{\qquad \qquad \qquad $\ww_e^{(i,k+1)} \leftarrow 2^{\frac{1}{p-2}}\ww_e$}  \label{lin:widthReductionStepIncr}
\State{$ k \leftarrow k+1$}
\EndIf
\EndWhile
 \State\Return $m_1^{- \frac{1}{p}}{\xx}$
 \EndProcedure 
 \end{algorithmic}
\end{algorithm}

\paragraph*{Notation}
We will use $\Dopt$ to denote the optimum of \eqref{eq:ScaledProblem} and $\Dtil$ to denote the solution returned by the oracle (Algorithm \ref{alg:oracle}). We thus have, 
\begin{tight_itemize}
\item $\Dopt^{\top}\MM^{\top}\MM\Dopt \leq 1$,
\item $\norm{\NN\Delta^*}_p \leq 1$
\item $\rr_e \geq 1, \forall e$.
\end{tight_itemize}
\noindent We will prove the following main theorem:
\begin{theorem}\label{lem:FasterAlgo}
Let $p\geq 2$.  Given matrices $\AA \in \mathbb{R}^{d \times n},\NN \in \mathbb{R}^{m_1 \times n},\MM \in \mathbb{R}^{m_2 \times n}$, $m_1,m_2 \geq n$, $d \leq n$, and vector $\cc$, Algorithm \ref{alg:FasterOracleAlgorithm} uses ${O}\left(p m_1^{\frac{p-2}{(3p-2)}}\right)$, calls to the oracle (Algorithm \ref{alg:oracle}) and returns a vector $\xx$ such that 
$\AA \xx = \cc,$ $\xx^{\top}\MM^{\top}\MM\xx \leq O(1)$ and $\|\NN\xx\|_p^p = O(3^p)$.
\end{theorem}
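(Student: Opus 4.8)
The plan is to port the width-reduced multiplicative-weights analysis of \cite{AdilKPS19} to the mixed $\ell_2^2+\ell_p^p$ setting, the only new ingredient being that the oracle's program carries an extra quadratic term $m_1^{(p-2)/p}\Delta^{\top}\MM^{\top}\MM\Delta$; I will track the weight potential $\Phi(\ww):=\sum_e\ww_e^p$ throughout. Feasibility of the output is immediate: each oracle call returns $\Delta$ with $\AA\Delta=\cc$, the while loop performs exactly $T$ flow steps (the index $i$ advances only on flow steps) and $\alpha T=m_1^{1/p}$, so the returned $m_1^{-1/p}\xx=m_1^{-1/p}\alpha\sum_{\text{flow steps}}\Delta$ satisfies $\AA(m_1^{-1/p}\xx)=\cc$. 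It thus remains to bound the $\MM$- and $\NN$-energies of the output and the total number of oracle calls.

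The central step is an inductive invariant of the form $\Phi(\ww)=O(3^pm_1)$. Granting it at the start of an iteration, feasibility of $\Dopt$ for the oracle's program bounds the oracle's optimal value by $m_1^{(p-2)/p}\,\Dopt^{\top}\MM^{\top}\MM\Dopt+3^{-(p-2)}\big(\sum_e\ww_e^p\big)^{(p-2)/p}\|\NN\Dopt\|_p^2$; using $\Dopt^{\top}\MM^{\top}\MM\Dopt\le1$, $\|\NN\Dopt\|_p\le1$, H\"older with exponents $\tfrac{p}{p-2},\tfrac p2$, and the invariant --- where the $3^{-(p-2)}$ factor in front of the weighted $\ell_2$ term is precisely what neutralizes the $3^p$ slack in the invariant --- this is $O(m_1^{(p-2)/p})$. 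Hence every oracle output obeys $\Delta^{\top}\MM^{\top}\MM\Delta=O(1)$ and $\sum_e\ww_e^{p-2}(\NN\Delta)_e^2=O(3^pm_1^{(p-2)/p})$. To propagate the invariant through a flow step $\ww\mapsto\ww+\alpha|\NN\Delta|$, I expand $(\ww_e+\alpha|\NN\Delta|_e)^p$ and split edges according to whether $\alpha|\NN\Delta|_e\le\ww_e$: the increase splits into a first-order part $\alpha p\,\langle\ww^{p-1},|\NN\Delta|\rangle\le\alpha p\,\Phi(\ww)^{1/2}\big(\sum_e\ww_e^{p-2}(\NN\Delta)_e^2\big)^{1/2}$ (Cauchy--Schwarz, $\ww_e^{p-1}=\ww_e\ww_e^{p-2}$), a second-order part governed by $\sum_e\ww_e^{p-2}(\NN\Delta)_e^2$, and a $p$-th power tail bounded by $2^p\alpha^p\|\NN\Delta\|_p^p\le2^p\alpha^p\tau$; summing over the $T$ flow steps and inserting the oracle energy bounds, the choices of $\alpha,\tau$ and $T=\alpha^{-1}m_1^{1/p}$ make each piece a controlled fraction of the budget. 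Width-reduction steps only raise $\ww_e$ on edges with $\ww_e^{p-2}\le\beta$, so their cumulative effect on $\Phi$ is folded in once their number is bounded (below).

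From the invariant both energy bounds follow quickly. Since the $\MM$-energy per oracle output is $O(1)$ and a quadratic form is convex, $(m_1^{-1/p}\xx)^{\top}\MM^{\top}\MM(m_1^{-1/p}\xx)\le m_1^{-2/p}\alpha^2T\sum_{\text{flow steps}}\Delta^{\top}\MM^{\top}\MM\Delta\le m_1^{-2/p}(\alpha T)^2\cdot O(1)=O(1)$. For the $\NN$-energy, note that width-reduction steps never decrease a weight while each flow step adds $\alpha|\NN\Delta|_e$ to $\ww_e$, so the final weights satisfy $\ww_e^{(T)}\ge\alpha\sum_{\text{flow steps}}|\NN\Delta|_e\ge|\NN\xx|_e$; hence $\|\NN\xx\|_p^p\le\sum_e(\ww_e^{(T)})^p=\Phi(\ww^{(T)})=O(3^pm_1)$, and $\|\NN(m_1^{-1/p}\xx)\|_p^p=m_1^{-1}\|\NN\xx\|_p^p=O(3^p)$.

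The remaining and genuinely delicate step, which I expect to be the main obstacle, is to show the number of width-reduction steps is $O(T)$, giving a total of $T+O(T)=O(pm_1^{(p-2)/(3p-2)})$ oracle calls (note $T=\alpha^{-1}m_1^{1/p}=\Theta(pm_1^{(p-2)/(3p-2)})$ by the parameter definitions). The plan is the counting argument of \cite{AdilKPS19}: each edge is width-reduced at most $O(\log m_1)$ times (its resistance starts at $1$, doubles at every reduction, and the edge is reduced only while its resistance is $\le\beta$), so there are at most $O(m_1\log m_1)$ (edge, reduction) incidences in total; against this one argues that at a width-reduction step, where $\|\NN\Delta\|_p^p>\tau$, many edges are updated, because the $\ell_p$-mass on edges with $|\NN\Delta|_e<\rho$ is at most $\rho^{p-2}\sum_e\ww_e^{p-2}(\NN\Delta)_e^2$ and the mass on edges with resistance exceeding $\beta$ is controlled via the oracle bound divided by $\beta$, so with the stated values of $\rho$ and $\beta$ (whose constants, possibly $p$-dependent, absorb the $3^p$ factors) the updated edges carry $\ell_p$-mass at least $\tau/2$ while each contributes only a bounded amount. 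The exponents of $\rho,\beta,\alpha,\tau$ are finely calibrated so this counting closes at $O(T)$, and essentially all I need to add to the $\ell_p$-only argument of \cite{AdilKPS19} is the verification that the benign $m_1^{(p-2)/p}$-scaled quadratic term neither inflates the weighted $\ell_2$ bound nor the potential growth beyond what the unchanged parameter choices tolerate.
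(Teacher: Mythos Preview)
Your treatment of feasibility, the $\Phi$-potential invariant, and the two energy bounds on the output is correct and essentially matches the paper. The gap is in your bound on the number of width-reduction steps.

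You propose a counting argument: each edge is width-reduced at most $O(\log\beta)$ times, so there are $O(m_1\log m_1)$ total (edge, reduction) incidences, and you hope each width-reduction step updates many edges because the updated set $H$ carries $\ell_p$-mass $\Omega(\tau)$. The problem is the clause ``each contributes only a bounded amount'': nothing prevents a single edge from carrying almost all of that $\ell_p$-mass. The only per-edge upper bound you have is $(\NN\Delta)_e^2\le\sum_e\rr_e(\NN\Delta)_e^2=O(m_1^{(p-2)/p})$, i.e.\ $|\NN\Delta|_e^p=O(m_1^{(p-2)/2})$, and dividing $\tau$ by this gives a per-step edge count of $m_1^{-p(p-2)/(2(3p-2))}<1$, which is vacuous. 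The counting argument simply does not close with these parameters. (Incidentally, \cite{AdilKPS19} does not use a counting argument either.)

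What the paper does --- and what you need --- is a second potential $\Psi(\rr)$, defined as the optimal value of the oracle's program. Feasibility of $\Dopt$ together with your $\Phi$-invariant gives $\Psi(\rr)\le O(m_1^{(p-2)/p})$ throughout. The key lemma (proved via Sherman--Morrison--Woodbury on the closed form for $\Psi$) is that increasing resistances from $\rr$ to $\rr'\ge\rr$ raises $\Psi$ by at least $\sum_e(1-\rr_e/\rr'_e)\rr_e(\NN\widetilde\Delta)_e^2$, where $\widetilde\Delta$ is the oracle output at $\rr$. At a width-reduction step the edges in $H$ double their resistance, and your own argument (the $S$/$B$/$H$ split using $\rho$ and $\beta$) shows $\sum_{e\in H}\rr_e(\NN\widetilde\Delta)_e^2\ge\Omega(\tau^{2/p})$; hence $\Psi$ jumps by $\Omega(\tau^{2/p})$ per width-reduction step. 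Since $\Psi$ cannot exceed $O(m_1^{(p-2)/p})$, the number of width-reduction steps is at most $O(m_1^{(p-2)/p}/\tau^{2/p})=O(m_1^{(p-2)/(3p-2)})=O(T)$. Flow steps never decrease $\Psi$ (resistances only go up), so the bound is global.
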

\subsection*{Analysis of Algorithm \ref{alg:FasterOracleAlgorithm}}
Similar to \cite{AdilKPS19} we will track two potentials $\Phi$ and $\Psi$ which we define as,
\[
  \Phi\left(\ww^{\left( i \right)} \right) \defeq \norm{\ww}_p^p
\]
\[
\Psi(\rr)\defeq \min_{\Delta: \AA \Delta = \cc} m_1^{\frac{p-2}{p}}{\Delta}^{\top}\MM^{\top}\MM\Delta + \frac{1}{3^{p-2}}\sum_e \rr_e \left(\NN\Delta\right)^2_e.
\]
Note that these potentials have a similar idea as \cite{AdilKPS19} but are defined differently. Our proof will follow the following structure,
\begin{enumerate}
\item Provided the total number of width reduction steps, $K$, is not too big, $\Phi(\cdot)$ is small.
This in turn helps upper bound the value of the solution returned by the algorithm.
\item Showing that $K$ cannot be too big, because each width reduction step cause large growth in ${\Psi(\cdot)}$, while we can bound the total growth in ${\Psi(\cdot)}$ by relating it to $\Phi(\cdot)$.
\end{enumerate}

We start by proving some results that we need to prove our final result, Theorem \ref{lem:FasterAlgo}. The proofs of all lemmas are in Section \ref{sec:MissingProofsAppendix}.

\begin{restatable}{lemma}{Oracle}
\label{lem:Oracle}
Let $p\geq 2$. For any $\ww$, let $\Dtil$ be the solution returned by Algorithm \ref{alg:oracle}. Then,
\[
\sum_e (\NN\Dtil)_e^2 \leq \sum_e \rr_e(\NN\Dtil)_e^2 \leq  \norm{\ww}^{p-2}
\]
\end{restatable}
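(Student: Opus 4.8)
The plan is to establish the two inequalities in turn; both are short. For the first, $\sum_e(\NN\Dtil)_e^2 \le \sum_e \rr_e (\NN\Dtil)_e^2$, I would simply observe that $\rr_e = \ww_e^{p-2} \ge 1$ for every $e$: the weights $\ww$ fed to the oracle in Algorithm~\ref{alg:FasterOracleAlgorithm} start at $\ww^{(0,0)}_e = 1$ and are only ever increased (the flow step adds the nonnegative quantity $\alpha|\NN\Delta|_e$, and the width-reduction step only scales selected entries up), so $\ww_e \ge 1$, and since $p \ge 2$ this gives $\rr_e \ge 1$.

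For the second inequality I would use optimality of $\Dtil$ for the oracle's weighted least-squares problem. Writing the oracle objective as $\Psi(\Delta) \defeq m_1^{(p-2)/p}\Delta^\top\MM^\top\MM\Delta + 3^{-(p-2)}\sum_e \rr_e(\NN\Delta)_e^2$, the optimum $\Dopt$ of the scaled problem~\eqref{eq:ScaledProblem} satisfies $\AA\Dopt = \cc$, hence is feasible for the oracle, so $\Psi(\Dtil) \le \Psi(\Dopt)$. I would then bound the two terms of $\Psi(\Dopt)$: the quadratic term is at most $m_1^{(p-2)/p}$ since $\Dopt^\top\MM^\top\MM\Dopt \le 1$; and for the weighted $\NN$-term, Hölder's inequality with exponents $\tfrac{p}{p-2}$ and $\tfrac{p}{2}$ gives $\sum_e \ww_e^{p-2}(\NN\Dopt)_e^2 \le (\sum_e\ww_e^p)^{(p-2)/p}(\sum_e|(\NN\Dopt)_e|^p)^{2/p} = \norm{\ww}_p^{p-2}\norm{\NN\Dopt}_p^2 \le \norm{\ww}_p^{p-2}$, using $\norm{\NN\Dopt}_p \le 1$.

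Combining these, $3^{-(p-2)}\sum_e\rr_e(\NN\Dtil)_e^2 \le \Psi(\Dtil) \le \Psi(\Dopt) \le m_1^{(p-2)/p} + 3^{-(p-2)}\norm{\ww}_p^{p-2}$. To finish I would use $\ww_e\ge1$ once more: it implies $\norm{\ww}_p^p = \sum_e\ww_e^p \ge m_1$, i.e.\ $m_1^{(p-2)/p}\le\norm{\ww}_p^{p-2}$, so that $\sum_e\rr_e(\NN\Dtil)_e^2 \le (3^{p-2}+1)\norm{\ww}_p^{p-2}$, which gives the stated bound up to the harmless multiplicative constant (absorbed downstream, since Theorem~\ref{lem:FasterAlgo} already states $O(3^p)$-type guarantees; alternatively the oracle's coefficients can be renormalized to make the constant exactly $1$).

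There is no real obstacle here — the argument is just oracle-optimality against the feasible comparison point $\Dopt$, one Hölder inequality, and the elementary bound $\norm{\ww}_p^p \ge m_1$. The only thing to be slightly careful about is verifying that $\ww_e \ge 1$ is maintained throughout the execution of Algorithm~\ref{alg:FasterOracleAlgorithm}, which is clear by inspection of its two update rules; this fact is used both to compare $\rr_e$ against $1$ and to lower-bound $\norm{\ww}_p^p$.
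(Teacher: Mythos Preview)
Your proposal is correct and follows essentially the same approach as the paper: compare the oracle's output $\Dtil$ against the feasible point $\Dopt$, then apply H\"older with exponents $\tfrac{p}{p-2}$ and $\tfrac{p}{2}$ to bound $\sum_e \ww_e^{p-2}(\NN\Dopt)_e^2 \le \norm{\ww}_p^{p-2}\norm{\NN\Dopt}_p^2 \le \norm{\ww}_p^{p-2}$, and use $\rr_e \ge 1$ for the left inequality.

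The one place you differ is that you are actually more careful than the paper. The paper's proof writes the bare inequality $\sum_e \rr_e(\NN\Dtil)_e^2 \le \sum_e \rr_e(\NN\Dopt)_e^2$ as if it followed directly from oracle optimality, but the oracle minimizes the \emph{full} objective $m_1^{(p-2)/p}\Delta^\top\MM^\top\MM\Delta + 3^{-(p-2)}\sum_e \rr_e(\NN\Delta)_e^2$, so that termwise comparison is not literally justified. Your route via the full objective $\Psi$, bounding the $\MM$-term by $m_1^{(p-2)/p}$ and then absorbing it using $\norm{\ww}_p^p \ge m_1$, is the honest argument; the resulting $(3^{p-2}+1)$ constant is, as you note, harmless for the downstream $O(3^p)$ guarantees.
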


We next show through the following lemma that the $\Phi$ potential does not increase
too rapidly. The proof is through induction and can be found in the Appendix.
\begin{restatable}{lemma}{ReduceWidthGammaPotential}
  \label{lem:ReduceWidthGammaPotential}
  After $i$ \emph{flow} steps, and $k$ width-reduction steps,
  provided
  \begin{enumerate}
 \item \label{enu:pPowerStep} $p^p \alpha^p \tau \leq p \alpha m_1^{\frac{p-1}{p}}$, (controls $\Phi$ growth in flow-steps)
 \item \label{enu:widthStepCondition}$k \leq 2^{-\frac{p}{p-2}}\rho^2 m_1^{2/p} \beta^{-\frac{2}{p-2}}$ ,(acceptable number of width-reduction steps)
  \end{enumerate}
  the potential $\Phi$ is bounded as follows:
  \begin{align*}
 \Phi\left(\ww^{(i,k)}\right) \leq \left(2\alpha i + m_1^{\nfrac{1}{p}} \right)^{p} \exp{\left(2^{\frac{p}{p-2}} \frac{k}{\rho^2 m_1^{2/p} \beta^{-\frac{2}{p-2}}} \right)}.
 \end{align*}
\end{restatable}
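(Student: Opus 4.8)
The plan is to prove the bound by induction on the total number of steps $i+k$; the base case $i=k=0$ is immediate since $\ww^{(0,0)}=\vone$ gives $\Phi(\ww^{(0,0)})=m_1=(m_1^{1/p})^p$ while the exponential factor equals $1$. Two facts are used throughout. First, weights are monotone nondecreasing along the algorithm, so $\ww^{(i,k)}_e\ge 1$ and $\Phi(\ww^{(i,k)})\ge m_1$ at every step. Second, the only nontrivial analytic input is the oracle guarantee (Lemma~\ref{lem:Oracle}): with $\rr_e=\ww_e^{p-2}$ and $\Delta$ the oracle output at the current weights $\ww$, one has $\sum_e \ww_e^{p-2}(\NN\Delta)_e^2=\sum_e \rr_e(\NN\Delta)_e^2\le \norm{\ww}_p^{p-2}=\Phi(\ww)^{(p-2)/p}$; by Cauchy--Schwarz this yields the linearized estimate $\sum_e \ww_e^{p-1}\abs{\NN\Delta}_e\le \Phi(\ww)^{1/2}(\sum_e \ww_e^{p-2}(\NN\Delta)_e^2)^{1/2}\le \Phi(\ww)^{(p-1)/p}$.

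\emph{Flow step} $(i,k)\to(i+1,k)$, where $z=\abs{\NN\Delta}$ and $\norm{z}_p^p\le\tau$: I would split the coordinates according to whether $\alpha z_e\le \ww_e$. On coordinates with $\alpha z_e\le \ww_e$, the sharp second-order estimate $(1+t)^p\le 1+pt+c_p t^2$ for $t\in[0,1]$ (with $c_p=\tfrac{p(p-1)}{2}2^{p-2}$) applied at $t=\alpha z_e/\ww_e$ gives $(\ww_e+\alpha z_e)^p\le \ww_e^p+p\alpha\,\ww_e^{p-1}z_e+c_p\alpha^2\,\ww_e^{p-2}z_e^2$; on the remaining coordinates $(\ww_e+\alpha z_e)^p\le (2\alpha z_e)^p$, so their total contribution is at most $2^p\alpha^p\norm{z}_p^p\le 2^p\alpha^p\tau$. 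Summing, using the two consequences of Lemma~\ref{lem:Oracle} above, the hypothesis $p^p\alpha^p\tau\le p\alpha m_1^{(p-1)/p}$ (which, together with $\Phi\ge m_1$, bounds the overshooting coordinates' contribution by a constant times $\alpha\,\Phi(\ww)^{(p-1)/p}$), and again $\Phi\ge m_1$ (which makes the $c_p\alpha^2\Phi^{(p-2)/p}$ remainder lower order than $\alpha\Phi^{(p-1)/p}$), one obtains $\Phi(\ww^{(i+1,k)})\le \Phi(\ww^{(i,k)})+2p\alpha\,\Phi(\ww^{(i,k)})^{(p-1)/p}$. Since $(x+y)^p\ge x^p+p y x^{p-1}$, the right-hand side is at most $(\Phi(\ww^{(i,k)})^{1/p}+2\alpha)^p$; feeding in the inductive hypothesis $\Phi(\ww^{(i,k)})\le (2\alpha i+m_1^{1/p})^p E$ (with $E$ the exponential factor, which depends only on $k$ and is unchanged here) and using $E^{(p-1)/p}\le E$ closes this case.

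\emph{Width-reduction step} $(i,k)\to(i,k+1)$: let $U=\{e:\abs{\NN\Delta}_e\ge\rho,\ \rr_e\le\beta\}$ be the set of updated coordinates, each of which has $\ww_e$ multiplied by $2^{1/(p-2)}$ while the rest are unchanged, so $\Phi(\ww^{(i,k+1)})=\Phi(\ww^{(i,k)})+(2^{p/(p-2)}-1)\sum_{e\in U}\ww_e^p$. For $e\in U$, $\ww_e^p=\ww_e^{p-2}\ww_e^2\le \ww_e^{p-2}\beta^{2/(p-2)}\le \ww_e^{p-2}\beta^{2/(p-2)}\abs{\NN\Delta}_e^2/\rho^2=\tfrac{\beta^{2/(p-2)}}{\rho^2}\rr_e(\NN\Delta)_e^2$, using $\rr_e\le\beta$ and $\abs{\NN\Delta}_e\ge\rho$. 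Summing over $U$ and invoking Lemma~\ref{lem:Oracle} once more gives $\sum_{e\in U}\ww_e^p\le \tfrac{\beta^{2/(p-2)}}{\rho^2}\Phi(\ww^{(i,k)})^{(p-2)/p}$; dividing through by $\Phi(\ww^{(i,k)})$, using $\Phi(\ww^{(i,k)})^{2/p}\ge m_1^{2/p}$ and $1+x\le e^x$, yields $\Phi(\ww^{(i,k+1)})\le \Phi(\ww^{(i,k)})\exp(2^{p/(p-2)}/(\rho^2 m_1^{2/p}\beta^{-2/(p-2)}))$, i.e. the exponential factor is multiplied by exactly the per-step amount in the statement. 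The hypothesis $k\le 2^{-p/(p-2)}\rho^2 m_1^{2/p}\beta^{-2/(p-2)}$ is what keeps this accumulated factor $O(1)$ (at most $e$), which in turn keeps $\Phi$ — and hence the ratios $\alpha z_e/\ww_e$ — in the polynomial range needed for the flow-step estimate to hold with the chosen parameters.

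The delicate point is the flow step: getting the clean increment $2\alpha$ in $\Phi^{1/p}$ instead of the $p\alpha\norm{\NN\Delta}_p\approx p\alpha\tau^{1/p}$ that the naive $\ell_p$ triangle inequality would give (which is far too large for the chosen $\tau$). This forces one to route the dominant term through the oracle's \emph{weighted} $\ell_2$ bound rather than the flow-step $\ell_p$ bound, and to check — using the two numerical hypotheses together with the monotonicity $\Phi\ge m_1$ — that both the coordinates with $\alpha z_e>\ww_e$ and the quadratic Taylor remainder are genuinely lower order; tracking the absolute constant (the ``$2$'') through all these estimates is where the most care is needed.
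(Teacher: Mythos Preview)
Your overall strategy---induction on $i+k$, bounding the linear term via Cauchy--Schwarz and Lemma~\ref{lem:Oracle}, bounding the quadratic term via Lemma~\ref{lem:Oracle} directly, and handling the width-reduction step by combining the $\rr_e\le\beta$ and $\abs{\NN\Delta}_e\ge\rho$ constraints---is exactly the paper's. The width-reduction analysis is correct and matches the paper line-for-line (the paper factors via $\sum_{e\in H}\rr_e$ and $\max_{e\in H}\rr_e$; you do the equivalent per-edge bound).

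The gap is in the flow step. Your split-and-Taylor bound $(1+t)^p\le 1+pt+c_p t^2$ on $[0,1]$ forces $c_p\ge 2^p-1-p$, i.e.\ $c_p$ is \emph{exponential} in $p$. You then assert that $c_p\alpha^2\Phi^{(p-2)/p}$ is ``lower order'' than $\alpha\Phi^{(p-1)/p}$, but the ratio is $c_p\alpha/\Phi^{1/p}\le c_p\alpha/m_1^{1/p}$, and with the algorithm's $\alpha=\Theta(p^{-1}m_1^{-(p^2-5p+2)/(p(3p-2))})$ and $p$ as large as $\log m_1$ (the regime the paper actually needs), one has $c_p\alpha/m_1^{1/p}=\Theta(2^p p^{-1} m_1^{-1/3-o(1)})$, which is polynomially large in $m_1$, not small. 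So the claimed inequality $\Phi(\ww^{(i+1,k)})\le\Phi+2p\alpha\Phi^{(p-1)/p}$ does not follow, and the induction with constant $2$ does not close.

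The paper sidesteps this by citing a single scalar inequality (Lemma~B.1 of \cite{AdilPS19}): for $a,b\ge 0$,
\[
(a+b)^p \le a^p + p\,a^{p-1}b + 2p^2\,a^{p-2}b^2 + p^p\,b^p,
\]
applied with $a=\ww_e$, $b=\alpha\abs{\NN\Delta}_e$. The point is that allowing the extra $p^p b^p$ term (which your Assumption~\ref{enu:pPowerStep} already controls) lets the quadratic coefficient drop from $2^{\Theta(p)}$ to the polynomial $2p^2$, and then the absorption into $(\Phi^{1/p}+2\alpha)^p$ goes through (up to the same mild constant slack the paper also has). If you replace your split-and-Taylor step with this inequality, the rest of your argument is fine.
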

We next show how the potential $\Psi$ changes with a change in resistances. The proof is in the Appendix.
\begin{restatable}{lemma}{ckmstResIncrease}
  \label{lem:ckmst:res-increase}
Let $\Dtil = \arg\min_{\AA\Delta = c} m_1^{\frac{p-2}{p}}\Delta^{\top}\MM^{\top}\MM\Delta + \frac{1}{3^{p-2}}\sum_e \rr_e (\NN\Delta)_e^2$. Then one has for any $\rr'$ and $\rr$ such that $\rr' \geq \rr$,
\[
{\Psi({\rr'})} \geq {\Psi\left({\rr}\right)} + \sum_{e}\left(1-\frac{\rr_e}{\rr'_e}\right) \rr_e (\NN\Dtil)_e^2.
\]
\end{restatable}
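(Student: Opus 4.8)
The plan is to prove this by a direct ``potential/tangent-line'' argument in the spirit of classical electrical-energy monotonicity lemmas (cf.\ Christiano~\etal), which avoids any appeal to Lagrangian duality or pseudoinverses. Write $\RR := \diag{\rr}$ and $\QQ(\rr) := m_1^{(p-2)/p}\MM^\top\MM + \tfrac{1}{3^{p-2}}\NN^\top\RR\NN$, so that $\Psi(\rr) = \min_{\AA\Delta=\cc}\Delta^\top\QQ(\rr)\Delta$, and let $\Dtil$ be a minimizer at resistances $\rr$ (the one returned by the oracle). Since $\Psi(\rr)$ is the minimum of a nonnegative convex quadratic over a nonempty affine set, such a $\Dtil$ exists and $\Psi(\rr) = \Dtil^\top\QQ(\rr)\Dtil$. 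The goal is to lower bound $\mathcal{E}_{\rr'}(\Delta) := m_1^{(p-2)/p}\|\MM\Delta\|_2^2 + \tfrac{1}{3^{p-2}}\sum_e \rr'_e(\NN\Delta)_e^2$ uniformly over all $\Delta$ feasible for the $\rr'$-problem (same constraint $\AA\Delta=\cc$), by an affine function of $\Delta$ whose linear part is constant on the feasible affine subspace; taking the infimum over feasible $\Delta$ then gives the claim, since that infimum is exactly $\Psi(\rr')$.

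First I would record the first-order optimality condition for $\Dtil$: parametrizing the feasible set as $\Dtil + \ker\AA$, we get $\nabla\big(\Delta^\top\QQ(\rr)\Delta\big)\big|_{\Dtil} = 2\QQ(\rr)\Dtil \in (\ker\AA)^\perp = \mathrm{im}(\AA^\top)$, so $2\QQ(\rr)\Dtil = \AA^\top\ppi$ for some $\ppi$. Consequently, for every $\Delta$ with $\AA\Delta = \cc$,
\[
2\,\langle \QQ(\rr)\Dtil,\, \Delta\rangle \;=\; \langle \AA^\top\ppi,\,\Delta\rangle \;=\; \langle\ppi,\,\cc\rangle \;=:\; \gamma ,
\]
a quantity independent of $\Delta$; evaluating at $\Delta=\Dtil$ gives $\gamma = 2\Dtil^\top\QQ(\rr)\Dtil = 2\Psi(\rr)$. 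Next, for a feasible $\Delta$ I would bound $\mathcal{E}_{\rr'}(\Delta)$ termwise: for the quadratic part use $(\Delta-\Dtil)^\top\big(m_1^{(p-2)/p}\MM^\top\MM\big)(\Delta-\Dtil)\ge 0$, i.e.\ $m_1^{(p-2)/p}\|\MM\Delta\|_2^2 \ge 2\langle m_1^{(p-2)/p}\MM^\top\MM\,\Dtil,\,\Delta\rangle - \Dtil^\top\big(m_1^{(p-2)/p}\MM^\top\MM\big)\Dtil$; and for each coordinate $e$ use $\rr'_e t^2 \ge 2\rr'_e s\, t - \rr'_e s^2$ with the key choice $s = (\rr_e/\rr'_e)(\NN\Dtil)_e$, which yields $\rr'_e(\NN\Delta)_e^2 \ge 2\rr_e(\NN\Dtil)_e(\NN\Delta)_e - (\rr_e^2/\rr'_e)(\NN\Dtil)_e^2$. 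Summing these with the appropriate coefficients, the linear-in-$\Delta$ contributions assemble to exactly $2\langle m_1^{(p-2)/p}\MM^\top\MM\,\Dtil,\,\Delta\rangle + \tfrac{2}{3^{p-2}}\Dtil^\top\NN^\top\RR\NN\Delta = 2\langle\QQ(\rr)\Dtil,\,\Delta\rangle = \gamma$, which is precisely where constancy on the feasible subspace is invoked.

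Putting it together, for every feasible $\Delta$ one gets $\mathcal{E}_{\rr'}(\Delta) \ge \gamma - \Dtil^\top\big(m_1^{(p-2)/p}\MM^\top\MM\big)\Dtil - \tfrac{1}{3^{p-2}}\sum_e (\rr_e^2/\rr'_e)(\NN\Dtil)_e^2$, hence $\Psi(\rr')$ is at least this right-hand side. Substituting $\gamma = 2\Psi(\rr)$ and $\Dtil^\top\big(m_1^{(p-2)/p}\MM^\top\MM\big)\Dtil = \Psi(\rr) - \tfrac{1}{3^{p-2}}\sum_e\rr_e(\NN\Dtil)_e^2$ and simplifying gives
\[
\Psi(\rr') \;\ge\; \Psi(\rr) + \tfrac{1}{3^{p-2}}\sum_e\Big(1 - \tfrac{\rr_e}{\rr'_e}\Big)\rr_e(\NN\Dtil)_e^2 .
\]
(This carries the normalization constant $3^{-(p-2)}$ already present in the definition of $\Psi$ and in the oracle; the bound stated in the lemma is recovered up to this normalization.) I expect the only non-routine point to be the choice of reference value $s = (\rr_e/\rr'_e)(\NN\Dtil)_e$ in the coordinatewise tangent-line bound: with the naive choice $s = (\NN\Dtil)_e$ the linear terms fail to reassemble into the constant $\gamma$ and one does not recover the $1-\rr_e/\rr'_e$ factor. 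Everything else — existence of $\Dtil$, the optimality condition, and the algebra — is standard, and the argument uses nothing beyond the hypotheses already in force ($\rr'\ge\rr$, $\RR\succeq 0$, and feasibility of the constraint $\AA\Delta=\cc$).
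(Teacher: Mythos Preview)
Your argument is correct and takes a genuinely different route from the paper. The paper proves the lemma by writing $\Psi(\rr)$ in closed form via Lagrangian duality as $\cc^{\top}\bigl(\AA(\MM^{\top}\MM+\NN^{\top}\RR\NN)^{-1}\AA^{\top}\bigr)^{-1}\cc$, and then applies the Sherman--Morrison--Woodbury identity (twice) together with the Loewner bound $\II+\SS^{1/2}\NN(\MM^{\top}\MM+\NN^{\top}\RR\NN)^{-1}\NN^{\top}\SS^{1/2}\preceq \II+\SS^{1/2}\RR^{-1}\SS^{1/2}$ to control how the closed-form expression changes when $\RR$ is replaced by $\RR'=\RR+\SS$. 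Your proof instead works variationally: you use the first-order optimality condition $\QQ(\rr)\Dtil\in\mathrm{im}(\AA^{\top})$ to show that the linear term $2\langle\QQ(\rr)\Dtil,\Delta\rangle$ is constant (equal to $2\Psi(\rr)$) on the feasible affine subspace, and then lower-bound $\mathcal{E}_{\rr'}(\Delta)$ by a coordinatewise tangent-line inequality with the clever anchor $s=(\rr_e/\rr'_e)(\NN\Dtil)_e$ so that the linear pieces reassemble exactly into $2\langle\QQ(\rr)\Dtil,\Delta\rangle$. This is more elementary: it never forms any matrix inverse or pseudoinverse, does not require $\MM^{\top}\MM+\NN^{\top}\RR\NN$ or $\RR$ to be invertible, and makes the role of optimality of $\Dtil$ transparent. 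The paper's approach, on the other hand, gives a slightly more explicit expression for the gain and ties directly into the closed-form optimizer formula~\eqref{eq:optimizer}, which can be useful elsewhere. Your observation about the missing $3^{-(p-2)}$ factor is also correct: the paper's own proof drops the normalization constants from $\Psi$ at the outset and never restores them, so the lemma as stated is off by exactly the factor you identify; with the actual definition of $\Psi$ your bound
\[
\Psi(\rr')\;\ge\;\Psi(\rr)+\tfrac{1}{3^{p-2}}\sum_e\Bigl(1-\tfrac{\rr_e}{\rr'_e}\Bigr)\rr_e(\NN\Dtil)_e^2
\]
is what both arguments establish.
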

\deeksha{for any constant $z$ in the diffrence of $\rr'$ and $\rr$, the factor just becomes $z+1$. Can maybe change that at some point. }
The next lemma gives a lower bound on the energy in the beginning and an upper bound on the energy at each step.

\begin{restatable}{lemma}{ElectricalPotentialStartFinishBounds}
  \label{lem:ElectricalPotentialStartFinishBounds}
 Initially, we have,
 \[
{\Psi\left({\rr^{\left(0,0\right)}}\right)} \ge \frac{\|\MM+\NN\|_{\min}^2\|\cc\|_2^2}{\norm{\AA}^2} \defeq L,
\]
where $\|\MM+\NN\|_{\min} = \min_{\AA\xx = \cc}\|\MM\xx\|_2^2 + \|\NN\xx\|_2^2$ and $\norm{\AA}$ is the operator norm of $\AA$. Moreover, at any step $(i,k)$ we have,
\[
{\Psi\left({\rr^{(i,k)}}\right) }
  \le   m_1^{\frac{p-2}{p}} + \frac{1}{3^{p-2}}\Phi(i,k)^{\frac{p-2}{p}} .
\]  
\end{restatable}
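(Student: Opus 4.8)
The plan is to establish the upper bound by evaluating the minimization that defines $\Psi$ at one well-chosen feasible direction and then applying H\"older's inequality, and to establish the lower bound by a direct norm estimate that exploits the fact that Algorithm~\ref{alg:FasterOracleAlgorithm} initializes all weights to $1$. For the upper bound I would take as test direction $\Dopt$, the optimum of the \emph{scaled} problem~\eqref{eq:ScaledProblem} --- using the scaled instance is the point, since by construction its optimal value is at most $1$, so $\AA\Dopt = \cc$, $\Dopt^{\top}\MM^{\top}\MM\Dopt \le 1$, and $\norm{\NN\Dopt}_p^p \le 1$. Feasibility of $\Dopt$ in the minimization defining $\Psi(\rr^{(i,k)})$, together with $\rr_e^{(i,k)} = (\ww_e^{(i,k)})^{p-2}$ from line~\ref{algline:resistance} of Algorithm~\ref{alg:oracle}, gives
\[
  \Psi(\rr^{(i,k)}) \le m_1^{\frac{p-2}{p}}\Dopt^{\top}\MM^{\top}\MM\Dopt + \frac{1}{3^{p-2}}\sum_e \bigl(\ww_e^{(i,k)}\bigr)^{p-2}(\NN\Dopt)_e^2 \le m_1^{\frac{p-2}{p}} + \frac{1}{3^{p-2}}\sum_e \bigl(\ww_e^{(i,k)}\bigr)^{p-2}(\NN\Dopt)_e^2 .
\]
For $p>2$ I would bound the last sum by H\"older with the conjugate exponents $\frac{p}{p-2}$ and $\frac{p}{2}$, chosen exactly so the $\ww$-part reassembles into $\Phi(i,k)^{\frac{p-2}{p}}$ and the $\NN\Dopt$-part into $\norm{\NN\Dopt}_p^2\le 1$:
\[
  \sum_e \bigl(\ww_e^{(i,k)}\bigr)^{p-2}(\NN\Dopt)_e^2 \le \Bigl(\sum_e \bigl(\ww_e^{(i,k)}\bigr)^{p}\Bigr)^{\frac{p-2}{p}}\Bigl(\sum_e |(\NN\Dopt)_e|^{p}\Bigr)^{\frac{2}{p}} = \Phi(i,k)^{\frac{p-2}{p}}\norm{\NN\Dopt}_p^{2} \le \Phi(i,k)^{\frac{p-2}{p}},
\]
while for $p=2$ the sum is exactly $\norm{\NN\Dopt}_2^2 \le 1 = \Phi(i,k)^{0}$; in either case this yields the claimed $\Psi(\rr^{(i,k)}) \le m_1^{\frac{p-2}{p}} + \frac{1}{3^{p-2}}\Phi(i,k)^{\frac{p-2}{p}}$.

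For the lower bound, at initialization $\ww^{(0,0)}\equiv 1$, hence $\rr^{(0,0)}\equiv 1$ and $\Psi(\rr^{(0,0)}) = \min_{\AA\Delta=\cc}\bigl(m_1^{\frac{p-2}{p}}\norm{\MM\Delta}_2^2 + \frac{1}{3^{p-2}}\norm{\NN\Delta}_2^2\bigr)$. Since $m_1\ge n\ge 1$ and $p\ge 2$, both coefficients are at least $3^{-(p-2)}$, so the minimum is at least $3^{-(p-2)}\min_{\AA\Delta=\cc}(\norm{\MM\Delta}_2^2+\norm{\NN\Delta}_2^2)$. Any feasible $\Delta$ obeys $\norm{\cc}_2 = \norm{\AA\Delta}_2 \le \norm{\AA}\norm{\Delta}_2$, so $\norm{\Delta}_2 \ge \norm{\cc}_2/\norm{\AA}$; and $\norm{\MM\Delta}_2^2 + \norm{\NN\Delta}_2^2$ is bounded below by a positive multiple of $\norm{\Delta}_2^2$, the constant being the smallest nonzero eigenvalue of $\MM^{\top}\MM+\NN^{\top}\NN$, which is the quantity $\norm{\MM+\NN}_{\min}^2$ appearing in the statement (reading the displayed definition there as this restricted Rayleigh quotient). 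Putting these together gives $\Psi(\rr^{(0,0)}) \ge 3^{-(p-2)}\,\norm{\MM+\NN}_{\min}^2\norm{\cc}_2^2/\norm{\AA}^2$, i.e.\ the asserted $L$ up to the $p$-dependent factor $3^{p-2}$; this factor is harmless, since the only use of the bound downstream is that $\log(1/\Psi(\rr^{(0,0)}))$ is polylogarithmic, and under quasipolynomial boundedness of the instance $L \ge 2^{-\polylog(n)}$.

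I do not expect a genuine obstacle in this lemma: the upper bound is a one-line test-vector plug-in followed by a routine H\"older step, and the lower bound is an elementary singular-value/feasibility estimate. The two points that need attention are conceptual rather than computational: (i) the upper bound must be run against the optimum of the \emph{scaled} problem, so that the normalization $\Dopt^{\top}\MM^{\top}\MM\Dopt\le 1$ and $\norm{\NN\Dopt}_p\le 1$ is available; and (ii) the lower bound is useful only because $L$ is quasipolynomially bounded below, which rests on the non-degeneracy of $\MM,\NN,\AA,\cc$ (quasipolynomially bounded smallest nonzero eigenvalue of $\MM^{\top}\MM+\NN^{\top}\NN$, and $\cc\in\mathrm{im}(\AA)$) rather than on the argument itself.
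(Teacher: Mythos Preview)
Your proposal is correct and follows essentially the same approach as the paper: the upper bound plugs in $\Dopt$ and applies H\"older with exponents $\tfrac{p}{p-2},\tfrac{p}{2}$ (the paper routes this through Lemma~\ref{lem:Oracle}, whose proof is exactly that H\"older step), and the lower bound uses $\rr^{(0,0)}\equiv 1$ together with $\norm{\Delta}_2 \ge \norm{\cc}_2/\norm{\AA}$. The $3^{-(p-2)}$ discrepancy you flag in the lower bound and your Rayleigh-quotient reading of $\norm{\MM+\NN}_{\min}$ are both present in the paper's own proof as well, so you are matching it exactly including its minor sloppiness.
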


We next bound the change in energy in a flow step and a width reduction step. This lemma is directly from \cite{AdilKPS19} and the proof is also very similar. We include it here for completeness.
\begin{restatable}{lemma}{ReduceWidthElectricalPotential}
\label{lem:ReduceWidthElectricalPotential}
Suppose at step $(i,k)$ we have $\norm{\NN\Dtil}^p_{p} >\tau$ so that we perform a width reduction step (line~\ref{lin:widthReductionStepIncr}). If
  \begin{enumerate}
  \item \label{eq:parametersEnsuringElectricalPotentialGrowth1} $\tau^{2/p} \geq\Omega\left( 1\right) \frac{\Psi(\rr)}{\beta}$, and
\item \label{eq:parametersEnsuringElectricalPotentialGrowth2} $\tau \geq \Omega\left(1\right)\Psi(\rr) \rho^{p-2}$.
\end{enumerate}
Then
\[
 {\Psi\left({\rr^{(i,k+1)}}\right)} \geq {\Psi\left({\rr^{(i,k)}}\right)} +\Omega\left(1\right) \tau^{2/p}.
\]
Furthermore, if at $(i,k)$ we have $\norm{\NN\Dtil}^p_{p} \leq \tau$ so that we perform a flow step,
then
\[
    {\Psi\left({\rr^{(i,k+1)}}\right)} \geq {\Psi\left({\rr^{(i,k)}}\right)}.
\]
\end{restatable}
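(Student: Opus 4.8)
The plan is to drive everything through Lemma~\ref{lem:ckmst:res-increase}, applied to the pair $\rr = \rr^{(i,k)}$ and $\rr' = \rr^{(i,k+1)}$. The key structural observation is that the resistance vector of Algorithm~\ref{alg:FasterOracleAlgorithm} is coordinate-wise non-decreasing along \emph{every} step: a flow step does $\ww \mapsto \ww + \alpha\abs{\NN\Delta}$ with $\alpha>0$, and a width-reduction step multiplies some coordinates of $\ww$ by $2^{1/(p-2)}>1$; since $p\ge2$ the map $x\mapsto x^{p-2}$ is non-decreasing on $[0,\infty)$, so $\rr=\ww^{p-2}$ only grows. Hence $\rr^{(i,k+1)}\ge\rr^{(i,k)}$ always, Lemma~\ref{lem:ckmst:res-increase} applies, and its extra term $\sum_e(1-\rr_e/\rr'_e)\rr_e(\NN\Dtil)_e^2$ is non-negative. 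This already settles the flow-step claim $\Psi(\rr^{(i,k+1)})\ge\Psi(\rr^{(i,k)})$ unconditionally, so the whole content is the quantitative gain in a width-reduction step.

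For a width-reduction step, recall the set of edges whose resistance is doubled is exactly $S:=\{e:\abs{\NN\Dtil}_e\ge\rho,\ \rr_e\le\beta\}$, where $\Dtil$ is the oracle's solution at $(i,k)$ and also the minimizer defining $\Psi(\rr^{(i,k)})$. First I would show that $S$ carries a constant fraction of the $\ell_p$ mass of $\NN\Dtil$, by partitioning the coordinates into $S$, $A:=\{e:\abs{\NN\Dtil}_e<\rho\}$, and $B:=\{e:\abs{\NN\Dtil}_e\ge\rho,\ \rr_e>\beta\}$. Two standing facts are used: $\rr_e\ge1$ for all $e$ (initially $\ww=\vone$ and $\ww$ only grows), and optimality of $\Dtil$ gives $\sum_e\rr_e(\NN\Dtil)_e^2 \le 3^{p-2}\,\Psi(\rr)$ (the $3^{p-2}$ from the normalization in the $\Psi$-objective). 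For $A$, $\sum_{e\in A}\abs{\NN\Dtil}_e^p=\sum_{e\in A}\abs{\NN\Dtil}_e^{p-2}(\NN\Dtil)_e^2\le\rho^{p-2}\sum_e(\NN\Dtil)_e^2\le\rho^{p-2}\cdot3^{p-2}\Psi(\rr)$, which is $\le\tau/3$ by hypothesis~(2). For $B$, using $p\ge2$ (so $\norm{\cdot}_{p/2}\le\norm{\cdot}_1$) together with $\rr_e>\beta$ there,
\[
\sum_{e\in B}\abs{\NN\Dtil}_e^p=\sum_{e\in B}\bigl((\NN\Dtil)_e^2\bigr)^{p/2}\le\Bigl(\sum_{e\in B}(\NN\Dtil)_e^2\Bigr)^{p/2}\le\Bigl(\tfrac1\beta\sum_{e\in B}\rr_e(\NN\Dtil)_e^2\Bigr)^{p/2}\le\Bigl(\tfrac{3^{p-2}\Psi(\rr)}{\beta}\Bigr)^{p/2},
\]
which is $\le\tau/3$ by hypothesis~(1). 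Since the step is a width-reduction step, $\norm{\NN\Dtil}_p^p>\tau$, so $\sum_{e\in S}\abs{\NN\Dtil}_e^p = \norm{\NN\Dtil}_p^p-\sum_{e\in A}-\sum_{e\in B}>\tau/3$.

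To finish, I plug this into Lemma~\ref{lem:ckmst:res-increase}: only $S$-edges change (each with $\rr'_e=2\rr_e$, so $1-\rr_e/\rr'_e=\tfrac12$), hence $\Psi(\rr^{(i,k+1)})\ge\Psi(\rr^{(i,k)})+\tfrac12\sum_{e\in S}\rr_e(\NN\Dtil)_e^2$. Using $\rr_e\ge1$ and then $\norm{\cdot}_2\ge\norm{\cdot}_p$ (valid for $p\ge2$) on the $\abs S$-dimensional vector $(\NN\Dtil)_{e\in S}$,
\[
\sum_{e\in S}\rr_e(\NN\Dtil)_e^2\ \ge\ \sum_{e\in S}(\NN\Dtil)_e^2\ \ge\ \Bigl(\sum_{e\in S}\abs{\NN\Dtil}_e^p\Bigr)^{2/p}\ >\ (\tau/3)^{2/p}\ \ge\ \tfrac13\,\tau^{2/p},
\]
the last inequality because $3^{2/p}\le3$ for $p\ge2$. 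Combining gives $\Psi(\rr^{(i,k+1)})\ge\Psi(\rr^{(i,k)})+\Omega(1)\tau^{2/p}$, which is the width-reduction claim.

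I expect the only real obstacle to be the bookkeeping of the $p$-dependent constants: the oracle and $\Psi$ carry a $1/3^{p-2}$ weight on the $\ell_p$-energy term, and converting between $\sum_e\rr_e(\NN\Dtil)_e^2$ and $\Psi(\rr)$ (or between $\ell_2$- and $\ell_p$-energies) repeatedly introduces $3^{\Theta(p)}$ factors. One must check these are consistently swallowed by the $\Omega(1)$'s in hypotheses~(1)--(2) and in the conclusion -- this is why those are stated as $\Omega(1)$ rather than explicit constants -- and, downstream, that the parameter choices $\tau,\rho,\beta$ in Algorithm~\ref{alg:FasterOracleAlgorithm} actually meet~(1)--(2) with the constants produced (which holds because $\tau=m_1^{\Theta(1)}$ while $3^{\Theta(p)}=m_1^{o(1)}$ in the regime $p\le\log m$ where the solver is invoked). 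If one instead reads Lemma~\ref{lem:ckmst:res-increase} with the $1/3^{p-2}$ carried into its last term, the identical computation goes through and that factor is again absorbed. Everything else is the three-way coordinate split above plus monotonicity of $\ell_q$-norms in $q$.
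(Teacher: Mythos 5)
Your proof is correct and follows essentially the same route as the paper's: the same three-way partition of edges by $\abs{\NN\Dtil}_e$ versus $\rho$ and $\rr_e$ versus $\beta$, the same appeal to Lemma~\ref{lem:ckmst:res-increase} with the factor-$2$ resistance increase, and the same $\ell_p$-to-$\ell_2$ norm conversion. The only difference is cosmetic — you subtract the high-resistance edges' contribution in $\ell_p$-mass space before converting to $\ell_2$ energy, while the paper converts $H\cup B$ first and subtracts $B$'s $\ell_2$ energy afterwards — and the $3^{\Theta(p)}$ constants you flag are handled (or rather, glossed over) at exactly the same level of care in the paper's own argument.
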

\subsection*{Proof of Theorem \ref{lem:FasterAlgo}}
\deeksha{p factors are all off}
\begin{proof}
We begin by setting all our parameter values.
\begin{itemize}
\item $\alpha \leftarrow \Theta \left(p^{-1}m_1^{-\frac{p^2-5p+2}{p(3p-2)}}\right)$
\item $\tau \leftarrow \Theta\left(m_1^{\frac{(p-1)(p-2)}{(3p-2)}}\right)$
\item $\beta = \Theta\left(m_1^{\frac{p-2}{3p-2}}\right)$
\item $\rho = \Theta\left(m_1^{\frac{(p^2-4p+2)}{p(3p-2)}}\right)$
\end{itemize}
Note that the above values satisfy the relations $p^p \alpha^p \tau = p \alpha m_1^{\frac{p-2}{p}}$. 
%

Let $m_1^{-1/p}\xx$ be the solution returned by Algorithm \ref{alg:FasterOracleAlgorithm}. Note that this satisfies the linear constraint required. We will now bound the values of $m_1^{-2/p}\xx^{\top}\MM^{\top}\MM\xx$ and $m_1^{-1}\|\NN\xx\|_p^p$. If the algorithm terminates in $T = \alpha^{-1} m_1^{1/p}$ flow steps and $K \leq 2^{-\frac{p}{p-2}} \rho^2 m_1^{2/p} \beta^{-\frac{2}{p-2}}$ width reduction steps, then from Lemma \ref{lem:ReduceWidthGammaPotential},
\[
\Phi\left(\ww^{(T,K)}\right) \le  O(3^p) m_1 e^{1} = O\left(3^p m_1\right)
\]
Note that throughout the algorithm $\ww \geq |\NN\xx|$. This means that the algorithm returns $m_1^{- \frac{1}{p}} \xx$ with 
\begin{align*}
m_1^{-1}\|\NN\xx\|_p^p \leq \frac{1}{m_1} \norm{\ww^{(T,K)}}_p^p  =\frac{1}{m_1} \Phi\left(\ww^{(T,K)}\right) \leq O(3^p).
\end{align*}
To bound the other term, let $\Dtil^{(t)}$ denote the solution returned by the oracle in iteration $t$. Note that, since $\Phi \leq O(3^p) m_1$ for all iterations, we always have $\Psi(\rr) \leq O(1) m_1^{\frac{p-2}{p}}$. We claim that $\left(\Dtil^{(t)}\right)^{\top}\MM^{\top}\MM\Dtil^{(t)} \leq O(1)$ for all $t$. To see this, note that from Lemma \ref{lem:ElectricalPotentialStartFinishBounds},
\[
m_1^{\frac{p-2}{p}} \left(\Dtil^{(t)}\right)^{\top}\MM^{\top}\MM\Dtil^{(t)} \leq \Psi(\rr) \leq O(1) m_1^{\frac{p-2}{p}}.
\]
We also know that $\xx = \sum_t \frac{\alpha}{p}\Dtil^{(t)}$. Combining this and the convexity of $\|\xx\|_2^2$, we get
\begin{align*}
m_1^{-2/p}\|\MM\xx\|_2^2 \leq {\alpha^2}m_1^{-2/p} T \sum_t \|\MM\Dtil^{(t)}\|_2^2 \leq \alpha^2 m_1^{-2/p} T^2 O(1) \leq O(1).
\end{align*}
This concludes the first part of the proof that if the number of width reduction steps are bounded, then we return a solution with the required values. We will now show that we cannot have more width reduction steps.

 Suppose to the contrary, the algorithm takes a width reduction step starting from step $(i,k)$ where $i < T$ and  $k = 2^{-\frac{p}{p-2}} \rho^2 m_1^{2/p}\beta^{-\frac{2}{p-2}}$. Since the conditions for Lemma~\ref{lem:ReduceWidthGammaPotential}
hold for all preceding steps, we must have $\Phi\left(\ww^{(i,k)}\right) \leq O(3^p) m_1$. We note that our parameter values satisfy $ \tau^{2/p} \geq \Omega(1)  \frac{\Psi}{\beta}$ and $\tau \geq \Omega(1)\rho^{p-2} \Psi$ since $\Psi \leq O(1) m_1^{\frac{p-2}{p}}$.

This means that at every step $(j,l)$ preceding the current step,
the conditions of Lemma~\ref{lem:ReduceWidthElectricalPotential} are
satisfied, so we can prove by a simple induction that 
\deeksha{Need to fix the $p$ dependencies}
\[
 {\Psi\left({\rr^{(i,k+1)}}\right)} \geq {\Psi\left({\rr^{(0,0)}}\right)}  +  \Omega\left(1\right)\tau^{2/p} k.  
\]
Since our parameter choices ensure $\tau^{2/p}k > \Theta\left(m_1\right)$,
\[
 {\Psi\left({\rr^{(i,k+1)}}\right)} - {\Psi\left({\rr^{(0,0)}}\right)}   >  \Omega\left(m_1\right).
\]
Since $\Phi\left(\ww^{(i,k)}\right) \leq O(3^p) m_1$, 
\[
\Psi\left({\rr^{(i,k+1)}}\right) - \Psi\left({\rr^{(0,0)}}\right)
\leq O\left(m_1^{\frac{p-2}{p}} \right),
\]
which is a contradiction. We can thus conclude that we can never have more than $K = 2^{\frac{-p}{p-2}}\rho^2 m_1^{2/p} \beta^{-\frac{2}{p-2}}$ width reduction steps, thus concluding the correctness of the returned solution. We next bound the number of oracle calls required. The total number of iterations is at most,
\[
T + K \leq \alpha^{-1}m_1^{1/p} +  2^{-p/(p-2)}\rho^2 m_1^{2/p} \beta^{-\frac{2}{p-2}} \leq O \left(p m_1^{\frac{p-2}{3p-2}}\right).
\]
\end{proof}

\subsection{Missing Proofs}
\label{sec:MissingProofsAppendix}
\Oracle*
\begin{proof}
Since $\Dtil$ is the solution returned by Algorithm \ref{alg:oracle}, and $\Dopt$ satisfies the constraints of the oracle, we have,
\begin{align*}
  \sum_e \rr_e (\NN\Dtil)_e^2 \leq \sum_e \rr_e (\NN\Delta^*)_e^2 & = \sum_e \ww_e^{p-2} (\NN\Delta^*)_e^2 \leq   \norm{\ww}^{p-2}_p.
\end{align*}
In the last inequality we use,
\begin{align*}
\sum_e \ww_e (\NN\Dopt)_e^2 & \leq \left(\sum_e (\NN\Dopt)_e^{2\cdot \frac{p}{2}} \right)^{2/p}  \left(\sum_e \abs{\ww_e}^{(p-2)\cdot \frac{p}{p-2}}\right)^{(p-2)/p}\\
& = \|\NN\Dopt\|_p^2 \|\ww\|_p^{(p-2)/p} \\
& \leq \|\ww\|_p^{(p-2)/p}, \text{since $\norm{\NN\Delta^*}_p \leq 1$ }.
 \end{align*}
 Finally, using $\rr_e \ge  1,$ we have
$\sum_e (\NN \Delta)_e^{2} \le \sum_e \rr_e (\NN\Delta)_e^{2},$ concluding the proof.
\end{proof}
%

\ReduceWidthGammaPotential*

\begin{proof}
  We prove this claim by induction. Initially, $i = k = 0,$
  and $\Phi\left(\ww^{(0,0)}\right) = m_1,$ and thus, the claim holds trivially. Assume that the claim holds for some $i,k \ge 0.$
We will use $\Phi$ as an abbreviated notation for $\Phi\left(\ww^{(i,k)}\right)$ below.
\paragraph*{Flow Step.}
For brevity, we use $\ww$ to denote $\ww^{(i,k)}$. If the next step is a \emph{flow} step, 

\begin{align*}
\Phi\left(\ww^{( i+1,k)} \right) = &\norm{ \ww^{(i,k)} + \alpha \abs{\NN\Dtil}}_p^p \\
\leq & \norm{\ww}_p^p +  \alpha \abs{(\NN\Dtil)}^{\top}\abs{\nabla \|\ww\|_p^p}  +  2p^2\alpha^2  \sum_e |\ww_e|^{p-2} \abs{\NN\Dtil}_e^2 + \alpha^p p^p \|\NN\Dtil\|_p^p\\
&   \quad \text{by Lemma B.1 of \cite{AdilPS19}}
\end{align*}
We next bound $\abs{(\NN\Dtil)}^{\top} \abs{\nabla \|\ww\|_p^p}$ as,
\[
\textstyle \sum_e \abs{(\NN\Dtil)_e} \abs{\nabla_e \|\ww\|_p^p}  \leq  p \norm{\ww}_p^{p-1}.
\]
Using Cauchy Schwarz's inequality,
\begin{align*}
 \left(\sum_e \abs{\NN\Dtil}_e \abs{\nabla_e \|\ww\|_p^p}\right)^2  = & p^2\left(\sum_e  \abs{\NN\Dtil}_e  |\ww_e|^{p-2}\abs{\ww_e}\right)^2\\
 \leq & p^2 \left(\sum_{e} |\ww_e|^{p-2} \ww_e^2\right) \left(\sum_e |\ww_e|^{p-2}(\NN\Dtil)_e^2\right) \\
  = & p^2 \|\ww\|_p^p\sum_e \rr_e (\NN\Dtil)_e^2\\
  \leq & p^2 \|\ww\|_p^{2p-2}
\end{align*}
We thus have,
\begin{align*}
 \sum_e \abs{\NN\Dtil}_e\abs{\nabla_e \|\ww\|_p^p} &\leq  p \|\ww\|_p^{p-1}.
\end{align*}

Using the above bound, we now have,

\begin{align*}
\Phi\left(\ww^{( i+1,k)} \right)\leq &  \norm{\ww}_p^p +  p\alpha \norm{\ww}_p^{p-1} +  2p^2 \alpha^2  \norm{\ww}_p^{p-2} + p^p \alpha^p  \|\NN\Dtil\|_p^p\\
\leq &  \norm{\ww}_p^p +  p\alpha \norm{\ww}_p^{p-1} +  2p^2 \alpha^2  \norm{\ww}_p^{p-2} + p \alpha m_1^{\frac{p-1}{p}},\\
      & \quad \text{by Assumption~\ref{enu:pPowerStep} of this
      Lemma}\\
 \intertext{ Recall $\|\ww\|_p^p = \Phi(\ww).$ Since $\Phi \geq m_1$, we have,}
    \leq
    & \Phi(\ww) +  p \alpha \Phi(\ww)^{\frac{p-1}{p}} +  2 p^2 \alpha^{2}  \Phi(\ww)^{\frac{p-2}{p}} +  p \alpha \Phi(\ww)^{\frac{p-1}{p}} \\
  \leq
  & (\Phi(\ww)^{1/p} + 2  \alpha)^{p}.
\end{align*}

  From the inductive assumption, we have
  \begin{align*}
  \Phi(\ww) & \le
    \left({2\alpha  i} + m_1^{\nfrac{1}{p}} \right)^{p}
    \exp\left(O_p(1) \frac{k}{\rho^2 m_1^{2/p} \beta^{-\frac{2}{p-2}}} \right).
       \end{align*}
    Thus,
    \[
      \Phi(i+1,k)  \le (\Phi(\ww)^{1/p} + 2 \alpha )^{p}
      \le \left({2\alpha (i+1)} + m_1^{\nfrac{1}{p}}\right)^{p} \exp\left(O_p(1) \frac{k}{\rho^2 m_1^{2/p} \beta^{-\frac{2}{p-2}}} \right)
    \]
    proving
    the inductive claim.

 \paragraph*{Width Reduction Step.}
 We have the following:
\[
\sum_{e \in H} \rr_e \leq \rho^{-2} \sum_{e \in H} \rr_e (\NN\Delta)_e^2 \leq\rho^{-2}  \sum_{e} \rr_e (\NN\Delta)_e^2 \leq \rho^{-2}  \|\ww\|_p^{p-2}\leq \rho^{-2}\Phi^{\frac{p-2}{p}},
\]
and
\begin{align*}
\Phi(i,k+1) & \le \Phi + \sum_{e \in H} \abs{ \ww_e^{k+1}}^p\\
&  \leq \Phi + 2^{\frac{p}{p-2}} \sum_{e \in H}  |\ww_e|^{p}\\
& \leq \Phi + 2^{\frac{p}{p-2}} \sum_e \rr_e^{\frac{p}{p-2}} \\
& \leq  \Phi + 2^{\frac{p}{p-2}} \left( \sum_{e \in H} \rr_e \right)\left( \max_{e \in H} \rr_e \right) ^{\frac{p}{p-2}-1}  \\
& \leq \Phi + 2^{\frac{p}{p-2}}  \rho^{-2}\Phi^{\frac{p-2}{p}}\beta^{\frac{2}{p-2}}.
\end{align*}
Again, since $\Phi(\ww) \geq m_1$,
 \[
 \Phi(i,k+1)  \le \Phi \left( 1+ 2^{\frac{p}{p-2}} \rho^{-2} m_1^{-\frac{2}{p}}\beta^{\frac{2}{p-2}} \right) \le \left(2\alpha  i + m_1^{\nfrac{1}{p}}\right)^{p} \exp\left(2^{\frac{p}{p-2}}  \frac{k+1}{\rho^2 m_1^{2/p} \beta^{-\frac{2}{p-2}}} \right)
    \]
    proving
    the inductive claim.
\end{proof}

\ckmstResIncrease*

\begin{proof}
\[
\Psi(\rr) = \min_{\AA\xx = \cc} \xx^{\top}\MM^{\top}\MM\xx + \xx^{\top}\NN^{\top}\RR\NN\xx.
\]
Constructing the Lagrangian and noting that strong duality holds,
\begin{align*}
\Psi(\rr) &= \min_{\xx}\max_{\yy} \quad \xx^{\top}\MM^{\top}\MM\xx + \xx^{\top}\NN^{\top}\RR\NN\xx + 2\yy^{\top}(\cc-\AA\xx)\\
& =\max_{\yy} \min_{\xx} \quad \xx^{\top}\MM^{\top}\MM\xx + \xx^{\top}\NN^{\top}\RR\NN\xx + 2\yy^{\top}(\cc-\AA\xx).
\end{align*}
Optimality conditions with respect to $\xx$ give us,
\[
2\MM^{\top}\MM\xx^{\star} + 2\NN^{\top}\RR\NN\xx^{\star} = 2\AA^{\top}\yy.
\]
Substituting this in $\Psi$ gives us,
\[
\Psi(\rr) = \max_{\yy}\quad 2\yy^{\top}\cc  - \yy^{\top}\AA \left(\MM^{\top}\MM+ \NN^{\top}\RR\NN\right)^{-1} \AA^{\top}\yy.
\]
Optimality conditions with respect to $\yy$ now give us,
\[
2\cc  =  2 \AA \left(\MM^{\top}\MM+ \NN^{\top}\RR\NN\right)^{-1} \AA^{\top} \yy^{\star},
\]
which upon re-substitution gives,
\[
\Psi(\rr) =  \cc^{\top}\left(\AA \left(\MM^{\top}\MM+ \NN^{\top}\RR\NN\right)^{-1} \AA^{\top}\right)^{-1} \cc.
\]
We also note that 
\begin{equation}\label{eq:optimizer}
\xx^{\star} = \left(\MM^{\top}\MM+ \NN^{\top}\RR\NN\right)^{-1}\AA^{\top}\left(\AA \left(\MM^{\top}\MM+ \NN^{\top}\RR\NN\right)^{-1} \AA^{\top}\right)^{-1}\cc.
\end{equation}
We now want to see what happens when we change $\rr$. Let $\RR$ denote the diagonal matrix with entries $\rr$ and let $\RR' = \RR+\SS$, where $\SS$ is the diagonal matrix with the changes in the resistances. We will use the following version of the Sherman-Morrison-Woodbury formula multiple times,
\[
(\XX + \UU\CC\VV)^{-1} = \XX^{-1} - \XX^{-1}\UU(\CC^{-1} + \VV\XX^{-1}\UU)^{-1}\VV\XX^{-1}.
\]
We begin by applying the above formula for $\XX = \MM^{\top}\MM+ \NN^{\top}\RR\NN$, $\CC = \II$, $\UU = \NN^{\top}\SS^{1/2}$ and $\VV = \SS^{1/2}\NN$. We thus get,
\begin{multline}
\left(\MM^{\top}\MM+ \NN^{\top}\RR'\NN\right)^{-1} = \left(\MM^{\top}\MM+ \NN^{\top}\RR\NN\right)^{-1} -  \left(\MM^{\top}\MM+ \NN^{\top}\RR\NN\right)^{-1}\NN^{\top}\SS^{1/2} \\
\left(\II + \SS^{1/2}\NN \left(\MM^{\top}\MM+ \NN^{\top}\RR\NN\right)^{-1}\NN^{\top}\SS^{1/2}\right)^{-1}\SS^{1/2}\NN \left(\MM^{\top}\MM+ \NN^{\top}\RR\NN\right)^{-1}.
\end{multline}
We next claim that, 
\[
\II + \SS^{1/2}\NN \left(\MM^{\top}\MM+ \NN^{\top}\RR\NN\right)^{-1}\NN^{\top}\SS^{1/2} \preceq \II + \SS^{1/2}\RR^{-1}\SS^{1/2} ,
\]
which gives us,
\begin{multline}
\left(\MM^{\top}\MM+ \NN^{\top}\RR'\NN\right)^{-1} \preceq \left(\MM^{\top}\MM+ \NN^{\top}\RR\NN\right)^{-1} -  \\ \left(\MM^{\top}\MM+ \NN^{\top}\RR\NN\right)^{-1}\NN^{\top}\SS^{1/2}(\II + \SS^{1/2}\RR^{-1}\SS^{1/2})^{-1}\SS^{1/2}\NN \left(\MM^{\top}\MM+ \NN^{\top}\RR\NN\right)^{-1}.
\end{multline}
This further implies,
\begin{multline}
\AA\left(\MM^{\top}\MM+ \NN^{\top}\RR'\NN\right)^{-1}\AA^{\top} \preceq \AA\left(\MM^{\top}\MM+ \NN^{\top}\RR\NN\right)^{-1} \AA^{\top} -  \\ \AA\left(\MM^{\top}\MM+ \NN^{\top}\RR\NN\right)^{-1}\NN^{\top}\SS^{1/2}(\II + \SS^{1/2}\RR^{-1}\SS^{1/2})^{-1}\SS^{1/2}\NN \left(\MM^{\top}\MM+ \NN^{\top}\RR\NN\right)^{-1}\AA^{\top}.
\end{multline}
We apply the Sherman-Morrison formula again for, $\XX =\AA\left( \MM^{\top}\MM+ \NN^{\top}\RR\NN\right)^{-1}\AA^{\top}$, $\CC = -(\II + \SS^{1/2}\RR^{-1}\SS^{1/2})^{-1}$, $\UU = \AA\left(\MM^{\top}\MM+ \NN^{\top}\RR\NN\right)^{-1}\NN^{\top}\SS^{1/2}$ and $\VV = \SS^{1/2}\NN \left(\MM^{\top}\MM+ \NN^{\top}\RR\NN\right)^{-1}\AA^{\top}$. Let us look at the term $\CC^{-1} + \VV\XX^{-1}\UU$.
\[
-\left(\CC^{-1} + \VV\XX^{-1}\UU\right)^{-1} =    \left(\II + \SS^{1/2}\RR^{-1}\SS^{1/2} -  \VV\XX^{-1}\UU\right)^{-1}  \succeq   (\II + \SS^{1/2}\RR^{-1}\SS^{1/2})^{-1}.
\]
Using this, we get,
\[
\left(\AA\left(\MM^{\top}\MM+ \NN^{\top}\RR'\NN\right)^{-1}\AA^{\top}\right)^{-1} \succeq \XX^{-1} + \XX^{-1}\UU(\II + \SS^{1/2}\RR^{-1}\SS^{1/2})^{-1}\VV\XX^{-1},
\]
which on multiplying by $\cc^{\top}$ and $\cc$ gives,
\[
\Psi(\rr') \geq \Psi(\rr) +  \cc^{\top} \XX^{-1}\UU(\II + \SS^{1/2}\RR^{-1}\SS^{1/2})^{-1}\VV\XX^{-1} \cc.
\]
We note from Equation \eqref{eq:optimizer} that $\xx^{\star} = \left(\MM^{\top}\MM+ \NN^{\top}\RR\NN\right)^{-1}\AA^{\top} \XX^{-1}\cc$.
We thus have,
\begin{align*}
\Psi(\rr') &\geq \Psi(\rr) +  \left(\xx^{\star}\right)^{\top} \NN^{\top}\SS^{1/2} (\II + \SS^{1/2}\RR^{-1}\SS^{1/2})^{-1}\SS^{1/2}\NN\xx^{\star}\\
& = \Psi(\rr) + \sum_e \left(\frac{\rr'_e -\rr_e}{\rr'_e}\right)\rr_e (\NN\xx^{\star})_e \\
\end{align*}

\end{proof}

\ElectricalPotentialStartFinishBounds*
\begin{proof}
For the lower bound in the initial state, since $\rr_e \geq 1$, for any solution $\Delta$, we have
\[
\Psi(\rr^{(0, 0)}) \geq m^{\frac{p-2}{p}}\Dtil^{\top}\MM^{\top}\MM\Dtil + \frac{1}{3^{p-2}}\sum_{e} \rr^{\left( 0, 0 \right)} (\NN\Dtil)_e^2 = \|\MM\Dtil\|_2^2 + \frac{1}{3^{p-2}} \|\NN\Dtil\|_2^2 \geq \frac{1}{3^{p-2}} \|\MM+\NN\|_{\min}^2\|\Dtil\|_2^2 ,
\]
where $\|\MM+\NN\|^2_{\min} = \min_{\AA\xx = \cc} \|\MM\xx\|_2^2 + \|\NN\xx\|_2^2$. Note that if $\|\MM+\NN\|^2_{\min} = 0$ then the oracle has returned the optimum in the first iteration.
On the other hand, because
\[
\|\cc\|_2 = \norm{\AA \Dtil}_{2} \leq
\norm{\AA} \norm{\Dtil}_2,
\]
where $\|\AA\|$ is the operator norm of $\AA$. We get
\[
\norm{\Dtil}_2 \geq \frac{\|\cc\|_2}{\norm{\AA}},
\]
upon which squaring gives the lower bound on ${\Psi(\rr^{(0, 0)})}$.

\noindent For the upper bound, Lemma~\ref{lem:Oracle} implies that,
  \begin{align*}
{\Psi\left({\rr^{(i,k)}}\right) } & =m_1^{\frac{p-2}{p}} \Dtil^{\top}\MM^{\top}\MM\Dtil + \frac{1}{3^{p-2}}\sum_e \rr_e (\NN\Dtil)_e^2 \\
& \le m_1^{\frac{p-2}{p}} \Dopt^{\top}\MM^{\top}\MM\Dopt + \frac{1}{3^{p-2}}\sum_e \rr_e (\NN\Dopt)_e^2\\
& \le  m_1^{\frac{p-2}{p}} + \frac{1}{3^{p-2}} \norm{\ww}_p^{p-2} \\
& \le m_1^{\frac{p-2}{p}} + \frac{1}{3^{p-2}}\Phi(i,k)^{\frac{p-2}{p}}.
  \end{align*}
\end{proof}

\ReduceWidthElectricalPotential*

\begin{proof}
It will be helpful for our analysis to split the index set into three
disjoint parts:
\begin{itemize}
\item $S =  \setof{e : \abs{\NN\Delta_e} \leq \rho }$
\item $H = \setof{e : \abs{\NN\Delta_e} > \rho \text{ and } \rr_e \leq \beta }$
\item $B = \setof{e : \abs{\NN\Delta_e} > \rho \text{ and } \rr_e > \beta }$.
\end{itemize}
Firstly, we note 
\begin{align*}
\sum_{e \in S} \abs{\NN\Delta}_e^{p} \leq \rho^{p-2} \sum_{e \in S} \abs{\NN\Delta}_e^{2} 
 \leq \rho^{p-2}  \sum_{e \in S} \rr_e \abs{\NN\Delta}_e^{2}  \leq \rho^{p-2} \Psi(\rr).
\end{align*}
hence, using Assumption~\ref{eq:parametersEnsuringElectricalPotentialGrowth2}
\begin{align*}
  \sum_{e \in H \union B} \abs{\NN \Delta}_e^p
  \geq \sum_{e} \abs{\NN\Delta}_e^p - \sum_{e \in S}
  \abs{\NN\Delta}_e^{p}
  \geq \tau - \rho^{p-2} \Psi(\rr)
  \geq \Omega(1) \tau.
\end{align*}
This means, 
\[
\sum_{e \in H \union B}(\NN \Delta)_e^2 \geq  \left(\sum_{e \in H \union B} |\NN\Delta|_e^p\right)^{2/p} \geq \Omega\left(1\right) \tau^{2/p}.
\]
Secondly we note that, 
\[
\sum_{e \in B} (\NN\Delta)_e^2 \leq \beta^{-1} \sum_{e \in B} \rr_e (\NN\Delta)_e^2 \leq \beta^{-1}  \Psi(\rr).
\] 
So then, using Assumption~\ref{eq:parametersEnsuringElectricalPotentialGrowth1},
\begin{align*}
\sum_{e \in H } (\NN\Delta)_e^2 = \sum_{e \in H \union B } (\NN\Delta)_e^2 - \sum_{e \in B}(\NN \Delta)_e^2  \geq 
\Omega\left(1\right) \tau^{2/p} -\beta^{-1} \Psi(\rr) \geq \Omega\left(1\right) \tau^{2/p}.
\end{align*}
As $\rr_e \geq 1$, this implies $\sum_{e \in H } \rr_e (\NN\Delta)_e^2\geq \Omega\left(1\right) \tau^{2/p}$ .
We note that in a width reduction step, the resistances change by a factor of 2.
Thus, combining our last two observations, and applying Lemma~\ref{lem:ckmst:res-increase}, we get
\[
{\Psi\left({\rr^{(i,k+1)}}\right)} \geq {\Psi\left({\rr^{(i,k)}}\right)}+  \Omega\left(1\right) \tau^{2/p}.
\]
Finally, for the ``flow step'' case, we use the trivial bound from Lemma~\ref{lem:ckmst:res-increase}, ignoring the second term, 
\[
{\Psi\left({\rr^{(i,k+1)}}\right)} \geq {\Psi\left({\rr^{(i,k)}}\right)}. 
\]
\end{proof}

 \section{$\ell_p$-Regression}
 \label{sec:lp-reg}
%

%


\begin{definition}[$\kappa$-approximate solution] Let $\kappa \geq 1$. A $\kappa$-approximate solution for the residual problem is $\Dtil$ such that $\AA\Dtil = 0$ and $\res(\Dtil) \geq \frac{1}{\kappa}\res(\Delta^\star),$ where $\Delta^\star = \argmax_{\AA\Delta = 0} \res(\Delta)$.
\end{definition}

\begin{restatable}{lemma}{IterativeRefinement}(Iterative Refinement \cite{AdilPS19}). \label{lem:IterativeRefinement}
Let $p \geq 2$, and $\kappa \geq 1$. Starting from an initial feasible solution $\xx^{(0)}$,  and iterating as $\xx^{(t+1)} = \xx^{(t)} - \frac{\Delta}{p}$, where $\Delta$ is a $\kappa$-approximate solution to the residual problem (Definition \ref{def:residual}), we get an $\epsilon$-approximate solution to \eqref{eq:Problem} in at most $O\left(p \kappa \log \left(\frac{\ff\left(\xx^{(0)}\right)-\opt}{\epsilon \opt} \right)\right)$ calls to a $\kappa$-approximate solver for the residual problem. 
\end{restatable}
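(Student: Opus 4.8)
The plan is to derive the result from two structural facts about the residual problem of Definition~\ref{def:residual}: (i) one step of the iteration $\xx^{(t+1)} = \xx^{(t)} - \Delta/p$ decreases the objective by at least $\res(\Delta)$, and (ii) the best achievable residual value is at least a $1/O(p)$ fraction of the current optimality gap. Combining (i), (ii) and the $\kappa$-approximation guarantee gives a fixed multiplicative decrease per step, and the iteration count follows by a geometric-series argument. Note first that if $\AA\xx = \cc$ and $\AA\Delta = \vzero$ then $\AA(\xx - \Delta/p) = \cc$, so feasibility is maintained throughout.

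\textbf{Per-step progress.} Fix a feasible iterate $\xx$, let $\gg,\RR$ be as in Definition~\ref{def:residual}, and let $\Delta$ be any vector with $\AA\Delta = \vzero$. Expanding $\ff(\xx - \Delta/p) = \bb^\top(\xx-\tfrac\Delta p) + \|\MM(\xx-\tfrac\Delta p)\|_2^2 + \|\NN(\xx-\tfrac\Delta p)\|_p^p$, using the exact identity for the quadratic term and applying, coordinate-wise to $\|\NN(\xx-\tfrac\Delta p)\|_p^p$, the one-sided smoothness estimate $|a-\delta|^p \le |a|^p - p|a|^{p-2}a\delta + 2p^2|a|^{p-2}\delta^2 + p^p|\delta|^p$ (Lemma B.1 of~\cite{AdilPS19}) with $a = (\NN\xx)_e$, $\delta = \tfrac1p(\NN\Delta)_e$, one finds that the first-order terms collect exactly into $-\gg^\top\Delta$ while the remainder terms are dominated by $\Delta^\top\RR\Delta + \|\NN\Delta\|_p^p$ — this is precisely why $\gg$ and $\RR$ were defined as they are. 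Hence
\[
  \ff(\xx) - \ff\!\left(\xx - \tfrac{\Delta}{p}\right) \;\geq\; \gg^\top\Delta - \Delta^\top\RR\Delta - \|\NN\Delta\|_p^p \;=\; \res(\Delta).
\]

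\textbf{Residual captures the gap.} Let $\xx^\star$ be optimal for~\eqref{eq:Problem} and put $\ZZ = \xx - \xx^\star$, so $\AA\ZZ = \vzero$. Writing $\opt = \ff(\xx - \ZZ)$ and applying, coordinate-wise, the \emph{lower} smoothness (uniform-convexity / $\gamma_p$-function) estimates for $p\textsuperscript{th}$ powers from~\cite{AdilKPS19,AdilPS19}, one obtains an inequality of the form $p\,\gg^\top\ZZ \geq (\ff(\xx) - \opt) + \|\MM\ZZ\|_2^2 + R$, where $R \ge 0$ dominates a $\tfrac1{O(\poly p)}$ multiple of $\sum_e |(\NN\xx)_e|^{p-2}(\NN\ZZ)_e^2 + \|\NN\ZZ\|_p^p$. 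Evaluating $\res$ at the scaled feasible direction $\lambda\ZZ$, and using that $\Delta^\top\RR\Delta = \tfrac{2}{p^2}\|\MM\Delta\|_2^2 + 2\sum_e |(\NN\xx)_e|^{p-2}(\NN\Delta)_e^2$, a short computation shows that for a suitable step size $\lambda = \Theta(1)$ — chosen so the coefficients of the $\ell_2$- and $\ell_p$-remainder terms remain non-negative, where the $(p-1)$-st-root scaling $\lambda^{p-1}$ of the $\|\NN\ZZ\|_p^p$ term is exactly what keeps $\lambda$ bounded away from $0$ uniformly in $p$ — we get $\res(\lambda\ZZ) \geq \tfrac{\lambda}{p}(\ff(\xx)-\opt)$, hence $\max_{\AA\Delta=\vzero}\res(\Delta) \geq \tfrac{c}{p}(\ff(\xx)-\opt)$ for an absolute constant $c>0$.

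\textbf{Iterating.} At step $t$, let $\Delta$ be the $\kappa$-approximate residual solution used; since $\res(\vzero)=0$ the optimal residual value is non-negative, so by definition of $\kappa$-approximation and Step 2, $\res(\Delta) \geq \tfrac1\kappa \max_{\AA\Delta'=\vzero}\res(\Delta') \geq \tfrac{c}{p\kappa}(\ff(\xx^{(t)})-\opt)$. Combining with Step 1, $\ff(\xx^{(t+1)}) - \opt \leq \bigl(1 - \tfrac{c}{p\kappa}\bigr)\bigl(\ff(\xx^{(t)}) - \opt\bigr)$, so after $T = O\!\bigl(p\kappa\log\tfrac{\ff(\xx^{(0)}) - \opt}{\epsilon\,\opt}\bigr)$ steps we have $\ff(\xx^{(T)}) - \opt \leq \epsilon\,\opt$, and each step makes exactly one call to the $\kappa$-approximate solver, as claimed. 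The main obstacle is Step~2: Steps 1 and 3 are routine algebra, but obtaining the progress factor $c/p$ (rather than one that degrades like $2^{-p}$) requires the sharp two-regime lower bounds on $|x|^p - $ (its linearization) — the ``$\gamma_p$-function'' machinery of~\cite{AdilKPS19,AdilPS19} — which I would invoke as a black box rather than re-derive.
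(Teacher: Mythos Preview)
Your proposal is correct and follows essentially the same approach as the paper: both arguments invoke the two-sided smoothness bounds of Lemma~B.1 in \cite{AdilPS19} to obtain (i) $\ff(\xx-\Delta/p)\le \ff(\xx)-\res(\Delta)$ and (ii) $\res(\Dopt)\ge \tfrac{1}{O(p)}(\ff(\xx)-\opt)$, and then iterate for geometric convergence. The paper phrases (ii) as the inequality $\ff(\xx)-\ff(\xx-\lambda\Delta/p)\le \lambda\,\res(\Delta)$ with $\lambda=16p$ (and then plugs in $\Delta=\tfrac{1}{16}(\xx-\xx^\star)$), which is exactly your ``evaluate $\res$ at $\lambda\ZZ$ with $\lambda=\Theta(1)$'' step viewed from the other side. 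One minor inaccuracy: in your displayed inequality $p\,\gg^\top\ZZ \ge (\ff(\xx)-\opt)+\|\MM\ZZ\|_2^2+R$, the coefficient on $\|\MM\ZZ\|_2^2$ that actually falls out is $\tfrac{1}{8p}$, not $1$; and the coefficient on $\|\NN\ZZ\|_p^p$ is $2^{-(p+1)}$, not $1/\poly(p)$ --- but as you yourself note, the $\lambda^{p-1}$ scaling absorbs that exponential factor, so the plain Lemma~B.1 lower bound already suffices and you do not need any additional $\gamma_p$ machinery here.
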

\begin{proof}
Let $\ff(\xx) =  \bb^{\top}\xx + \|\MM\xx\|_2^2 + \|\NN\xx\|_p^p$ and $\res(\Delta) = \gg^{\top}\Delta -  \Delta^{\top}\RR\Delta - \|\NN\Delta\|_p^p$. Observe that,
\[
\bb^{\top}(\xx+\Delta) = \bb^{\top}\xx + \bb^{\top}\Delta,
\]
and
\[
\|\MM(\xx+\Delta)\|_2^2 = \|\MM\xx\|_2^2 + 2\Delta^{\top}\MM^{\top}\MM\xx + \|\MM\Delta\|_2^2.
\]
Using Lemma B.1 from \cite{AdilPS19} we have,
\[
 \|\NN(\xx+\Delta)\|_p^p \leq  \|\NN\xx\|_p^p + p(\NN\Delta)^{\top}|\NN\xx|^{p-2}\NN\xx + 2p^2 \Delta^{\top}\NN^{\top}Diag(|\NN\xx|^{p-2})\NN\Delta + p^p \|\NN\Delta\|_p^p, 
\]
and,
\[
 \|\NN(\xx+\Delta)\|_p^p \geq \|\NN\xx\|_p^p + p(\NN\Delta)^{\top}|\NN\xx|^{p-2}\NN\xx + \frac{p}{8} \Delta^{\top}\NN^{\top}Diag(|\NN\xx|^{p-2})\NN\Delta + \frac{1}{2^{p+1}} \|\NN\Delta\|_p^p.
\]
Using these relations, we have,
\[
\ff(\xx+\Delta) \leq \ff(\xx) + p\gg^{\top}\Delta + p^2 \Delta^{\top}\RR\Delta + p^p\|\NN\Delta\|_p^p,
\]
or 
\[
\ff\left(\xx-\frac{\Delta}{p}\right) \leq \ff(\xx) - \res(\Delta).
\]
A lower bound looks like,
\[
\ff(\xx+\Delta) \geq \ff(\xx) +p \gg^{\top}\Delta + \frac{p}{16} \Delta^{\top}\RR\Delta + \frac{1}{2^{p+1}}\|\NN\Delta\|_p^p.
\]
For $\lambda = 16p$,
\begin{align*}
\ff(\xx) - \ff\left(\xx-\lambda \frac{\Delta}{p}\right) &\leq \lambda \gg^{\top}\Delta - \frac{\lambda^2}{16p} \Delta^{\top}\RR\Delta - \frac{\lambda^p}{p^p2^{p+1}}\|\NN\Delta\|_p^p\\
& \leq \lambda \left(\gg^{\top}\Delta - \frac{\lambda}{16p} \Delta^{\top}\RR\Delta - \frac{\lambda^{p-1}}{p^p2^{p+1}}\|\NN\Delta\|_p^p\right)\\
& \leq \lambda \left(\gg^{\top}\Delta - \Delta^{\top}\RR\Delta - \|\NN\Delta\|_p^p\right)\\
& = \lambda \res(\Delta).
\end{align*}
These relations are the same as Lemma B.2 of \cite{AdilPS19}. We can follow the proof further from \cite{AdilPS19} to get our result.
\end{proof}

\subsubsection*{Solving the Residual Problem}
\begin{restatable}{lemma}{BinarySearch}
\label{lem:BinarySearch}
Let $\Dopt$ denote the optimum of the residual problem at $\xx^{(t)}$ and $\opt$ denote the optimum of Problem \eqref{eq:Problem}. We have that $\res(\Dopt) \in (\nu/2, \nu]$ for some $\nu \in \left[\epsilon \frac{\opt}{p}, \ff\left(\xx^{(0)}\right)-\opt\right]$.
\end{restatable}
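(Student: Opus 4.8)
The plan is to pin down $\res(\Dopt)$ between a lower bound of order $\tfrac{\eps}{p}\opt$ and the upper bound $\ff(\xx^{(0)})-\opt$, both obtained from the two one-sided estimates established inside the proof of Lemma~\ref{lem:IterativeRefinement}, and then to take $\nu$ to be the smallest dyadic value (in the sequence scanned by the while loop of Algorithm~\ref{alg:Meta-Algo}) that is at least $\res(\Dopt)$, so that automatically $\res(\Dopt)\in(\nu/2,\nu]$.

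For the \emph{upper bound}, observe that $\Dopt$ is a feasible residual direction, so $\xx^{(t)}-\Dopt/p$ is feasible for~\eqref{eq:Problem} and hence $\ff(\xx^{(t)}-\Dopt/p)\ge\opt$. Substituting $\Delta=\Dopt$ into the inequality $\ff(\xx-\Delta/p)\le\ff(\xx)-\res(\Delta)$ (derived in the proof of Lemma~\ref{lem:IterativeRefinement} and valid for every $\Delta$) gives $\opt\le\ff(\xx^{(t)})-\res(\Dopt)$, i.e.\ $\res(\Dopt)\le\ff(\xx^{(t)})-\opt$; since the objective is non-increasing along the iterates produced by Algorithm~\ref{alg:Meta-Algo} (a consequence of Lemma~\ref{lem:IterativeRefinement}, which guarantees decreasing objective values until the accuracy target is met), this is at most $\ff(\xx^{(0)})-\opt$. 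For the \emph{lower bound}, I would invoke the aggressive-step estimate $\ff(\xx)-\ff(\xx-\lambda\Delta/p)\le\lambda\,\res(\Delta)$ with $\lambda=16p$, again from the proof of Lemma~\ref{lem:IterativeRefinement}, applied at $\xx=\xx^{(t)}$ with $\Delta=\tfrac{1}{16}(\xx^{(t)}-\xx^\star)$, where $\xx^\star$ is an optimal solution of~\eqref{eq:Problem}. This $\Delta$ is feasible because $\AA\xx^{(t)}=\AA\xx^\star=\cc$, and $\lambda\Delta/p=\xx^{(t)}-\xx^\star$, so the left-hand side equals exactly $\ff(\xx^{(t)})-\opt$; hence $\res(\Delta)\ge\tfrac{1}{16p}(\ff(\xx^{(t)})-\opt)$, and since $\Dopt$ maximizes $\res$ over feasible directions, $\res(\Dopt)\ge\tfrac{1}{16p}(\ff(\xx^{(t)})-\opt)$. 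Combining this with the fact that the residual subroutine is only reached while $\xx^{(t)}$ has not yet attained the target accuracy — so that $\ff(\xx^{(t)})-\opt$ is at least of order $\eps\,\opt$, which is precisely what the loop guard in Algorithm~\ref{alg:Meta-Algo} encodes — gives $\res(\Dopt)\gtrsim\eps\,\opt/p$.

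With both bounds in place, choosing $\nu$ as above puts it in the interval $[\eps\,\opt/p,\ \ff(\xx^{(0)})-\opt]$ (in the non-degenerate regime where this interval is non-empty; otherwise $\xx^{(0)}$ already meets the accuracy target and there is nothing to prove), which completes the argument. The step I expect to be the main obstacle is not any single inequality but the bookkeeping around the lower endpoint: one must track the absolute constants hidden in Lemma~\ref{lem:IterativeRefinement} and in the choice $\lambda=16p$, and carefully relate the ``not yet $\eps$-accurate'' status of $\xx^{(t)}$ to the quantitative loop guard $\nu\ge\eps\,\ff(\xx^{(0)})/(\kappa p)$, so that the constant in $\res(\Dopt)\gtrsim\eps\,\opt/p$ comes out exactly as stated.
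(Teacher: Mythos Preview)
Your approach is essentially identical to the paper's: the upper bound via $\res(\Dopt)\le \ff(\xx^{(t)})-\ff(\xx^{(t)}-\Dopt/p)\le \ff(\xx^{(0)})-\opt$, and the lower bound by applying the $\lambda=16p$ inequality from Lemma~\ref{lem:IterativeRefinement} to the step $\Delta=\tfrac{1}{16}(\xx^{(t)}-\xx^\star)$ toward the optimum, combined with the assumption that $\xx^{(t)}$ is not yet $\epsilon$-accurate. You are in fact more careful than the paper about the constants --- the paper's proof literally yields $\res(\Dopt)\ge \epsilon\,\opt/\lambda=\epsilon\,\opt/(16p)$ rather than $\epsilon\,\opt/p$, so the discrepancy you flag is real but is a looseness in the paper's own statement, not a defect in your argument.
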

\begin{proof}
From the above proof, we note that for any $\xx$, let $\Dopt$ be the optimum of the residual problem.
\[
\res(\Dopt) \leq \ff(\xx) -\ff\left(\xx-\frac{\Dopt}{p}\right) \leq \ff\left(\xx^{(0)}\right)-\opt.
\]
Let $\Delta$ be the step we take to reach the optimum from $\xx$.
\[
\res(\Dopt) \geq \res(\Delta) \geq \frac{\ff(\xx) - \opt}{\lambda} \geq \epsilon \frac{\opt}{\lambda},
\]
where the last inequality follows since otherwise $\xx$ is an $\epsilon$ approximate solution. The lemma thus follows.
\end{proof}
%
\begin{restatable}{lemma}{ptoq}
\label{lem:p-to-q}
Let $p\geq p'$ and $\nu$ be such that $\res_p(\Dopt) \in (\nu/2,\nu]$,
where $\Dopt$ is the optimum of the residual problem for $p$-norm
(Definition~\ref{def:residual}). The following problem has optimum between
$\left[\nu/32,O(1)m^{\frac{1}{p'-1}}\nu\right] \defeq (a\nu,b\nu]$.
\begin{align}
\label{eq:qProblem}
  \max_{\AA \Delta = 0} \quad & \gg^{\top}\Delta -
  \Delta^{\top}\RR\Delta - \frac{1}{2} \left(\frac{\nu}{m}
  \right)^{1-\frac{p'}{p}} \norm{\NN\Delta}_{p'}^{p'}
\end{align}
For $\beta \geq 1$, if $\Dtil$ is a $\beta$-approximate solution to the above problem, then $\alpha\Dtil$ gives an $8\frac{b\beta^2}{a^2}m^{\frac{p}{p-1}\left(\frac{1}{p'}- \frac{1}{p}\right)+\frac{1}{p'-1}}$ approximate solution to the residual problem,
where $\alpha =\frac{a}{4b\beta}m^{-\frac{p}{p-1}\left(\frac{1}{p'}- \frac{1}{p}\right)-\frac{1}{p'-1}}$. 
\end{restatable}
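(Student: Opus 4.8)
The argument runs off one structural fact about the two residual objectives. For a concave program $\max_{\AA\Delta=\vzero}\paren{\gg^{\top}\Delta-\Delta^{\top}\RR\Delta-\Phi(\Delta)}$ with $\RR\succeq\vzero$ and $\Phi$ convex and positively homogeneous of degree $q\ge 2$, any maximizer $\Delta^{\star}$ also maximizes the objective along the ray $t\mapsto t\Delta^{\star}$; differentiating that one‑variable function at $t=1$ gives $\gg^{\top}\Delta^{\star}=2\Delta^{\star\top}\RR\Delta^{\star}+q\,\Phi(\Delta^{\star})$, hence the optimal value equals $\Delta^{\star\top}\RR\Delta^{\star}+(q-1)\Phi(\Delta^{\star})$. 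Applied to $\res$ (with $q=p$, $\Phi=\norm{\NN\cdot}_p^p$) this yields $\res(\Dopt)=\Dopt^{\top}\RR\Dopt+(p-1)\norm{\NN\Dopt}_p^p\in(\nu/2,\nu]$, so $\Dopt^{\top}\RR\Dopt\le\nu$ and $\norm{\NN\Dopt}_p^p\le\nu/(p-1)$; applied to the objective $\res_{p'}(\Delta)\defeq\gg^{\top}\Delta-\Delta^{\top}\RR\Delta-h(\Delta)$ of \eqref{eq:qProblem}, where $h(\Delta)\defeq\tfrac12\paren{\nu/m}^{1-p'/p}\norm{\NN\Delta}_{p'}^{p'}$, it gives the analogous identities at a maximizer $\Delta^{\star}_{p'}$. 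I will also use the elementary comparisons $\norm{x}_p\le\norm{x}_{p'}\le m^{1/p'-1/p}\norm{x}_p$ (valid since $p'\le p$), which give for every $\Delta$
\[
\tfrac12\paren{\tfrac{\nu}{m}}^{1-p'/p}\norm{\NN\Delta}_p^{p'}\ \le\ h(\Delta)\ \le\ \tfrac12\,\nu^{1-p'/p}\norm{\NN\Delta}_p^{p'}\ \le\ \tfrac12\max\paren{\norm{\NN\Delta}_p^p,\ \nu},
\]
the last inequality by weighted AM--GM, $a^{p'/p}\le\max(a,1)$.

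For the lower bound on the optimum of \eqref{eq:qProblem}, I would evaluate $\res_{p'}$ at $c\Dopt$ for $c=\min\{\tfrac12,(\tfrac14)^{1/(p'-1)}\}\ge\tfrac14$. With $g\defeq\gg^{\top}\Dopt=\res(\Dopt)+\Dopt^{\top}\RR\Dopt+\norm{\NN\Dopt}_p^p>\nu/2$, one has $c^2\Dopt^{\top}\RR\Dopt\le\tfrac c4 g$ (since $c\le\tfrac12$ and $g\ge2\Dopt^{\top}\RR\Dopt$) and $h(c\Dopt)\le\tfrac{c^{p'}}{2}\max(\norm{\NN\Dopt}_p^p,\nu)\le\tfrac{c^{p'}\nu}{2}\le\tfrac c4 g$ (using $\norm{\NN\Dopt}_p^p\le\nu$, $c^{p'-1}\le\tfrac14$, $g>\nu/2$), so $\res_{p'}(c\Dopt)\ge cg-\tfrac c4 g-\tfrac c4 g=\tfrac c2 g>\tfrac{c\nu}{4}\ge\tfrac{\nu}{16}>a\nu$.

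For the upper bound I would push $\Delta^{\star}_{p'}$ back into $\res$: set $t^{*}\defeq\argmax_{t}\res(t\Delta^{\star}_{p'})$; the ray fact gives $\res(t^{*}\Delta^{\star}_{p'})=(t^{*})^2 R'+(p-1)(t^{*})^p y'\le\res(\Dopt)\le\nu$ (so $(t^*)^2R'\le\nu$ and $(t^*)^py'\le\nu/(p-1)$) and $\gg^{\top}\Delta^{\star}_{p'}=2t^{*}R'+p(t^{*})^{p-1}y'$, with $R'\defeq\Delta^{\star\top}_{p'}\RR\Delta^{\star}_{p'}$, $y'\defeq\norm{\NN\Delta^{\star}_{p'}}_p^p$. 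These bound the optimum of \eqref{eq:qProblem} by $\gg^{\top}\Delta^{\star}_{p'}\le 4\nu/t^{*}$ (using $p/(p-1)\le2$ and that the optimum, equal to $R'+(p'-1)h(\Delta^{\star}_{p'})$, is at most $\gg^{\top}\Delta^{\star}_{p'}=2R'+p'h(\Delta^{\star}_{p'})$). If $t^{*}\ge1$ this is $\le4\nu$; if $t^{*}<1$, equating the two expressions for $\gg^{\top}\Delta^{\star}_{p'}$ forces $p(t^{*})^{p-1}y'\ge p'h(\Delta^{\star}_{p'})\ge\tfrac{p'}{2}(\nu/m)^{1-p'/p}(y')^{p'/p}$, which together with $(t^{*})^p y'\le\nu/(p-1)$ gives, on isolating $t^{*}$, $t^{*}\ge(p'/2p)^{1/(p'-1)}m^{-(1-p'/p)/(p'-1)}$; since $(1-p'/p)/(p'-1)\le1/(p'-1)$ this makes the optimum $\le4(2p/p')^{1/(p'-1)}m^{1/(p'-1)}\nu=:b\nu$, i.e. $b=O(1)m^{1/(p'-1)}$ in the regime of interest. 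This scale-control step, and making the constants line up, is the main obstacle.

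Finally, to transfer an approximate solution: given $\beta$-approximate $\Dtil$ for \eqref{eq:qProblem}, replace it by $t^{\dagger}\Dtil$ with $t^{\dagger}\defeq\argmax_t\res_{p'}(t\Dtil)$ --- still feasible, with $\res_{p'}$ no smaller --- so I may assume $\Dtil$ is ray-optimal, $\gg^{\top}\Dtil=2\Dtil^{\top}\RR\Dtil+p'h(\Dtil)$ and $\res_{p'}(\Dtil)=\Dtil^{\top}\RR\Dtil+(p'-1)h(\Dtil)>a\nu/\beta$. Then $\gg^{\top}\Dtil\ge2\paren{\Dtil^{\top}\RR\Dtil+h(\Dtil)}$ yields $\gg^{\top}\Dtil\le2\res_{p'}(\Dtil)\le2b\nu$, $\Dtil^{\top}\RR\Dtil\le b\nu$, and via the lower bound on $h$, $\norm{\NN\Dtil}_p^p\le\paren{2b/(p'-1)}^{p/p'}\nu\,m^{(p-p')/p'}$. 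For $\alpha=\tfrac{a}{4b\beta}m^{-E}$ with $E=\tfrac{p}{p-1}\paren{\tfrac1{p'}-\tfrac1p}+\tfrac1{p'-1}$, I would check $\alpha\,\Dtil^{\top}\RR\Dtil\le\tfrac14\gg^{\top}\Dtil$ (immediate from $\Dtil^{\top}\RR\Dtil\le b\nu$, $\gg^{\top}\Dtil>a\nu/\beta$, $\alpha\le\tfrac{a}{4b\beta}$) and $\alpha^{p-1}\norm{\NN\Dtil}_p^p\le\tfrac14\gg^{\top}\Dtil$ (the exponent $E$ is chosen precisely so the powers of $m$ cancel, leaving a constant dominated by the surviving factor $m^{(p-1)/(p'-1)}\ge m$). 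With both, $\res(\alpha\Dtil)\ge\alpha\gg^{\top}\Dtil-\alpha^2\Dtil^{\top}\RR\Dtil-\alpha^p\norm{\NN\Dtil}_p^p\ge\tfrac{\alpha}{2}\gg^{\top}\Dtil>\tfrac{a\alpha}{2\beta}\nu\ge\tfrac1{\kappa}\res(\Dopt)$, where $\kappa=\tfrac{2\beta}{a\alpha}=\tfrac{8b\beta^2}{a^2}m^{E}$ --- exactly the claimed factor.
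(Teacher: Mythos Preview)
Your argument follows the same skeleton as the paper's---use the first-order ray condition at the optimum, compare the $p$- and $p'$-residuals via $\norm{\cdot}_p\le\norm{\cdot}_{p'}\le m^{1/p'-1/p}\norm{\cdot}_p$, and transfer an approximate $p'$-solution back to $p$ by scaling---but the execution differs in two places worth noting.

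\emph{Lower bound.} You evaluate $\res_{p'}$ directly at $c\Dopt$ and obtain $\nu/16$; the paper instead invokes an auxiliary point from \cite{AdilS20} to get $\nu/32$. Your argument is self-contained and slightly sharper, so this is a genuine simplification.

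\emph{Upper bound.} The paper plugs a scaled copy of the $p'$-optimizer into $\res_p$ and reads off $\beta\le O(1)m^{1/(p'-1)}$ with a constant independent of $p,p'$. Your route via $t^{*}=\argmax_t\res(t\Delta^{\star}_{p'})$ is equivalent in spirit but leaves a factor $(2p/p')^{1/(p'-1)}$ in $b$. You correctly flag that this is $O(1)$ only ``in the regime of interest'' (here $p'=\log m$, $p$ at most quasipolynomial), so this is a minor loss rather than an error.

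\emph{Transfer step.} Here there is a genuine subtlety. Your ``WLOG replace $\Dtil$ by its ray-optimal scaling $t^{\dagger}\Dtil$'' does yield the clean bounds $\gg^{\top}\Dtil\le 2b\nu$, $\Dtil^{\top}\RR\Dtil\le b\nu$, and hence $\res_p(\alpha\Dtil)\ge a\alpha\nu/(2\beta)$---but only for the \emph{replaced} vector. The lemma as stated promises that the fixed scalar $\alpha$ applied to the original $\Dtil$ works, and $t^{\dagger}$ is not bounded in general (take $\Dtil=\epsilon\Delta^{\star}_{p'}$ for tiny $\epsilon$). So what you actually prove is that $\alpha t^{\dagger}\Dtil$ achieves the claimed approximation, i.e.\ some efficiently computable scalar multiple of $\Dtil$ works. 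Interestingly, the paper's own proof has the same gap: it asserts $\gg^{\top}\Dtil\le b\nu$ for a generic $\beta$-approximate $\Dtil$ without justification, which is not implied by $\res_{p'}(\Dtil)\ge\mathrm{opt}/\beta$ alone. For the downstream application (Algorithm~\ref{alg:Meta-Algo}) this is harmless, since the solver in fact returns a $\Dtil$ with explicit per-term bounds (cf.\ Lemma~\ref{lem:Decision} and Theorem~\ref{cor:ResidualDecision}), and in any case one can always rescale first; but if you want the lemma to hold verbatim you should either add a ray-optimality hypothesis or state the conclusion for $\alpha t^{\dagger}\Dtil$.
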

\begin{proof}
  We will first show that the optimum of Problem \eqref{eq:qProblem}
  is at most $O(m^{\frac{1}{p'-1}}\nu)$. Suppose the optimum $\Dopt$ is such that it
  gives an objective value of $\beta\nu$. Using a scaling argument as
  in the above proof, we can conclude that,
  \[
    \gg^{\top} \Dopt - \Dopt^{\top}\RR\Dopt - \frac{1}{2}
    \left(\frac{\nu}{m} \right)^{1-\frac{p'}{p}}
    \norm{\NN\Dopt}_{p'}^{p'} = \Dopt^{\top}\RR\Dopt + (p'-1)
    \frac{1}{2} \left(\frac{\nu}{m} \right)^{1-\frac{p'}{p}}
    \norm{\NN\Dopt}_{p'}^{p'} = \beta \nu.
  \]
  This implies that $\gg^{\top} \Dopt \geq \beta \nu$,
  $\Dopt^{\top}\RR\Dopt \leq \beta \nu$ and
  $\norm{\NN\Delta^\star}_{p'}^{p'} \leq 2 \beta \nu^{p'/p} m^{1-p'/p}$. We
  thus have,
  \[
    \norm{\NN\Delta^\star}_{p}^{p} \leq (2\beta)^{p/p'} m^{p/p'- 1}\nu .
  \]
Let $\alpha =  \frac{1}{16}m^{-\frac{p}{p-1}\left(\frac{1}{p'}- \frac{1}{p}\right)}$ be some scaling factor. Now,
\[
\alpha^2  \Dopt^{\top}\RR\Dopt  \leq \alpha\frac{\beta \nu}{8},
\]
and
\begin{equation}
 \alpha^{p} \norm{\NN\Dopt}_{p}^{p} \leq  \alpha  \frac{1}{16^{p-1}}m^{-\left(\frac{p}{p'}- 1\right)}\norm{\NN\Dopt}_{p}^{p}  \leq \frac{\alpha}{8^{p-1}} 2^{\frac{p}{p'} - (p-1)} \beta^{\frac{p}{p'} } \nu \leq \alpha \beta^{\frac{p}{p'} }\frac{\nu}{8}.
\end{equation}
Consider,
\[
\res_p\left(\beta^{-\frac{p-p'}{p'(p-1)}}\alpha\Dopt\right) \geq \left(\beta^{-\frac{p-p'}{p'(p-1)}}\alpha\right)\left(\gg^{\top}\Dopt - \frac{\beta \nu}{8} - \frac{\beta \nu}{8} \right) \geq \beta^{\frac{p(p'-1)}{p'(p-1)}}\frac{\alpha}{8} \nu.
\]
Since $\res_p(\cdot)$ has optimum at most $\nu$, we must have
\[
\beta^{\frac{p(p'-1)}{p'(p-1)}}\frac{\alpha}{8} \leq 1,
\]
which gives,
\[
\beta \leq O(1)m^{\frac{1}{p'-1}}.
\]
Thus Problem \eqref{eq:qProblem} has an optimum at most $O(1)m^{\frac{1}{p'-1}}\nu = b \nu$. To obtain a lower bound, consider $\Dtil$ obtained in Lemma 3.1 of \cite{AdilS20}. We will evaluate the objective at $\frac{1}{8}\Dtil$,
\[
\gg^{\top} \frac{1}{8}\Dtil-   \frac{1}{8^2}\Dtil^{\top}\RR\Dtil  - \frac{1}{8^p}\frac{1}{2} \left(\frac{\nu}{m} \right)^{1-\frac{p'}{p}} \norm{\NN\Dtil}_{p'}^{p'} \geq \frac{\nu}{16} - \frac{\nu}{32} = \frac{\nu}{32} = a\nu.
\]
Therefore, the optimum of Problem \eqref{eq:qProblem} must be at least $\frac{\nu}{32}$. We next look at how a $\beta$-approximate solution of \eqref{eq:qProblem} translates to an approximate solution of the residual problem for $p$. Let $\Dtil$ be a $\beta$ approximate solution to \eqref{eq:qProblem} and $\Dopt$ denote its optimum. Denote the objective at $\Delta$ for \eqref{eq:qProblem} as $\res_{p'}(\Delta)$. We know that $\res_{p'}(\Dopt) \geq \res_{p'}(\Dtil) \geq \frac{1}{\beta}\res_{p'}(\Dopt)$. Note that, $\gg^{\top} \Dtil$ has to be between $a\nu/\beta$ and $ b\nu$. This ensures that, 
\[
 \Dtil^{\top}\RR\Dtil  +  \frac{1}{2} \left(\frac{\nu}{m} \right)^{1-\frac{p'}{p}} \norm{\NN\Dtil}_{p'}^{p'} \leq \gg^{\top} \Dtil  \leq b \nu.
\]
Let $\alpha =\frac{a}{4b\beta}m^{-\frac{p}{p-1}\left(\frac{1}{p'}- \frac{1}{p}\right)- \frac{1}{p'-1}}$. Following the calculations above, we have,
\[
\alpha^2  \Dtil^{\top}\RR\Dtil  \leq \alpha\frac{a\nu}{4\beta},
\]
and
\begin{equation*}
 \alpha^{p} \norm{\NN\Dtil}_{p}^{p} \leq  \alpha  \frac{a^{p-1}}{4^{p-1}b^{p-1}\beta^{p-1}}m^{-\left(\frac{p}{p'}- 1\right)- \frac{p-1}{p'-1}}\norm{\NN\Dtil}_{p}^{p}  \leq \alpha\frac{a\nu}{4\beta}.
\end{equation*}
Therefore,
\[
\res_p(\alpha\Dtil) \geq  \frac{a\alpha}{2\beta} \nu.
\]

\end{proof}

\deeksha{There should be a way to avoid the $\beta^2$. It basically comes form the fact that now the range of values for the linear and other terms has a $\beta$ factor in it. This works for now so will get to improving it later. Fixing the linear term made a lot of difference in terms of avoiding these extra factors.}
\begin{restatable}{lemma}{DecisionProblem}
\label{lem:Decision}
At $\xx^{(t)}$, let $\nu$ be such that $\res(\Dopt) \in (a\nu, b\nu]$ for some values $a$ and $b$. The following problem has optimum at most $b\nu$.
\begin{align}
  \label{eq:Decision}
  \begin{aligned}
  \min_{\Delta} & \quad \Delta^{\top}\RR\Delta + \|\NN\Delta\|_p^p \\
  \text{s.t.} & \quad  \gg^{\top}\Delta = a\nu, \AA\Delta = 0.
  \end{aligned}
\end{align}
Further, if $\Dtil$ is a feasible solution to Problem \eqref{eq:Decision} such that $\Dtil^{\top}\RR\Dtil \leq \alpha b \nu$ and $\|\NN\Dtil\|_p^p \leq \beta b \nu$ , then we can pick a scalar $\mu = \frac{a}{4b\alpha \beta^{\nfrac{1}{(p-1)}}}$ such that $\mu\Dtil$ is a $\frac{16 b^2 \alpha \beta^{\nfrac{1}{(p-1)}}}{a^2}$-approximate solution to the residual problem.
\end{restatable}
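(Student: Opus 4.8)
The plan is to read off a scaling (first-order) optimality condition for the residual maximizer $\Dopt$ and then carry out two one-parameter scaling comparisons. Throughout one has $a<b$ (forced by $\res(\Dopt)\in(a\nu,b\nu]$ with $\nu>0$), and I will assume $\alpha,\beta\ge 1$; this costs nothing, since inflating $\alpha,\beta$ only weakens the hypothesis on $\Dtil$, and it is the regime in all our applications (where $\alpha=O(1)$, $\beta=O(3^p)$, so $\beta^{1/(p-1)}=O(1)$).

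\textbf{Step 1 (scaling optimality of $\Dopt$).} Since $\Dopt$ maximizes $\res$ over the scale-invariant set $\{\AA\Delta=\vzero\}$, the scalar function $t\mapsto\res(t\Dopt)=t\,\gg^\top\Dopt-t^2\,\Dopt^\top\RR\Dopt-t^p\norm{\NN\Dopt}_p^p$ is maximized at $t=1$; setting its derivative there to zero gives $\gg^\top\Dopt=2\,\Dopt^\top\RR\Dopt+p\norm{\NN\Dopt}_p^p$, and substituting back yields $\res(\Dopt)=\Dopt^\top\RR\Dopt+(p-1)\norm{\NN\Dopt}_p^p$. In particular $\Dopt^\top\RR\Dopt\le\res(\Dopt)$ and $\norm{\NN\Dopt}_p^p\le\res(\Dopt)$ (using $p\ge 2$), and $\gg^\top\Dopt=\res(\Dopt)+\Dopt^\top\RR\Dopt+\norm{\NN\Dopt}_p^p>a\nu$.

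\textbf{Step 2 (optimum of \eqref{eq:Decision} is at most $b\nu$).} I would test the point $s\Dopt$ with $s=a\nu/(\gg^\top\Dopt)\in(0,1]$. It is feasible for \eqref{eq:Decision} because $\AA(s\Dopt)=\vzero$ and $\gg^\top(s\Dopt)=a\nu$, and since $0<s\le 1$ and $p\ge 2$ we have $s^p\le s^2\le 1$, so its objective is $s^2\Dopt^\top\RR\Dopt+s^p\norm{\NN\Dopt}_p^p\le\Dopt^\top\RR\Dopt+\norm{\NN\Dopt}_p^p\le\res(\Dopt)\le b\nu$.

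\textbf{Step 3 (pulling an approximate solution back).} Given feasible $\Dtil$ with $\Dtil^\top\RR\Dtil\le\alpha b\nu$ and $\norm{\NN\Dtil}_p^p\le\beta b\nu$, set $\mu=\frac{a}{4b\alpha\beta^{1/(p-1)}}$ and expand $\res(\mu\Dtil)=\mu a\nu-\mu^2\Dtil^\top\RR\Dtil-\mu^p\norm{\NN\Dtil}_p^p$ using $\gg^\top\Dtil=a\nu$. The two checks are: $\mu\le\frac{a}{4\alpha b}$ (true since $\beta^{1/(p-1)}\ge 1$), which gives $\mu^2\Dtil^\top\RR\Dtil\le\tfrac14\mu a\nu$; and $\mu^{p-1}=\frac{a^{p-1}}{(4b\alpha)^{p-1}\beta}\le\frac{a}{4b\beta}$, which is equivalent to $\left(\frac{a}{4b}\right)^{p-2}\le\alpha^{p-1}$ and holds because $a<b$ and $\alpha\ge 1$, and which gives $\mu^p\norm{\NN\Dtil}_p^p\le\tfrac14\mu a\nu$. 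Hence $\res(\mu\Dtil)\ge\tfrac12\mu a\nu=\frac{a^2\nu}{8b\alpha\beta^{1/(p-1)}}$, and dividing by $\res(\Dopt)\le b\nu$ from Step 2 shows $\mu\Dtil$ is a $\frac{8b^2\alpha\beta^{1/(p-1)}}{a^2}$-approximate, hence certainly a $\frac{16b^2\alpha\beta^{1/(p-1)}}{a^2}$-approximate, solution to the residual problem.

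\textbf{Main obstacle.} There is no deep difficulty here; the one place that needs care is Step 3, where a \emph{single} $\mu$ has to damp the quadratic term and the $\ell_p$ term simultaneously — getting the $\ell_p$ bound through is exactly where one must track exponents and use $\alpha,\beta\ge 1$ and $a<b$. A secondary technicality is making Step 1 rigorous: one needs $\Dopt\neq\vzero$ (excluded by $\res(\Dopt)>a\nu\ge 0$) and boundedness of the residual objective (guaranteed by the standing orthogonality assumptions on $\gg$ in Problem~\eqref{eq:Problem}).
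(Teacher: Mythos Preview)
Your proof is correct and follows essentially the same route as the paper: the same scaling first-order condition for $\Dopt$, the same feasible test point $s\Dopt$ with $s=a\nu/\gg^\top\Dopt$, and the same choice of $\mu$ followed by the same two damping checks. You are in fact slightly more careful than the paper, which silently uses $\alpha,\beta\ge 1$ and $a<b$ in the very step you flag as the ``main obstacle''; your explicit remark that these assumptions are harmless in the intended regime (where $\alpha=O(1)$, $\beta=O(3^p)$) is exactly right.
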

Adapted from the proof of Lemma B.4 of \cite{AdilPS19}
\begin{proof}
The assumption on the residual is
\[
\res(\Dopt) = \gg^{\top}\Dopt -\Dopt^{\top}\RR\Dopt - \norm{\NN\Dopt}_p^p \in  (a\nu,b\nu].
\]
Since the last $2$ terms are strictly non-positive, we must have, $\gg^{\top}\Dopt  \geq a\nu.$
Since $\Dopt$ is the optimum and satisfies $\AA\Dopt = 0$, 
\[
\frac{d}{d\lambda}\left(\gg^{\top}\lambda \Dopt -  \lambda^2\Dopt^{\top}\RR\Dopt - \lambda^p \norm{\NN\Dopt}_p^p \right)_{\lambda = 1} =0.
\]
Thus,
\[
\gg^{\top} \Dopt -   \Dopt^{\top}\RR\Dopt  - \norm{\AA\Dopt}_p^p  =  \Dopt^{\top}\RR\Dopt  + (p-1) \norm{\NN\Dopt}_p^p.
\]
Since $p \ge 2,$ we get the following
\[
\Dopt^{\top}\RR\Dopt  +  \norm{\NN\Dopt}_p^p \leq \gg^{\top} \Dopt -   \Dopt^{\top}\RR\AA\Dopt  -  \norm{\AA\Dopt}_p^p \leq b\nu.
\]
For notational convenience, let function $ \hh_p(\RR,\Delta) =  \Delta^{\top}\RR\Delta  +  \norm{\NN\Delta}_p^p$.
Now, we know that, $\gg^{\top}\Dopt  \geq a\nu$ and $ \gg^{\top} \Dopt - \hh_p(\RR,\Dopt) \leq b \nu$. This gives, 
\[
a\nu\leq \gg^{\top} \Dopt \leq \hh_p(\RR,\Dopt) + b\nu \leq 2b\nu.
\]
Let $\Delta = \delta \Dopt$, where $\delta = \frac{a\nu}{\gg^{\top}\Dopt}$. Note that $\delta \in [a/2b, 1]$. Now, $\gg^{\top}\Delta = a\nu$ and, 
\[
\hh_p(\RR,\Delta) \leq \max\{\delta^2,\delta^p\} \hh_p(\RR,\Dopt) \leq  b\nu.
\]
Note that this $\Delta$ satisfies the constraints of program \eqref{eq:Decision} and has an optimum at most $b\nu$. So the optimum of the program must have an objective at most $b\nu$. Now let $\Dtil$ be a $(\alpha,\beta)$ approximate solution to \eqref{eq:Decision}, i.e.,
\[
\Dtil^{\top}\RR\Dtil  \leq \alpha b\nu \quad \text{and} \quad \|\NN\Dtil\|_p^p \leq \beta b\nu.
\]
Let $\mu = \frac{a}{4 \alpha \beta^{1/(p-1)} b}$. We have,
\begin{align*}
\gg^{\top}(\mu\Dtil) - \hh_p(\RR,\mu\Dtil) & \geq \mu\left( a\nu - \mu  \Dtil^{\top}\RR\Dtil - \mu^{p-1}\|\NN\Dtil\|_p^p\right)\\
&\geq \mu \left(a\nu - \frac{a\nu}{4\beta^{1/(p-1)}} - \frac{a \nu}{4\alpha}\right) \\
& \geq \mu \left(a\nu - \frac{a\nu}{4} - \frac{a \nu}{4}\right)\\
& \geq \frac{a\mu}{4b} b\nu \geq \frac{a^2}{16b^2\alpha \beta^{1/(p-1)}} \opt. 
\end{align*}
\end{proof}

\MetaThm*
\begin{proof}
From Lemma \ref{lem:IterativeRefinement} we know that given an instance of Problem \eqref{eq:Problem}, at every $\xx^{(t)}$ we can define a residual problem and if $\Dbar$ is a $\beta$ approximate solution of the residual problem, updating $\xx^{(t)}$ to $\xx^{(t)} -\frac{\Dbar}{p}$, we can find the required $\epsilon$ approximate solution in 
\[
O\left(p\beta \log \left(\frac{\ff\left(\xx^{(0)}\right)-\opt}{\epsilon\opt} \right)\right) \leq O\left(p\beta \log \left(\frac{\kappa}{\epsilon} \right)\right)
\]
iterations. The last inequality follows since $\frac{\ff\left(\xx^{(0)}\right)-\opt}{\epsilon\opt} \leq \frac{\kappa \opt}{\epsilon\opt} = \frac{\kappa}{\epsilon}$. It is thus sufficient to solve a residual problem at $\xx$ that looks like,
 \[
 \max_{\AA\Delta = 0} \quad  \gg^{\top}\Delta -  \Delta^{\top}\RR\Delta - \|\NN\Delta\|_p^p.
 \] 
 Here $\gg$ and $\RR$ depend on $\xx^{(t)},\MM,\NN$. Now, suppose we have $\res(\Dopt) \in (\nu/2,\nu]$. We will consider the following cases:
 \begin{enumerate}
 \item $p \leq \log m$:\\
 We apply {\sc Sparsify} to $\gg,\RR,\NN$ to get $\widetilde{\gg},\widetilde{\RR},\widetilde{\NN}$. Now is $\Delta$ is a $\beta$ approximate solution to 
  \begin{equation}\label{eq:Spar_res}
 \max_{\AA\Delta = 0} \quad \widetilde{\gg}^{\top}\Delta -  \Delta^{\top}\widetilde{\RR}\Delta - \|\widetilde{\NN}\Delta\|_{p}^{p}, 
 \end{equation}
 then $\mu_2\Delta$ is a $\kappa_2\beta$ approximate solution to the residual problem. We will now solve the above problem. Note that the size of this problem is $m$. Let $\Dtil$ be a $\kappa_3,\kappa_4$ approximate solution to Problem \eqref{eq:InsideProb}. From \ref{lem:Decision}, $\frac{a}{2b\kappa_3\kappa_4^{1/(p-1)}}\Dtil$ is a $\frac{4\kappa_3 \kappa_4^{1/(p-1)} b^2}{a^2} $ approximation to \eqref{eq:Spar_res}. Now, going back, $\frac{a}{2b\kappa_3\kappa_4^{1/(p-1)}}\mu_2\Dtil$ is a $\frac{4\kappa_3 \kappa_4^{1/(p-1)}b^2}{a^2}\kappa_2 $-approximate solution to the residual problem. Now, $\mu_1 = \kappa_1 = 1$ and thus we have that, in this case if $\res(\Dopt) \in (\nu/2,\nu]$, then $\frac{a}{2b\kappa_3\kappa_4^{1/(p-1)}}\mu_2\Dtil$ is a $\frac{4\kappa_3 \kappa_4^{1/(p-1)}b^2}{a^2}\kappa_2 $-approximate solution to the residual problem.

\item $ p>\log m$:\\
In this case, there is an additional step that converts the residual problem to an instance of the previous case for $p = \log m$. We apply {\sc Sparsify} to $\gg,\RR,\NN'$ and set $p = \log m$. If $\Dbar$ is the $\alpha$ solution returned by the previous case on $p = \log m$ and $\NN = \NN'$, then $\mu_1 \Dbar$ is a $\kappa_1\alpha$-approximation for the residual problem. Thus $\frac{a}{2b\kappa_3\kappa_4^{1/(p-1)}}\mu_2\mu_1\Dtil$ is a $\frac{4\kappa_3 \kappa_4^{1/(p-1)}b^2}{a^2}\kappa_2\kappa_1 $-approximate solution to the residual problem.
 \end{enumerate}
From the above discussion we conclude that, if $\res(\Dopt) \in (\nu/2,\nu]$, then we get a solution $\Dbar$ such that $\res(\Dbar) \geq \frac{a^2}{4\kappa_4^{1/(p-1)}\kappa_3 \kappa_2\kappa_1b^2}\res(\Dopt) $. From Lemma \ref{lem:BinarySearch}, we know that $\res(\Dopt) \in (\nu/2,\nu]$ for some $\nu \in \left[\epsilon \frac{\opt}{p}, \ff\left(\xx^{(0)}\right)-\opt\right]$. Since $\opt \geq \frac{\ff\left(\xx^{(0)}\right)}{\kappa}$ and $\opt \geq 0$, it is sufficient to change $\nu$ in the range $\left[\epsilon \frac{\ff\left(\xx^{(0)}\right)}{\kappa p}, \ff\left(\xx^{(0)}\right)\right]$.

We finally look at the running time. We start with a residual problem. We require time $K$ to apply procedure {\sc Sparsify}. We next require time $\tilde{K}(m)$ to solve \eqref{eq:InsideProb}. Now this either gives us an $\frac{4\kappa_4^{1/(p-1)}\kappa_3 \kappa_2\kappa_1b^2}{a^2}$-approximate solution to the residual problem. For every residual problem we repeat the above process at most $\log\frac{\kappa p}{\epsilon}$ times (corresponding to the number of values of $\nu$). We use the fact that $a,b \leq m^{o(1)}$. Thus the total running time is,
\begin{align*}
&O\left(\frac{4p\kappa_4^{1/(p-1)}\kappa_3 \kappa_2\kappa_1b^2}{a^2} (K + \tilde{K}(m))\log\left(\frac{\kappa}{\epsilon} \right)\log\left(\frac{\kappa p}{\epsilon}\right)\right)\\
& \leq \widetilde{O}\left(p\kappa_4^{1/(p-1)}\kappa_3 \kappa_2\kappa_1 (K + \tilde{K}(m))\log\left(\frac{\kappa p}{\epsilon}\right)^2\right)
\end{align*}
\end{proof}

\end{document}